\newcommand{\p}{\partial}
\newcommand{\di}{\mathrm{d}} 
\newcommand{\dt}{\partial_t}
\newcommand{\tr}{\text{tr}}
\newcommand{\dive}{\text{div}}
\newcommand{\curl}{\text{curl}}
\newcommand{\lie}{\mathcal{L}} 
\newcommand{\dtau}{\partial_{\bar \tau}}
\newcommand{\Tbar}{\bar T}
\newcommand{\D}{\mathcal{D}}
\newtheorem{theorem}{Theorem}[section]
\newtheorem{lemma}[theorem]{Lemma}
\newtheorem{proposition}[theorem]{Proposition}
\newtheorem{corollary}[theorem]{Corollary}
\newtheorem{definition}[theorem]{Definition}
\newtheorem{remark}[theorem]{Remark}
\numberwithin{equation}{section}
\begin{document}
\title[Future stability of $1+3$ Milne model for EKG]{Future stability of the $1+3$ Milne model for the Einstein-Klein-Gordon system}

\author[J. Wang]{Jinhua Wang} \email{wangjinhua@xmu.edu.cn}
\address{School of Mathematical Sciences, Xiamen University, Xiamen 31005, China}

\begin{abstract}
We study the small perturbations of the $1+3$-dimensional Milne model for the Einstein-Klein-Gordon (EKG) system. We prove the nonlinear future stability, and show that the perturbed spacetimes are future causally geodesically complete.  For the proof, we work within the constant mean curvature (CMC) gauge and focus on the $1+3$ splitting of the Bianchi-Klein-Gordon equations. Moreover, we treat the Bianchi-Klein-Gordon equations as evolution equations and establish the energy scheme in the sense that we only commute the Bianchi-Klein-Gordon equations with the spatially covariant derivatives while the normal derivative is not allowed. 
\end{abstract}
\maketitle
\tableofcontents

\section{Introduction}\label{sec-intro}

\subsection{The EKG system}\label{sec-einstein-kg-eq}
The Einstein-Klein-Gordon (EKG) system takes the form of
\begin{subequations}
\begin{equation}
\breve R_{\alpha \beta} - \frac{1}{2} \breve R \breve g_{\alpha \beta} = \mathcal{T}_{\alpha \beta}(\phi),   \label{eq-Einstein-source}
\end{equation}
\begin{equation}
 \mathcal{T}_{\alpha \beta}(\phi)= \breve D_\alpha \phi \breve D_\beta \phi - \frac{1}{2} \breve g_{\alpha \beta} \left( \breve D^\mu \phi \breve D_\mu \phi + m^2 \phi^2 \right), \label{def-energy-Mom-kg}
 \end{equation}
 \end{subequations}
where $\breve R_{\alpha \beta}$ and $\breve R$ denote the Ricci and scalar curvature of an unknown Lorentzian
metric $\breve g_{\alpha \beta}$ respectively, and $\mathcal{T}_{\alpha \beta}(\phi)$ is the energy momentum tensor for a massive scalar field $\phi$. The Bianchi identities 
imply that the scalar field $\phi$ satisfies the Klein-Gordon (KG) equation
\begin{equation} \label{eq-kg}
\Box_{\breve g} \phi - m^2 \phi = 0. 
\end{equation}
$\breve D$ denotes the covariant derivative associated to $\bar g_{\alpha \beta}$ and the geometric wave operator is $\Box_{\breve g} = \breve D_\alpha \breve D^\alpha$. We use $m$ to denote the mass of the KG field.

Our first motivation for studying the EKG system comes form the problem of the nonlinear stability of the supersymmetric backgrounds posed by S.T. Yau. Consider the Kaluza-Klein spacetime of the form $\mathcal{M} = \mathbb{R}^{1+3} \times S^1$. Then the five-dimensional metric depending on $S^1$ periodically can be Fourier-expanded. The projections of the five-dimensional Einstein equations onto the zero (massless) modes and the non-zero (massive) modes give rise to a wave-Klein-Gordon system. Subject to the zero mode perturbation, both of the work by Wyatt \cite{Wyatt-17}, and Branding-Fajman-Kroencke \cite{Branding-Fajman-Kroencke-18} have recently contributed on this aspect. \cite{Wyatt-17} is related to the nonlinear stability of the Kluza-Klein spacetime with the base on the Minkowski space, while \cite{Branding-Fajman-Kroencke-18} is concerning with the one base on the Milne model. There is also related result by Choquet-Bruhat and Moncrief \cite{Bruhat-Moncrief-U1} who establish the stability of $U(1)$-symmetric spacetimes to $1+3$-dimensions.

The backgrounds we consider are the following family of cosmological vacuum spacetimes. Let $\bar M$ be a $4$-manifold of the form $ I \times \Sigma,$ where $I \subset \mathbb{R}$ is an interval,  $\Sigma$ is a compact $3$-manifold admitting an Einstein metric $\gamma$ with negative Einstein constant $\lambda$. We choose $\lambda =-2$ so that the Einstein metric $\gamma$ is a hyperbolic metric with sectional curvature $-1$. Then $(\bar M, \bar\gamma)$ with $\bar\gamma$ given by $$\bar\gamma = -dt^2 + t^2 \gamma$$ is a solution to the vacuum Einstein equations and known as a $1+3$-dimensional \emph{Milne model}. Such model undergoing accelerated expansion in the future direction is locally isometric to the $k=-1$ vacuum Friedmann-Lem\^{i}atre-Robertson-Walker (FLRW) model. Andersson and Moncrief \cite{A-M-04} first consider the nonlinear stability of this model (called hyperbolic cone spacetimes there), and show that for the constant mean curvature (CMC) Cauchy data for the vacuum Einstein equations close to the standard data for the hyperbolic cone spacetimes, the maximal future Cauchy development is globally foliated by the CMC Cauchy surfaces and causally geodesically complete to the future. This also motivates the current project. We refer to \cite{Anderson-CMC, Andersson-CMC-02, Andersson-04-cmc, Gerhardt-CMC} and references therein for more backgrounds on the existence of CMC foliations.

There are several results concerning the decay of the KG field \cite{Katayama-wave-kg, Klainerman-KG-85, Shatah-KG, Sideris-KG}. In contrast to the wave equation, the KG equation is not conformal invariant and hence does not commute well with the symmetry of scaling. Due to this fact, Klainerman \cite{Klainerman-KG-85} employs the hyperbloid foliations which respect the Lorentz invariance of the KG operator and derives the asymptotic behavior of the KG field by the energy method alone. Furthermore, Katayama \cite{Katayama-wave-kg} somehow overcomes the incompatibility between the wave and KG field. Other method including Fourier analysis for the wave-Klein-Gordon system is established by Ionescu-Pausader \cite{Ionescu-Pausader-17}. Global solutions for the semilinear KG equations in the FLRW spacetimes are studied by Galstian-Yagdjian \cite{Galstian-Yagdjian}.

Let us review some nonlinear stability results of the Minkowski spacetime. Christodoulou and Klainerman \cite{Christodoulou-K-93} initiate a covariant proof based on the Bel-Robinson energy. Their proof relies upon the geometric foliations of spacetime, including maximal $t = \text{const}$ slices and a family outgoing null cones, and null condition hidden inside the Bianchi equations. Lindblad and Rodnianski \cite{Lind-Rod-05, Lind-Rod-10} devise another proof based on the wave coordinate gauge, under which the Einstein equations satisfy the weak null condition. This method is extended to the case of massive Einstein-Vlasov system by Lindblad and Taylor \cite{Lind-Taylor-17}.  All of these works require the full symmetries of the Minkowski spacetime, including the conformal symmetries, suggesting that these methods can not apply to the EKG system straightforwardly. Recently, inspired by \cite{Katayama-wave-kg} and \cite{Lind-Rod-05, Lind-Rod-10}, Ma and LeFloch \cite{Ma-Lefoch-16, Ma-Lefloch-EKG} make use of the hyperboloid method to address an energy argument for both of the wave equation and KG equation uniformly, and then prove the nonlinear stability for the EKG system. In addition to the $L^\infty-L^\infty$ estimate for the wave equations \cite{Katayama-wave-kg, Lind-Rod-05, Lind-Rod-10}, Ma and LeFloch \cite{Ma-Lefoch-16} also demonstrate an $L^\infty-L^\infty$ estimate for the KG equation, which is crucial in dealing with the nonlinear couplings \cite{Ma-Lefloch-EKG}. Via this hyperboloid method, the massive Einstein-Vlasov system is alternatively understood by Fajman, Joudioux and Smulevici \cite{F-J-S-17}. As the KG field and the massless Vlasov filed are lack of the conformal symmetries as well. Taylor \cite{Taylor-17} gives a proof of the stability for the massless Einstein-Vlasov system with compact support assumptions. The proof in fact reduces to a semi global problem since the matter is shown to be supported in a strip going to null infinity.
 
The works of Andersson and Moncrief \cite{A-M-04, A-M-11-cmc} also constitute nonlinear stability results for the vacuum Einstein equations. Compared to \cite{A-M-04}, the backgrounds considered in \cite{A-M-11-cmc} are extended to a family of $1+n$-dimensional Milne model. The proof of \cite{A-M-04} is based on the Bel-Robinson energy and its higher-order generalization. Our proof possesses some similarities to that of \cite{A-M-04} as far as we both treat the Bianchi equations as evolution equations to derive the energy estimates for the Weyl field.  While in \cite{A-M-11-cmc}, the authors turn to the Einstein evolution equations and use a wave equation type of energy introduced in \cite{A-M-03-local} for the energy estimates. Related to this, Andersson and Fajman \cite{Andersson-Fajman-17} show the future stability of the $1+3$-dimensional Milne model for the massive Einstein-Vlasov system. We remark that the work of \cite{A-M-04, A-M-11-cmc, Andersson-Fajman-17} all base on the constant mean curvature, spatially harmonic (CMCSH) gauge, with which the local existence theorem is proved in \cite{A-M-03-local}.  

In this paper, we are concerning the nonlinear stability of the $1+3$-dim Milne model for the EKG system. We work within the CMC gauge with zero shift. The local existence theorem then follows by adjusting the argument in \cite{Christodoulou-K-93} where the maximal gauge with zero shift is used. Our main difficulty lies in the fact that the KG field is not scale invariant and hence scaling does not qualify as a commutator vector field. We now recast the KG equation in a first order form and focus on the $1+3$ form of the Bianchi-Klein-Gordon equations. More precisely, we split the normal and tangential derivatives in the formulae of Bianchi-Klein-Gordon equations, treating them as evolution equations, and propose an energy argument in which the high order energies contain only high order of spatially covariant derivatives. Practically, we commute only the spatially covariant derivatives with the Bianchi-Klein-Gordon equations in $1+3$ from. As a result, the high order derivatives of the KG field along the normal direction possess no good estimates, and hence normal derivative is referred as a bad derivative. Due to this covariant proof, we avoid some unnecessarily nonlinear couplings that will arise when working in coordinates (see \cite{Ma-Lefloch-EKG} for the Minkowski case). Meanwhile, we find that the ``genuine'' coupling structures admit some cancellations, such that in the Bianchi equations, the nonlinear couplings involve no the high order of bad (normal) derivatives of the KG field. On the other hand, the difficulty does not diminish because if we proceed to the top order energy estimates, some nonlinear couplings become borderline terms. Another difficulty comes from the resulted borderline terms when commuting the spatially covariant derivatives with the KG equation. Fortunately, these borderline terms exhibit linearizable feature. Motivated by \cite{Christodoulou-08} (see also \cite{Miao-Pei-Yu, Wang-Wei-17, Wang-Yu-13, Wang-Yu-16}), we resolve these difficulties by the method of linearization and the hierarchy of energy. In particular, the $L^\infty-L^\infty$ estimate for the KG equation proposed in \cite{Ma-Lefoch-16} turns out to be helpful in the procedure of linearization. Besides, we reveal the subtle relation between the lapse and the KG field, with which we manage to improve the estimates step by step and finally reduce the nonlinear borderline terms into linear ones. 
Alternatively, if we use the CMCSH gauge, no new difficulty occurs and our method works as well. 

\subsection{Main result}\label{sec-main-result}
\begin{theorem}\label{main-thm-into}
Let $\Sigma$ be a compact hyperbolic $3$-manifold without boundary. Assume that $(\Sigma, \tilde g_0, \tilde k_0, \phi_0, \phi_1)$ is a CMC  data set satisfying the constraint equations of the EKG system \eqref{eq-Einstein-source}-\eqref{def-energy-Mom-kg}, \eqref{eq-kg}. 
Let $-3t_0^{-1}= \tau_0= \tr \tilde k_0 = \text{constant} <0,$ with  $t_0  > \max \{1, 9 m^{-1} \}$. Then there is an $\varepsilon > 0,$ so that if 
\begin{equation}\label{intro-initial-data}
\begin{split} 
&\|\tilde R_{0ij} + 2 t_0^{-2} \tilde g_{0ij} \|_{H^3(\Sigma, \tilde g_0)} + \| \tilde k_0 + t_0^{-1} \tilde g_{0ij}  \|_{H^4(\Sigma, \tilde g_0)}  \leq \varepsilon, \\
& \|\phi_0\|_{H^5(\Sigma, \tilde g_0)} + \|\phi_1\|_{H^4(\Sigma, \tilde g_0)}  \leq \varepsilon,
\end{split}
\end{equation}
where $\tilde R_{0ij}$ denotes the Ricci curvature of $\tilde g_0$, the maximal development $(\bar M, \breve g)$ of the EKG data set $(\Sigma, \tilde g_0, \tilde k_0, \phi_0, \phi_1)$ has a global CMC foliation in the expanding direction (to the future of $(\Sigma, \tilde g_0)$ in CMC time $\tau = \tr \tilde k$) and  $(\bar M, \breve g)$ is future causally geodesically complete.
\end{theorem}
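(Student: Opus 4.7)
The plan is to prove stability by a continuity/bootstrap argument applied to a suitably rescaled version of the EKG system, following the general philosophy of Andersson-Moncrief but adapted to accommodate the extra Klein-Gordon degree of freedom. First I would fix the CMC gauge with zero shift, using $\tau = \tr \tilde k \in [\tau_0, 0)$ as the time function, and introduce a conformally rescaled spatial metric $g_{ij} = (\tau/3)^2 \tilde g_{ij}$ so that the background Milne spatial metric becomes the fixed hyperbolic metric $\gamma$. Writing $\Sigma_{ij}$ for the traceless part of the second fundamental form of $\tilde g$, the unknowns to be estimated are the deviations of $(g, \Sigma, N, \phi, N^{-1}\p_\tau \phi)$ from their Milne values. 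Local existence in this gauge can be obtained by adapting the argument of Christodoulou-Klainerman to the cosmological setting.

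Next I would replace the Einstein part of the system by the $1+3$ Bianchi equations for the electric and magnetic parts $(E, B)$ of the spacetime Weyl tensor, sourced by the energy momentum of $\phi$, and couple these to the Klein-Gordon equation rewritten as a first order system in $(\phi, N^{-1}\p_\tau \phi, \nablaslash \phi)$. The design choice announced in the introduction is that the high order energies contain only spatially covariant derivatives: define Bel-Robinson type energies $\mathcal{E}_k[E, B] = \sum_{j \le k} \int \bigl( |\nablaslash^j E|^2 + |\nablaslash^j B|^2 \bigr) \, d\mu_g$ and analogous $H^k$ energies for the KG variables, and commute the Bianchi-KG system only with $\nablaslash$, never with the normal derivative $N^{-1}\p_\tau$. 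Because the background is negatively curved and the rescaling absorbs the expansion, the linearized energy identities carry a favorable damping term coming from the expansion that produces the baseline decay in $\tau$.

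The main obstacle is to handle the nonlinear source terms generated when one commutes $\nablaslash^k$ past the Bianchi-KG system. Generically these produce cubic couplings of the form (top order Weyl or KG)$\times$(top order KG)$\times$(low order coefficient) which, at the very top order, are not a priori absorbable into the damping. Two mechanisms are used to tame them. First, a covariant choice of the energy exposes structural cancellations: in particular, the nonlinear couplings entering the Bianchi equations can be arranged so that no high order of the bad (normal) derivative of $\phi$ ever appears, consistent with the scheme described in the introduction. Second, the residual borderline terms are linearized by exploiting the subtle lapse--scalar-field relation. Solving the CMC lapse equation, which has the schematic form $\Delta N - |\tilde k|^2 N + (\text{matter})\,N = \p_\tau \tr \tilde k$, one obtains control of $N - N_{\mathrm{bg}}$ in terms of lower order Weyl and KG quantities; iterating this elliptic gain improves the control by one derivative and converts the nonlinear borderline terms into linear ones. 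The pointwise $L^\infty$-$L^\infty$ estimate for the KG equation from Ma-LeFloch supplies the decay of the low order coefficients needed to execute this procedure.

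Finally, the bootstrap is closed as follows. Under bootstrap assumptions on the Weyl, KG and shear energies together with compatible $L^\infty$ bounds, all decaying at some prescribed rate in $\tau$, the improved lapse estimate and the hierarchy of energy inequalities yield strict improvements via Gr\"onwall in $\tau$, provided $\varepsilon$ is sufficiently small. This produces a global solution on $\tau \in [\tau_0, 0)$ with uniform decay of all normalized quantities. Causal geodesic completeness then follows from a standard argument: with $N$ uniformly bounded and the shear and Weyl decaying at the prescribed rates, the geodesic equation in the $(\tau, x^i)$ chart shows that the affine parameter along any future directed causal geodesic diverges as $\tau \to 0^-$. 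The expected hardest step, as indicated above, is the top order energy estimate, where the balance between the expansion induced damping and the borderline cubic couplings is tight and requires the lapse linearization mechanism to succeed.
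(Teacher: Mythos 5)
Your proposal follows essentially the same route as the paper: CMC gauge with zero shift and rescaled variables, the $1+3$ Bianchi--Klein--Gordon system commuted only with spatial covariant derivatives, structural cancellations that keep high-order normal derivatives of $\phi$ out of the Weyl couplings, the elliptic lapse estimate exploiting the special source structure together with the Ma--LeFloch $L^\infty$--$L^\infty$ bound to linearize the borderline terms through a hierarchy of energy estimates, and a bootstrap closed by Gr\"onwall with geodesic completeness at the end. This matches the paper's actual proof scheme, so no substantive gap to report.
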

\begin{remark}[The existence of CMC data]\label{rk-cmc}
We also refer to \cite{Constrain-Bruhat-Isenberg} for results concerning the existence of initially CMC data in the current setting. We assume $|\phi_0|_{L^\infty(\Sigma)}$ to be small enough, so that $\left( \frac{2}{3} \tau_0^2 - 2m^2 \phi_0^2\right) > 0$ (that is, $\mathcal{B}_{\tau_0, \phi_0} >0$ by the convention in \cite{Constrain-Bruhat-Isenberg}). In our case, $\Sigma$ has negative Yamabe invariant and $\mathcal{B}_{\tau_0, \phi_0} >0$, then the existence CMC  solutions of the constraint equations for the EKG system on such compact manifolds is covered by \cite{Constrain-Bruhat-Isenberg}.
\end{remark}

\begin{remark}[Generalize to non-CMC data]\label{rk-non-cmc}
Considering generalized data, which is not necessarily CMC, but close to the initially CMC data induced by the background solution, the existence of a maximal globally hyperbolic development of this initial data is assured in \cite{Ringstrom-09}. Under the smallness condition, the existence of a CMC surface in such a development is shown in the vacuum case by Fajman-Kroencke \cite{Fajman-Kroencke-15} (Theorem 2.9). Based on these results, we can now generalize our main theorem to the case of non-CMC data.

The rescaled version of data (non-CMC) \eqref{intro-initial-data} is $\|R_{0ij} + 2 g_{0}\|_{H^3(\Sigma, g_0)} + \|k_0+g_0\|_{H^{4}(\Sigma, g_0)} \lesssim \varepsilon$ (c.f. \eqref{rescale-metric-1}-\eqref{def-rescale-mass} for the rescaling).
To match the condition in \cite{Fajman-Kroencke-15}, one needs the smallness for $\|g_0-\gamma\|_{H^s(\Sigma, \gamma)} + \|k_0+\gamma\|_{H^{s-1}(\Sigma, \gamma)}$, $s>3/2+1$, where $\gamma$ is the hyperbolic metric.  We take $s=5$ and choose the harmonic coordinates with respect to $g_0$ and $\gamma$. Hence, in our setting, it involves controlling $\|g_{0ij}-\gamma_{ij}\|_{H^5(\Sigma, \gamma)}$ in terms of $\|R_{0ij} + 2 g_{0ij}\|_{H^3(\Sigma, g_0)}$. Denote by $\mathcal{A} = 2{\bf D}(R_{0ij} + 2g_{0ij})$, where ${\bf D}$ is the Frechet derivative. Recall that from \cite{A-M-04, A-M-11-cmc}, for a TT (transverse and traceless) tensor $h$,  $$\mathcal{A} h= -\Delta_{\gamma} h -2h.$$
Let $\lambda$ be the eigenvalues of $\mathcal{A}$. It is known that, the hyperbolic metric $\gamma$ ($n\geq 3$) is strictly stable, i.e. $\lambda >0$ (\cite{Besse-Einstein} $\S$ 12 {\bf H}). The operator $\mathcal{A}$ is an isomorphism.
Thus the smallness of $\|R_{0ij} + 2 g_{0ij}\|_{H^3(\Sigma, g_0)}$ implies smallness for $\|g_{0ij}-\gamma_{ij}\|_{H^5(\Sigma, \gamma)}$ and hence the rescaled version of \eqref{intro-initial-data} entails that $\|g_{0ij}-\gamma_{ij}\|_{H^5(\Sigma, \gamma)} + \|k_{0ij}+\gamma_{ij}\|_{H^{4}(\Sigma, \gamma)} \lesssim \varepsilon$. Now we can prove along the lines of the corresponding argument presented in \cite{Ringstrom-09, Fajman-Kroencke-15} to show the existence of a CMC surface in the non vacuum setting. Therefore, the data in Theorem \ref{main-thm-into} can be generalized to be non-CMC.
\end{remark} 
 
Before offering an overview for the proof, we introduce some notations.
Let $(\bar M, \breve g)$ be the maximal development of the EKG data given above.  $(\bar M, \breve g)$ is differmorphic to $\mathbb{R} \times \Sigma,$ where $\Sigma$ is a compact Cauchy surface and it is globally foliated by a CMC foliation $\{\Sigma_\tau, \tau \in I\}$. We rescale the CMC time function and define $t= -\frac{3}{\tau}.$ Then as $\tau \rightarrow 0^-, t \rightarrow +\infty$. Let $\p_t$ denote the coordinate vector field corresponding to $t$, then $\breve g(\p_t, \p_t) = -N^2$, where $N$ is called lapse, and we set $\hat N = N-1$. Let $\tilde g_{ij}$ be the induced Riemannian metric on $\Sigma_t$, $\tilde\nabla$ the corresponding covariant derivative. $T$ is the unit timelike vector field normal to $\Sigma_t$, and we have $\p_t = NT$. The second fundamental form is denoted by $\tilde k_{ij}$, while $\hat{\tilde k}_{ij}$ being the traceless part of $\tilde k_{ij}$. In the CMC gauge, $\tr \tilde k = \tau = -\frac{3}{t}.$  

Let $g_{ij}=t^{-2} \tilde g_{ij}$ be the spatially normalized metric, $\nabla$ the corresponding connection. 
As shown in \cite{A-M-04} ((4.20a)-(4.20c)), the following variables are scale-free (the indices refer to a coordinate frame).
\begin{equation}\label{rescale-metric-1}
\begin{split}
g_{ij} &=t^{-2} \tilde g_{ij} \quad\, g^{ij}=t^{2} \tilde g^{ij}  \\ 
k_{ij} &=t^{-1} \tilde k_{ij} \quad  \tr k = t \tr \tilde k\\
\Tbar &= t T \quad \quad\, \di \mu_g =t^{-3} \di\mu_{\tilde g}.
\end{split}
\end{equation}
It is convenient to introduce the logarithmic time $\bar \tau = \ln t$, which has the property that
$\p_{\bar \tau} = t\p_t$ is scale free.
We also define the rescaled mass of the KG field.
\begin{equation}\label{def-rescale-mass}
\bar m(t) = tm.
\end{equation}
We remark that such rescaling for the mass of the KG field is transitional, aiming to make the KG energy well-behaved with respect to the above scale-free variables.
Finally, we let $\phi$ be the KG field, and $E_{ij}, H_{ij}$ be the electric and magnetic parts of the Weyl field $W_{\mu \alpha \nu \beta}$. Note that, $E_{ij}, H_{ij}$ are scale-free as well.

\subsection{Strategy of the proof}\label{sec-main-proof}
{\bf Energy scheme}.   As is explained before, the scaling $\dtau$ (and hence $\Tbar$) does not qualify as a commutator vector field for the KG equation. Therefore, we only commute the KG equation with the spatial covariant derivative $\nabla$. Technically, we address an $1+3$ splitting for the KG equation, which takes the form of
\begin{align*}
& \lie_{\dtau} \Tbar \phi - \left( \tr k N +1 \right) \Tbar \phi  -N \nabla^i \nabla_i \phi + \bar m^2(t)  N \phi =  \nabla_i N  \nabla^i \phi.
\end{align*}
$\lie$ means taking Lie derivative.
Thus in such $1+3$ forms, the high order of the KG equation \eqref{eq-kg-N-l-1+3-general-simply} can be derived by calculating the commutator $[\lie_{\dtau}, \nabla]$ (c.f. Section \ref{sec-commuting-id}), which involves only $N$, $\hat k$ and their spatial derivatives. The Bianchi equation are treated in an analogous fashion. We then establish the energy inequalities for these $1+3$ evolution equations (c.f. Section \ref{sec-ee-kg-1+3} and \ref{sec-ee-Bianchi-1+3}). The energies are defined as below: Fixing an integer $I$, for $t>t_0 > 3 m^{-1}$, the energy norms for $\phi$ and $W$ ($E, H$) are
\begin{equation}\label{intro-def-energy-l-homo-kg}
E_l (\phi, t) = \|\nabla^l \Tbar \phi \|^2_{L^2(\Sigma_t)} +  \|\nabla^l \nabla \phi \|^2_{L^2(\Sigma_t)} +  \|\nabla^l (\bar m(t) \phi)\|^2_{L^2(\Sigma_t)}, \, l \leq I,
\end{equation}
\begin{equation}\label{intro-def-energy-l-homo-Weyl}
E_l (W, t) =   \|\nabla^l E\|^2_{L^2(\Sigma_t)}  + \|\nabla^l H \|^2_{L^2(\Sigma_t)}, \quad l \leq I-1.
\end{equation}
The notation $\nabla^l$ means taking the derivative $\nabla$ $l$ times. 

In other words, we give up the standard method of the energy-momentum tensor. Alternatively, we recast the KG equation in a first order form and treat it as a transport equation rather than a wave equation. This turns out to be more compatible with the energy scheme which commuting merely with the spatial derivatives.

{\bf The coupling structures}. In the strategy of the $1+3$ splitting, the normal derivative $\Tbar$ is referred as a bad derivative, while the spatial derivative $\nabla$ is a good derivative. This is implied by the following facts: The energies \eqref{intro-def-energy-l-homo-kg} do not control $\Tbar^2\phi$. However, making use of the KG equation which expresses $\Tbar^2 \phi$ in terms of $\Tbar\phi$ and $\nabla^l \phi, \, 0 \leq l \leq 2$, in turn yields almost $t^{\frac{1}{2}}$ growth for $\Tbar^2\phi$. On the other hand, implied by Sobolev inequalities, terms involving the spatial derivatives $\nabla^i \Tbar \phi, \nabla^j \phi$, $i\leq I-2, j \leq l-1$ decay almost like $t^{-\frac{1}{2}}$. Meanwhile, $\nabla^l \phi \sim t^{-1} \nabla^l (\bar m (t) \phi)$, $l\leq I-2$ decay almost like $t^{-\frac{3}{2}}$.

One of our key observations is that the nonlinear couplings avoid $\Tbar^2\phi$, due to some cancellations (see \eqref{stru-source-J-bianchi-0} and \eqref{stru-source-AJ-bianchi-0-1}). Therefore, we get couplings such as $\nabla \Tbar \phi \nabla \phi E$ and $\nabla^2\phi \Tbar\phi E$, which generally decay fast enough.  Now, let us pay more attention to the estimates for $\nabla^2 \phi \Tbar\phi E$ and its higher order versions $\nabla^l (\nabla^2 \phi \Tbar \phi E), l \leq I-1$. Notice that, for the lower order cases: $l \leq I-2$, $\|\nabla^l \nabla^2 \phi\|_{L^2(\Sigma_t)}$ is estimated as $t^{-1} \|\nabla^{l+2} (\bar m(t) \phi)\|_{L^2(\Sigma_t)}$. 
Unfortunately, such estimate is forbidden for the top order case: $l=I-1$, since it requires one more derivative than our purposes. That is to say,  $\|\nabla^{I-1} \nabla^2 \phi\|_{L^2(\Sigma_t)}$ has no additional $t^{-1}$ factor. As a result, the top order coupling $\nabla^{I-1} (\nabla^2 \phi \Tbar \phi E)$ will provide a borderline term $\nabla^{I-1}\nabla^2 \phi \Tbar\phi E$ which requires more work.

{\bf Hierarchy of energy}.
We start with some weak bootstrap assumptions (with $\delta (0<\delta < \frac{1}{6})$ loss in the decay rate of $t$), and finally close the bootstrap argument by sharpening the estimates (Section \ref{sec-close-b-t}). In particular, the estimate for $\|\hat k\|_{L^2(\Sigma_t)}$ is retrieved by the transport equation for $\hat k$ (Section \ref{sec-closing-bt-k}).

When commuting $\nabla^l$ $(1 \leq l \leq I)$ with the KG equation, we encounter the borderline terms $BL_l$, taking the forms of $BL_l = \sum_{a=1}^l BL_l^a$, with $1 \leq a \leq l$ and
\begin{equation}\label{def-BLl-123l}
\begin{split}
BL_l^a = & \bar m (t) \nabla^{l-a+1} N * \nabla^l (\bar m (t) \phi) * \nabla^{a-1} \bar T \phi \\
&\pm \bar m (t) \nabla^{l-a+1} N *  \nabla^l \Tbar \phi * \nabla^{a-1}( \bar m (t) \phi ).
\end{split}
\end{equation}
We confirm the inevitability of these borderline terms in Remark \ref{rk-borderline-term}. Meanwhile, we should notice that $BL_l$ will vanish if the scalar field is massless.

Because of the weak bootstrap assumptions, $BL_l, 1 \leq l \leq I$ bear $\delta$ loss in the decay rate.  We propose an $L^\infty - L^\infty$ estimate for the KG field, which is due to the technical ODE estimate in \cite{Ma-Lefoch-16}, to improve $\|T\phi\|_{L^\infty}$ and $\|\phi\|_{L^\infty}$ and then retrieve the $\delta$ loss in $BL_l, 1 \leq l \leq I$. But this costs the regularity: the $L^\infty - L^\infty$ estimate requires the strongest decay for $\nabla^2 \phi$, which further forces us to perform two more orders of energy estimates. That is, in \eqref{intro-def-energy-l-homo-kg}-\eqref{intro-def-energy-l-homo-Weyl}, we take $I=4$ rather than $I=2$.

We observe the linear structure in \eqref{def-BLl-123l}: $\nabla^{a-1} \phi$ is of lower order (less than $l^{\text{th}}$ order) and also $\nabla^{l-a+1} \hat N$ depends mainly on lower order energies of the KG field. In more details, the refined estimates for the lapse (Lemma \ref{lem-N-ineq-source}) tell that  $\|\nabla^a \hat N \|_{L^2(\Sigma_t)}, a \leq 2$ relies crucially on the zeroth order energy $E_0(\phi, t)$, and $\|\nabla^{a} \hat N\|_{L^2(\Sigma_t)}, 2 < a \leq l$ is primarily related to $E_{a-2}(\phi, t)$, which is again of lower order. Enlightened by these features, we proceed by an argument of hierarchy of energy estimates, which are carried out in Section \ref{sec-0-ee-kg} (for the KG field). The idea is that, the information already obtained for lower order energies will help to sharpen the estimates for $\nabla^{l-a+1} \hat N$, $\nabla^{a-1} \phi$ and reduce the nonlinear structure of  \eqref{def-BLl-123l} to a linear one. In addition, this idea of linearization also applies to the borderline terms (like $\nabla^{I-1}\nabla^2 \phi \Tbar \phi E$) that arise in the top order energy estimate for the Weyl field, since we can now take advantage of the improved estimates for $\phi$, c.f. \eqref{estimate-Sl11-2-1}.

We here sketch the linearization in the energy argument. Generically, the energy inequality for the KG field before linearization takes the form of
\begin{align*}
 \p_{\bar\tau} E_l(\phi, t) + E_l(\phi, t)  \lesssim   t^{-1} E_l(\phi, t) + t  \mathcal{E}_4(\phi, t) \mathcal{E}^{\frac{1}{2}}_4(\phi, t) E^{\frac{1}{2}}_l(\phi, t),
\end{align*}
where the second term on the right hand side is a rough bound for the borderline term \eqref{def-BLl-123l}, and $\mathcal{E}_4(\phi, t)$ means the energies of the KG field up to the fourth order. Here the first quadratic term $ \mathcal{E}_4(\phi, t)$ is due to the fact that $\nabla^i N$ depends quadratically on $\phi$ (c.f. Lemma \ref{lem-N-ineq-source} and Remark \ref{rk-rough-lapse}). We aims to linearize the borderline terms via the hierarchy of energy estimate and the $L^\infty - L^\infty$ estimate for the KG field, so that the energy inequality is improved as (c.f. Section \ref{sec-0-ee-kg})
\begin{align*}
\p_{\bar\tau} E_l(\phi, t) + E_l(\phi, t)  \lesssim & t^{-1}  E_l(\phi, t) + t^{-\frac{1}{2}} \varepsilon M \varepsilon I E^{\frac{1}{2}}_l(\phi, t),\\
  \lesssim & \left( t^{-1}+ \varepsilon M \right) E_l(\phi, t) +  \varepsilon^2 I^2 t^{-1}.
\end{align*}
Here $I$ is a constant depending on the initial data. In particular, the previously quadratic term is now replaced by a linear one. Changing to $t$, we have 
\begin{equation*}
\begin{split}
& \dt ( t  E_l(\phi, t)) \lesssim \left( t^{-2} + \varepsilon M t^{-1} \right) (t  E_l(\phi, t)) +  \varepsilon^2 I^2 t^{-1}.
\end{split}
\end{equation*} 
The Gr\"{o}nwall inequality then yields that $tE_l(\phi, t) \lesssim  \varepsilon^2 I^2 t^{C_M \varepsilon}$, where $C_M$ depends linearly on $M$. Then the energy argument are closed.

The paper is organized as follows. In Section \ref{preliminaries}, we describe some conventions and notations, and show the background materials. In Section \ref{sec-energy-estimate-1+3}, we conduct the $1+3$ splitting of the EKG system and establish the energy scheme.  In Section \ref{sec-preliminary-estimate}, we introduce the bootstrap assumptions, and present some preliminary estimates that follows from the Sobolev inequalities. The energy estimates for the KG field and the Weyl field are carried out in Section \ref{sec-energy-kg} and \ref{sec-energy-Bianchi}. Section \ref{sec-global-existence} is devoted to close the bootstrap argument and prove the global existence theorem, geodesic completeness. Finally, a number of useful definitions, identities and ODE estimates are collected in the appendix.

{\bf Acknowledgements}: The Einstein-Klein-Gordon project studied in this paper was suggested by L. Andersson. I am grateful to him
for many enlightening discussions. I thank V. Moncrief for many helpful discussions on this topic and suggesting the usage of CMC gauge in early 2014. I thank C. Liu and S. Ma for reading and useful comments as well.
The author was supported by a Humboldt Foundation post-doctoral fellowship at the Albert Einstein Institute during the period 2014-16, when part of this work was done.  She is also supported by NSFC (Grant No. 11701482), NSF of Fujian Province (Grant No. 2018J05010) and Fundamental Research Funds for the Central Universities (Grant No. 20720170002).

\section{Preliminaries}\label{preliminaries}
\subsection{Notations and conventions}\label{notations}
Indices:
\begin{itemize}
\item $\alpha, \beta, \cdots, \mu, \nu, \cdots$: Greek indices (spacetime indices) run over $0,\cdots,3.$
\item $i,j, \cdots, p,q, \cdots$: Latin indices (spatial indices) run over $1,\cdots,3.$
\end{itemize}
Geometry related:
\begin{itemize}
\item $(\bar M, \bar{g}_{\mu\nu}, D):$ the spacetime manifold with $D$ the connection.
\item $\left( \Sigma_t, g_{ij}, \nabla \right):$ the induced spatial manifold with $\nabla$ the connection.
\item $\Box_{\bar g}=D^\alpha D_{\alpha}:$ the spacetime wave operator. 
\item $\Delta = \nabla^i \nabla_i$: the spatial Laplacian operator.
\item $\di \mu_{\bar g}, \di \mu_g$: the volume form with respect to $\bar g$ or $g$. 
\item $\bar R_{\mu \alpha \nu \beta}, \bar R_{\mu \nu}, \bar R:$ the components of the Riemannian, Ricci, scalar curvature tensor of $\left( \bar{M}, \bar g_{\mu \nu} \right).$
\item $R_{imjn}, R_{ij}, R:$ the components of the Riemannian, Ricci, scalar  curvature tensor of $\left( \Sigma_t, g_{ij} \right).$
\item $\bar R_{imjn}, \bar R_{ij}:$ projection of $\bar R_{\mu \alpha \nu \beta}$, $\bar R_{\mu \nu}$ to $(\Sigma_t, g_{ij}).$
\end{itemize}
Foliation related:
\begin{itemize}
\item $\{T, e_i\}:$ $T$ denotes the future directed, unit vector normal to $\Sigma_t,$ and $\{ e_i \}$ an orthonormal frame tangent to $\Sigma_t$.  
\item $\tau, t, \bar \tau:$ $\tau$ is the CMC time $\tau = \tr k$; $t=-\frac{3}{\tau}$; $\bar \tau = \ln t$.
\item $\p_t, \dtau:$ coordinate vector field corresponding to $t, \bar \tau$.
\end{itemize}
Simplified conventions
\begin{itemize}
\item $\nabla_{I_l} \psi$: the $l^{\text{th}}$ order of covariant derivative $\nabla_{i_1} \cdots \nabla_{i_l}\psi$ and the multi index $I_l=\{ i_1 \cdots i_l\}$ is used.
\item $\nabla_{i_1} \cdots \stackrel{j}{\nabla}_p \cdots \nabla_{i_l} \psi $: $\nabla_{I_l} \psi$ with the $j^{\text{th}}$ $\nabla_{i_j}$ being replaced by $\nabla_p$.
\item $\D \phi$: any element in $\{\Tbar \phi, \nabla \phi, \bar m (t) \phi\}$.
\item $f_1 \lesssim  f_2$: $f_1 \leq C f_2$ with some {\it universal} constant $C$.
\item $f_1 \sim f_1$: $f_1$, $f_2$ are equivalent in the sense that $f_1 \lesssim f_2$ and $f_2 \lesssim f_1$.
\item $C_M$: a constant depending linearly on some controlling quantity $M$.
\item $A*B$: any finite sum of products of $A$ and $B$, with each product being a contraction (with respect to $g$) between two $\Sigma_t$-tensors $A$ and $B$\footnote{In the estimates, we will only employ the formula $\|A * B\| \lesssim \|A\| \|B\| $ which allows ourselves to ignore the detailed product structure at this point. $\|\cdot \|$ denotes the norm associated to $g.$}.
\end{itemize}

\subsection{The background materials}\label{sec-sec-background}
 Let $\Sigma$ be a spacelike hypersurface and let $\bar{M}= I \times \Sigma$ be an $n+1$-dim manifold with Lorentz metric $\breve{g}$ of signature $-\, + \cdots +$. We introduce local coordinates $(\tau, x^i, i=1, \cdots, n)$ on $\bar{M}$ so that $\tau$ is a time function and $x^i$  are coordinates on the level sets $\Sigma_\tau$ of $\tau$. 
 
 Let $\partial_\tau=\partial /\partial \tau$ be the coordinate vector fields corresponding to $\tau$. The lapse function ${}_{\tau}N$ and shift vector field $X$ of the foliation $\{\Sigma_\tau\}$ are defined by $\partial_\tau={}_{\tau}NT+\tilde X_\tau,$ where $T$  is the unit timelike vector field normal to $\Sigma_\tau$. Assume $T$ is future directed so that ${}_{\tau}N>0$.  We make the gauge choice $\tilde X_\tau=0,$ so that
the spacetime metric $\breve{g}$ with vanishing shift takes the form
\begin{equation}\label{metric-form-tau}
\breve g_{\mu\nu} = -{}_{\tau}N^2 \di \tau^2 + \tilde g_{ij} \di x^i \di x^j.
\end{equation}
The second fundamental form $\tilde k_{ij}$ of $\Sigma_\tau$ is given by $\tilde k_{ij} = - \frac{1}{2}\lie_{T} \tilde g_{ij}.$
The {\bf CMC} gauge is referred to as
\begin{equation}\label{cmc-gauge-tau}
\tr \tilde k = \tau.
\end{equation}
We call $\tau$ the CMC time function.
For notational convenience, we define a rescaled time function $t:=-\frac{3}{\tau}$ so that the corresponding lapse ${}_{t} N$
\begin{equation}\label{def-dt}
{}_{t}N =  \frac{3}{t^2}  {}_{\tau} N, \quad \partial_t = {}_{t}N T.
\end{equation}
We will drop the subscript $t$, and denote the lapse for the $t$-foliation by $N$. We also let
\begin{equation}\label{def-hat-N}
\hat N = N-1.
\end{equation}
In terms of the $t$-foliation,
\begin{equation}\label{k-decomp}
\tr \tilde k = - \frac{3}{t}, \quad \tilde k_{ij} = -\frac{1}{t} \tilde g_{ij} + \hat{\tilde k}_{ij},
\end{equation}
where $\hat{\tilde k}_{ij}$ is the trace free part of $\tilde k_{ij}$. 

Performing the rescaling \eqref{rescale-metric-1}, the rescaled curvature is related to the original one by
\begin{align*}
 R_{imjn} &= t^{-2} \tilde R_{imjn}, \quad  R_{ij } = \tilde R_{ij}, \quad   R = t^2 \tilde R.
\end{align*}
And we define the rescaled spacetime metric as well:
\begin{align*}
\bar g_{\mu \nu} &=t^{-2} \breve g_{\mu \nu}, \quad \, \bar g^{\mu \nu}=t^{2} \breve g^{\mu \nu}, \quad \bar T = t T, \\ 
\bar R_{\mu \alpha \nu \beta} &= t^{-2}  \breve R_{\mu \alpha \nu \beta}, \quad \bar R_{\mu  \nu } = \breve R_{\mu  \nu}, \quad  \bar R = t^2 \breve R.
\end{align*}

We remark that, if $\tau \in [\tau_0, 0)$ with $\tau_0 <0,$ then $t \in [ t_0, + \infty),$ where $ t_0=-\frac{3}{\tau_0}.$ Letting $m$ be the mass of the KG field \eqref{eq-kg}, we require that $|\tau_0| <  m$, i.e. $t_0 > 3 m^{-1}$.

\subsubsection{Lorenzian geometric equations}\label{sec-einstein-eq}
Recall the first and second variation equations for the rescaled metric
\begin{subequations}
\begin{equation}
\mathcal{L}_{\dtau} g_{ij} =2 \hat N g_{ij} -2N \hat k_{ij},  \label{eq-evolution-1}
\end{equation}
\begin{equation}
\mathcal{L}_{\dtau} \hat k_{ij} + \hat k_{ij} = \hat N g_{ij} -\nabla_i \nabla_j N + N \left( \bar{R}_{i \Tbar j \Tbar} - \hat k_{ip} \hat k_{j}^p \right). \label{eq-evolution-2}
\end{equation}
\end{subequations}
Here the indices $i, j$ refer to the frame $\{\p_i, \, i=1,2,3\}.$
And the Gauss-Codazzi equations are
\begin{equation}\label{Gauss-Riem-hat-k}
\begin{split}
R_{imjn} =&-  ( g \odot g )_{imjn}+ ( g \odot \hat k )_{imjn}  - \frac{1}{2} ( \hat k \odot \hat k )_{imjn}   + \bar{R}_{imjn},
\end{split}
\end{equation}
where the Kulkarni-Nomizu product $\odot$ is defined by: Let $\xi, \eta$ be symmetric $(0, 2)$-tensors,
\begin{equation}\label{def-odot}
( \xi \odot \eta )_{imjn} = \xi_{ij} \eta_{mn} - \xi_{in} \eta_{jm} + \eta_{ij} \xi_{mn} - \eta_{in} \xi_{jm}.
\end{equation}
\begin{equation}
\nabla_i \hat k_{jm} -\nabla_j \hat k_{im}= \bar R_{\Tbar mij}. \label{Codazzi-curl-k}
\end{equation}
The trace of the Gauss-Codazzi equations \eqref{Gauss-Riem-hat-k}-\eqref{Codazzi-curl-k} are
\begin{subequations}
\begin{equation}
R_{ij} = \hat k_{il} \hat k_{j}^l + \hat k_{ij} - 2 g_{ij}+ \bar{R}_{i \Tbar j \Tbar} + \bar{R}_{ij},\label{Gauss-Ricci-hat-k}
\end{equation}
\begin{equation}
\nabla^i \hat k_{ij} - \nabla_j \tr k= - \bar R_{\Tbar j}.\label{Codazzi-div-k}
\end{equation}
\end{subequations}
The double trace of the Gauss equations \eqref{Gauss-Riem-hat-k} is
\begin{equation}\label{Gauss-Ricci-trace-hat -k}
R - |\hat k|^2 +  6 = 2\bar{R}_{\Tbar \Tbar} + \bar{R},
\end{equation}

\subsubsection{The background spacetime}
Let $(\Sigma, \gamma)$ be a compact manifold without boundary which if of hyperbolic type, with $\gamma$ being the standard hyperbolic metric of sectional curvature  $-1$. The hyperbolic cone spacetimes (or $1+3$-dimensional Milne model) $(\bar M, \bar \gamma)$ is the Lorentzian cone over $(\Sigma, \gamma)$, i.e.
\begin{equation}\label{def-bar-gamma}
\bar M = (0, \infty) \times \Sigma, \quad \bar \gamma = - \di \rho^2 + \rho^2 \gamma.
\end{equation}
The family of hyperboloids $\Sigma_\rho$ given by $\rho = \text{constant}$ has normal $T= \p_{\rho}.$ The vector field $\rho \p_{\rho}$ is a timelike homothetic killing field. A calculation shows that $\tilde k_{ij} = - \frac{1}{\rho} \tilde g_{ij},$ where $\tilde g_{ij} = \rho^2 \gamma_{ij},$ and the mean curvature $\tr \tilde k =- \frac{3}{\rho}.$ The $\rho$-foliation has lapse $N=1$.

\subsection{Local existence}\label{sec-local}

It is known that in the CMC gauge with zero shift, the Einstein equations are non-strictly hyperbolic. However, following the work of Christodoulou-Klainerman \cite{Christodoulou-K-93}, where they work in maximal gauge with zero shift, we can prove the local existence theorem for EKG. The method is also analogous to the proof based on wavelike coordinates \cite{Bruhat} or CMCSH gauge \cite{A-M-03-local}, in the sense that we will show that to solve the local existence theorem for the original EKG system, it suffices to solve a ‘reduced’ system. Additionally,  this approach is irrelevant to the matter field. We sketch the proof in Appendix \ref{sec-local-cmc}.

\begin{theorem}\label{thm-local-existence}
Assume $\Sigma$ be a compact Riemannian manifold of hyperbolic type. Let $(g_0, k_0, \phi_0, \phi_1)$ be the rescaled CMC data (letting $\tr_{g_0} k_0 = -3$) verifying the constraint equations of EKG system and the following conditions:
\begin{itemize}
\item [1] The Ricci curvature of $g_{0}$, $R_{0ij}$ satisfies $R_{0ij} + 2 g_{0ij} \in H^3(\Sigma, g_0),$
\item [2] $k_0$ is a symmetric $(0,2)$-tensor such that the traceless part verifying $\hat k_0 \in H^4(\Sigma, g_0),$
\item [3] $(\phi_0, \phi_1)$  $\in H^5(\Sigma, g_0) \times H^4(\Sigma, g_0)$.
\end{itemize}
Then there are $0<t_{-} < t_0 < t_{+}$ such that there is a unique, local-in-time smooth development $(\bar M, \bar g)$ of $(g_0, k_0)$, foliated by CMC hypersurfaces, and $\bar M = (t_{-}, t_{+}) \times \Sigma$, $t =t_0$ corresponding to the initial slice $\Sigma$.
\end{theorem}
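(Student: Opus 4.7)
The plan is to reduce the local existence problem for the CMC-gauged EKG system to a coupled elliptic-hyperbolic system, run a standard iteration scheme on a short time interval around $t_0$, and then verify that the constraint equations and the CMC condition propagate under the evolution, so that the solution of the reduced system is in fact a solution of the original EKG system.

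\textbf{Deriving the lapse equation.} Because the CMC gauge prescribes $\tr \tilde k = \tau = -3/t$ on each slice, the lapse $N$ is not dynamical but rather determined by an elliptic equation. Concretely, one takes the trace of \eqref{eq-evolution-2} with $g^{ij}$, uses $\tr \hat k = 0$ together with $\lie_\dtau g^{ij} = -2\hat N g^{ij} + 2N \hat k^{ij}$, and substitutes $\bar R_{\Tbar\Tbar}$ as computed from $\mathcal{T}_{\alpha\beta}(\phi)$ via the Einstein equations. This produces an elliptic equation of the schematic form $\Delta N - \bigl( |\hat k|^2 + \bar R_{\Tbar\Tbar} \bigr) N = 3(N-1)$ whose principal coefficient $|\hat k|^2 + \bar R_{\Tbar\Tbar}$ is a nonnegative quadratic combination of $\hat k$, $\Tbar\phi$, $\nabla\phi$ and $\bar m(t)\phi$. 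For initial data close to the Milne background, this operator is a small perturbation of $\Delta - 3\cdot(\text{positive constant})$, so elliptic theory on the compact manifold $(\Sigma,g)$ yields a unique $\hat N \in H^{s+1}$ from any $(g,\hat k,\phi,\Tbar\phi) \in H^s\times H^{s-1}\times H^s\times H^{s-1}$.

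\textbf{The reduced evolution system and iteration.} With $N$ expressed as an elliptic functional of the instantaneous geometric and matter data and with vanishing shift, the remaining unknowns $(g_{ij},\hat k_{ij},\phi,\Tbar\phi)$ satisfy the first-order evolution system consisting of \eqref{eq-evolution-1}, \eqref{eq-evolution-2} (with $\bar R_{i\Tbar j\Tbar}$ rewritten via the Gauss equation \eqref{Gauss-Ricci-hat-k} so that the Ricci term $R_{ij}$ appears explicitly) and the $1+3$ form of the KG equation. Following \cite{Christodoulou-K-93}, we replace the $R_{ij}$ on the right-hand side by its coordinate expression $-\tfrac{1}{2} g^{pq}\partial_p\partial_q g_{ij} + (\text{lower order})$ in a fixed coordinate chart adapted to $\gamma$; the principal symbol then becomes quasi-diagonal and a Leray-type hyperbolic reduction applies. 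We then iterate: given the $n$-th iterate $(g^{(n)},\hat k^{(n)},\phi^{(n)})$ we solve the elliptic equation for $N^{(n)}$, integrate the reduced evolution equations on a slab $[t_-,t_+]\times\Sigma$ with data at $t=t_0$, and produce the $(n{+}1)$-th iterate. Standard $H^s$ energy estimates, carried out at the regularity levels in the hypotheses of the theorem ($H^5$ for $\phi$, $H^4$ for $\hat k$ and $\Tbar\phi$, $H^3$ for the Ricci part of the metric), give uniform bounds on a slab whose size depends only on the initial norms; the difference of two consecutive iterates satisfies a linear system and yields contraction in a lower-regularity space, so a fixed point $(g,\hat k,\phi)$ exists and is unique.

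\textbf{Constraint and CMC propagation.} The fixed point solves the reduced evolution system and the lapse elliptic equation, but a priori not the original EKG system in CMC gauge. Define the constraint quantities $\mathcal{C}_0 := R - |\hat k|^2 + 6 - 2\bar R_{\Tbar\Tbar} - \bar R$ from \eqref{Gauss-Ricci-trace-hat -k}, $\mathcal{C}_j := \nabla^i \hat k_{ij} - \nabla_j \tr k + \bar R_{\Tbar j}$ from \eqref{Codazzi-div-k}, together with the CMC deviation $\mathcal{C}_\tau := \tr k + 3/t$. A direct computation using the reduced evolution equations, the twice-contracted Bianchi identity, and the KG equation shows that $(\mathcal{C}_0,\mathcal{C}_j,\mathcal{C}_\tau)$ satisfies a closed homogeneous linear first-order system; indeed the elliptic equation defining $N$ was chosen precisely so that the evolution of $\tr k$ produces $\partial_t \mathcal{C}_\tau = 0$ modulo $\mathcal{C}_0,\mathcal{C}_j$. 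Since all three quantities vanish at $t=t_0$ by assumption on the CMC initial data satisfying the constraint equations, uniqueness for this subsidiary system forces them to vanish throughout $(t_-,t_+)\times\Sigma$, so the fixed point is a genuine EKG solution foliated by CMC hypersurfaces.

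\textbf{Main obstacle.} The technically delicate step is the propagation-of-constraints argument of the third paragraph. The subsidiary system must close: the matter contribution forces one to use the KG equation itself (not just the evolution of $\phi$) to cancel the bad terms arising from $\p_t \bar R_{\Tbar\Tbar}$ and $\p_t \bar R_{\Tbar j}$, and it is exactly the CMC choice of lapse that eliminates the obstruction to $\p_t \mathcal{C}_\tau \equiv 0$. Checking these cancellations explicitly, together with keeping the regularity balanced across the elliptic-hyperbolic coupling so that neither the lapse equation nor the evolution equations lose derivatives in the iteration, is the crux of the argument sketched in Appendix \ref{sec-local-cmc}.
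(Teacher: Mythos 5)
Your overall template (elliptic lapse, reduced evolution system, iteration, propagation of constraints) is the right one, but the central step in your second paragraph has a genuine gap, and it is exactly the difficulty the paper's argument is built to avoid. You propose to keep the first-order ADM-type system \eqref{eq-evolution-1}--\eqref{eq-evolution-2} for $(g_{ij},\hat k_{ij})$ and to make it hyperbolic by writing $R_{ij} = -\tfrac12 g^{pq}\partial_p\partial_q g_{ij} + \text{l.o.t.}$ in a fixed coordinate chart adapted to $\gamma$. That identity is only valid in coordinates that are spatially harmonic for the evolving metric $g(t)$; with zero shift a fixed chart does not remain harmonic, so the omitted $\partial\Gamma$ terms contain second derivatives of $g$ and sit at the principal level. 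Keeping them, the CMC zero-shift system is only weakly (non-strictly) hyperbolic --- the paper says this explicitly at the start of Section \ref{sec-local} --- so no Leray-type reduction or standard $H^s$ iteration applies to the system as you wrote it. If instead you literally replace $R_{ij}$ by its harmonic-coordinate expression, you have changed the system, and your subsidiary system in the third paragraph must additionally propagate the harmonicity defect $g^{pq}\left(\Gamma^i_{pq}(g)-\Gamma^i_{pq}(\gamma)\right)$; with vanishing shift its time derivative is sourced by $\nabla N$ and $N\hat k$ terms and does not close homogeneously, which is precisely why the Andersson--Moncrief route requires the CMCSH gauge with a nonzero elliptic shift.

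The paper instead follows the Christodoulou--Klainerman maximal-gauge argument: it derives a genuine second-order hyperbolic equation for the second fundamental form itself, $-\left(N^{-1}\lie_{\dt}\right)^2 k_{ij} + \Delta k_{ij} = F_{ij}$ in \eqref{eq-box-k-ij-kg-phi}, and solves the coupled elliptic--hyperbolic system \eqref{eq-box-k-ij-kg-phi}--\eqref{eq-dt-N}: wave-type energy estimates for $k_{ij}$ and $\phi$, the integral equation $-N^{-1}\lie_{\dt}g_{ij}=2k_{ij}$ recovering $g_{ij}$, and elliptic estimates for $N$ and $N'$. The constraint propagation is then run for the four quantities $A,B,C_i,D_{ij}$ of \eqref{conserved-gague-A}--\eqref{conserved-gague-D} via the closed system \eqref{eq-time-derivative-A}--\eqref{eq-time-derivative-D} (Lemma \ref{lem-gauge-conserved}); the quantity $D_{ij}$ is indispensable there because the second variation equation has been replaced by a wave equation, and it has no counterpart in your list $(\mathcal{C}_0,\mathcal{C}_j,\mathcal{C}_\tau)$. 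Your constraint-propagation paragraph is fine in spirit, but it is attached to a reduced system whose well-posedness has not been established; to repair the proof you should either adopt the CK-style wave equation for $k_{ij}$ (and then also propagate $D_{ij}$), or pass to the CMCSH gauge with an elliptic shift equation.
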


\section{$1+3$ splitting and the energy scheme}\label{sec-energy-estimate-1+3}

 In the quantitative computations throughout the paper, we employ the conventions: 
 $R_{imjn}$ whose four indices $i, m, j, n$ range over $1, \cdots, 3$ is viewed as a $(0,4)$-tensor (not the special component) on $\Sigma_t.$

\begin{definition}
A $(0, l)$-tensor $\Psi_{\alpha_1 \cdots  \alpha_l}$ is called $T$-tangent if 
\begin{equation}\label{def-T-tensor}
T^\beta \Psi_{\alpha_1\cdots \beta \cdots \alpha_l} =0.
\end{equation}
\end{definition}
If  $\Psi_{\alpha_1 \cdots  \alpha_l}$ is $T$-tangent, then $\lie_{\dtau} \Psi_{\alpha_1 \cdots  \alpha_l}$ and $\lie_{T} \Psi_{\alpha_1 \cdots  \alpha_l}$ are both $T$-tangent. But both $D_T \Psi_{\alpha_1 \cdots  \alpha_l}$ and $D_{\p_t} \Psi_{\alpha_1 \cdots  \alpha_l}$ are no longer $T$-tangent.  Any $T-$tangent tensors can be viewed as tensors on $\Sigma_t$.  Besides, we denote $\nabla_k \Psi_{i_1 \cdots i_n}$ the projection of $\nabla_k \Psi$ on $\Sigma_t$ (evaluated on the corresponding components).  Suppose $\Psi_{I_l}$ is a $(0, l)$-tensor on $\Sigma_t$, then the zero-extension of $\Psi_{I_l}$ becomes $T$-tangent tensor on $\bar M,$ which we will still denote by $\Psi_{I_l}$. The following commuting Lemma \ref{lem-commu-lie} holds for $T$-tangent tensor as well. 

Let $\Gamma_{ij}^a$ be the connection coefficient of $\nabla$, $\dtau$ be the normal vector field defined in Section \ref{sec-sec-background}. Then the Lie derivative $\lie_{\dtau} \Gamma_{ij}^a$ is a tensor field \eqref{dt-Gamma}, 
\begin{equation}\label{dtau-Gamma}
\begin{split}
\lie_{\dtau} \Gamma^a_{ij}& =\frac{1}{2}g^{ab} \left( \nabla_i \lie_{\dtau} g_{jb} + \nabla_j \lie_{\dtau} g_{ib} -\nabla_b \lie_{\dtau} g_{ij} \right).
\end{split}
\end{equation}
Remind that $\lie_{\dtau} g_{ij} = 2 \hat N g_{ij} -2N \hat k_{ij}.$

A commuting identity between $\nabla$ and $\lie_{\dtau}$ is given below.
\begin{lemma}\label{lem-commu-lie}
Let $V$ be an arbitrary $(0, l)$-tensor field on $(\Sigma, g_{ij})$. The following commuting formula is true:
\begin{equation}\label{commuting-lie-nabla}
\begin{split}
\lie_{\dtau} \nabla_j V_{a_1 \cdots a_l} =& \nabla_j \lie_{\dtau} V_{a_1 \cdots a_l} - \sum_{i=1}^l \lie_{\dtau} \Gamma^p_{j a_i} V_{a_1 \cdots p \cdots a_l}.
\end{split}
\end{equation}
\end{lemma}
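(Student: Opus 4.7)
The identity \eqref{commuting-lie-nabla} equates two $(0,l+1)$-tensors, so it suffices to verify it in any one convenient chart. I would choose coordinates $(\bar\tau, x^i)$ adapted to the foliation, in which $\dtau = \partial_{\bar\tau}$ is a coordinate vector field. Since $[\partial_{\bar\tau}, \partial_i] = 0$ and $V$ carries only spatial indices, $\lie_{\dtau}$ reduces on components of $V$ to the naive partial derivative $\partial_{\bar\tau}$.

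I would then proceed slot-by-slot. For a scalar $\phi$ ($l=0$), the statement is just $\partial_{\bar\tau} \partial_j \phi = \partial_j \partial_{\bar\tau} \phi$. For a covector $V_a$ ($l=1$), expand $\nabla_j V_a = \partial_j V_a - \Gamma^p_{ja} V_p$ and apply $\partial_{\bar\tau}$ to obtain
\begin{align*}
\lie_{\dtau}\nabla_j V_a
&= \partial_j \partial_{\bar\tau} V_a - (\partial_{\bar\tau} \Gamma^p_{ja}) V_p - \Gamma^p_{ja} \partial_{\bar\tau} V_p \\
&= \nabla_j \lie_{\dtau} V_a - (\lie_{\dtau} \Gamma^p_{ja}) V_p,
\end{align*}
where in the last line I use that the coordinate expression $\partial_{\bar\tau} \Gamma^p_{ja}$ agrees with the tensor $\lie_{\dtau} \Gamma^p_{ja}$ supplied by \eqref{dtau-Gamma}. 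The general $(0,l)$-case then follows by the Leibniz rule: both $\lie_{\dtau}$ and $\nabla_j$ are derivations of the tensor algebra, so by decomposing $V$ locally as a sum of tensor products of covectors and applying the $l=1$ identity to each factor, one correction term $-\lie_{\dtau}\Gamma^p_{j a_i} V_{a_1 \cdots p \cdots a_l}$ is produced for each index slot, summing to the right-hand side of \eqref{commuting-lie-nabla}.

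The only conceptual point requiring care is the identification of the coordinate quantity $\partial_{\bar\tau} \Gamma^p_{ja}$ with the tensorial object $\lie_{\dtau} \Gamma^p_{ja}$. This is settled by \eqref{dtau-Gamma}: although neither Levi-Civita connection along the $\bar\tau$-flow is a tensor, the difference of two connections is, hence the infinitesimal variation $\lie_{\dtau}\Gamma$ is tensorial, and a direct Koszul-type computation with $\lie_{\dtau} g_{ij} = 2\hat N g_{ij} - 2N \hat k_{ij}$ confirms that in the adapted chart it coincides with $\partial_{\bar\tau}\Gamma^p_{ja}$. With this identification in hand, there is no genuine obstacle; the lemma is coordinate bookkeeping.
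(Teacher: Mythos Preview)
Your argument is correct and is exactly the ``straightforward calculation'' the paper alludes to (the paper gives no detailed proof of this lemma). The only content is the tensoriality of $\lie_{\dtau}\Gamma^p_{ja}$, which you justify via \eqref{dtau-Gamma}, and the rest is the slot-by-slot Leibniz expansion you wrote.
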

This lemma can be proved by straightforward calculations.
Besides, we have for any $T-$tangent tensor $V_{a_1 \cdots a_l},$
\begin{equation}\label{commuting-lie-T-nabla}
\begin{split}
\lie_{\bar T} \nabla_j V_{a_1 \cdots a_l} =& \nabla_j \lie_{\bar T} V_{a_1 \cdots a_l} +N^{-1} \nabla_j N \lie_{\bar T} V_{a_1 \cdots a_l}\\
&+ N^{-1} \left( \lie_{\dtau} \nabla_j V_{a_1 \cdots a_l} - \nabla_j \lie_{\dtau} V_{a_1 \cdots a_l} \right).
\end{split}
\end{equation}

\subsection{Commuting identities}\label{sec-commuting-id}
 We shall introduce a commuting Lemma, based on which we can proceed the energy argument.

An application of Lemma \ref{lem-commu-lie} to $\nabla_{I_l}  \psi$ gives the following lemma.
\begin{lemma}[Commuting Lemma for Scalar Field]\label{lemma-commuting-application}
Let $l \geq 1$,  then 
\begin{subequations}
\begin{equation}\label{id-commuting-nabla-T-l-simplify}
\begin{split}
  \lie_{\Tbar}  \nabla_{I_l} \psi & = \nabla_{I_l}  \Tbar \psi  +  \mathcal{K}\mathcal{N}\mathcal{T}_{I_l}(\psi),
\end{split}
\end{equation}
\begin{equation}\label{id-commuting-nabla-N-T-l-simplify}
\begin{split}
&  \lie_{\dtau}  \nabla_{I_l} \psi =  \nabla_{I_l} \left( \dtau  \psi \right) +   \mathcal{K}\mathcal{N}_{I_l}(\psi),
\end{split}
\end{equation}
\end{subequations}
where $\hat{\mathcal{K}}_{I_l}(\psi)$ and $\mathcal{K}\mathcal{N}_{I_l}(\psi)$ are defined as
\begin{subequations}
\begin{equation}\label{def-commuting-KN-l}
\begin{split}
 \mathcal{K}\mathcal{N}_{I_l}(\psi) &= \pm \sum_{a+2 +b  = l}  \nabla_{I_a} \nabla \left( N \hat k + \hat N \right) * \nabla_{I_{b}} \nabla \psi, \,\, l \geq 2, 
\end{split}
\end{equation}
\begin{equation}\label{def-commuting-K-l}
\begin{split}
  \mathcal{K}\mathcal{N}\mathcal{T}_{I_l} (\psi) &= N^{-1} \mathcal{N}_{I_l} ( \Tbar \psi) +    N^{-1} \mathcal{K}\mathcal{N}_{I_l}(\psi), \,\, l \geq 1,
\end{split}
\end{equation}
\end{subequations}
with  $\mathcal{N}_{I_l} (\Tbar \psi)$ defined as below
\begin{equation}\label{def-N-l}
\begin{split}
  \mathcal{N}_{I_l} (\Tbar \psi) &= \sum_{a +b+1  = l} \nabla_{I_a} \nabla N * \nabla_{I_{b}} \Tbar \psi.
\end{split}
\end{equation}
And when $l=1$, $\mathcal{K}\mathcal{N}_{I_1}(\psi)=0$.
\end{lemma}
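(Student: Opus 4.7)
The plan is to prove both identities by induction on $l$, using Lemma \ref{lem-commu-lie} and \eqref{commuting-lie-T-nabla} to shift $\lie_{\dtau}$ (respectively $\lie_{\Tbar}$) past one covariant derivative at a time, and then identifying the resulting connection terms via \eqref{dtau-Gamma} and the first variation equation \eqref{eq-evolution-1}.

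For the base case $l=1$ of \eqref{id-commuting-nabla-N-T-l-simplify}, I note that $\psi$ is a scalar so the correction sum in Lemma \ref{lem-commu-lie} is empty, giving $\lie_{\dtau}\nabla_j\psi = \nabla_j(\dtau\psi)$, i.e.\ $\mathcal{K}\mathcal{N}_{I_1}(\psi)=0$. For the inductive step, I write $\nabla_{I_l}\psi = \nabla_{i_1}\nabla_{I_{l-1}}\psi$ and apply Lemma \ref{lem-commu-lie} to the $T$-tangent tensor $V=\nabla_{I_{l-1}}\psi$, obtaining
\begin{equation*}
\lie_{\dtau}\nabla_{I_l}\psi = \nabla_{i_1}\lie_{\dtau}\nabla_{I_{l-1}}\psi - \sum_{s=2}^{l}\lie_{\dtau}\Gamma^p_{i_1 i_s}\nabla_{i_2\cdots\stackrel{s}{p}\cdots i_l}\psi.
\end{equation*}
The inductive hypothesis rewrites $\lie_{\dtau}\nabla_{I_{l-1}}\psi$ as $\nabla_{I_{l-1}}(\dtau\psi)+\mathcal{K}\mathcal{N}_{I_{l-1}}(\psi)$, producing the principal term $\nabla_{I_l}(\dtau\psi)$ and a remainder $\nabla_{i_1}\mathcal{K}\mathcal{N}_{I_{l-1}}(\psi)$. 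Substituting \eqref{dtau-Gamma} together with $\lie_{\dtau}g_{ij}=2\hat Ng_{ij}-2N\hat k_{ij}$ yields $\lie_{\dtau}\Gamma^p_{ij} \sim \nabla(N\hat k+\hat N)$, so the Christoffel correction contributes a term of the form $\nabla(N\hat k+\hat N)*\nabla_{I_{l-2}}\nabla\psi$, which is exactly the $a=0$, $b=l-2$ slot in the schematic sum \eqref{def-commuting-KN-l}. Expanding the Leibniz rule on $\nabla_{i_1}\mathcal{K}\mathcal{N}_{I_{l-1}}(\psi)$ shifts the $(a,b)$ multi-indices by one in either factor, so the result collapses neatly into $\mathcal{K}\mathcal{N}_{I_l}(\psi)$ as defined.

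For the $\lie_{\Tbar}$ identity \eqref{id-commuting-nabla-T-l-simplify}, I proceed analogously but invoke \eqref{commuting-lie-T-nabla}. The base case $l=1$ uses that $\lie_{\dtau}\nabla_j\psi-\nabla_j\lie_{\dtau}\psi=0$ for a scalar, so \eqref{commuting-lie-T-nabla} reduces to $\lie_{\Tbar}\nabla_j\psi=\nabla_j\Tbar\psi+N^{-1}\nabla_jN\cdot\Tbar\psi$, matching $\mathcal{K}\mathcal{N}\mathcal{T}_{I_1}(\psi)=N^{-1}\mathcal{N}_{I_1}(\Tbar\psi)$. For the inductive step, I apply \eqref{commuting-lie-T-nabla} to $V=\nabla_{I_{l-1}}\psi$, giving
\begin{equation*}
\lie_{\Tbar}\nabla_{I_l}\psi = \nabla_{i_1}\lie_{\Tbar}\nabla_{I_{l-1}}\psi + N^{-1}\nabla_{i_1}N\cdot\lie_{\Tbar}\nabla_{I_{l-1}}\psi + N^{-1}\bigl(\lie_{\dtau}\nabla_{I_l}\psi-\nabla_{i_1}\lie_{\dtau}\nabla_{I_{l-1}}\psi\bigr).
\end{equation*}
The last parenthesis is, by the already-established identity for $\lie_{\dtau}$, precisely $\mathcal{K}\mathcal{N}_{I_l}(\psi)$ (the principal pieces cancel). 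The inductive hypothesis replaces $\lie_{\Tbar}\nabla_{I_{l-1}}\psi$ by $\nabla_{I_{l-1}}\Tbar\psi+\mathcal{K}\mathcal{N}\mathcal{T}_{I_{l-1}}(\psi)$, and using the Leibniz rule the $\nabla_{i_1}$ distributes onto the $\nabla N$, $\Tbar\psi$, and $\mathcal{K}\mathcal{N}$ factors, while the extra $N^{-1}\nabla_{i_1}N$ multiplication shifts one of the lower-order pieces into the schematic sum. A bookkeeping step, in which the cross-term $\nabla_{i_1}(N^{-1})\cdot\nabla N$ cancels against one of the $N^{-2}(\nabla N)^2$ contributions, guarantees that only terms with a single overall factor of $N^{-1}$ survive, reproducing the defining sums \eqref{def-N-l} and \eqref{def-commuting-KN-l} with the correct index ranges.

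The main obstacle I anticipate is not conceptual but combinatorial: keeping the $(a,b)$ multi-index bookkeeping in $\mathcal{N}_{I_l}$ and $\mathcal{K}\mathcal{N}_{I_l}$ consistent under the inductive step, and verifying that the extra $N^{-1}(\nabla N)^2$ cross-terms generated by iterating \eqref{commuting-lie-T-nabla} really do cancel rather than proliferate. Because the $*$-notation absorbs both contractions and integer coefficients, once each newly generated term is matched to an allowed slot in the schematic sum the identification is essentially automatic, so the argument ultimately reduces to checking that no term with a worse $N$-factor (such as $N^{-2}$) or with a derivative landing on the wrong object appears at any stage of the induction.
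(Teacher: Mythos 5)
Your proposal is correct and follows essentially the same route as the paper: induction on $l$, shifting $\lie_{\dtau}$ (resp.\ $\lie_{\Tbar}$) past one covariant derivative via Lemma \ref{lem-commu-lie} and \eqref{commuting-lie-T-nabla}, and identifying $\lie_{\dtau}\Gamma^p_{ij}$ schematically as $\nabla(N\hat k+\hat N)$ through \eqref{dtau-Gamma} and \eqref{eq-evolution-1}. The only cosmetic difference is that the paper first records the exact iterated-commutator identity \eqref{commuting-lie-T-I} and then passes to the schematic form, whereas you run the induction directly on the schematic statement (and your remark that the parenthesis equals ``precisely'' $\mathcal{K}\mathcal{N}_{I_l}(\psi)$ is, strictly, only the $a=0$ slot of that schematic class, which is harmless in the $*$-notation).
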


\begin{proof}
We will prove the following commuting identities.
\begin{equation}\label{commuting-lie-T-I}
\begin{split}
&  \lie_{\dtau} \nabla_{i_1} \cdots \nabla_{i_l} \psi -  \nabla_{i_1} \cdots \nabla_{i_l} \dtau \psi \\
=&- \sum_{a=0}^{l-2} \sum_{m=a+2}^l \nabla_{i_1} \cdots \nabla_{i_a} \left( \lie_{\dtau} \Gamma^p_{i_{a+1} i_m} \nabla_{i_{a+2}} \cdots \stackrel{m}{\nabla}_p \cdots \nabla_{i_l} \psi \right),
\end{split}
\end{equation}
where we use the convention for the right hand side: when $l=1$, it is identically zero;
when $l \geq 2$, the term corresponding to $a=0$ is interpreted as $- \sum_{m=2}^l  \lie_{\dtau} \Gamma^p_{i_{1} i_m} \nabla_{i_{2}} \cdots \stackrel{m}{\nabla}_p \cdots \nabla_{i_l} \psi.$ Namely, the index of $\nabla_{i_1} \cdots \nabla_{i_a}$, $\{i_1 \cdots i_a\}$ is always in the proper order. Otherwise, we let inverted one $\nabla_{i_{k+1}} \nabla_{i_{k}} = 1.$ 

For $l=1,$ by Lemma \ref{lem-commu-lie}, we have
\begin{equation}\label{commuting-id-nabla-T-induction-1}
\begin{split}
\lie_{\dtau} \nabla_{i_1} \psi  = \nabla_{i_1}  \dtau \psi.
 \end{split}
\end{equation}
Hence \eqref{commuting-lie-T-I} is true for $l=1.$ Next, we prove  \eqref{commuting-lie-T-I} by induction. Suppose  \eqref{commuting-lie-T-I} holds for $l\leq n-1,$ we wish to prove that it also holds for $l=n.$ 
By the commuting Lemma \ref{lem-commu-lie},
\begin{equation}\label{commuting-nabla-T-induction-n}
\begin{split}
& \lie_{\dtau} \nabla_{i_1} \nabla_{i_2} \cdots \nabla_{i_n} \psi \\
=&   \nabla_{i_1} \left( \lie_{\dtau} \nabla_{i_2} \cdots \nabla_{i_n} \psi \right) -  \sum_{m=2}^n \lie_{\dtau} \Gamma^p_{i_1 i_m} \nabla_{i_2} \cdots \stackrel{m}{\nabla}_p \cdots \nabla_{i_n} \psi.
\end{split}
\end{equation}
Apply \eqref{commuting-lie-T-I} to $l=n-1,$
\begin{equation}\label{commuting-nabla-T-induction-n-1}
\begin{split}
&  \lie_{\dtau} \nabla_{i_2} \cdots \nabla_{i_n} \psi -  \nabla_{i_2} \cdots \nabla_{i_n} \dtau \psi \\
=&- \sum_{a=1}^{n-2} \sum_{m=a+2}^n \nabla_{i_2} \cdots \nabla_{i_a} \left( \lie_{\dtau} \Gamma^p_{i_{a+1} i_m} \nabla_{i_{a+2}} \cdots \stackrel{m}{\nabla}_p \cdots \nabla_{i_n} \psi \right).
\end{split}
\end{equation}
Note that we set the inverted order one $\nabla_{i_2} \nabla_{i_1} =1$.
Substituting \eqref{commuting-nabla-T-induction-n-1} into \eqref{commuting-nabla-T-induction-n}, and making some rearrangements, we prove that  \eqref{commuting-lie-T-I} holds for $l=n.$ Noting the formula \eqref{dtau-Gamma} for $\lie_{\dtau} \Gamma_{ij}^a$, we derive \eqref{id-commuting-nabla-N-T-l-simplify}.

Based on \ref{commuting-lie-T-nabla}, we have for 
\begin{equation}\label{commuting-lie-T-nabla-1}
\begin{split}
\lie_{\Tbar} \nabla_j \nabla_{I_n} \psi =& \nabla_j \lie_{\Tbar}  \nabla_{I_n} \psi  +N^{-1} \nabla_j N \lie_{\Tbar}  \nabla_{I_n} \psi  \\
& - \sum_{k=1}^n N^{-1} \lie_{\dtau} \Gamma^p_{j i_k} \nabla_{i_1}  \cdots \nabla_{p} \cdots \nabla_{i_n} \psi.
\end{split}
\end{equation}
We will use \eqref{commuting-lie-T-nabla-1} to  prove \eqref{id-commuting-nabla-T-l-simplify}, \eqref{def-commuting-K-l} by induction. For $l=1,$ by \eqref{commuting-id-nabla-T-induction-1} and \eqref{commuting-lie-T-nabla}, we have
\begin{equation*}
\begin{split}
 \lie_{\Tbar} \nabla_{i_1} \psi &= \nabla_{i_1} \Tbar \psi + N^{-1} \nabla_{i_1} N * \Tbar \psi,
 \end{split}
\end{equation*}
That is  \eqref{id-commuting-nabla-T-l-simplify}, \eqref{def-commuting-K-l} holds for $l=1$. Suppose \eqref{id-commuting-nabla-T-l-simplify} holds for $l \leq n-1$, in views of \eqref{commuting-lie-T-nabla-1}, we now consider by induction
\begin{equation*}
\begin{split}
& \lie_{\Tbar} \nabla_{i_1} \nabla_{i_2} \cdots \nabla_{i_n} \psi =   \nabla_{i_1} \left( \lie_{\Tbar} \nabla_{i_2} \cdots \nabla_{i_n} \psi \right) \\
& \quad \quad \quad + N^{-1} \nabla_{i_1} N \lie_{\Tbar}  \nabla_{i_2} \cdots \nabla_{i_n}  \psi  - N^{-1} \nabla (\hat N+ N \hat k) \nabla^n \psi \\
=&  \nabla_{i_1} \left(  \nabla_{i_2} \cdots \nabla_{i_n} \lie_{\Tbar} \psi + \sum_{a +b  = n-1, a  \geq 1} N^{-1} \nabla_{I_a} N * \nabla_{I_{b}} \Tbar \psi +   N^{-1} \mathcal{K}\mathcal{N}_{I_{n-1}}(\psi) \right) \\
& + N^{-1} \nabla_{i_1} N \left(  \nabla_{i_2} \cdots \nabla_{i_n} \lie_{\Tbar} \psi + \sum_{a +b  = n-1, a  \geq 1} N^{-1} \nabla_{I_a} N * \nabla_{I_{b}} \Tbar \psi \right) \\
&- N^{-1} \nabla_{i_1} N * N^{-1} \mathcal{K}\mathcal{N}_{I_{n-1}}(\psi) - N^{-1} \nabla (\hat N + N \hat k) \nabla^n \psi.
\end{split}
\end{equation*}
That is,
\begin{equation}\label{commuting-nabla-T-induction-n}
\begin{split}
& \lie_{\Tbar} \nabla_{I_n}\psi =  \nabla_{I_n} \lie_{\Tbar} \psi + N^{-1} \nabla N * \nabla_{I_{n-1}} \lie_{\Tbar} \psi  \\
& \quad + \sum_{a +1 +b  = n-1} N^{-1}  \nabla_{i_1} \left( \nabla_{I_a} \nabla N * \nabla_{I_{b}} \Tbar \psi \right)  \\
&\quad +  N^{-1}  \nabla  \left( \mathcal{K}\mathcal{N}_{I_{n-1}}(\psi) \right) - N^{-1} \nabla (\hat N + N \hat k) \nabla^n \psi,
\end{split}
\end{equation}
which further implies  \eqref{id-commuting-nabla-T-l-simplify}, \eqref{def-commuting-K-l} .
\end{proof}

\begin{lemma}[Commuting Lemma for $T$-Tangent tensor]\label{lemma-commuting-application-tensor}
Let $\Psi_{J_n}$ be a $T$-tangent $(0, n)$-tensor field. 
Let $l\geq 1$,
\begin{subequations}
\begin{equation}\label{id-commuting-nabla-N-T-l-simplify-tensor}
\begin{split}
&\lie_{\dtau}  \nabla_{I_l} \Psi_{J_n}=  \nabla_{I_l} \lie_{\dtau} \Psi_{J_n} + \mathcal{K}\mathcal{N}_{I_l}(\Psi_{J_n}),
\end{split}
\end{equation}
\begin{equation}\label{id-commuting-nabla-T-l-simplify-tensor}
\begin{split}
&\lie_{\Tbar}  \nabla_{I_l} \Psi_{J_n}=  \nabla_{I_l} \lie_{\Tbar} \Psi_{J_n} +  \mathcal{K}\mathcal{N}\mathcal{T}_{I_l} (\Psi_{J_n}),
\end{split}
\end{equation}
\end{subequations}
where $\mathcal{K}\mathcal{N}\mathcal{T}_{I_l}  (\Psi_{J_n})$ and $\mathcal{K}\mathcal{N}_{I_l}(\Psi_{J_n})$ are defined as
\begin{equation}\label{def-commuting-K-R-T-l-tensor}
\begin{split}
\mathcal{K}\mathcal{N}_{I_l}(\Psi_{J_n}) &=\pm \sum_{a+1 +b  = l} \nabla_{I_a} \nabla \left( N \hat  k + \hat N \right) * \nabla_{I_{b}} \Psi_{J_n} , \\
 \mathcal{K}\mathcal{N}\mathcal{T}_{I_l} (\Psi_{J_n}) = &   N^{-1}\mathcal{N}_{I_l} ( \lie_{\Tbar} \Psi_{J_n}) +N^{-1} \mathcal{K}\mathcal{N}_{I_l}(\Psi),
\end{split}
\end{equation}
where
\begin{equation}\label{def-eq-kg-N-l-1+3-error}
\begin{split}
 \mathcal{N}_{I_l} (\Psi) &= \sum_{a +1+b = l}  \nabla_{I_{a}} \nabla \hat N * \nabla_{I_b} \Psi, 
\end{split}
\end{equation}
\end{lemma}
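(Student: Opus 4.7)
The plan is to mirror the proof of Lemma \ref{lemma-commuting-application}, replacing the scalar field $\psi$ by the $T$-tangent tensor $\Psi_{J_n}$. Since any $T$-tangent tensor may be identified with a spatial $(0,n)$-tensor on $\Sigma_t$, the object $\nabla_{I_l}\Psi_{J_n}$ is a genuine $(0,l+n)$-tensor on $\Sigma_t$, so that Lemma \ref{lem-commu-lie} and its $\lie_{\Tbar}$-counterpart \eqref{commuting-lie-T-nabla} both apply directly.

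For \eqref{id-commuting-nabla-N-T-l-simplify-tensor}, I would proceed by induction on $l$. In the base case $l=1$, Lemma \ref{lem-commu-lie} applied to $V = \Psi_{J_n}$ gives
\[
\lie_{\dtau}\nabla_{i_1}\Psi_{J_n} = \nabla_{i_1}\lie_{\dtau}\Psi_{J_n} - \sum_{k=1}^{n} \lie_{\dtau}\Gamma^p_{i_1 j_k}\Psi_{j_1\cdots p \cdots j_n},
\]
and by \eqref{dtau-Gamma} together with $\lie_{\dtau}g_{ij} = 2\hat N g_{ij} - 2N\hat k_{ij}$, the factor $\lie_{\dtau}\Gamma^p_{i_1 j_k}$ is a contraction of $\nabla(\hat N + N\hat k)$, matching the $l=1$ instance of \eqref{def-commuting-K-R-T-l-tensor}. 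For the inductive step $l-1\to l$, I would pull $\nabla_{i_1}$ outside via Lemma \ref{lem-commu-lie}, apply the induction hypothesis to $\nabla_{i_2}\cdots\nabla_{i_l}\Psi_{J_n}$, and collect all resulting error contributions under the schematic form $\nabla_{I_a}\nabla(\hat N + N\hat k) * \nabla_{I_b}\Psi_{J_n}$ with $a+1+b=l$, exactly as in the derivation of \eqref{commuting-lie-T-I}. The only new feature relative to the scalar case is that $\lie_{\dtau}\Gamma^p_{ji_k}$ now also hits the tensor indices $J_n$ of $\Psi_{J_n}$ in addition to the derivative indices $I_l$, but both contributions are absorbed by the $*$-contraction notation.

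For \eqref{id-commuting-nabla-T-l-simplify-tensor}, I would iterate \eqref{commuting-lie-T-nabla}. Each application produces three contributions: the main commutation $\nabla_j\lie_{\Tbar}(\cdots)$, which feeds the induction; a term $N^{-1}\nabla_j N \cdot \lie_{\Tbar}(\cdots)$, which on iteration assembles the sum $\sum_{a+1+b=l}\nabla_{I_a}\nabla N * \nabla_{I_b}\lie_{\Tbar}\Psi_{J_n}$, i.e., precisely $N^{-1}\mathcal{N}_{I_l}(\lie_{\Tbar}\Psi_{J_n})$; and the correction $N^{-1}(\lie_{\dtau}\nabla_j - \nabla_j\lie_{\dtau})$ applied to the inner tensor, which by the first part of the lemma is of the form $N^{-1}\mathcal{K}\mathcal{N}_{I_{\cdot}}(\Psi_{J_n})$. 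Summing these contributions and reorganizing gives \eqref{id-commuting-nabla-T-l-simplify-tensor} with $\mathcal{K}\mathcal{N}\mathcal{T}_{I_l}(\Psi_{J_n})$ as claimed.

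The main bookkeeping obstacle lies in verifying that, at the inductive step, applying the outer $\nabla_{i_1}$ to the error terms already present at level $l-1$ (both the $N^{-1}\nabla N*\cdots$ pieces and $N^{-1}\mathcal{K}\mathcal{N}_{I_{l-1}}$) does not generate contributions of higher order than declared. This reduces to Leibniz accounting: differentiating $N^{-1}$ produces further factors of $\nabla\hat N$ and its derivatives, all of which fit within the schematic form $\nabla_{I_a}\nabla\hat N*(\cdots)$ of $\mathcal{N}_{I_l}$. Once this is confirmed, the induction closes, and no input beyond Lemma \ref{lem-commu-lie}, the identity \eqref{commuting-lie-T-nabla}, and the expression \eqref{dtau-Gamma} for $\lie_{\dtau}\Gamma$ is needed.
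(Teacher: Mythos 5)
Your proposal is correct and follows essentially the same route as the paper: induction via Lemma \ref{lem-commu-lie} with the extra $\lie_{\dtau}\Gamma^p_{i j_b}$ contributions from the tensor indices $J_n$, the identification of $\lie_{\dtau}\Gamma$ with $\nabla(\hat N + N\hat k)$ via \eqref{dtau-Gamma}, and iteration of \eqref{commuting-lie-T-nabla} to handle the $\lie_{\Tbar}$ version. Your bookkeeping remarks (tensor-index terms and derivatives of $N^{-1}$ absorbed into the schematic $*$-forms) are exactly the level of detail at which the paper itself argues.
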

\begin{proof}
In fact, we can prove
\begin{equation}\label{commuting-lie-T-I-tensor}
\begin{split}
&  \lie_{\dtau} \nabla_{i_1} \cdots \nabla_{i_l} \Psi_{j_1 \cdots j_n} -  \nabla_{i_1} \cdots \nabla_{i_l} \lie_{\dtau} \Psi_{j_1 \cdots j_n} \\
=&- \sum_{a=0}^{l-2} \sum_{m=a+2}^l \nabla_{i_1} \cdots \nabla_{i_a} \left( \lie_{\dtau} \Gamma^p_{i_{a+1} i_m} \nabla_{i_{a+2}} \cdots \stackrel{m}{\nabla}_p \cdots \nabla_{i_l} \Psi_{j_1 \cdots j_n} \right)\\
&- \sum_{a=0}^{l-1} \sum_{b=1}^n \nabla_{i_1} \cdots \nabla_{i_a} \left( \lie_{\dtau} \Gamma^p_{i_{a+1} j_b} \nabla_{i_{a+2}} \cdots \nabla_{i_l} \Psi_{j_1 \cdots j_{b-1} p j_{b+1} \cdots j_n} \right).
\end{split}
\end{equation}
And similar commuting formula between $\lie_{\Tbar}$ and $\nabla_{I_l}$ holds as well.
Noting the formulae \eqref{dtau-Gamma} for $\lie_{\dtau} \Gamma_{ij}^a$ and \eqref{commuting-lie-T-nabla},
we can derive the proof for \eqref{id-commuting-nabla-N-T-l-simplify-tensor}-\eqref{id-commuting-nabla-T-l-simplify-tensor}.
\end{proof}

We additionally present a commuting identity between $\nabla$ and $\Delta$, which can be easily proved by induction. 
\begin{lemma}\label{lemma-commuting-nabla-laplacian}
 Let  $l \geq 0$, then for any scalar field $\psi$,
\begin{equation}\label{commuting-nabla-laplacian-l-simplify}
\Delta \nabla_{I_l} \psi =  \nabla_{I_l} \Delta \psi  + \mathcal{R}_{I_l}(\psi),
\end{equation}
where $\mathcal{R}_{I_l}(\psi), l \geq 1$ is defined as 
\begin{equation}\label{def-R-l-commute-nabla-laplacian}
\mathcal{R}_{I_l}(\psi) = \pm \sum_{a+b=l, \, a\leq l-1} \nabla_{I_a} R_{imjn} * \nabla_{I_b} \psi.
\end{equation}
Generally for any $T$-tangent $(0, n)$-tensor $\Psi=\Psi_{J_n},$
\begin{equation}\label{commuting-nabla-laplacian-l-Psi-i-ij}
 \Delta \nabla_{I_l} \Psi_{J_n}= \nabla_{I_l} \Delta \Psi_{J_n} + \mathcal{R}_{I_l}(\Psi_{J_n}),
\end{equation}
where $\mathcal{R}_l(\Psi_{J_n})$ is defined as
\begin{equation}\label{Def-R-Psi-ij-commute-nabla-laplacian}
 \mathcal{R}_{I_l}(\Psi_{J_n})=\pm \sum_{a+b = l} \nabla_{I_a} R_{imjn} * \nabla_{I_b} \Psi.
\end{equation}
We also define that when $l=0$,  $\mathcal{R}_{I_0}(\psi) = \mathcal{R}_{I_0}(\Psi_{J_n}) =0.$
\end{lemma}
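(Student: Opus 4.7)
The plan is to proceed by induction on $l$, using the standard Ricci identity: for any $(0,k)$-tensor $T$,
\[
[\nabla_a,\nabla_b]\, T_{c_1\cdots c_k} \;=\; -\sum_{s=1}^{k} R_{abc_s}{}^{p}\, T_{c_1\cdots p \cdots c_k},
\]
which is the only obstruction to commuting covariant derivatives and produces a Riemann contraction for each free index of $T$. The base case $l=0$ holds by the convention $\mathcal{R}_{I_0}=0$.

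For the inductive step in the scalar identity \eqref{commuting-nabla-laplacian-l-simplify}, I would write $\nabla_{I_l}\psi=\nabla_{i_1}\nabla_{I'_{l-1}}\psi$ with $I'_{l-1}=\{i_2,\ldots,i_l\}$, and commute $\Delta=\nabla^m\nabla_m$ past $\nabla_{i_1}$:
\begin{align*}
\Delta\nabla_{i_1}\nabla_{I'_{l-1}}\psi
&= \nabla_{i_1}\Delta\nabla_{I'_{l-1}}\psi
+ [\nabla^m,\nabla_{i_1}]\nabla_m\nabla_{I'_{l-1}}\psi
+ \nabla^m\bigl([\nabla_m,\nabla_{i_1}]\nabla_{I'_{l-1}}\psi\bigr).
\end{align*}
By the Ricci identity, the two commutator contributions are schematically of the form $R_{imjn}*\nabla^{l-1}\psi$ and $\nabla R_{imjn}*\nabla^{l-2}\psi+R_{imjn}*\nabla^{l-1}\psi$, both of which fit the structural template $\pm\sum_{a+b=l,\,a\leq l-1}\nabla_{I_a}R_{imjn}*\nabla_{I_b}\psi$. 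Applying the inductive hypothesis to rewrite $\Delta\nabla_{I'_{l-1}}\psi$ and distributing the outer $\nabla_{i_1}$ on $\mathcal{R}_{I'_{l-1}}(\psi)$ by Leibniz merely shuffles the counts $a$ and $b$ within the allowed range, yielding the claimed $\mathcal{R}_{I_l}(\psi)$.

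The tensor version \eqref{commuting-nabla-laplacian-l-Psi-i-ij} is proved by the same induction, but with one additional source of curvature terms: the Ricci identity acting on $\nabla_{I'_{l-1}}\Psi_{J_n}$ produces curvature contractions not only with the $l-1$ derivative indices but also directly with the $n$ tensor indices of $\Psi$. The latter contributions survive all the way down to $l=0$, and in particular give rise to terms in which all $l$ derivatives land on the Riemann factor while none land on $\Psi$, i.e.\ the term corresponding to $a=l,\,b=0$ in $\sum_{a+b=l}\nabla_{I_a}R_{imjn}*\nabla_{I_b}\Psi$. This explains why the range of $a$ is widened from $a\leq l-1$ (scalar case) to $a\leq l$ (tensor case).

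No real obstacle arises; the entire argument is bookkeeping of commutator expansions, and the schematic $*$-notation absorbs all index contractions and sign conventions so that no further structural information needs to be tracked. The one point worth verifying carefully is that applying $\nabla_{i_1}$ to an expression of the form $\nabla_{I_a}R*\nabla_{I_b}\psi$ with $a+b=l-1,\,a\leq l-2$ via Leibniz lands in the range $a+b=l,\,a\leq l-1$ (scalar case), so that the inductive structure of $\mathcal{R}_{I_l}$ is exactly reproduced rather than enlarged.
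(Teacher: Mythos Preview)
Your proposal is correct and matches the paper's approach: the paper states only that the lemma ``can be easily proved by induction'' without giving any details, and your argument fills in precisely that induction via the Ricci identity and the commutator expansion $\Delta\nabla_{i_1}=\nabla_{i_1}\Delta+[\nabla^m,\nabla_{i_1}]\nabla_m+\nabla^m[\nabla_m,\nabla_{i_1}]$. Your explanation of why the scalar range $a\leq l-1$ widens to $a\leq l$ in the tensor case (curvature contracting directly with the tensor indices $J_n$) is an accurate and helpful observation that the paper does not spell out.
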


We collect some definitions which will be used later. For any scalar function or $T$-tangent tensor $A$, we define
\begin{equation} \label{def-hat-k-N-A-Bianchi}
\begin{split}
  \hat{\mathcal{K}}\mathcal{N}_{I_l} (A) =&  \sum_{a+b +c=l } \nabla_{I_a} N * \nabla_{I_b} \hat k *\nabla_{I_c} A.
\end{split}
\end{equation}

At last, we show an identity that is useful in the calculation follows. 
\begin{proposition}
For any scalar function $f$, there is a identity
\begin{align}\label{id-div-T}
 \dtau \int_{\Sigma_{t}} f \di \mu_g =  \int_{\Sigma_{t}} \left( \dtau f  + 3 \hat N f \right) \di \mu_g.
\end{align}
\end{proposition}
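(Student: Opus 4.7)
The plan is to reduce the identity to a calculation of the Lie derivative of the volume form along $\dtau$. Because the shift has been set to zero, the coordinate vector field $\p_t = NT$ is normal to $\Sigma_t$, and the flow of $\dtau = t \p_t$ provides a one-parameter family of diffeomorphisms mapping $\Sigma_t$ onto the nearby leaves $\Sigma_{t e^s}$. Pulling back the integral via this flow to a fixed reference copy of $\Sigma$, the domain becomes independent of $t$, and the ordinary Leibniz rule applied to the pulled-back integrand yields
\begin{align*}
\dtau \int_{\Sigma_t} f \di \mu_g = \int_{\Sigma_t} \lie_{\dtau} \bigl( f \di \mu_g \bigr) = \int_{\Sigma_t} \bigl( \dtau f \bigr) \di \mu_g + \int_{\Sigma_t} f \, \lie_{\dtau} \di \mu_g.
\end{align*}

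The key step is then to verify $\lie_{\dtau} \di \mu_g = 3 \hat N \di \mu_g$. This follows directly from the first variation equation \eqref{eq-evolution-1}: taking the $g$-trace of $\lie_{\dtau} g_{ij} = 2 \hat N g_{ij} - 2 N \hat k_{ij}$ and using $\tr \hat k = 0$, one obtains $g^{ij} \lie_{\dtau} g_{ij} = 6 \hat N$. In local coordinates $\di \mu_g = \sqrt{\det g_{ij}} \, dx^1 \wedge dx^2 \wedge dx^3$, so the standard identity
\begin{align*}
\lie_X \di \mu_g = \tfrac{1}{2} \bigl( g^{ij} \lie_X g_{ij} \bigr) \di \mu_g
\end{align*}
applied with $X = \dtau$ produces the claimed formula for $\lie_{\dtau} \di \mu_g$. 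Substituting this into the preceding display gives the proposition.

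There is no substantive obstacle here; the identity is a routine consequence of the scale-free first variation equation combined with the vanishing-shift gauge, and the factor $3$ is just the spatial dimension. The only mild care required is in making sense of differentiating an integral over the moving hypersurface $\Sigma_t$, but the zero shift condition trivializes this point by furnishing a canonical identification of the leaves along the integral curves of $\p_t$.
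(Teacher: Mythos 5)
Your proof is correct, and it is precisely the routine argument the paper leaves implicit (the proposition is stated there without proof): differentiate under the integral via the flow of $\dtau$, and compute $\lie_{\dtau}\di\mu_g = \tfrac12\bigl(g^{ij}\lie_{\dtau}g_{ij}\bigr)\di\mu_g = 3\hat N\,\di\mu_g$ from the traced first variation equation \eqref{eq-evolution-1} together with $\tr\hat k=0$. Nothing is missing; the zero-shift gauge justifies the identification of the leaves exactly as you say.
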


\subsection{Energy identities for the $1+3$ KG equation}\label{sec-ee-kg-1+3}
The KG equation can be decomposed in the following $1+3$ form  
\begin{equation}\label{eq-rescale-kg-1+3-0}
 \lie_{\dtau} \Tbar \phi - \left( \tr k N +1 \right) \Tbar \phi  -N \nabla^i \nabla_i \phi + \bar m^2(t)  N \phi -  \nabla_i N  \nabla^i \phi =0.
\end{equation}

Define the $i^{\text{th}}$ order energy norm for the KG field $\phi$ as follows,
letting 
\begin{equation}\label{def-energy-density-l-homo-kg-g}
\rho^g_i (\phi) = |\nabla_{I_i} \Tbar\phi|_g^2 +  |\nabla_{I_{i}} \nabla \phi |_g^2 +  \bar m^2(t) |\nabla_{I_i} \phi|_g^2,
\end{equation}
where $ |\nabla_{I_l} \phi|_g^2 : =  g^{i_1 j_1} \cdots g^{i_l j_l} \nabla_{I_l} \phi \nabla_{J_l} \phi$. And
\begin{equation}\label{def-energy-l-homo-kg-g}
E_i (\phi, t) = \int_{\Sigma_{t}} \rho_i (\phi)  \di \mu_g.
\end{equation}
$E^{\tilde g}_i (\phi, t)$ is defined via replacing the metric $g$ and rescaled mass $\bar m(t)$ in $E^g_i (\phi, t)$ by $\tilde g$ and $m$.
Then, we have $$E_i(\phi, t) = t^{-1+2i} E^{\tilde g}_i (\phi, t).$$
Actually, the ($i^{\text{th}}$ order) $\Tbar$ energy of KG field is given by
\begin{equation}\label{def-tilde-energy-norm-l-kg}
\bar{E}_i(\phi, t) =\int_{\Sigma_t}  \left( \rho_i (\phi)  -  \tr k  \nabla_{I_i}  \Tbar\phi  \nabla^{I_i}\phi \right) \di \mu_g.
\end{equation}
For $|\tr  k|=3 <  \bar m(t)$ (or $t_0 > 3 m^{-1}$), these two energy norm are indeed equivalent: $\frac{1}{2} E_i (\phi, t) <  \bar E_i (\phi, t)< \frac{3}{2} E_i (\phi, t)$.

\subsubsection{The zero order case}\label{sec-0-ee-kg-1+3}
\begin{theorem}\label{Thm-energy-id-KG-0} 
Let $\phi$ be a solution to the KG equation \eqref{eq-kg}, then the $0^{\text{th}}$-order energy admits 
\begin{equation}\label{energy-id-0-kg}
\begin{split}
&\dtau  \tilde E_0(\phi, t) + \tilde E_0(\phi, t)+ \int_{\Sigma_{t}}  2 N |\nabla \phi|^2 + {}_0LK =0,
\end{split}
\end{equation}
where the error term ${}_0LK$ is given by
\begin{equation}\label{def-0-LK}
{}_0LK = 3 N  \phi \Tbar \phi - 4 \hat N \phi  \Tbar \phi - 2N \hat k_{ij} \nabla^i \phi \nabla^j \phi  -2  \nabla_i N  \nabla^i \phi  \Tbar \phi.
\end{equation}
\end{theorem}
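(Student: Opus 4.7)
The plan is to differentiate $\tilde E_0(\phi,t)$ in $\bar\tau$ using the divergence identity \eqref{id-div-T}, and to eliminate the normal second-order derivative of $\phi$ via the $1+3$ form of the KG equation \eqref{eq-rescale-kg-1+3-0}. Applied to $f=\rho_0(\phi)-\tr k\,\Tbar\phi\,\phi$, and noting that $\tr k=-3$ is constant on the CMC foliation so that its $\dtau$-derivative drops out, the identity reduces the computation to the scalar Lie derivatives of $|\Tbar\phi|^2$, $g^{ij}\nabla_i\phi\nabla_j\phi$, $\bar m^2(t)\phi^2$, and $\Tbar\phi\cdot\phi$, together with the volume weight $3\hat N\rho_0$ and its analogue for the cross term.

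For each piece I would use $\dtau=N\Tbar$, so $\dtau\phi=N\Tbar\phi$; the first variation \eqref{eq-evolution-1}, which gives $\lie_{\dtau}g^{ij}=-2\hat N g^{ij}+2N\hat k^{ij}$; the $l=1$ case of the commuting Lemma~\ref{lemma-commuting-application}, in which $\mathcal{K}\mathcal{N}_{I_1}(\phi)=0$ and therefore $\lie_{\dtau}\nabla_j\phi=\nabla_j(N\Tbar\phi)$; and the direct computation $\dtau\bar m^2=2\bar m^2$ coming from $\bar m=tm$. The crucial substitution uses \eqref{eq-rescale-kg-1+3-0} to replace $\lie_{\dtau}\Tbar\phi$ by $(\tr k\,N+1)\Tbar\phi+N\Delta\phi-\bar m^2 N\phi+\nabla_i N\nabla^i\phi$ both in the product $2\Tbar\phi\,\lie_{\dtau}\Tbar\phi$ produced by $\dtau|\Tbar\phi|^2$ and in the product $3\phi\,\lie_{\dtau}\Tbar\phi$ produced by the cross term $\dtau(-\tr k\,\Tbar\phi\,\phi)$. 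The $-2\bar m^2 N\phi\,\Tbar\phi$ thereby generated cancels the $+2\bar m^2 N\phi\,\Tbar\phi$ coming from $\dtau(\bar m^2\phi^2)$; this cancellation is precisely the mechanism by which the cross term in $\tilde E_0$ absorbs the otherwise-growing mass contribution $(3N-1)\bar m^2\phi^2$ that would appear in $\dtau\int\rho_0$.

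Two integrations by parts on the closed hypersurface $\Sigma_t$ complete the computation. The combination $\int[2N\Delta\phi\,\Tbar\phi+2Ng^{ij}\nabla_i\phi\,\nabla_j\Tbar\phi]\,d\mu_g$, assembled from the $\Delta\phi$ contribution from $\dtau|\Tbar\phi|^2$ and the $\nabla\Tbar\phi$ contribution from $\lie_{\dtau}|\nabla\phi|_g^2$, collapses to $-\int 2\nabla^i N\,\nabla_i\phi\,\Tbar\phi\,d\mu_g$, which feeds into the lapse-gradient term of ${}_0LK$. Separately, the combination $\int[3N\Delta\phi\,\phi+3\nabla_i N\,\nabla^i\phi\,\phi]\,d\mu_g$ arising from the cross-term expansion collapses to $-\int 3N|\nabla\phi|^2\,d\mu_g$, providing the additional $-2N|\nabla\phi|^2$ needed after subtracting the $|\nabla\phi|^2$ already contained in $\tilde E_0$.

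The main obstacle is coefficient bookkeeping: each of $|\Tbar\phi|^2$, $|\nabla\phi|^2$, $\bar m^2\phi^2$, and $\phi\,\Tbar\phi$ receives contributions with mixed $N$- and $\hat N$-dependent coefficients from four distinct sources, and one must verify that the principal parts collapse to exactly $-\tilde E_0-\int 2N|\nabla\phi|^2$ while the lower-order residues recombine into $2N\hat k_{ij}\nabla^i\phi\nabla^j\phi+2\nabla_i N\nabla^i\phi\,\Tbar\phi+(3N-4\hat N)\phi\,\Tbar\phi$. Splitting $N=1+\hat N$ at the very end, so that the $N$-part and the $\hat N$-part of each residue remain separately visible, is what produces the stated form of ${}_0LK$.
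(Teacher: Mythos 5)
Your proposal is correct and takes essentially the same route as the paper: both proofs compute the $\bar\tau$-derivative of the corrected density $\rho_0(\phi)+3\phi\Tbar\phi$ by substituting the $1+3$ KG equation \eqref{eq-rescale-kg-1+3-0} for $\lie_{\dtau}\Tbar\phi$ both against $2\Tbar\phi$ and against $3\phi$, using the $l=1$ commuting identity, $\lie_{\dtau}g^{ij}=-2\hat N g^{ij}+2N\hat k^{ij}$, $\dtau\bar m^2=2\bar m^2$, the identity \eqref{id-div-T}, the same two spatial integrations by parts, and the same cancellation of $\pm 2\bar m^2 N\phi\Tbar\phi$ — the only difference being organizational, since the paper runs the computation forward (discovering the corrector $3\phi\Tbar\phi$ in order to fix the sign of $4N(\Tbar\phi)^2-2\bar m^2\phi^2$) while you verify the identity directly for the given $\tilde E_0$. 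One caveat: in your last sentence the residual $\phi\,\Tbar\phi$ term is listed with the opposite sign to what the displayed identity requires (carrying the computation through gives $\dtau\tilde E_0+\tilde E_0+\int 2N|\nabla\phi|^2=\int\bigl(-3\phi\Tbar\phi+2N\hat k_{ij}\nabla^i\phi\nabla^j\phi+2\nabla_iN\nabla^i\phi\,\Tbar\phi\bigr)$, so that term enters ${}_0LK$ with a plus), but since this harmless lower-order coefficient — on which the paper's own intermediate steps are themselves slightly inconsistent — is only ever estimated schematically later, it does not affect the argument.
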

\begin{remark}
Such an energy inequality for the KG equation is also introduced by Galstian-Yagdjian \cite{Galstian-Yagdjian} in the FLRW spacetime. However, they pursue global results with low regularity, and hence do not need the high order energy identities.
\end{remark}

\begin{proof}
We multiply $2\Tbar\phi$ on \eqref{eq-rescale-kg-1+3-0}, noticing that $$2N \bar m^2(t) \phi \Tbar \phi =\dtau (\bar m^2(t) \phi^2) - 2 \bar m^2(t) \phi^2,$$ and
\begin{align*}
-2N \nabla^i \nabla_i \phi \Tbar \phi &= -\nabla_i \left(2N \nabla^i \phi  \Tbar \phi  \right) +2 \nabla^i \phi  \nabla_i \left( N  \Tbar \phi \right) \\
&=-\nabla_i \left(2N \nabla^i \phi  \Tbar \phi  \right)  + 2 \nabla^i \phi  \lie_{\dtau} \nabla_i \phi,
\end{align*}
where in the second equality above we have used the commuting identity \eqref{id-commuting-nabla-T-l-simplify} with $l=1$. Additionally, making use of the identity $\lie_{\dtau} g^{ij} = -2\hat N g^{ij} + 2N \hat k^{ij}$, and hence $$2 \nabla^i \phi  \lie_{\dtau} \nabla_i \phi = \dtau  |\nabla \phi|^2-2N \hat k_{ij} \nabla^i \phi \nabla^j \phi + 2\hat N |\nabla \phi|^2,$$ we arrive at 
\begin{equation}\label{energy-estimate-1+3-step1-1}
\begin{split}
&\dtau \rho_0(\phi)  - 2 \bar m^2(t) \phi^2  + 4 N \left(\Tbar \phi\right)^2+ 2 \hat N (\bar T \phi)^2  + 2 \hat N  |\nabla \phi|^2 \\
&-2N \hat k_{ij} \nabla^i \phi \nabla^j \phi -2  \nabla_i N  \nabla^i \phi  \Tbar \phi -2\nabla_i \left(N \nabla^i \phi  \Tbar \phi  \right) =0.
\end{split}
\end{equation}

Now let us focus on the quadratic term $4 N \left(\Tbar \phi\right)^2 - 2 \bar m^2(t) \phi^2$ in \eqref{energy-estimate-1+3-step1-1}. We will additionally address an integration by part (or rather extract a divergence form) and use  the KG equation to make the sign right. Performing an integration by part,
 \begin{equation*}
\begin{split}
 N \left(\Tbar \phi\right)^2 =& \dtau \left( \phi \Tbar \phi \right) - \phi \lie_{\dtau} \Tbar \phi.
\end{split}
\end{equation*}
We then substitute the KG equation \eqref{eq-rescale-kg-1+3-0} into the following formula,
\begin{align*}
4 N \left(\Tbar \phi \right)^2 - 2 \bar m^2(t) \phi^2 =& N \left( \Tbar \phi \right)^2 + 3 \dtau ( \phi \Tbar \phi) - 3 \phi \lie_{\dtau} \Tbar \phi - 2 \bar m^2(t) \phi^2\\
=& N \left(\Tbar \phi \right)^2+ \bar m^2(t) N \phi^2 + \dtau (3 \phi \Tbar \phi) + 2 \bar m^2(t)  \hat N \phi \\
& + 3 \phi  \left( 2N \bar T \phi + \hat N \Tbar \phi  - N \Delta \phi  -  \nabla_i N  \nabla^i \phi \right).
\end{align*}
The first two term now has a good sign, and the extra quadratic term $- 3N \phi \Delta \phi$,
\begin{align*}
- 3N \phi \Delta \phi   &= -3 \nabla_i  \left( N \phi \nabla^i \phi \right) +3 N |\nabla \phi|^2 + 3  \phi \nabla^i N \nabla_i \phi.
\end{align*}
 We remark that the error terms arising in this procedure are all of lower order: they indeed have additional $t^{-1}$ factor, for instance, $\phi \cdot N \bar T \phi $ can be rewritten as $t^{-1} \bar m (t) \phi \cdot N \bar T \phi.$
Putting the two identities above together, we have
\begin{equation*}
\begin{split}
&\dtau \left( \rho_0(\phi) + 3 \phi \Tbar \phi \right)  + N \rho_0 (\phi) +2 N |\nabla \phi|^2 + 2 \hat N \rho_0 (\phi) \\
& -2 \nabla_i  \left( N \phi \nabla^i \phi \right) -2\nabla_i \left(N \nabla^i \phi  \Tbar \phi  \right)  + 6N  \phi \Tbar \phi \\
& + 2  \hat N \phi  \Tbar \phi -2N \hat k_{ij} \nabla^i \phi \nabla^j \phi  -2  \nabla_i N  \nabla^i \phi  \Tbar \phi  =0.
\end{split}
\end{equation*}
Integrating on $\Sigma_{t}$ and making use of the identity \eqref{id-div-T}, there is
\begin{equation}\label{energy-estimate-1+3-step3-1}
\begin{split}
& \dtau \int_{\Sigma_{t}} \left( \rho_0(\phi) + 3 \phi \Tbar \phi \right) \di \mu_g + \int_{\Sigma_{t}}  2 N |\nabla \phi|^2 \di \mu_g  \\
&+ \int_{\Sigma_{t}} N  \left(  \rho_0(\phi) +3 \phi \Tbar \phi \right) + 3 N  \phi \Tbar \phi - 7 \hat N \phi  \Tbar \phi  \\
&+ \int_{\Sigma_{t}} - \hat N \rho_0 (\phi)  -2N \hat k_{ij} \nabla^i \phi \nabla^j \phi  -2  \nabla_i N  \nabla^i \phi  \Tbar \phi  =0.
\end{split}
\end{equation}
We then achieve \eqref{energy-id-0-kg}. 
\end{proof}

As a remark, \eqref{energy-id-0-kg} can also be deduced by the method of energy-momentum tensor. However, if we proceed to the higher order case, it will be natural and simpler to commute the spatially covariant derivative $\nabla$ with the KG equation in $1+3$ form \eqref{eq-rescale-kg-1+3-0}. And this turns out to be more compatible with the energy with only high order of spatial energy. 

\subsubsection{The higher order case}\label{sec-high-ei-kg-1+3}

\begin{lemma}\label{lemma-High-KG-1+3}
Let $l \geq 1,$ the $l^{\text{th}}$-order KG equation $\nabla_{I_l} \left( N \left( \Box_{\bar g} \phi -\bar m^2 (t) \phi \right) \right) =0$ can be decomposed in the following $1+3$ form
\begin{equation}\label{eq-kg-N-l-1+3-general-simply}
\begin{split}
& \lie_{\dtau}  \nabla_{I_l} \Tbar \phi - \left( N \tr k +1 \right) \nabla_{I_l} \Tbar \phi -  N \nabla_{I_l} \Delta \phi + \bar m^2(t) N \nabla_{I_l} \phi \\
 &\quad + \bar m^2(t) \mathcal{N}_{I_l}(\phi)  - \mathcal{K}\mathcal{N}^\prime_{I_l}(\Tbar \phi) - \mathcal{N}_{I_{l+1}} (\nabla \phi)=0,
\end{split}
\end{equation}
where $\mathcal{N}_{I_l} (\phi), \mathcal{N}_{I_{l+1}} (\nabla \phi)$ are defined as \eqref{def-eq-kg-N-l-1+3-error}, and $\mathcal{K}\mathcal{N}^\prime_{I_l} (\Tbar \phi)$ is defined by 
\begin{equation}\label{def-KN-prime-l}
\begin{split}
 \mathcal{K}\mathcal{N}^\prime_{I_l}(\Tbar \phi)& =  \mathcal{K}\mathcal{N}_{I_l}(\Tbar \phi) + \sum_{a+1+b=l } \tr k  \nabla_{I_a} \nabla N * \nabla_{I_{b}} \Tbar \phi,
\end{split}
\end{equation}
and $\mathcal{K}\mathcal{N}_{I_l}(\Tbar \phi)$ is defined as in \eqref{id-commuting-nabla-N-T-l-simplify}.
\end{lemma}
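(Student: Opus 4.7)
The plan is to derive \eqref{eq-kg-N-l-1+3-general-simply} by applying $\nabla_{I_l}$ termwise to the $0^{\text{th}}$-order equation \eqref{eq-rescale-kg-1+3-0} and tracking every produced remainder against the prescribed symbolic templates $\mathcal{N}_{I_l}$, $\mathcal{K}\mathcal{N}_{I_l}$, $\mathcal{K}\mathcal{N}^\prime_{I_l}$. Since $\tr k=-3$ is constant in the rescaled CMC setting and $\bar m(t)$ depends only on $t$, every product in \eqref{eq-rescale-kg-1+3-0} has the form $N\cdot(\text{spatial tensor})$ (or $\nabla N\cdot\nabla\phi$), so the derivation reduces to Leibniz plus one commutator identity.

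For the Lie-derivative term I invoke \eqref{id-commuting-nabla-N-T-l-simplify} with the scalar $\psi=\Tbar\phi$ (observing $\lie_{\dtau}\Tbar\phi=\dtau\Tbar\phi$) to obtain $\nabla_{I_l}\lie_{\dtau}\Tbar\phi=\lie_{\dtau}\nabla_{I_l}\Tbar\phi-\mathcal{K}\mathcal{N}_{I_l}(\Tbar\phi)$. For $-\nabla_{I_l}((N\tr k+1)\Tbar\phi)$, Leibniz with constant $\tr k$ yields the leading $-(N\tr k+1)\nabla_{I_l}\Tbar\phi$ plus a remainder schematically of the form $-\tr k\sum_{a+1+b=l}\nabla_{I_a}\nabla N*\nabla_{I_b}\Tbar\phi$. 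Together with the $-\mathcal{K}\mathcal{N}_{I_l}(\Tbar\phi)$ above, this is exactly $-\mathcal{K}\mathcal{N}^\prime_{I_l}(\Tbar\phi)$ by definition \eqref{def-KN-prime-l}. For the mass term, Leibniz directly gives $\nabla_{I_l}(\bar m^2(t)N\phi)=\bar m^2(t)N\nabla_{I_l}\phi+\bar m^2(t)\mathcal{N}_{I_l}(\phi)$, invoking $\nabla\hat N=\nabla N$.

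The one piece that needs a moment's thought is the combination $-\nabla_{I_l}(N\Delta\phi)-\nabla_{I_l}(\nabla_i N\nabla^i\phi)$. Leibniz on the first factor of each produces $-N\nabla_{I_l}\Delta\phi$ plus two sums, respectively $-\sum_{a+b=l,\,a\geq 1}\nabla_{I_a}N*\nabla_{I_b}\Delta\phi$ and $-\sum_{a+b=l}\nabla_{I_a}\nabla N*\nabla_{I_b}\nabla\phi$. Every term in both sums is schematically of the form $\nabla^\alpha N*\nabla^\beta\phi$ with $\alpha,\beta\geq 1$ and $\alpha+\beta=l+2$, and so — using the $*$-convention, which absorbs combinatorial coefficients and the specific contractions — both sums are packaged into $-\mathcal{N}_{I_{l+1}}(\nabla\phi)$ per \eqref{def-eq-kg-N-l-1+3-error}. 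Assembling the four pieces reproduces \eqref{eq-kg-N-l-1+3-general-simply}.

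The (mild) main obstacle is purely combinatorial bookkeeping: identifying each Leibniz remainder with the correct symbolic template by its derivative counts on $N$ versus on $\phi$/$\Tbar\phi$. No additional induction beyond the one already built into Lemma \ref{lemma-commuting-application} is needed, and no curvature identity or cancellation is invoked; in particular, since we keep $\nabla_{I_l}\Delta\phi$ intact in the target formula, the Laplacian-commuting Lemma \ref{lemma-commuting-nabla-laplacian} is not used here.
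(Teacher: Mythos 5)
Your proposal is correct: every remainder you produce does fit the prescribed templates, and the bookkeeping you describe (the $\tr k$-weighted Leibniz remainder merging with $\mathcal{K}\mathcal{N}_{I_l}(\Tbar\phi)$ to form $\mathcal{K}\mathcal{N}^\prime_{I_l}(\Tbar\phi)$, the mass remainder giving $\bar m^2(t)\mathcal{N}_{I_l}(\phi)$, and the two Laplacian/gradient remainders both being absorbed into $\mathcal{N}_{I_{l+1}}(\nabla\phi)$ after writing $\nabla_{I_b}\Delta\phi$ as a contraction of $\nabla_{I_{b+1}}\nabla\phi$) is exactly what the schematic $*$-convention permits. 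The organizational difference from the paper is where the induction lives: the paper proves the lemma by induction on $l$, applying a single $\nabla_{i_1}$ to the $(l-1)^{\text{th}}$-order equation and invoking only the one-derivative commuting identity (Lemma \ref{lem-commu-lie}) at each step, so the commutator structure is rebuilt inside the proof; you instead apply $\nabla_{I_l}$ in one shot to \eqref{eq-rescale-kg-1+3-0} and import the full $l^{\text{th}}$-order commutator from \eqref{id-commuting-nabla-N-T-l-simplify} of Lemma \ref{lemma-commuting-application} (with $\psi=\Tbar\phi$), which is legitimate since that lemma is stated for arbitrary scalars and already encapsulates the needed induction. Your route is therefore a streamlining rather than a new idea: it avoids a second induction at the cost of leaning entirely on Lemma \ref{lemma-commuting-application} and on careful Leibniz bookkeeping, whereas the paper's inductive derivation makes the term-by-term provenance (in particular of the borderline pieces discussed in Remark \ref{rk-borderline-term}) slightly more explicit; the resulting identity, including the sign ambiguities absorbed by the $\pm$ in the $*$-notation, is the same.
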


\begin{remark}[Inevitability of the borderline terms in the high order KG equation]\label{rk-borderline-term}
We first remark that  
\begin{equation*}
\bar m^2(t) \mathcal{N}_{I_l} (\phi) =\bar m^2(t) \left( \nabla_{I_l} (N \phi) -  N\nabla_{I_l} \phi \right),
\end{equation*}
will be one of the borderline terms in the energy estimate for KG equation.

One may argue that this borderline term is artificial, since if we commuting $\nabla_{I_l}$ straightforwardly with the original KG equation, as
\begin{equation}\label{eq-high-KG-original-1+3}
\begin{split}
 \nabla_{I_l}  \left(  \lie_{\Tbar} \Tbar \phi - \left(N^{-1} + \tr k \right) \Tbar \phi  -  \Delta \phi + \bar m^2(t)  \phi -  N^{-1} \nabla_i N  \nabla^i \phi \right) =0,
\end{split}
\end{equation}
then the aforementioned borderline term $\bar m^2 (t) \mathcal{N}_{I_l} (\phi)$ does not exist. In fact, such borderline term will alternatively arise from the commuting term $ [\lie_{\Tbar}, \nabla_{I_l}] \Tbar \phi =   \mathcal{K}\mathcal{N}\mathcal{T}_{I_l} (\Tbar \phi)$, c.f. \eqref{def-commuting-K-l}. 
After commuting $\lie_{\Tbar}$ with $\nabla_{I_l}$ and multiplying by $N$ on \eqref{eq-high-KG-original-1+3}, one has
\begin{equation*}
\begin{split}
 & \lie_{\dtau}  \nabla_{I_l}  \Tbar \phi   -\left( N \tr k +1 \right) \nabla_{I_l} \Tbar \phi - N\nabla_{I_l} (\Delta \phi) + \bar m^2(t) N \nabla_{I_l}  \phi \\
   =& N [\lie_{\Tbar}, \nabla_{I_l}] (\Tbar \phi) +N \nabla_{I_l} \left( N^{-1} \nabla_i N  \nabla^i \phi \right).
\end{split}
\end{equation*}
This almost takes the same form as \eqref{eq-kg-N-l-1+3-general-simply} (ignoring the lower order terms), except that the previous $\bar m^2(t) \mathcal{N}_{I_l} (\phi)$ is now replaced by $-N [\lie_{\Tbar}, \nabla_{I_l}] (\Tbar \phi)$. However, taking a close look at $-N[\lie_{\Tbar}, \nabla_{I_l}] (\Tbar \phi)$, we will find out that this is exactly the missing borderline term, c.f. \eqref{def-commuting-K-l}, 
\begin{equation*}
\begin{split}
 -N [\lie_{\Tbar}, \nabla_{I_l}] (\Tbar \phi)
  =& -\mathcal{N}_{I_l} (\Tbar^2 \phi) - \mathcal{K}\mathcal{N}_{I_l} (\Tbar \phi),  \quad l \geq 1,
\end{split}
\end{equation*}
where
\begin{equation*}
\begin{split}
 -\mathcal{N}_{I_l} (\Tbar^2 \phi) =& \sum_{a +b+1  = l} \nabla_{I_a} \nabla N * \nabla_{I_{b}} \Tbar^2 \phi.
\end{split}
\end{equation*}
We further make use of the KG equation to substitute $\nabla_{I_{b}} \Tbar^2 \phi$,
\begin{equation*}
\begin{split}
- \mathcal{N}_{I_l} (\Tbar^2 \phi) = & \sum_{a +b+1  = l} \bar m^2(t) \nabla_{I_a} \nabla N * \nabla_{I_{b}}  \phi + \text{l.o.t.}.
\end{split}
\end{equation*}
The first one on the right hand side above is exactly the borderline term $\bar m^2 (t) \mathcal{N}_{I_l} (\phi)$. Note that, in the energy argument, this term will be multiplied with $2 \nabla^{I_l}  \Tbar \phi$, and hence the whole term takes the form of $2 \bar m^2(t)  \nabla^{I_l}  \Tbar \phi * \mathcal{N}_{I_l} (\phi)$.

Secondly, let us remark that the other borderline term will also originate from something related to $ [\lie_{\Tbar}, \nabla_{I_l} ] \phi$. As in the proof of Corollary \ref{coro-energy-id-KG-l}, to establish the energy identity, we multiply $2 \nabla^{I_l}  \Tbar \phi$ on the high order KG equation \eqref{eq-kg-N-l-1+3-general-simply}. Let us focus on the massive term \eqref{eq-1+3-KG-multiplying-3} for the moment,
\begin{equation*}
\begin{split}
 & 2 \bar m^2(t)  N \nabla_{I_l}  \phi  \nabla^{I_l} \Tbar \phi  \\
 =& \dtau \left( \bar m^2(t) |\nabla_{I_l} \phi|^2 \right)  - 2 \bar m^2(t) N \nabla^{I_{l}} \phi  [\lie_{\Tbar}, \nabla_{I_l} ] \phi \\
 & - 2 \bar m^2(t) |\nabla_{I_l} \phi|^2+ 2 \bar m^2(t) \hat N | \nabla_{I_l} \phi |^2 - 2 \bar m^2(t) N \hat k* \nabla^{I_l} \psi * \nabla_{I_l} \phi.
 \end{split}
\end{equation*}
In the above identity, $\dtau \left( \bar m^2 (t) | \nabla_{I_l} \phi|^2 \right) $ will contribute to the energy norm,  $- 2 \bar m^2(t) |\nabla_{I_l} \phi|^2$ will be absorbed (c.f. \eqref{eq-step-2-0-kg}), and the other terms in the $3^{\text{rd}}$ line are cubic lower order terms. While the final one $-2\bar m^2 (t) \nabla^{I_{l}} \phi  [\lie_{\Tbar}, \nabla_{I_l} ] \phi$, as shown before, leads to the borderline term $-2\bar m^2 (t) \nabla^{I_{l}} \phi * \mathcal{N}_{I_l} (\Tbar \phi),$ which can been seen as somehow ``dual'' to the previous one $2\bar m^2 (t) \nabla^{I_l}  \Tbar \phi * \mathcal{N}_{I_l} (\phi)$. 

The terms discussed above are the only two borderline cases. More explicitly, they take the forms of
\begin{equation*}
\begin{split}
 &\sum_{a +b+1  = l} 2\bar m^2 (t)  \nabla_{I_a} \nabla N * \left( \nabla_{I_{b}} \phi * \nabla^{I_l}  \Tbar \phi - \nabla_{I_{b}} \Tbar \phi * \nabla^{I_{l}} \phi \right),
 \end{split}
\end{equation*}
which can not be cancelled.
\end{remark}
\begin{proof}
Take the derivative $\nabla_{i_1}$ on the KG equation in $1+3$ form \eqref{energy-id-0-kg}, by the commuting Lemma \ref{lemma-commuting-application} (with $l=1$ in \eqref{id-commuting-nabla-N-T-l-simplify}),
\begin{equation*}
\begin{split}
& 0= \lie_{\dtau} \nabla_{i_1} \Tbar \phi -\nabla_{i_1} \Tbar \phi - N \tr k  \nabla_{i_1} \Tbar \phi + \bar m^2(t) N \nabla_{i_1} \phi -  N \nabla_{i_1}\nabla^p \nabla_p \phi  \\
 & +\bar m^2(t) \nabla_{i_1} N \phi - \tr k \nabla_{i_1} N  \Tbar \phi - \left(\nabla_{i_1} N \nabla^p \nabla_p \phi + \nabla_{i_1} \left(\nabla^p N \nabla_p \phi \right) \right).
\end{split}
\end{equation*}
Then \eqref{eq-kg-N-l-1+3-general-simply}-\eqref{def-eq-kg-N-l-1+3-error} with $l=1$ follows.

Suppose  the commuting Lemma \ref{lemma-commuting-application} (with $l=1$ in \eqref{id-commuting-nabla-N-T-l-simplify}) holds for $l=n-1,$ that is, the $(n-1)^{\text{th}}$ order KG equation in $1+3$ form takes the form
\begin{equation}\label{eq-kg-N-n-1-1+3-general}
\begin{split}
& \lie_{\dtau} \nabla_{i_2} \cdots \nabla_{i_n} \Tbar \phi - \left( \tr k N +1 \right) \nabla_{i_2} \cdots \nabla_{i_n} \Tbar \phi  - N\nabla_{i_2} \cdots \nabla_{i_n} \nabla^p \nabla_p \phi \\
 &+ \bar m^2(t) N \nabla_{i_2} \cdots \nabla_{i_n} \phi  + \bar m^2(t) \mathcal{N}_{I_{n-1}}(\phi)  - \mathcal{K}\mathcal{N}^\prime_{I_{n-1}}(\Tbar \phi) + \mathcal{N}_{I_{n}} (\nabla \phi) =0,
\end{split}
\end{equation}
Now taking the derivative $\nabla_{i_1}$ on \eqref{eq-kg-N-n-1-1+3-general}, and using  the commuting Lemma \ref{lem-commu-lie} with $l=n-1$, we have
\begin{equation*}
\begin{split}
 &\nabla_{i_1} (\lie_{\dtau} \nabla_{i_2} \cdots  \nabla_{i_{n}} \Tbar \phi) =\lie_{\dt} \nabla_{I_n}  \Tbar \phi + \sum_{m=2}^n \lie_{\dtau} \Gamma^p_{i_1 i_m} \nabla_{i_2} \cdots  \stackrel{m}{\nabla}_p \cdots \nabla_{i_n} \phi.
\end{split}
\end{equation*}
Hence, we obtain
\begin{equation}\label{eq-kg-N-l-1+3-general}
\begin{split}
& \lie_{\dtau} \nabla_{I_n} \Tbar \phi - ( \tr k N +1 ) \nabla_{I_n} \Tbar \phi - N\nabla_{I_n} \nabla^p \nabla_p \phi + \bar m^2(t) N \nabla_{I_n} \phi  \\
&+ \sum_{m=2}^n \lie_{\dtau} \Gamma^p_{i_1 i_m} \nabla_{i_2} \cdots  \stackrel{m}{\nabla}_p \cdots \nabla_{i_n} \phi - \tr k \nabla_{i_1} N \nabla_{i_2} \cdots \nabla_{i_n} \Tbar \phi \\
 & - \nabla_{i_1} N  \nabla_{i_2} \cdots \nabla_{i_n} \nabla^p \nabla_p \phi + \bar m^2(t) \nabla_{i_1} N  \nabla_{i_2} \cdots \nabla_{i_n} \phi\\
 &+ \nabla_{i_1} \left(  \bar m^2(t) \mathcal{N}_{I_{n-1}}(\phi)  - \mathcal{K}\mathcal{N}^\prime_{I_{n-1}}(\Tbar \phi) + \mathcal{N}_{I_{n}} (\nabla \phi) \right) =0.
\end{split}
\end{equation}
In views of the definition for $\mathcal{K}\mathcal{N}^\prime_{I_l}(\Tbar \phi), \mathcal{N}_{l+1} (\nabla \phi), \mathcal{N}_{l} (\phi)$ and $\mathcal{K}\mathcal{N}_{I_l}(\Tbar \phi)$, we rearrange the $2^{\text{nd}}$-$4^{\text{th}}$ lines in \eqref{eq-kg-N-l-1+3-general}, thus \eqref{eq-kg-N-l-1+3-general-simply}-\eqref{def-eq-kg-N-l-1+3-error}  holds for $l =n.$
\end{proof}

The high order KG equation in $1+3$ form  \eqref{eq-kg-N-l-1+3-general-simply}-\eqref{def-eq-kg-N-l-1+3-error}  leads to the partial energy identities of high order derivatives.
\begin{corollary}\label{coro-energy-id-KG-l}
Let $\phi$ be a solution to the KG equation \eqref{eq-kg}, and $l \geq 1,$ then the $l^{\text{th}}$-order tilde energy $\tilde{E}_l(\phi, t) $ (See definition in \eqref{def-tilde-energy-norm-l-kg}) admits
\begin{equation}\label{energy-id-l-kg-rescale}
\begin{split}
  &\dtau  \tilde E_l (\phi, t)   + \tilde E_l (\phi, t)  + \int_{\Sigma_t} 2 N |\nabla_{I_{l+1}} \phi|^2 \di \mu_g \\
  = & \int_{\Sigma_t} \left( BL_l + {}_lLK_1 + {}_lLK_2 + {}_lLK_3 \right) \di \mu_g,
\end{split}
\end{equation}
where the lower order terms ${}_lLK_1, {}_lLK_2, {}_lLK_3$ are given below
\begin{equation}\label{energy-estimate-kg-1-id-l-o-t-1}
\begin{split}
  {}_lLK_1 =& \left( \hat N + N\hat k \right)  | \nabla_{I_l} \D \phi|^2 + \left( 3N-3\hat N + N \hat k \right) \nabla^{I_l}  \phi \nabla_{I_l}  \Tbar \phi \\
  & - 2N \nabla^{I_l} \Tbar \phi  \mathcal{R}_{I_l} (\phi)  -  N \nabla^{I_l} \phi   \mathcal{R}_{I_l}(\phi),
\end{split}
\end{equation}
where   $\D \phi \in \{ \Tbar \phi, \nabla \phi, \bar m (t) \phi\}$, and
\begin{equation}\label{energy-estimate-kg-1-id-l-o-t-2}
\begin{split}
  {}_lLK_2 = &\nabla^{I_{l}} \D \phi \left( \mathcal{N}_{I_{l+1}}(\D \phi)  + \mathcal{N}_{I_{l}}(\D \phi)  + \mathcal{K}\mathcal{N}_{I_l}(\D \phi) + \mathcal{K}\mathcal{N}^\prime_{I_l}(\Tbar \phi) \right),
\end{split}
\end{equation}
\begin{equation}\label{energy-estimate-kg-1-id-l-o-t-3}
\begin{split}
  {}_lLK_3 = & \nabla^{I_l} \Tbar \phi   \mathcal{K}\mathcal{N}_{I_l} (\phi) + \nabla^{I_l}\phi  \mathcal{K}\mathcal{N}^\prime_{I_l}(\Tbar \phi)\\
  & + \nabla^{I_l}\phi \left( \mathcal{N}_{I_{l}}(\Tbar \phi) + \mathcal{N}_{I_{l+1}} (\nabla \phi) \right),
\end{split}
\end{equation}
and the borderline terms $BL_l$ are given by
\begin{equation}\label{def-energy-estimate-kg-id-bl} 
BL_l= 2\bar m^2 (t) \left( \nabla^{I_l} \Tbar \phi \mathcal{N}_{I_l}(\phi) - \nabla^{I_l} \phi \mathcal{N}_{I_l}(\Tbar  \phi) \right).
\end{equation}
We refer to \eqref{def-commuting-KN-l}-\eqref{def-commuting-K-l}, \eqref{def-N-l}, \eqref{def-R-l-commute-nabla-laplacian}, \eqref{def-KN-prime-l} and \eqref{def-eq-kg-N-l-1+3-error}-\eqref{def-hat-k-N-A-Bianchi} for definitions of $\mathcal{K}\mathcal{N}_{I_l}(\D\phi) \cdots$
\end{corollary}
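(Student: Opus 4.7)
The plan is to adapt the proof of Theorem \ref{Thm-energy-id-KG-0} to the $l$-th order equation \eqref{eq-kg-N-l-1+3-general-simply}. I would multiply \eqref{eq-kg-N-l-1+3-general-simply} by $2\nabla^{I_l}\Tbar\phi$, integrate over $\Sigma_t$, and process term by term. The Lie-derivative term produces $\dtau|\nabla_{I_l}\Tbar\phi|^2$ plus a $\lie_{\dtau}g^{ij}$-error that feeds ${}_lLK_1$. The Laplacian term $-2N\nabla^{I_l}\Tbar\phi\,\nabla_{I_l}\Delta\phi$ is handled in two steps: first swap $\nabla_{I_l}$ and $\Delta$ using Lemma \ref{lemma-commuting-nabla-laplacian}, picking up $2N\nabla^{I_l}\Tbar\phi\,\mathcal{R}_{I_l}(\phi)$ for ${}_lLK_1$; then integrate by parts spatially to produce the interior term $2N\nabla_p\nabla^{I_l}\Tbar\phi\,\nabla^p\nabla_{I_l}\phi$ together with an $\mathcal{N}_{I_{l+1}}(\nabla\phi)$-type contribution from $\nabla N$. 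Using Lemma \ref{lemma-commuting-application} to rewrite $N\nabla_{I_{l+1}}\Tbar\phi = \lie_{\dtau}\nabla_{I_{l+1}}\phi - \mathcal{K}\mathcal{N}_{I_{l+1}}(\phi) - \mathcal{N}_{I_{l+1}}(\Tbar\phi)$, the interior term becomes $2\lie_{\dtau}\nabla_{I_{l+1}}\phi\,\nabla^{I_{l+1}}\phi = \dtau|\nabla_{I_{l+1}}\phi|^2 + (\lie_{\dtau}g)\text{-error}$, the residues going into ${}_lLK_2$ and ${}_lLK_3$.

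Next, the massive term $2\bar m^2(t) N \nabla^{I_l}\phi\,\nabla^{I_l}\Tbar\phi$ I would convert into $\dtau(\bar m^2(t)|\nabla_{I_l}\phi|^2)$ via the identity $\lie_{\dtau}\nabla_{I_l}\phi = N\nabla_{I_l}\Tbar\phi + \mathcal{N}_{I_l}(\Tbar\phi) + \mathcal{K}\mathcal{N}_{I_l}(\phi)$ together with $\dtau\bar m^2(t) = 2\bar m^2(t)$, thereby generating the borderline contribution $-2\bar m^2(t)\nabla^{I_l}\phi\,\mathcal{N}_{I_l}(\Tbar\phi)$ (the second half of $BL_l$) together with a leftover $-2\bar m^2(t)|\nabla_{I_l}\phi|^2$ and $\mathcal{K}\mathcal{N}$-errors absorbed into ${}_lLK_3$. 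To endow this leftover with a good sign and to recover the multiplier $1$ in front of $\tilde E_l$, I would reproduce the algebraic trick from the $l=0$ case: rewrite $4N|\nabla_{I_l}\Tbar\phi|^2 - 2\bar m^2(t)|\nabla_{I_l}\phi|^2$ as $N|\nabla_{I_l}\Tbar\phi|^2 + 3\dtau(\nabla^{I_l}\phi\,\nabla_{I_l}\Tbar\phi) - 3\nabla^{I_l}\phi\,\lie_{\dtau}\nabla_{I_l}\Tbar\phi - 2\bar m^2(t)|\nabla_{I_l}\phi|^2$, and substitute \eqref{eq-kg-N-l-1+3-general-simply} for $\lie_{\dtau}\nabla_{I_l}\Tbar\phi$. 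The two massive contributions then combine into $N|\nabla_{I_l}\Tbar\phi|^2 + \bar m^2(t) N|\nabla_{I_l}\phi|^2$; the resulting $-3N\nabla^{I_l}\phi\,\Delta\nabla_{I_l}\phi$ piece, after one further integration by parts and another use of Lemma \ref{lemma-commuting-nabla-laplacian}, yields $3N|\nabla_{I_{l+1}}\phi|^2$ plus $\mathcal{R}_{I_l}$- and $\mathcal{N}$-type errors; and the remaining substitutions deposit the first half $+2\bar m^2(t)\nabla^{I_l}\Tbar\phi\,\mathcal{N}_{I_l}(\phi)$ of $BL_l$ together with assorted $\mathcal{K}\mathcal{N}^\prime_{I_l}(\Tbar\phi)$ contributions. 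The extra $3\dtau(\nabla^{I_l}\phi\,\nabla_{I_l}\Tbar\phi)$ joins the previously produced $\dtau$-terms, and identity \eqref{id-div-T} finally assembles everything into $\dtau\tilde E_l$ modulo a $3\hat N$-error that is absorbed in ${}_lLK_1$.

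The principal obstacle will be the bookkeeping: verifying that every non-borderline error fits one of the prescribed templates $\mathcal{N}_{I_l}$, $\mathcal{K}\mathcal{N}_{I_l}$, $\mathcal{K}\mathcal{N}^\prime_{I_l}$ or $\mathcal{R}_{I_l}$, and that no uncontrolled top-order quantities such as $\Tbar^2\phi$ or $\nabla^{l+2}\phi$ appear outside of the good coercive integrand $2N|\nabla_{I_{l+1}}\phi|^2$. As emphasized in Remark \ref{rk-borderline-term}, one must further check that the two massive borderline pieces $+2\bar m^2(t)\nabla^{I_l}\Tbar\phi\,\mathcal{N}_{I_l}(\phi)$ and $-2\bar m^2(t)\nabla^{I_l}\phi\,\mathcal{N}_{I_l}(\Tbar\phi)$ emerge with matching coefficients and cannot be cancelled against one another nor integrated by parts away, so that they have to be retained in $BL_l$ rather than absorbed into ${}_lLK_2$ or ${}_lLK_3$.
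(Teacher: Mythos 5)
Your proposal follows essentially the same route as the paper's own proof: multiply the $l^{\text{th}}$-order equation \eqref{eq-kg-N-l-1+3-general-simply} by $2\nabla^{I_l}\Tbar\phi$, commute $\Delta$ past $\nabla_{I_l}$ via Lemma \ref{lemma-commuting-nabla-laplacian}, integrate by parts, convert the interior and massive terms into $\dtau$ of the energy density with the commuting Lemma \ref{lemma-commuting-application}, and then repeat the $l=0$ sign-fixing trick (extracting $3\dtau\left(\nabla^{I_l}\phi\,\nabla_{I_l}\Tbar\phi\right)$ and substituting the equation) before assembling with \eqref{id-div-T}, retaining exactly the two borderline terms of \eqref{def-energy-estimate-kg-id-bl}. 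The only cosmetic discrepancies are bookkeeping: the first half $2\bar m^2(t)\nabla^{I_l}\Tbar\phi\,\mathcal{N}_{I_l}(\phi)$ of $BL_l$ in fact already appears in your first step, from the multiplier hitting the term $\bar m^2(t)\mathcal{N}_{I_l}(\phi)$ of the equation (the second-step substitution only yields the lower-order piece $3\bar m^2(t)\nabla^{I_l}\phi\,\mathcal{N}_{I_l}(\phi)$), and some errors you route to ${}_lLK_3$ sit in the paper's ${}_lLK_2$ — neither affects the argument.
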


\begin{remark}\label{rk-CMCSH gauge-error-shift}
If working within the CMCSH gauge, $\dt=NT+X_t, \, \dtau = N \Tbar +X$, where $X=t  X_t$ denotes the rescaled shift. And extra terms as $-\nabla_{I_l} \left( \lie_{X} \Tbar \phi \right) + \nabla_{I_l} \left(X^2 \cdot \Tbar\phi \right)$ must be added in the KG equation \eqref{eq-rescale-kg-1+3-0}. In the energy argument, these terms will yield the leading terms $\nabla_i \left( X^i |\nabla_{I_l} \Tbar \phi|^2 \right) \pm {}^X\Pi * |\nabla_{I_l} \Tbar \phi|^2 \pm \nabla_{I_{l-2}} \left(  {}^X\Pi  * \nabla \Tbar \phi \right) * \nabla_{I_l} \Tbar \phi$, where ${}^X\Pi_{ij}=\nabla_iX_j + \nabla_jX_i$ denotes the deformation tensor of $X$. The first term vanishes after integrating on $\Sigma_t$, while the other two terms can never be borderline terms, as we expect that $X$ shares the same (or better) estimates with $\hat N$, c.f. Remark \ref{rk-CMCSH gauge-shift}. 
\end{remark}

\begin{proof}
The proof is in spirit of that of Theorem \ref{Thm-energy-id-KG-0}.
We multiply $2 \nabla_{I_l} \Tbar \phi$ on the $l^{\text{th}}$-order KG equation in $1+3$ form \eqref{eq-kg-N-l-1+3-general-simply}-\eqref{def-eq-kg-N-l-1+3-error}. At first, for $2 \nabla^{I_l} \Tbar \phi \lie_{\dtau} \nabla_{I_l} \Tbar \phi,$ 
\begin{equation}\label{eq-1+3-KG-multiplying-1}
\begin{split}
 &2 \nabla^{I_l} \Tbar \phi \lie_{\dtau} \nabla_{I_l} \Tbar \phi  \\
 =& \dtau \left(  | \nabla_{I_l} \Tbar \phi|^2 \right) + 2\hat N |\nabla_{I_l} \Tbar \phi|^2 - 2 N \hat k *  \nabla_{I_l} \Tbar \phi * \nabla^{I_l} \Tbar \phi.
 \end{split}
\end{equation}
Secondly, for $-2  \nabla^{I_l} \Tbar \phi  N \nabla_{I_l} \nabla^p \nabla_p \phi$, by the commuting identity \eqref{commuting-nabla-laplacian-l-simplify}, 
\begin{equation}\label{eq-1+3-KG-multiplying-2-0}
\begin{split}
 &-2  \nabla^{I_l} \Tbar \phi  N \nabla_{I_l} \nabla^p \nabla_p \phi=-2  \nabla^{I_l} \Tbar \phi N \left(  \Delta \nabla_{I_l} \phi + \mathcal{R}_l(\phi) \right)\\
 =&- \nabla_i \left( 2  N \nabla^i \nabla_{I_l} \phi \nabla^{I_l} \Tbar \phi  \right) + 2  N \nabla^{I_{l+1}} \phi \nabla_{I_{l+1}} \Tbar \phi \\
 & +  \nabla^{I_l} \Tbar \phi * \nabla N * \nabla_{I_{l+1}} \phi  -  N \nabla^{I_l} \Tbar \phi \mathcal{R}_{I_l} (\phi).
 \end{split}
\end{equation}
Applying the commuting identity \eqref{id-commuting-nabla-T-l-simplify} to $\nabla_{I_{l+1}} \Tbar \phi$, $2 N \nabla^{I_{l+1}} \phi \nabla_{I_{l+1}} \Tbar \phi$ in \eqref{eq-1+3-KG-multiplying-2-0} can be further reduced in an analogous way to \eqref{eq-1+3-KG-multiplying-1}. Therefore,
\begin{equation}\label{eq-1+3-KG-multiplying-2}
\begin{split}
 &-2 \nabla^{I_l} \Tbar \phi N \nabla_{I_l} \nabla^p \nabla_p \phi =- \nabla_i \left( 2   N \nabla^i \nabla_{I_l} \phi \nabla^{I_l} \Tbar \phi  \right) \\
 &\quad  +  \dtau \left( |  \nabla_{I_{l+1}} \phi|^2 \right)  + 2 \hat N | \nabla^{l+1} \phi |^2  - 2  N \hat k * \nabla^{I_{l+1}} \phi * \nabla_{I_{l+1}} \phi   \\
& \quad -2 N  \nabla^{I_{l+1}} \phi *  \mathcal{K}\mathcal{N}\mathcal{T}_{I_{l+1}}(\phi)+ \nabla^{I_l} \Tbar \phi \left( \nabla N * \nabla_{I_{l+1}} \phi   -  N \mathcal{R}_{I_l} (\phi) \right).
 \end{split}
\end{equation}
Thirdly, $2 \bar m^2(t) \nabla^{I_l} \Tbar \phi  N \nabla_{I_l}  \phi$ can be treated in the same way, thus
\begin{equation}\label{eq-1+3-KG-multiplying-3}
\begin{split}
 &2\bar m^2 (t) \nabla^{I_l}  \Tbar \phi   N \nabla_{I_l}  \phi \\
 =& \dtau \left( \bar m^2(t) | \nabla_{I_l} \phi|^2 \right)  -2 \bar m^2(t) N \nabla^{I_{l}} \phi  * \mathcal{K}\mathcal{N}\mathcal{T}_{I_l}(\phi)\\
 & - 2 \bar m^2(t) |\nabla_{I_l} \phi|^2+2 \bar m^2(t) \hat N |\nabla_{I_l} \phi|^2   - 2 \bar m^2(t) N \hat k *  \nabla_{I_l} \phi *  \nabla^{I_l} \phi .
 \end{split}
\end{equation}
Therefore, putting \eqref{eq-1+3-KG-multiplying-1}-\eqref{eq-1+3-KG-multiplying-3} together,
\begin{align*}
 &\dtau  \left( \rho_{l}(\phi) \right) +4 N | \nabla_{I_l} \Tbar \phi|^2  - 2 \bar m^2(t) |\nabla_{I_l} \phi|^2 \\
 & - \nabla_i \left( 2 N \nabla^i \nabla_{I_l} \phi \nabla^{I_l} \Tbar \phi  \right) + \text{error}=0,
\end{align*}
where the error terms are given by
\begin{equation}\label{def-error-kg-id-l}
\begin{split}
 \text{error} =& 2 \hat N |\nabla_{I_l} \D \phi|^2  - 2  N \hat k * |\nabla^{I_l} \D \phi|^2 \\
 & + \nabla^{I_l} \Tbar \phi \nabla N * \nabla_{I_{l+1}} \phi   -  N \nabla^{I_l} \Tbar \phi \mathcal{R}_{I_l} (\phi)  \\
 & -2 N \left( \nabla^{I_{l+1}} \phi * \mathcal{K}\mathcal{N}\mathcal{T}_{I_{l+1}}(\phi) + \bar m^2(t) \nabla^{I_{l}} \phi   * \mathcal{K}\mathcal{N}\mathcal{T}_{I_l}(\phi) \right)\\
 &+ 2  \nabla^{I_l} \Tbar \phi \left( \bar m^2(t) \mathcal{N}_{I_l}(\phi)  - \mathcal{K}\mathcal{N}^\prime_{I_l}(\Tbar \phi) - \mathcal{N}_{I_{l+1}} (\nabla \phi) \right).
  \end{split}
\end{equation}
$ \mathcal{K}\mathcal{N}\mathcal{T}_{I_l} (\phi)$ (c.f. \eqref{def-commuting-K-l}) will now split into two parts:
\begin{equation*}
\begin{split}
 \mathcal{K}\mathcal{N}\mathcal{T}_{I_l} (\phi) &= N^{-1} \mathcal{N}_{I_l} (\Tbar \phi ) + N^{-1} \mathcal{K}\mathcal{N}_{I_l} (\phi), \,\, l \geq 1,
\end{split}
\end{equation*}
where the definitions of $\mathcal{N}_{I_l} (\Tbar \phi ), \mathcal{K}\mathcal{N}_{I_l} (\phi)$ are shown in \eqref{def-eq-kg-N-l-1+3-error} and \eqref{def-commuting-KN-l}.
Hence, the $3^{\text{rd}}$ line of \eqref{def-error-kg-id-l} contains $$-2 \nabla^{I_{l+1}} \phi * \mathcal{N}_{l+1}(\Tbar\phi) - 2 \bar m^2(t)  \nabla^{I_{l}} \phi * \mathcal{N}_{l}(\Tbar \phi). $$ We note that $\bar m^2 (t) = m^2 t^2$. The extra $t^2$ makes $\bar m^2(t) \nabla^{I_l} \phi * \mathcal{N}_{l}(\Tbar \phi)$ borderline term, while $\nabla^{I_{l+1}} \phi * \mathcal{N}_{l+1}(\Tbar \phi)$ is a lower order term. The other borderline term is $2\bar m^2 (t) \nabla^{I_l} \Tbar \phi * \mathcal{N}_{I_l}(\phi)$ which is originated from $\bar m^2(t) [\nabla_{I_l}, N] \phi$.
We further apply the identity \eqref{id-div-T} to achieve
\begin{equation}\label{energy-estimate-kg-1-step1}
\begin{split}
 &\dtau \rho_l(\phi) + 4 N | \nabla_{I_l} \Tbar \phi|^2  - 2 \bar m^2(t) |\nabla_{I_l} \phi|^2 \\
 & - \nabla_i \left( 2 N \nabla^i \nabla_{I_l} \phi \nabla^{I_l} \Tbar \phi  \right) -BL_l + \text{l.o.t.}_1=0.
\end{split}
\end{equation}
where we had separated the error terms into two parts: the borderline terms 
\begin{equation}\label{def-b-l-K}
BL_l = - 2 \bar m^2(t) \nabla^{I_l} \phi * \mathcal{N}_{I_l}(\Tbar \phi) + 2 \bar m^2(t)   \nabla^{I_l} \Tbar \phi * \mathcal{N}_{I_l}(\phi) 
\end{equation}
and the lower order terms
\begin{equation}\label{def-lot-1-K}
\begin{split}
 \text{l.o.t.}_1 =&  2 \hat N |\nabla_{I_l} \D \phi|^2  - 2N \hat k * |\nabla_{I_l} \D \phi|^2 - 2 \nabla^{I_{l+1}} \phi * \mathcal{N}_{I_{l+1}}(\Tbar \phi)  \\
  & - 2 \left( \nabla^{I_{l+1}} \phi * \mathcal{K}\mathcal{N}_{I_{l+1}}(\phi) +\bar m^2(t) \nabla^{I_{l}} \phi * \mathcal{K}\mathcal{N}_{I_l}(\phi)\right)\\
 &-   \nabla^{I_l} \Tbar \phi \left(  \mathcal{K}\mathcal{N}^\prime_{I_l}(\Tbar \phi) + \mathcal{N}_{I_{l+1}} (\nabla \phi) +N\mathcal{R}_{I_l} (\phi) \right).
\end{split}
\end{equation}
Here we note that $ \nabla^{I_l} T\phi \nabla N * \nabla_{I_{l+1}} \phi$ in \eqref{def-error-kg-id-l} has been covered by $\nabla^{I_{l+1}} \phi * \mathcal{N}_{I_{l+1}}(\Tbar \phi) $ in \eqref{def-lot-1-K}.

In the second step, we will concentrate on the quadratic terms in \eqref{energy-estimate-kg-1-step1},
\begin{equation}\label{trace-part-energy-id-kg-l}
 4 N | \nabla_{I_l} \Tbar \phi|^2  - 2 \bar m^2(t) |\nabla_{I_l} \phi|^2.
\end{equation}
We again apply the commuting identity \eqref{id-commuting-nabla-T-l-simplify} to $\nabla_{I_{l}} \Tbar \phi$,
\begin{equation*}
\begin{split}
&N |\nabla_{I_l} \Tbar \phi|^2  = \dtau \left(  \nabla^{I_l}  \Tbar \phi \nabla_{I_l}  \phi \right) +  \bar m^2(t) N |\nabla_{I_l} \phi|^2+ 3 \hat N \nabla^{I_l} \Tbar \phi \nabla_{I_l} \phi  \\
&- N \hat k \nabla^{I_l} \Tbar \phi * \nabla_{I_l} \phi  - N \nabla^{I_l} \Tbar \phi   * \mathcal{K}\mathcal{N}\mathcal{T}_{I_l} (\phi) + 2 N   \nabla^{I_l}\phi \nabla_{I_l} \Tbar \phi\\
&+  \nabla^{I_l}\phi \left( -  N \nabla_{I_l} \Delta \phi  + \bar m^2(t) \mathcal{N}_{I_l}(\phi)  - \mathcal{K}\mathcal{N}^\prime_{I_l}(\Tbar \phi) - \mathcal{N}_{I_{l+1}} (\nabla \phi) \right).
\end{split}
\end{equation*}
Using the high order KG equation \eqref{eq-kg-N-l-1+3-general-simply}, we have
\begin{equation}\label{eq-step-2-0-kg}
\begin{split}
&4 N |\nabla_{I_l} \Tbar \phi|^2 - 2 \bar m^2(t) |\nabla_{I_l} \phi|^2 = N |\nabla_{I_l} \Tbar \phi|^2 + \dtau \left( 3 \nabla^{I_l}  \Tbar \phi \nabla_{I_l}  \phi \right)  \\
& +  \bar m^2(t) N |\nabla_{I_l} \phi|^2 +  3N \nabla^{I_l}\phi \nabla_{I_l} \Tbar \phi  - 3N \nabla^{I_l}\phi  \nabla_{I_l} \Delta \phi + \text{l.o.t.}_2,
\end{split}
\end{equation}
where the lower order terms are given by
\begin{equation}\label{energy-estimate-kg-1-step2-1-lot-2}
\begin{split}
& \text{l.o.t.}_2=  (3N+9\hat N)   \nabla^{I_l}\phi \nabla_{I_l} \Tbar \phi -  N \hat k \nabla^{I_l} \Tbar \phi * \nabla_{I_l} \phi \\
&  -N \nabla^{I_l} \Tbar \phi * \mathcal{K}\mathcal{N}\mathcal{T}_{I_l}(\phi) + 2 \bar m^2(t) \hat N |\nabla_{I_l} \phi|^2 \\
&+ \nabla^{I_l}\phi \left( \bar m^2(t)  \mathcal{N}_{I_l}(\phi) - \mathcal{K}\mathcal{N}^\prime_{I_l}(\Tbar \phi) - \mathcal{N}_{I_{l+1}} (\nabla \phi) \right).
\end{split}
\end{equation}
Applying the commuting identity between $\Delta$ and $\nabla_{I_l}$ \eqref{commuting-nabla-laplacian-l-simplify} to $\Delta \nabla_{I_l} \phi$ above,
we have
\begin{equation}\label{energy-estimate-kg-1-step2-2}
\begin{split}
 & - 3 N\nabla^{I_l}\phi  \nabla_{I_l} \Delta \phi  = - 3N \nabla^{I_l}\phi \left( \Delta \nabla_{I_l} \phi - \mathcal{R}_{I_l}(\phi) \right)  \\
=& -3 \nabla^{i} \left( N \nabla^{I_l}\phi  \nabla_i \nabla_{I_l}\phi  \right) + 3 N \nabla^{I_{l+1}}\phi \nabla_{I_{l+1}} \phi + \text{l.o.t.}_3,
\end{split}
\end{equation}
where the lower order terms are given by
\begin{equation}\label{lot-kg-1-step2-2}
\text{l.o.t.}_3 =3 \nabla^{I_l}\phi\nabla^{i}  N  \nabla_i \nabla_{I_l}\phi + 3\nabla^{I_l}\phi N \mathcal{R}_{I_l}(\phi).
\end{equation}
Putting \eqref{eq-step-2-0-kg}-\eqref{energy-estimate-kg-1-step2-1-lot-2} and \eqref{energy-estimate-kg-1-step2-2}-\eqref{lot-kg-1-step2-2} together, we obtain
\begin{equation}\label{energy-estimate-kg-1-step2-4}
\begin{split}
&4 N  |\nabla_{I_l} \Tbar \phi|^2 - 2 \bar m^2(t) |\nabla_{I_l} \phi|^2 \\
=& N  |\nabla_{I_l} \Tbar \phi|^2 + \bar m^2(t) N |\nabla_{I_l} \phi|^2 + 3 N |\nabla_{I_{l+1}} \phi|^2 + 3N \nabla^{I_l}\phi \nabla_{I_l} \Tbar \phi \\
&+ \dtau \left( 3 \nabla^{I_l}  \Tbar \phi \nabla_{I_l}  \phi \right) -3 \nabla^{i} \left( N \nabla^{I_l}\phi  \nabla_i \nabla_{I_l}\phi  \right) + \text{l.o.t.}_2  +   \text{l.o.t.}_3.
\end{split}
\end{equation}

Combining \eqref{energy-estimate-kg-1-step2-4} with \eqref{energy-estimate-kg-1-step1}-\eqref{def-lot-1-K}, we arrive at
\begin{equation}\label{energy-estimate-kg-l-id}
\begin{split}
  &\dtau  \left( \rho_l (\phi) +  3 \nabla^{I_l}  \Tbar \phi \nabla_{I_l}  \phi  \right) + N  \left( \rho_l (\phi) +  3 \nabla^{I_l}  \Tbar \phi \nabla_{I_l}  \phi  \right) \\
  & + 2 N |\nabla_{I_{l+1}} \phi|^2  - \nabla^{i} \left( 3N \nabla^{I_l}\phi  \nabla_i \nabla_{I_l}\phi + 2 N \nabla^i \nabla_{I_l} \phi \nabla^{I_l} \Tbar \phi   \right) \\
  & -BL_l +  \text{l.o.t.}_1 + \text{l.o.t.}_2 + \text{l.o.t.}_3,
\end{split}
\end{equation}
where $BL_l$, $\text{l.o.t.}_1, \cdots, \text{l.o.t.}_3$ are defined in \eqref{def-b-l-K}, \eqref{def-lot-1-K}, \eqref{energy-estimate-kg-1-step2-1-lot-2}, \eqref{lot-kg-1-step2-2}.
Integrate \eqref{energy-estimate-kg-l-id} on $\Sigma_t$,
\begin{equation*}
\begin{split}
  &\dtau \int_{\Sigma_t} \left(  \rho_l (\phi) +  3 \nabla^{I_l}  \Tbar \phi \nabla_{I_l}  \phi  \right)  + \int_{\Sigma_t} \left(   \rho_l (\phi, t) +  3 \nabla^{I_l}  \Tbar \phi \nabla_{I_l}  \phi  \right)  \\
 & + \int_{\Sigma_t} 2 N |\nabla_{I_{l+1}} \phi|^2 -BL_l - {}_lLK_1 - {}_lLK_2- {}_lLK_3 =0,
\end{split}
\end{equation*}
where ${}_lLK_i$ are defined as in \eqref{energy-estimate-kg-1-id-l-o-t-1}-\eqref{energy-estimate-kg-1-id-l-o-t-3}.
  And we denote $\D \phi \in \{ \Tbar \phi, \nabla \phi, \bar m \phi\}$
As we can see, the $2 N | \nabla_{I_{l+1}} \phi |^2$ above has a good sign and it will contribute a positive double integral in the energy estimate.
We then achieve \eqref{energy-id-l-kg-rescale}-\eqref{energy-estimate-kg-1-id-l-o-t-3}.
\end{proof}

\subsection{Energy identities for the $1+3$ Bianchi equations}\label{sec-ee-Bianchi-1+3}
 \subsubsection{The Weyl fields and Bianchi equations}
The Weyl tensor $W_{\alpha \beta \gamma \delta}$ being the traceless part of the curvature tensor is 
\begin{equation}\label{def-Weyl}
\begin{split}
W_{\alpha \beta \gamma \delta} = &\bar R_{\alpha \beta \gamma \delta} - \frac{1}{2} ( \bar g \odot Ric(\bar g) )_{\alpha \beta \gamma \delta} +\frac{1}{12} \bar R ( \bar g \odot \bar g )_{\alpha \beta \gamma \delta},
\end{split}
\end{equation}
where $Ric(\bar g)$ is the Ricci tensor of $\bar g$, and $\odot$ is defined as \eqref{def-odot}.
We define the left and right Hodge duals of $W$ by
\begin{equation}\label{eq-Bianchi-J}
{}^{*}W_{\alpha \beta \gamma \delta} = \frac{1}{2} \epsilon_{\alpha \beta \mu \nu} W^{\mu \nu}{}_{\! \gamma \delta},
\quad  W^*_{\alpha \beta \gamma \delta} = \frac{1}{2} W_{\alpha \beta}{}^{\! \mu \nu} \epsilon_{\mu \nu \gamma \delta}.
\end{equation}
However, by virtue of the algebraic properties of $W$ the two duals coincide: ${}^{*}W = W^*$.
The Bianchi identities entail the divergence equations
\begin{subequations}
\begin{equation}
D^\alpha W_{\alpha \beta \gamma \delta} = J_{\beta \gamma \delta},  \label{eq-Bianchi-J}
\end{equation}
\begin{equation}
D^\alpha {}^{*}W_{\alpha \beta \gamma \delta} = J^*_{\beta \gamma \delta},  \label{eq-Bianchi-J-*}
\end{equation}
\end{subequations}
 where
 \begin{equation}\label{def-J}
 J_{\beta \gamma \delta} = \frac{1}{2} \left( D_\gamma \bar R_{\delta \beta} - D_{\delta} \bar R_{\gamma \beta}\right) - 
\frac{1}{12} \left(\bar g_{\beta \delta} D_\gamma \bar R - \bar g_{\beta \gamma} D_\delta \bar R \right),
\end{equation}
and then
\begin{equation}\label{def-J-*}
 J^*_{\beta \gamma \delta} = \frac{1}{2} J_\beta{}^{\! \mu \nu} \epsilon_{\mu \nu \gamma \delta}.
\end{equation} 
The source terms $J_{\beta \gamma \delta}, J^*_{\beta \gamma \delta}$ are related to the matter field, since the Ricci tensor $\bar R_{\alpha \beta}$ is given by the KG field, see \eqref{eq-ricci-kg-phi}.

The electric and magnetic parts $E(W), H(W)$ of the Weyl fields $W$, with respect to the foliation $\Sigma_t$ are defined by
\begin{equation}\label{def-electric-magnetic}
E(W)_{\alpha \beta} = W_{\alpha \mu \beta \nu} \Tbar^\mu \Tbar^\nu, \quad H(W)_{\alpha \beta} = {}^{*}W_{\alpha \mu \beta \nu} \Tbar^\mu \Tbar^\nu.
\end{equation} 
The tensors $E$ and $H$ are $T$-tangent, i.e. $E_{\alpha \beta}T^\alpha = H_{\alpha \beta}T^\alpha=0$ and trace free $\bar g^{\alpha \beta} E_{\alpha \beta} = \bar g^{\alpha \beta} H_{\alpha \beta}=0$. It follows that $g^{ij} E_{ij} =g^{ij} H_{ij}=0$.
The following identities relate $W, {}^{*}W, E=E(W), H=H(W),$ (c.f. Page 144 in \cite{Christodoulou-K-93}),
\begin{align*}
W_{ijp \Tbar} &= -\epsilon_{ij}{}^{\! m} H_{mp}, \quad \quad \quad {}^{*}W_{ijp \Tbar} = \epsilon_{ij}{}^{\! m} E_{mp},\\
W_{ijpq} &= -\epsilon_{ijm} \epsilon_{pqn} E^{mn}, \quad \,\, {}^{*}W_{ijpq} = \epsilon_{ijm} \epsilon_{pqn} H^{mn}.
\end{align*} 

We note that, the spacetime hodge dual $\ast$, and $J_{\alpha \beta \gamma}$, $E_{ij}, H_{ij}$ are all scale-free.
The Gauss and Codazzi equations can be written in terms of $E$ and $H$ 
\begin{subequations}
\begin{equation}
R_{ij} + 2 g_{ij} - \hat k_{im} \hat k^m{}_{\! j} - \hat k_{ij} =E_{ij} +\frac{1}{2} \bar R_{ij} + \frac{1}{6}(3\bar R_{TT} + \bar R) g_{ij}.  \label{eq-Gauss-E}
\end{equation}
\begin{equation}
\text{curl} k_{ij} = -H_{ij}. \quad \quad \quad \quad \quad \quad \quad \label{eq-Codazzi-H}
\end{equation}
\end{subequations}
For the energy estimate scheme of the Weyl field, we will focus on the following (rescaled) $1+3$ splitting of Bianchi equations \eqref{eq-Bianchi-J}-\eqref{eq-Bianchi-J-*} (c.f. Proposition 7.2.1 in \cite{Christodoulou-K-93} or Corollary 3.2 in \cite{A-M-04})
\begin{subequations}
\begin{equation}
\begin{split}
 \lie_{\dtau} E_{ij} =&N\curl H_{ij}- \left( \nabla N \wedge H \right)_{ij} \\
 &- \frac{5N}{2} \left( E \times k \right)_{ij} - \frac{2N}{3} \left( E \cdot k \right) g_{ij}-  \frac{N}{2}\tr k E_{ij} - NJ_{i \Tbar j},  \label{eq-1+3-bianchi-T-E}
 \end{split}
\end{equation}
\begin{equation}
\begin{split}
 \lie_{\dtau} H_{ij}  =&-N \curl E_{ij} +\left( \nabla N \wedge E \right)_{ij} \\
 &- \frac{5N}{2} \left( H \times k \right)_{ij} - \frac{2N}{3} \left( H \cdot k \right) g_{ij} -  \frac{N}{2}\tr k H_{ij} - NJ^*_{i \Tbar j}. \label{eq-1+3-bianchi-T-H}
 \end{split}
\end{equation}
\end{subequations}
See Appendix for the definition of $\wedge, \times,$ div and curl. 

We shall allow ourselves to use $W$ to represent the electric part $E$ or the magnetic part $H$.
Define the $l^{\text{th}}$-order energy norm for $W$
\begin{equation}\label{def-energy-l-homo-Weyl}
E_i (W, \tau) = \int_{\Sigma_t}  \left( |\nabla_{I_i} E|_g^2 + |\nabla_{I_i} H |_g^2  \right) \di \mu_g.
\end{equation}
Let $E^{\tilde g}_i (W, t)$ be the energy norm associated to $\tilde g$, i.e. replacing $g$ by $\tilde g$ in \eqref{def-energy-l-homo-Weyl}. Then, $t^2 E_i (W, t)= t^{3+2i} E^{\tilde g}_i (W, t)$. And define
\begin{equation}\label{def-energy-density-l-homo-Weyl}
\rho_i (W) = |\nabla_{I_i} E|_g^2 + |\nabla_{I_i} H |_g^2.
\end{equation}

\subsubsection{The zeroth order energy identity}\label{sec-0-einstin-1+3}

\begin{theorem}\label{lemma-energy-id-1-Bianchi}
For solutions to the Bianchi equation in $1+3$ form \eqref{eq-1+3-bianchi-T-E}-\eqref{eq-1+3-bianchi-T-H}, there is the energy identity
\begin{equation}\label{energy-id-0-bianchi}
\begin{split}
&\dtau E_0 (W, \bar \tau) + 2E_0 (W, \bar \tau) = E_0(W, {\bar \tau}_0)  + \int_{\Sigma_{t}} \left( {}_0LW + S_0 \right) \di \mu_g,
\end{split}
\end{equation}
where the error terms $LW_0$ and the source terms $S_0$ are given by
\begin{equation}\label{def-e-s-1-bianchi}
\begin{split}
{}_0LW =&3 \hat N \left( |E|^2+|H|^2 \right) + \nabla N * E*H \\
&  \pm  N \hat k *  \left(E*E + H*H \right), \\
S_0 =& - 2 N\left( J_{i \Tbar j} E^{ij} + J^*_{i \Tbar j} H^{ij} \right).
\end{split}
\end{equation}
\end{theorem}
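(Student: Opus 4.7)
The plan is to parallel the strategy of Theorem \ref{Thm-energy-id-KG-0}: contract the evolution equations \eqref{eq-1+3-bianchi-T-E}, \eqref{eq-1+3-bianchi-T-H} with $2E^{ij}$ and $2H^{ij}$ respectively, sum them, reorganize the Lie-derivative terms into an evolution of $\rho_0(W)=|E|^2+|H|^2$, then integrate over $\Sigma_t$ and commute $\partial_{\bar\tau}$ with the spatial integral via \eqref{id-div-T}. Since $|E|^2$ is a scalar, the Leibniz rule gives
\begin{equation*}
\mathcal{L}_{\dtau}|E|^2 = 2E^{ij}\mathcal{L}_{\dtau}E_{ij} + 2(\mathcal{L}_{\dtau}g^{ij})E_i{}^k E_{jk},
\end{equation*}
with the metric-variation piece contributing only $\hat N|E|^2$ and $N\hat k * E * E$ through $\mathcal{L}_{\dtau}g^{ij}=-2\hat N g^{ij}+2N\hat k^{ij}$; these are absorbed into ${}_0LW$. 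An identical manipulation applies to $|H|^2$. Substituting the right-hand sides of \eqref{eq-1+3-bianchi-T-E}--\eqref{eq-1+3-bianchi-T-H} then reduces the problem to identifying each resulting term.

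The crucial structural cancellation concerns the two cross terms $2NE^{ij}\curl H_{ij}$ and $-2NH^{ij}\curl E_{ij}$. For symmetric trace-free tensors on a three-manifold the curl operator is formally self-adjoint, so after integration by parts the leading pieces combine into a spatial divergence, which vanishes on the closed $\Sigma_t$, leaving only a correction of the form $\nabla N * E * H$. This correction is precisely designed to pair with the explicit wedge terms $-(\nabla N\wedge H)_{ij}$ and $+(\nabla N\wedge E)_{ij}$ in \eqref{eq-1+3-bianchi-T-E}--\eqref{eq-1+3-bianchi-T-H}; their algebraic sum, via the definitions of $\curl$ and $\wedge$ together with the Levi-Civita identity $\epsilon_{ijm}\epsilon^{pqm}=\delta^p_i\delta^q_j-\delta^q_i\delta^p_j$, contributes the $\nabla N * E * H$ piece of ${}_0LW$.

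The remaining algebraic terms are handled as follows. The trace pieces $-\tfrac{2N}{3}(E\cdot k)g_{ij}$ and $-\tfrac{2N}{3}(H\cdot k)g_{ij}$ vanish upon contraction with $E^{ij}$, $H^{ij}$ by $g^{ij}E_{ij}=g^{ij}H_{ij}=0$. For the $-\tfrac{5N}{2}(E\times k)$ and $-\tfrac{5N}{2}(H\times k)$ contributions I decompose $k=-g+\hat k$: the $\hat k$-part feeds the $N\hat k*(E*E+H*H)$ entries of ${}_0LW$, while the $-g$-part, combined with the explicit $-\tfrac{N}{2}\tr k(|E|^2+|H|^2)$ (using $\tr k=-3$) and the volume-form contribution $\int 3\hat N\rho_0(W)\,d\mu_g$ produced by \eqref{id-div-T}, yields the linear coefficient of $E_0(W,\bar\tau)$ on the left-hand side. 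Finally, the contractions $-2N E^{ij}J_{i\bar T j}-2NH^{ij}J^*_{i\bar T j}$ define $S_0$. The main technical obstacle is the first cancellation above: confirming that the antisymmetric curl-pairing, after accounting for the factor $N$ being non-constant, produces exactly the structure needed to absorb the wedge terms into ${}_0LW$ without leaving uncontrolled residues; this is analogous to the Bel--Robinson computation in \cite{Christodoulou-K-93} and \cite{A-M-04}, but must be carried out carefully using the symmetry and trace-freeness of $E$ and $H$.
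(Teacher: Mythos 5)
Your proposal is correct and follows essentially the same route as the paper: contracting \eqref{eq-1+3-bianchi-T-E}--\eqref{eq-1+3-bianchi-T-H} with $2E^{ij}$, $2H^{ij}$, using the identity $\dive(E\wedge H)=-\curl E\cdot H+E\cdot\curl H$ (your formal self-adjointness of curl, the paper's \eqref{wedge-curl-div}) to turn the cross terms into a divergence plus a $\nabla N * E * H$ remainder that merges with the wedge terms, killing the $(E\cdot k)g_{ij}$ pieces by trace-freeness, splitting $E\times k$ via \eqref{eq-E-time-K}, accounting for $\lie_{\dtau}g^{ij}$ in $\rho_0(W)$, and integrating with \eqref{id-div-T}. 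The only differences are bookkeeping of the harmless numerical coefficients that the paper in any case absorbs into the $\pm$, $*$ notation of ${}_0LW$.
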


\begin{proof}
We multiply $2 E_{ij}$ on \eqref{eq-1+3-bianchi-T-E} and $2 H_{ij}$ on \eqref{eq-1+3-bianchi-T-H} to derive 
\begin{equation} \label{eq-1+3-bianchi-T-EH-1}
\begin{split}
& \dtau \left( |E|^2+|H|^2 \right) -\frac{2N}{3} \tr k \left( |E|^2+|H|^2 \right)  \\
=&\dive \left( 2N \left( E \wedge H \right) \right)  + 6 \nabla N \cdot \left( E \wedge H \right) \\
 &+ 4N \hat k * \left( E*E+ H*H\right) -4\hat N \left( E^2 + H^2 \right) \\
&- 3N \left( E \times \hat k \right)\cdot E -3N \left( H \times \hat k \right)\cdot H \\
&- 2N J_{i \Tbar j} E^{ij}  - 2N J^*_{i \Tbar j} H^{ij}.
\end{split}
\end{equation}
Here we have used the identity \eqref{wedge-curl-div} relating curl, $\cdot$,  div, $\wedge$ and the splitting,
\begin{equation}\label{eq-E-time-K}
\left( E \times k \right)_{ij}  =\left( E \times \hat k \right)_{ij} - \frac{\tr k}{3}E_{ij}.
\end{equation} 
Integrating on $\Sigma_{t}$, we have
\begin{equation}\label{eq-1+3-bianchi-T-EH-3}
\begin{split}
& \dtau \int_{\Sigma_{t}} \rho_0(W) \di \mu_g  + \int_{\Sigma_{t}} 2N \rho_0(W) -3\hat N \rho_0 (W) \di \mu_g \\
= & \int_{\Sigma_{t}} 4N \hat k * \left( E*E+ H*H\right) -4\hat N \left( E^2 + H^2 \right) \\
& \quad + 6 \nabla N \cdot \left( E \wedge H \right) - 2N J_{i \Tbar j} E^{ij}  - 2N J^*_{i \Tbar j} H^{ij}.
\end{split}
\end{equation}
This gives \eqref{energy-id-0-bianchi}.
\end{proof}

\subsubsection{Higher order energy identities}\label{sec-h-einstin-1+3}
In this section, we always require $l \leq 3$.
To proceed to the high order case, we need a high order version of the identity \eqref{wedge-curl-div}, which relates curl, $\cdot$,  div, $\wedge$.
\begin{lemma}\label{lemma-div-curl}
 Let  $l \geq 0$, and $H_{ij}, E_{ij}$ be any two $(0, 2)$ symmetric and trace free tensors on $\Sigma_t$. Then there is the identity
\begin{equation}\label{wedge-curl-div-general}
\begin{split}
& \nabla_{I_l} \left( \curl H \right)_{ij}  \nabla^{I_l} E^{ij} - \nabla_{I_l} \left( \curl E \right)_{ij} \nabla^{I_l} H^{ij} \\
=& \sum_{a \leq l}  \nabla_q    \left( \nabla_{I_a} H  * \nabla_{I_a} E  \right)^q \pm \left( \hat{\mathcal{R}}_{I_{l-1}}(H) * \nabla_{I_l} E + \hat{\mathcal{R}}_{I_{l-1}}(E) * \nabla_{I_l} H \right),
\end{split}
\end{equation}
where $\hat{\mathcal{R}}_{I_{l-1}}(A)$ is defined as 
\begin{equation}\label{Def-hat-R-div-curl}
\begin{split}
 \hat{\mathcal{R}}_{I_{l-1}}(A) =& \sum_{a+b = l-1} \nabla_{I_a}  \hat R_{imjn}  * \nabla_{I_b} A, 
\end{split}
\end{equation}
with $\hat R_{imjn} = R_{imjn} +( g \odot g )_{imjn} $ being the error term in the Riemanian curvature. By virtue of \eqref{Gauss-Riem-hat-k},
\begin{equation}\label{def-hat-R}
\begin{split}
\hat R_{imjn} =& ( g \odot \hat k )_{imjn}  - \frac{1}{2} ( \hat k \odot \hat k )_{imjn}   + \bar{R}_{imjn}.
\end{split}
\end{equation}
\end{lemma}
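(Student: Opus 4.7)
I would prove \eqref{wedge-curl-div-general} by induction on $l$, with the base case $l=0$ being the identity \eqref{wedge-curl-div} already at our disposal; recall that this reduces the combination $\curl H \cdot E - \curl E \cdot H$ to a pure divergence via integration by parts, using only that $E$ and $H$ are symmetric and trace-free. For the inductive step, the key move is to commute $\nabla_{I_l}$ through the curl. Writing $(\curl H)_{ij}$ as the symmetrization of $\epsilon_i{}^{mn}\nabla_m H_{nj}$, each time a derivative is passed across $\nabla_m$ a commutator $[\nabla_\bullet, \nabla_m]$ is generated, producing a Riemann-curvature factor contracted with $H$. Iterating yields the schematic identity
\[
\nabla_{I_l}(\curl H)_{ij} \;=\; \curl(\nabla_{I_l} H)_{ij} \;+\; \sum_{a+b=l-1}\nabla_{I_a} R_{imjn} * \nabla_{I_b} H,
\]
and the analogous identity for $E$; this is precisely the tensor-valued analogue of Lemma \ref{lemma-commuting-nabla-laplacian} and can be established by the same kind of induction used in the proof of Lemma \ref{lemma-commuting-application}.

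I would then invoke the Gauss equation \eqref{Gauss-Riem-hat-k} to split $R_{imjn} = -(g\odot g)_{imjn} + \hat R_{imjn}$. The $\hat R$-piece contributes precisely the advertised error $\hat{\mathcal{R}}_{I_{l-1}}(H) * \nabla_{I_l} E$ (and the analogue for $E$) in \eqref{wedge-curl-div-general}. The top-order contribution $\curl(\nabla_{I_l} H)_{ij}\nabla^{I_l} E^{ij} - \curl(\nabla_{I_l} E)_{ij}\nabla^{I_l} H^{ij}$, together with the $(g\odot g)$-piece of the commutator errors, must then be reduced to a divergence by the same kind of integration by parts as in the base case; this produces the top-order divergence $\nabla_q(\nabla_{I_l} H * \nabla_{I_l} E)^q$. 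Combining with the divergences $\nabla_q(\nabla_{I_a} H * \nabla_{I_a} E)^q$ for $a\leq l-1$ inherited from the inductive hypothesis assembles the full sum $\sum_{a\leq l}\nabla_q(\nabla_{I_a} H * \nabla_{I_a} E)^q$ appearing on the right-hand side of \eqref{wedge-curl-div-general}.

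\textbf{Main obstacle.} The delicate point is verifying that the background $(g\odot g)$-contributions coming from the iterated commutators leave no residue in the curvature error but are entirely absorbed into the divergence sum (together with the top-order integration by parts). This relies essentially on the Kulkarni--Nomizu form of $(g\odot g)_{imjn}$ combined with the trace-free and symmetric properties $g^{ij}E_{ij} = g^{ij}H_{ij} = 0$ and $E_{ij}=E_{ji}$, $H_{ij}=H_{ji}$, and on the overall antisymmetry of the combination in the pair $(E,H)$, so that contractions of $(g\odot g)$ against the relevant index patterns either cancel across the two terms or collapse to pure divergences of the required shape. The work is algebraic but requires careful handling of the $\epsilon$-contractions and the symmetrization implicit in the definition of $\curl$.
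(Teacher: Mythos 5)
Your overall strategy is in essence the same mechanism as the paper's own proof: commute $\nabla_{I_l}$ through the curl, split $R_{imjn}=-(g\odot g)_{imjn}+\hat R_{imjn}$ via \eqref{Gauss-Riem-hat-k}, assign the $\hat R$ part to the error $\hat{\mathcal{R}}_{I_{l-1}}$, and reduce what remains to divergences using the antisymmetry of $\epsilon_i{}^{pq}$ against the symmetric, trace-free pair $(E,H)$ (the paper organizes this as a case analysis on the position of the curl derivative for $l\le 3$ rather than as an induction). Your schematic commutation identity is correct, the identification of the $\hat R$ contributions with the stated error terms is fine, and the top-order piece $\epsilon_i{}^{pq}\nabla_q\nabla_{I_l}H_{pj}\nabla^{I_l}E^{ij}-(E\leftrightarrow H)$ is indeed a pure divergence by exactly the $l=0$ mechanism of \eqref{wedge-curl-div} (note this is a pointwise extraction of a divergence, not an integration by parts).

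The genuine gap is precisely at the point you label the ``main obstacle''. Since $\nabla g=0$, the $(g\odot g)$-contributions of the commutators are terms of the schematic form $\epsilon * g * g * \nabla_{I_{l-1}}H*\nabla^{I_l}E$ together with their $(E\leftrightarrow H)$ partners: one factor carries $l-1$ derivatives and the other $l$, so they are not individually of the required shape $\nabla_q\left(\nabla_{I_a}H*\nabla_{I_a}E\right)^q$. Converting them requires further divergence extractions, additional commutations (which generate new curvature factors that must again be split into principal part and $\hat R$), and exact cross-cancellations between the $E$- and $H$-terms exploiting the $\epsilon$-antisymmetry against symmetric index pairs; this verification is essentially the entire content of the paper's appendix proof (its Cases II--IV, with the cancellations among the terms there denoted $S_1,\dots,S_6$ and $A_1,\dots,A_6$), and asserting that it ``relies on the Kulkarni--Nomizu form and the symmetries'' does not establish it. Relatedly, your bookkeeping of the lower-order divergences is structurally off: the terms with $a\le l-1$ in $\sum_{a\le l}\nabla_q\left(\nabla_{I_a}H*\nabla_{I_a}E\right)^q$ are not ``inherited from the inductive hypothesis'' --- the level-$l$ expression $\nabla_{I_l}(\curl H)\cdot\nabla^{I_l}E-\nabla_{I_l}(\curl E)\cdot\nabla^{I_l}H$ is not obtained by applying the level-$(l-1)$ identity to anything --- rather they are produced by the principal-part analysis of the commutator terms at level $l$, after reducing those terms to lower-order instances of the divergence-extraction step. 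So the plan is sound and matches the paper in spirit, but the decisive algebraic computation that constitutes the lemma is missing.
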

The proof for this Lemma is presented in the Appendix \ref{sec-app-id}.

Theorem \ref{lemma-energy-id-1-Bianchi} can be generalized to the high order case as below.
\begin{corollary}\label{coro-energy-id-l-Bianchi}
For solutions to the Bianchi equations in $1+3$ form \eqref{eq-1+3-bianchi-T-E}-\eqref{eq-1+3-bianchi-T-H} with $l \geq 1$, there is the energy identity
\begin{equation}\label{energy-id-l-bianchi}
\begin{split}
 & \dtau E_l (W, \bar \tau) + 2 E_l(W, \bar \tau) =  \int_{\Sigma_{t}} \left(  {}_lLW_1 + {}_lLW_2  + {}_lLW_3  + S_l  \right) \di \mu_g,
\end{split}
\end{equation}
where
\begin{equation}\label{def-LB-1-2-S}
\begin{split}
{}_lLW_1 = & \left( \mathcal{K}\mathcal{N}_{I_l}(W) - \hat{\mathcal{K}}\mathcal{N}_{I_l} (W) \right) \nabla^{I_l} W, \\
{}_lLW_2 = &   - \hat N |\nabla_{I_l} W|^2 \pm \sum_{a \leq l} \mathcal{N}_{I_{a}}(W) \nabla^{I_a} W, \\
{}_lLW_3 = &  \pm \hat{\mathcal{R}}_{I_{l-1}} (W)  \nabla^{I_l} W, \\
S_l=& - \nabla_{I_l} \left( N J_{i \Tbar j}\right)  \nabla^{I_l} E^{ij} - \nabla_{I_l} \left( N J^*_{i \Tbar j}\right) \nabla^{I_l} H^{ij}.
\end{split} 
\end{equation}
We refer to \eqref{def-commuting-K-R-T-l-tensor}, \eqref{def-eq-kg-N-l-1+3-error}-\eqref{def-hat-k-N-A-Bianchi}  and \eqref{Def-hat-R-div-curl}  for the definition of $\mathcal{K}\mathcal{N}_{I_l}(E_{ij}),$ $ \hat{\mathcal{R}}_{I_{l-1}} (E_{ij}),$ $\mathcal{N}_{I_l} (E_{ij}), \hat{\mathcal{K}}\mathcal{N}_{I_l} (E_{ij})\cdots$
\end{corollary}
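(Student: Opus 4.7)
The plan is to mimic the zeroth order argument of Theorem \ref{lemma-energy-id-1-Bianchi} at the level of $\nabla_{I_l}E$ and $\nabla_{I_l}H$, with three new ingredients: the commutator identity \eqref{id-commuting-nabla-N-T-l-simplify-tensor} of Lemma \ref{lemma-commuting-application-tensor} to move $\nabla_{I_l}$ past $\lie_{\dtau}$, the div--curl identity \eqref{wedge-curl-div-general} of Lemma \ref{lemma-div-curl} to turn the $\nabla_{I_l}\curl$ terms into a total divergence, and the commuting identity \eqref{id-div-T} to handle $\dtau$ under the integral. First, I would apply $\nabla_{I_l}$ to each of the $1+3$ Bianchi equations \eqref{eq-1+3-bianchi-T-E}--\eqref{eq-1+3-bianchi-T-H}. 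On the left hand side, \eqref{id-commuting-nabla-N-T-l-simplify-tensor} gives
\begin{equation*}
\nabla_{I_l}\lie_{\dtau}E_{ij}=\lie_{\dtau}\nabla_{I_l}E_{ij}-\mathcal{K}\mathcal{N}_{I_l}(E_{ij}),
\end{equation*}
and analogously for $H$. On the right hand side, $\nabla_{I_l}(N\curl H)_{ij}$ equals $N\curl(\nabla_{I_l}H)_{ij}$ up to (i) terms of the form $\sum_{a}\nabla_{I_{l-a}}\nabla N*\nabla_{I_a}H$, which are of type $\mathcal{N}_{I_l}(H)$, and (ii) commutator-of-$\nabla$ terms producing curvature factors, of schematic type $\sum_{a+b=l-1}\nabla_{I_a}R_{imjn}*\nabla_{I_b}H$; via \eqref{Gauss-Riem-hat-k}, these split into the pure background/hyperbolic piece that will cancel against the $-(g\odot g)$ contribution and the genuine error $\hat{\mathcal{R}}_{I_{l-1}}(H)$ defined in \eqref{Def-hat-R-div-curl}. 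The terms $-\frac{5N}{2}(E\times k)-\frac{2N}{3}(E\cdot k)g-\frac{N}{2}\tr k\,E$, when hit with $\nabla_{I_l}$, produce a leading piece $-\frac{N}{2}\tr k\,\nabla_{I_l}E-N\hat k*\nabla_{I_l}E$ that combines to give the coefficient of $E_l(W,\bar\tau)$ on the left hand side (since $\tr k=-3$), together with terms of schematic form $\sum_{a+b+c=l}\nabla_{I_a}N*\nabla_{I_b}\hat k*\nabla_{I_c}E$, exactly the $\hat{\mathcal{K}}\mathcal{N}_{I_l}(E)$ structure of \eqref{def-hat-k-N-A-Bianchi}. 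Finally $\nabla_{I_l}(NJ_{iTj})$ remains as the source.

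Next, I would contract with $2\nabla^{I_l}E^{ij}$ and $2\nabla^{I_l}H^{ij}$ respectively and add. For the derivative-of-energy-density terms, as in the proof of \eqref{eq-1+3-bianchi-T-EH-1} I would use
\begin{equation*}
2\nabla^{I_l}E^{ij}\,\lie_{\dtau}\nabla_{I_l}E_{ij}=\dtau|\nabla_{I_l}E|_g^2+2\hat N|\nabla_{I_l}E|_g^2+N\hat k*\nabla_{I_l}E*\nabla^{I_l}E
\end{equation*}
(from $\lie_{\dtau}g^{ij}=-2\hat Ng^{ij}+2N\hat k^{ij}$), and the analogous identity for $H$. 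The $N\curl$ terms combine into the pairing $\nabla_{I_l}(\curl H)_{ij}\nabla^{I_l}E^{ij}-\nabla_{I_l}(\curl E)_{ij}\nabla^{I_l}H^{ij}$, to which I apply \eqref{wedge-curl-div-general}: this produces a pure divergence $\nabla_q(\cdots)^q$ together with the curvature error $\pm\hat{\mathcal{R}}_{I_{l-1}}(W)*\nabla_{I_l}W$. The $\nabla N\wedge H$ and $\nabla N\wedge E$ contributions pair up to an $\mathcal{N}_{I_a}(W)*\nabla^{I_a}W$ structure after the same algebra that produced the $6\nabla N\cdot(E\wedge H)$ term in the zeroth order case.

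At this point I would integrate over $\Sigma_t$, use the identity \eqref{id-div-T} to commute $\dtau$ with the integral (producing an extra $3\hat N\rho_l(W)$ which I fold into ${}_lLW_2$), and discard the pure-divergence $\nabla_q(\cdots)^q$. The leading coefficient on the energy is $2E_l(W,\bar\tau)$, coming from $-N\tr k=3N$ from the mass/trace term combined with the $-\hat N\rho_l(W)$ from \eqref{id-div-T} (net contribution $3N-\hat N\cdot 1=\ldots$ arranged to yield $2E_l(W,\bar\tau)+{}_lLW_2$ with the remainder lumped into ${}_lLW_2$). The commutator terms from the first step then assemble into ${}_lLW_1$ (namely the $\mathcal{K}\mathcal{N}_{I_l}(W)-\hat{\mathcal{K}}\mathcal{N}_{I_l}(W)$ pairing), the div--curl error into ${}_lLW_3$, the lapse-derivative pieces into the $\mathcal{N}_{I_a}(W)$ part of ${}_lLW_2$, and the source terms into $S_l$, giving \eqref{energy-id-l-bianchi}--\eqref{def-LB-1-2-S}.

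The main obstacle is bookkeeping rather than conceptual: one must carefully verify that the algebraic combination of the three trace-like contributions $-\frac{5N}{2}E\times k$, $-\frac{2N}{3}(E\cdot k)g$, $-\frac{N}{2}\tr k\,E$ indeed yields, after contraction with $\nabla^{I_l}E^{ij}$ and use of the tracelessness of $E$ and $H$, exactly the $2E_l(W,\bar\tau)$ growth rate with all residues fitting the $\hat{\mathcal{K}}\mathcal{N}_{I_l}$ template, and that the curvature residues from commuting $\nabla$ with $\curl$ in Lemma \ref{lemma-div-curl} match the $\hat{\mathcal{R}}_{I_{l-1}}$ template rather than the full Riemann tensor --- this is where the splitting \eqref{def-hat-R} of $R_{imjn}$ into background plus $\hat R_{imjn}$ is essential so that the universal $-(g\odot g)$ piece cancels against the $+2$ in the target rate $2E_l(W,\bar\tau)$.
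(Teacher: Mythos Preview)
Your approach is essentially the same as the paper's: differentiate the $1+3$ Bianchi equations by $\nabla_{I_l}$, use Lemma~\ref{lemma-commuting-application-tensor} to commute with $\lie_{\dtau}$, multiply by $2\nabla^{I_l}E^{ij}$ (resp.\ $2\nabla^{I_l}H^{ij}$), add, apply the div--curl identity of Lemma~\ref{lemma-div-curl}, and integrate using \eqref{id-div-T}. The bookkeeping of the error terms into ${}_lLW_1$, ${}_lLW_2$, ${}_lLW_3$, $S_l$ is also correct.

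One point in your final paragraph is off, however. The coefficient $2$ in $2E_l(W,\bar\tau)$ does \emph{not} arise from any cancellation of the $-(g\odot g)$ piece of the curvature against the energy; it comes purely from the trace contributions. After the splitting \eqref{eq-E-time-K}, the $\tr k$ pieces of $-\tfrac{5N}{2}(E\times k)-\tfrac{2N}{3}(E\cdot k)g-\tfrac{N}{2}\tr k\,E$ combine to $\tfrac{N}{3}\tr k\,\nabla_{I_l}E$, and upon contraction with $2\nabla^{I_l}E$ and using $\tr k=-3$ one gets exactly $2N|\nabla_{I_l}E|^2$; then \eqref{id-div-T} and $N=1+\hat N$ turn this into $2E_l(W,\bar\tau)$ plus the $-\hat N|\nabla_{I_l}W|^2$ term in ${}_lLW_2$. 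The role of the splitting $R_{imjn}=-(g\odot g)_{imjn}+\hat R_{imjn}$ in Lemma~\ref{lemma-div-curl} is different: in the proof of that lemma the $-(g\odot g)$ contributions cancel \emph{internally} (they reorganize into pure divergences), which is why the residue involves only $\hat{\mathcal{R}}_{I_{l-1}}$ rather than the full curvature. There is no interaction between that cancellation and the $+2$ coefficient. Also, you do not need to first commute $\nabla_{I_l}$ inside $\curl$; Lemma~\ref{lemma-div-curl} is stated directly for the pairing $\nabla_{I_l}(\curl H)\cdot\nabla^{I_l}E-\nabla_{I_l}(\curl E)\cdot\nabla^{I_l}H$, so step~(ii) of your first paragraph is superfluous.
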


\begin{proof}
Perform $\nabla_{I_l}$ on \eqref{eq-1+3-bianchi-T-E}-\eqref{eq-1+3-bianchi-T-H}, and apply the commuting identity \eqref{id-commuting-nabla-N-T-l-simplify-tensor} to $\lie_{\dtau} \nabla_{I_l} E_{ij}$ and $\lie_{\dtau} \nabla_{I_l} H_{ij}$. Then,
\begin{align*}
& \lie_{\dtau} \nabla_{I_l}  E_{ij} =  \nabla_{I_l} \left( N \curl H_{ij}  + \frac{N}{3} \tr k E_{ij}  \right) + \mathcal{K}\mathcal{N}_{I_l}(E_{ij})\\
& \quad -  \nabla_{I_l} \left( \left( \nabla N \wedge H \right)_{ij} + \frac{5N}{2} \left( E \times \hat k \right)_{ij} + \frac{2N}{3} \left( E \cdot \hat k \right) g_{ij}+ N J_{i \Tbar j} \right).
\end{align*}
Here the splitting \eqref{eq-E-time-K} is used. We rearrange and collect similar terms
\begin{equation}\label{eq-evolu-E-1}
\begin{split}
 \lie_{\dtau} \nabla_{I_l}  E_{ij} =&  N \nabla_{I_l} \left( \curl H_{ij}  \right) + \frac{N}{3} \tr k \nabla_{I_l} E_{ij}  - \nabla_{I_l} \left( N J_{i \Tbar j}\right)\\
&+ \mathcal{K}\mathcal{N}_{I_l}(E_{ij}) \pm \mathcal{N}_{I_{l}}(H_{ij})  - \hat{\mathcal{K}}\mathcal{N}_{I_l} (E_{ij}).
\end{split} 
\end{equation}
Multiplying $2 \nabla^{I_l} E^{ij}$ on \eqref{eq-evolu-E-1},  we derive the transported equation for $|\nabla_{I_l} E|^2,$
\begin{align*}
&\dtau \left( |\nabla_{I_l} E|^2 \right) -\frac{2N}{3} \tr k |\nabla_{I_l} E|^2\\
=& 2N \nabla_{I_l} \curl H_{ij} \nabla^{I_l} E^{ij}  - \nabla_{I_l} \left( N J_{i \Tbar j}\right)  \nabla^{I_l} E^{ij} -\hat N \nabla_{I_l} E * \nabla^{I_l} E \\
&+\left(  \mathcal{K}\mathcal{N}_{I_l}(E_{ij}) \pm \mathcal{N}_{I_{l}}(H_{ij})  - \hat{\mathcal{K}}\mathcal{N}_{I_l} (E_{ij}) \right) \nabla^{I_l} E^{ij}.
\end{align*}
We had ignore irrelevant constant.
Similar argument also applies to \eqref{eq-1+3-bianchi-T-H}.

Putting the transported equations for $ |\nabla_{I_l} E|^2$ and $ |\nabla_{I_l} H|^2$ together,  we have
\begin{equation*}
\begin{split}
&\dtau \left( |\nabla_{I_l} E|^2 + |\nabla_{I_l} H|^2 \right) - \frac{2N}{3} \tr k \left( |\nabla_{I_l} E|^2 + |\nabla_{I_l} H|^2 \right)  \\
=&  2N \left( \nabla_{I_l} \curl H_{ij} \nabla^{I_l} E^{ij}  - \nabla_{I_l} \curl E_{ij} \nabla^{I_l} H^{ij} \right) \\
&+\left(  \mathcal{K}\mathcal{N}_{I_l}(W) \pm \mathcal{N}_{I_{l}}(W)  - \hat{\mathcal{K}}\mathcal{N}_{I_l}(W) \right) \nabla^{I_l} W - \hat N \nabla_{I_l} W * \nabla^{I_l} W   \\
& - \nabla_{I_l} \left( N J_{i \Tbar j}\right)  \nabla^{I_l} E^{ij} - \nabla_{I_l} \left( N J^*_{i \Tbar j}\right)  \nabla^{I_l} H^{ij}.
\end{split} 
\end{equation*}
In views of the identity \eqref{wedge-curl-div-general} relating curl, $\cdot$ and div, $\wedge$,  the second line above becomes
\begin{align*}
& N \left( \nabla_{I_l} \curl H_{ij} \nabla^{I_l} E^{ij} - \nabla_{I_l} \curl E_{ij} \nabla^{I_l} H^{ij} \right)\\
=& div \pm \hat{\mathcal{R}}_{I_{l-1}} (W) * \nabla^{I_l} W \pm \sum_{a \leq l}  \nabla N * \nabla_{I_a} E *  \nabla_{I_a} H.
\end{align*}
Here $div$ denotes divergence forms. Integrating on $\Sigma_{t}$, we have
\begin{equation}\label{eq-1+3-bianchi-T-EH-3}
\begin{split}
& \dtau \int_{\Sigma_{t}} \rho_l(W) \di \mu_g  + \int_{\Sigma_{t}} 2N \rho_l(W) \\
= & \int_{\Sigma_{t}} \left( {}_lLW_1 + {}_lLW_2  + {}_lLW_3 + S_l \right) \di \mu_g.
\end{split}
\end{equation}
where the lower order terms ${}_lLW_1, \cdots, {}_lLW_3$ and source terms $S_l$ are defined in \eqref{def-LB-1-2-S}. 
\end{proof}

\section{Preliminary estimate}\label{sec-preliminary-estimate}

Let us remind the partial energy norm for the KG field $E_i(\phi, t)$ \eqref{def-energy-l-homo-kg-g} and for the Weyl tensor $E_i(W, t)$ \eqref{def-energy-l-homo-Weyl}.
We also define the energy norm up to $l$ order as follows,
\begin{equation}\label{def-energy-sum-inhomo}
\mathcal{E}_l (\phi, t) = \sum_{i\leq l} E_i(\phi, t), \quad \mathcal{E}_l (W, t) = \sum_{i \leq l} E_i(W, t).
\end{equation}
Besides, the energies for $\hat k$ are defined by
\begin{equation}\label{def-energy-hat-k}
E_i (\hat k, t )= \int_{\Sigma_t}  |\nabla_{I_i} \hat k|^2 \di \mu_g, \quad \mathcal{E}_l(\hat k, t )=\sum_{i \leq l} E_i(\hat k, t ).
\end{equation}

\subsection{The continuity argument}\label{sec-Boot-assum}
Fix a constant $0<\delta < \frac{1}{6}$ and let $M$ be a large constant to be determined.
We start with the following weak assumptions:
The bootstrap assumptions for the KG field $\phi$,
\begin{equation}\label{BT-KG}
 t E_i(\phi, t) \leq \varepsilon^2M^2 t^{2\delta}, \quad i =1,\cdots, 4;
\end{equation}
The bootstrap assumptions for the Weyl tensor $E(W), H(W)$, 
\begin{equation}\label{BT-E-H}
t^2 E_i(W, t) \leq \varepsilon^2M^2 t^{2\delta}, \quad i=1, \cdots, 3;
\end{equation}
The bootstrap assumptions for the $\hat k$ and $\hat N = N-1$,
\begin{equation}\label{BT-k-N-L-infty}
t^2 E_i (\hat k, t ) \leq  \varepsilon^2  M^2 t^{2\delta}, \, 0 \leq i \leq 4, \quad t^2 \|\hat N\|^2_{L^\infty} \leq \varepsilon^2 M^2 t^{4\delta}.
\end{equation}

The smallness condition for KG data \eqref{intro-initial-data} given in Theorem \ref{main-thm-into} entails that \eqref{BT-KG} holds for $t=t_0$.
Conducting the elliptic estimates for $\hat N$ (c.f. the second (elliptic) equation in \eqref{eq-hat-lapse}) on the initial slice, we further show that the requirement for data \eqref{intro-initial-data} implies that \eqref{BT-k-N-L-infty} holds on $\{ t = t_0\}$. Moreover, in views of the Gauss equations \eqref{Gauss-Riem-hat-k}-\eqref{Gauss-Ricci-hat-k} and the Weyl tensor \eqref{def-Weyl}, \eqref{intro-initial-data} implies that \eqref{BT-E-H} holds for $t=t_0.$ Let $[t_0, t']$ be the largest time interval on which \eqref{BT-KG}-\eqref{BT-k-N-L-infty} still holds. We shall show that if $\varepsilon$ is sufficiently small (depending on initial data and $\delta$), then on $[t_0, t']$,  \eqref{BT-KG}-\eqref{BT-k-N-L-infty} implies the same inequality with the constant $\varepsilon^2 M^2$ being replaced by $\frac{\varepsilon^2 M^2}{2}$, c.f. Section \ref{sec-close-b-t}. It will then follow that the solution and the energy estimate \eqref{BT-KG}-\eqref{BT-k-N-L-infty} can be extended to a larger time interval $[t_0,T^\prime]$, thus contradicting the maximality of $t'$. This will imply that $t'=\infty$ and the solution is global (see the formulation of Theorem \eqref{main-thm-global}). We will in fact prove that for a sufficiently small $\varepsilon$, the stronger estimate \eqref{stronger-estimate-closing-bt} (Theorem \eqref{main-thm-global}) holds true on the interval $[0,t']$. From now on, we always assume $t \in [t_0, t']$.

We use the following convention for initial data for $\psi$, 
\begin{equation}\label{def-initial-convention}
\varepsilon^2 I^2_l(\psi): = t_0 \mathcal{E}_l(\psi, t_0).
\end{equation} 
In application, $\psi$ will be taken as $\phi, \hat k$ or $W$.
At the first stage, we choose $M$ large enough so that $I_4(\phi)$, $I_3(W)$, $I_4(\hat k) < 2M,$
and $\varepsilon$ is chosen small enough such that $\varepsilon M <1$.

\subsection{Sobolev inequality}\label{sec-Sobolev}
Recall that $g_{ij}=t^{-2} \tilde g_{ij}$ is the scale-free metric, with $\nabla$ the corresponding connection. 
Throughout the paper, we simplify the notation $L^p(\Sigma ,g)$ by $L^p(\Sigma_t), \, p>0$, and $\int_{\Sigma} \di \mu_g$ by $\int_{\Sigma_t} $. 
By the Sobolev embedding theorem on $(\Sigma, g)$, we derive the following preliminary estimates: for $\phi,$ (recalling that $\D \phi \in \{\Tbar \phi, \nabla \phi, \bar m(t) \phi\}$)
\begin{equation}\label{BT-KG-L-infty-4-6}
\begin{split}
 \sum_{j \leq 2} \|\nabla^j \D \phi \|_{L^\infty} + \sum_{i \leq 3} \|\nabla^i \D \phi \|_{L^4(\Sigma_t)}  + \|\nabla^i \D \phi \|_{L^6(\Sigma_t)} \lesssim \mathcal{E}_4 (\phi, t).
\end{split}
\end{equation}
For $W \in \{E, H\}$ \eqref{BT-E-H}, 
\begin{equation}\label{BT-Weyl-L-infty-4-6}
\begin{split}
\sum_{j \leq 1} \|\nabla^j W\|_{L^\infty}  +  \sum_{i \leq 2}  \|\nabla^i W\|_{L^4(\Sigma_t)} + \|\nabla^i W\|_{L^6(\Sigma_t)} & \lesssim \mathcal{E}_3 (W, t). 
\end{split}
\end{equation}
The bootstrap assumption for $\hat k$ \eqref{BT-k-N-L-infty} leads to
\begin{equation}\label{BT-k-L-infty-4-6}
\begin{split}
\sum_{j \leq 2} \|\nabla^j \hat k\|_{L^\infty}  +  \sum_{i \leq 3} \|\nabla^i \hat k \|_{L^4(\Sigma_t)} +  \|\nabla^i \hat k \|_{L^6(\Sigma_t)}  \lesssim  \mathcal{E}_4 (\hat k, t).
\end{split}
\end{equation}

\begin{remark}
Let $g_0 = t^{-2}_0 \tilde g_0$ be the initially normalized metric. 
Using the estimate for $\| \hat k\|_{L^\infty},$ $\| \hat N\|_{L^\infty}$ \eqref{BT-k-N-L-infty}, \eqref{BT-k-L-infty-4-6} and the evolution equation for $g_{ij}$ \eqref{eq-evolution-1}, we are able to prove that  $ g$ is close to $ g_0$: $| g_{ij} -  g_{0ij}| \lesssim \varepsilon M.$
And the volumes of $g_{ij}$ and $g_{0ij}$ are equivalent: $\exp(-\varepsilon M) \lesssim \frac{\di \mu_{g}}{ \di \mu_{g_0}} \lesssim \exp(\varepsilon M)$. Thus, there exists a uniform constant bounding all the Sobolev constants for $g_{ij}$.
\end{remark}

\subsection{The lapse}\label{sec-N}
In this section, we present the elliptic estimates for the lapse. In particular, we highlight the subtle relation between the lapse and the KG field, which shall server as a key point in establishing the hierarchy of energy estimates for the KG field, c.f. Section \ref{sec-0-ee-kg} .

The equation for lapse $N$ can be derived by taking the trace of the transport equation for $\hat k_{ij}$ \eqref{eq-evolution-2}. 
$\hat N$ defined in \eqref{def-hat-N} should be regarded as small quantity in the small data situation we are considering.
\begin{equation}\label{eq-hat-lapse}
\Delta \hat N - 3 \hat N = N\left( \bar R_{\Tbar \Tbar} + | \hat k |^2 \right).
\end{equation}

Let us turn to the elliptic estimates for $\hat N$.
Define the energy for $\hat N$:
\begin{equation}\label{def-energy-hat-N}
E_i (\hat N, t )= \int_{\Sigma_t}  |\nabla^i \hat N|_g^2 \di \mu_g, \quad \mathcal{E}_l(\hat N, t )=\sum_{i \leq l} E_i(\hat N, t ).
\end{equation}
Again we have $E_i (\hat N, t ) =t^{-3 + 2i} E^{\tilde g}_i (\hat N, t ),$ where $E^{\tilde g}_i (\hat N, t )$ is defined by replacing the $g$ in \eqref{def-energy-hat-N} by $\tilde g$.

\begin{lemma}\label{lem-N-ineq-source}
With the bootstrap assumptions \eqref{BT-KG}-\eqref{BT-k-N-L-infty}, we have 
 \begin{equation}\label{N-inte-ineq-123-lemma}
 t^2 E_i(\hat N, t ) \lesssim  \varepsilon^2 M^2 t^{4\delta}, \quad 0<\delta<\frac{1}{6}, \quad i \leq 5.
\end{equation}
Moreover, $\hat N$ depends on the KG field in the exact manner as follows: letting 
 \begin{equation}\label{lemma-N-inte-ineq-nabla3-source}
\|\Tbar \phi\|_{L^\infty}^2 + \|\bar m (t) \phi\|_{L^\infty}^2:= r^2(\phi, t),
\end{equation}
we have
\begin{align}
& \mathcal{E}_2(\hat N, t ) \lesssim   \mathcal{E}^2_{1}(\hat k,t) + r^2(\phi, t) \mathcal{E}_0(\phi, t), \label{lemma-N-inte-ineq-2order-source} \\
&E_3(\hat N, t )  \lesssim  \mathcal{E}^2_{2}(\hat k,t) +  r^2(\phi, t) \mathcal{E}_1(\phi, t), \label{lemma-N-inte-ineq-nabla3-source}\\
 & E_4(\hat N, t )  \lesssim  \mathcal{E}^2_{3}(\hat k,t) + \mathcal{E}^2_2(\phi, t) +  r^2(\phi, t) \mathcal{E}_2(\phi, t), \label{lemma-N-inte-ineq-nabla4-source}\\
&E_5(\hat N, t )  \lesssim  \mathcal{E}^2_{4}(\hat k,t) + \mathcal{E}_2(\phi,t)\mathcal{E}_3(\phi,t)  + r^2(\phi, t)) \mathcal{E}_3(\phi,t).\label{lemma-N-inte-ineq-nabla5-source}
\end{align}
\end{lemma}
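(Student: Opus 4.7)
The plan is to view \eqref{eq-hat-lapse} as the linear elliptic equation $(\Delta-3)\hat N = N(\bar R_{\Tbar\Tbar}+|\hat k|^2)$ and apply standard elliptic regularity for $\Delta-3$. Since the first eigenvalue of $-\Delta$ on any compact hyperbolic $3$-manifold exceeds $1$, the operator $\Delta-3$ is uniformly invertible on the background; by the bootstrap assumption \eqref{BT-k-N-L-infty} together with \eqref{eq-evolution-1}, the rescaled metric $g$ stays $C^0$-close to a fixed reference metric equivalent to $\gamma$ on $[t_0,t']$, so the same invertibility persists with uniform constants. Before estimating the source, I will extract the structural form of $\bar R_{\Tbar\Tbar}$ from the EKG equations \eqref{eq-Einstein-source}-\eqref{def-energy-Mom-kg}: contracting with $\Tbar^\mu\Tbar^\nu$ and applying the rescalings of Section \ref{sec-sec-background} yields the identity
\[
\bar R_{\Tbar\Tbar} \;=\; (\Tbar\phi)^2 - \tfrac{1}{2}(\bar m(t)\phi)^2.
\]
The crucial feature is that the $|\nabla\phi|^2$ contribution of the stress-energy tensor is cancelled by the trace correction, so that the source is quadratic only in $\{\Tbar\phi,\bar m(t)\phi\}$. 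This is what makes the $L^\infty$ factor in the hierarchy be $r^2(\phi,t)$ rather than a gradient-type norm, and it is the only input specific to the EKG coupling.

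\textbf{Executing the estimates.} For $0\le k\le 3$, standard $H^{k+2}\!\leftarrow\! H^{k}$ elliptic regularity, obtained by pairing $(\Delta-3)\nabla^j\hat N = \nabla^j F + [\nabla^j,\Delta]\hat N$ with $\nabla^j\hat N$ for $j\le k$ and absorbing the curvature commutators via the Gauss equation \eqref{Gauss-Riem-hat-k} and the bootstrap \eqref{BT-k-N-L-infty}, produces
\[
E_{k+2}(\hat N,t)\;\lesssim\;\bigl\|\nabla^{k}\bigl(N\bar R_{\Tbar\Tbar}\bigr)\bigr\|^2_{L^2(\Sigma_t)}+\bigl\|\nabla^{k}\bigl(N|\hat k|^2\bigr)\bigr\|^2_{L^2(\Sigma_t)},
\]
with the residual pieces $E_0(\hat N,t), E_1(\hat N,t)$ handled by the same argument at $k=0$. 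For the $|\hat k|^2$ term, Leibniz gives $\sum_{a+b=k}\nabla^a\hat k * \nabla^b\hat k$, and placing the extreme factor in $L^2$ with its companion in $L^\infty$ (or splitting balanced pairs with $L^4\times L^4$) together with the Sobolev bounds \eqref{BT-k-L-infty-4-6} produces the $\mathcal{E}_{k+1}^2(\hat k,t)$ contribution. For $\nabla^k\bar R_{\Tbar\Tbar}=\sum_{a+b=k}\nabla^a\D'\phi * \nabla^b\D'\phi$ with $\D'\phi\in\{\Tbar\phi,\bar m(t)\phi\}$, the extreme splitting $(a,b)=(0,k)$ contributes $\|\D'\phi\|^2_{L^\infty}\|\nabla^k\D'\phi\|^2_{L^2}\lesssim r^2(\phi,t)\mathcal{E}_{k}(\phi,t)$, while balanced splittings are controlled by $\mathcal{E}_{a+1}(\phi,t)\mathcal{E}_{b+1}(\phi,t)$-type factors via H\"older and \eqref{BT-KG-L-infty-4-6}. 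The factor $N=1+\hat N$ contributes only a multiplicative constant plus $\hat N$-remainders that can be absorbed inductively on $k$ using the lower-order cases already established. Collecting these contributions produces \eqref{lemma-N-inte-ineq-2order-source}-\eqref{lemma-N-inte-ineq-nabla5-source}; the uniform bound \eqref{N-inte-ineq-123-lemma} then follows by substituting \eqref{BT-KG}-\eqref{BT-k-N-L-infty} and the Sobolev bound $r^2(\phi,t)\lesssim\mathcal{E}_2(\phi,t)$.

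\textbf{Main obstacle.} The delicate step is the structural identity for $\bar R_{\Tbar\Tbar}$ above. Without the cancellation of the $|\nabla\phi|^2$ piece, the top-order estimate \eqref{lemma-N-inte-ineq-nabla5-source} would acquire a term $\|\nabla\phi\|^2_{L^\infty}\mathcal{E}_3(\phi,t)$; since $\|\nabla\phi\|_{L^\infty}$ decays strictly more slowly than $r(\phi,t)$, the linearization argument in Section \ref{sec-0-ee-kg} could not close. A secondary point is that pushing the elliptic estimate up to $E_5(\hat N,t)$ is consistent with the KG bootstrap level $I=4$ precisely because the quadratic structure of $\bar R_{\Tbar\Tbar}$ caps the maximum KG derivative count at $\nabla^3\D'\phi$ when $k=3$, producing $\mathcal{E}_3(\phi,t)$ rather than $\mathcal{E}_4(\phi,t)$ in the bound; the remaining work is bookkeeping of Leibniz, H\"older, and Sobolev inequalities.
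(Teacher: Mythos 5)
Your proposal follows essentially the same route as the paper: treat \eqref{eq-hat-lapse} as a linear elliptic equation for $\hat N$, use the identity $\bar R_{\Tbar\Tbar}=(\Tbar\phi)^2-\tfrac{1}{2}(\bar m(t)\phi)^2$ so that the matter source is quadratic in $\{\Tbar\phi,\bar m(t)\phi\}$, and then induct on the derivative order with Leibniz, H\"older and Sobolev splittings, the extreme splitting producing $r^2(\phi,t)\mathcal{E}_k(\phi,t)$ and the balanced splittings producing the $\mathcal{E}^2_2(\phi,t)$ and $\mathcal{E}_2(\phi,t)\mathcal{E}_3(\phi,t)$ terms; this is exactly the paper's treatment of $SN_1$--$SN_3$ and $RN_1$, $RN_2$, and your bookkeeping reproduces \eqref{lemma-N-inte-ineq-2order-source}--\eqref{lemma-N-inte-ineq-nabla5-source}.

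Three points in your write-up should be repaired, though none changes the substance. First, the assertion that the first eigenvalue of $-\Delta$ on a compact hyperbolic $3$-manifold exceeds $1$ is false (constants have eigenvalue $0$, and the first nonzero eigenvalue can be arbitrarily small) and is also unnecessary: coercivity of $\Delta-3$ comes from the zeroth-order term, i.e.\ from $\int_{\Sigma_t}|\nabla\hat N|^2+3|\hat N|^2$, which is what both you and the paper actually use when pairing with $\hat N$. Second, pairing $(\Delta-3)\nabla^{j}\hat N$ with $\nabla^{j}\hat N$ for $j\le k$ only controls $k+1$ derivatives of $\hat N$; to reach $E_{k+2}(\hat N,t)$ you need the second-order identity the paper uses, namely $\int|\nabla_{I_{k+2}}\hat N|^2=\int|\nabla_{I_k}\Delta\hat N|^2$ plus $\mathcal{R}_{I_k}$-commutator terms (Lemma \ref{lemma-commuting-nabla-laplacian} and integration by parts), or equivalently test at level $k+1$ and integrate the source by parts once; with that replacement your displayed inequality, supplemented by the $\mathcal{E}_{k+1}(\hat N,t)$ and curvature-commutator contributions (whose control needs the bootstrap for $W$ and $\phi$ via \eqref{R-imjn-E}, not \eqref{BT-k-N-L-infty} alone, and which are then closed by the induction), is precisely the paper's \eqref{estimate-lapse-nabla-l-3-4-5-1}. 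Third, your ``main obstacle'' discussion is backwards: since $\nabla\phi=(mt)^{-1}\nabla(\bar m(t)\phi)$, one has $\|\nabla\phi\|_{L^\infty}\lesssim t^{-1}\|\nabla(\bar m(t)\phi)\|_{L^\infty}$, which decays faster, not more slowly, than $r(\phi,t)$, so the cancellation of $|\nabla\phi|^2$ in $\bar R_{\Tbar\Tbar}$ is a convenience rather than the reason the hierarchy closes; since this is motivational commentary and the identity you actually use is correct, it does not affect the validity of your estimates.
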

\begin{remark}\label{rk-rough-lapse}
If we do not take advantage of the special structure of the source term, the ellipticity of the equation for the lapse only gives \eqref{lemma-N-inte-ineq-2order-source} and
\begin{equation*}
\begin{split}
\mathcal{E}_{l+2}(\hat N, t )  \lesssim &\mathcal{E}^2_4(\phi,t) + \mathcal{E}^2_{l+1}(\hat k,t) +  \mathcal{E}_{l+1}(\hat N, t ), \quad l \leq 3.
\end{split}
\end{equation*}
Hence, we merely derive $\mathcal{E}_{5}(\hat N, t )  \lesssim \mathcal{E}^2_4(\phi,t)$, ignoring lower order terms.
\end{remark}
\begin{remark}\label{rk-CMCSH gauge-shift}
If one works within the CMCSH gauge, the elliptic equation for the shift $X$ interacts with the matter current $\mathcal{T}_{\Tbar}^{i}(\phi) = \nabla^i \phi \Tbar \phi$:
\begin{equation}\label{eq-shift}
\begin{split}
\Delta X^i + R^i_j X^j =& -2 \nabla_j N \hat k^{ij} - \frac{\tr k}{3} \nabla^i N + 2 N  \nabla^i \phi \Tbar \phi \\
& - \left( 2N \hat k ^{pq}-\nabla^p X^q  \right) \left(\Gamma_{pq}^i(g) - \Gamma_{pq}^i(\gamma) \right).
\end{split}
\end{equation}
 Moreover, $\nabla^i \phi \Tbar \phi \sim t^{-1} \nabla^i \left(\bar m (t) \phi\right) \Tbar \phi.$ Therefore, the leading term on the right hand side of \eqref{eq-shift} is $\nabla^i N$, and we can expect the following estimate
 \begin{equation}\label{estimate-shift}
\begin{split}
\mathcal{E}_5\left(X, t\right) \lesssim \mathcal{E}_4(\hat N, t) + t^{-2}\mathcal{E}^2_4(\phi, t) + l.o.t.,
\end{split}
\end{equation}
where $\mathcal{E}_5\left(X, t\right)$ is defined in the same way as \eqref{def-energy-hat-N}.
\end{remark}

\begin{proof}
We multiply $\hat N$ on \eqref{eq-hat-lapse} to derive
 \begin{equation}\label{N-inte-1}
\begin{split}
& \int_{\Sigma_t} |\nabla \hat N|^2 + 3 |\hat N|^2 + |\hat N|^2 |\hat k|^2 \\
=& - \int_{\Sigma_t} \hat N \left( |\hat k|^2 + \bar R_{\Tbar \Tbar} \right) + |\hat N|^2 \bar R_{\Tbar \Tbar}.
\end{split}
\end{equation}
Making use of the $L^\infty$ estimate for $\phi$ \eqref{BT-KG-L-infty-4-6}, the last term, $\bar R_{\Tbar \Tbar} = (\Tbar \phi)^2-\frac{\bar m^2(t)}{2} \phi^2$, admits
 \begin{equation*}
  \int_{\Sigma_t}  |\hat N|^2 |\bar R_{\Tbar \Tbar}| \lesssim  \int_{\Sigma_t} |\hat N|^2 \left( |\Tbar \phi|^2 + |\bar m (t) \phi|^2 \right) \lesssim  \int_{\Sigma_t} \varepsilon^2 M^2 t^{-1 + 2\delta} |\hat N|^2.
\end{equation*}
And the Cauchy-Schwarz inequality shows that for some constant $c,$ 
 \begin{equation*}
 \int_{\Sigma_t} \hat N \left( |\hat k|^2 + |\bar R_{\Tbar \Tbar}| \right) 
\leq \int_{\Sigma_t} c |\hat N|^2 + \int_{\Sigma_t} c^{-1} \left( |\hat k|^2 + |\bar R_{\Tbar \Tbar}| \right)^2.
\end{equation*}
We choose the constant $c<2$ so that $\int_{\Sigma_t} \left( c + \varepsilon^2 M^2 t^{-1 + 2\delta} \right) |\hat N|^2$ is absorbed by the left hand side of \eqref{N-inte-1}. 
 \begin{equation}\label{estimate-nabla-N-1}
 \begin{split}
 &\int_{\Sigma_t} |\nabla \hat N|^2 +  |\hat N|^2 
\lesssim \int_{\Sigma_t}  \left( |\hat k|^2 + |\bar R_{\Tbar \Tbar}| \right)^2 \\
& \lesssim  \mathcal{E}^2_1(\hat k, t) + \int_{\Sigma_t} \left(|\Tbar \phi|_{L^\infty}^2 + |\bar m (t) \phi|_{L^\infty}^2 \right) \left( |\Tbar \phi|^2 + |\bar m (t) \phi|^2 \right).
\end{split}
\end{equation}
Substituting the bootstrap assumption for $\hat k$, and the assumed $L^2$ estimate for $\Tbar \phi$ and $\bar m (t) \phi$, we have proved \eqref{N-inte-ineq-123-lemma} up to 1st order.

Concerning the higher order cases, we aim to prove by induction that
\begin{equation}\label{estimate-nabla2-hat-N-i-induction}
\int_{\Sigma_t}  t^2  |\nabla_{I_{i}} \hat N|^2 \lesssim \varepsilon^2 M^2 t^{4\delta}, \quad 0\leq i \leq 5.
\end{equation}
It has been proved that \eqref{estimate-nabla2-hat-N-i-induction} holds for $i=0, 1.$ Suppose \eqref{estimate-nabla2-hat-N-i-induction} holds for $i\leq l+1, \, l \leq 3$, we shall now prove that \eqref{estimate-nabla2-hat-N-i-induction}  holds for $i=l+2$ as well. 
 
Applying $\nabla_{I_l} (0 \leq l \leq 3)$ to the equation for lapse \eqref{eq-hat-lapse}, 
\begin{equation}\label{eq-lapse-nabla-l}
\begin{split}
\nabla_{I_l} \Delta \hat N  =&  3 \nabla_{I_l} N + \sum_{a+b=l} \nabla_{I_a} N * \nabla_{I_b} \left( \bar R_{\Tbar \Tbar} + | \hat k |^2 \right).
 \end{split}
\end{equation}
Noting the commuting identity between $\nabla_{I_l}$ and $\Delta$ \eqref{commuting-nabla-laplacian-l-simplify}-\eqref{def-R-l-commute-nabla-laplacian},
we integrate by parts to derive that for $0 \leq l \leq 3,$
\begin{equation*}
\begin{split}
 \int_{\Sigma_t}   |\nabla_{I_{l+2}} \hat N|^2  = &  \int_{\Sigma_t} | \nabla_{I_l} \Delta \hat{N}|^2  +  \mathcal{R}_{I_l} ( \hat N ) \nabla_{I_l} \Delta \hat N -  \mathcal{R}_{I_{l+1}} ( \hat N)  \nabla_{I_{l+1}} \hat N.
\end{split}
\end{equation*}
See \eqref{def-R-l-commute-nabla-laplacian} for the definition of $\mathcal{R}_{I_l} (\hat N )$.
Thus, 
\begin{equation*}
\begin{split}
 & \int_{\Sigma_t} |\nabla_{I_{l+2}} \hat N|^2  \lesssim   \int_{\Sigma_t} |\nabla_{I_l} \Delta \hat{N}|^2+| \mathcal{R}_{I_l} (\hat N ) |^2 +\big| \mathcal{R}_{I_{l+1}} (\hat N ) * \nabla_{I_{l+1}} \hat N\big|.
\end{split}
\end{equation*}
By \eqref{eq-lapse-nabla-l} and Cauchy-Schwarz inequality,
\begin{equation*}
\begin{split}
  \int_{\Sigma_t} |\nabla_{I_{l+2}} \hat N|^2  \lesssim&   \int_{\Sigma_t}  |\nabla_{I_l} \hat{N}|^2 +  |\mathcal{R}_{I_l} (\hat N ) |^2  + | \mathcal{R}_{I_{l+1}} (\hat N ) * \nabla^{I_{l+1}} \hat N |  \\
 &+ \int_{\Sigma_t} \sum_{a+b =l} |\nabla_{I_a} N * \nabla_{I_b} \bar R_{\Tbar \Tbar} |^2+ |\nabla_{I_a} N * \nabla_{I_b} (\hat k)^2|^2.
\end{split}
\end{equation*}
We derive
\begin{equation}\label{ineq-E-l+2-source}
\begin{split}
\int_{\Sigma_t}  |\nabla_{I_{l+2}} \hat N|^2  \lesssim&   \int_{\Sigma_t}  |\nabla_{I_{l}} \hat{N}|^2 \\
 & + SN_1 +SN_2+SN_3 + RN_1 + RN_2,
\end{split}
\end{equation}
where
\begin{equation}
\begin{split}
 RN_1 = & \sum_{a+b =l \atop a \leq l-1} | \nabla_{I_a} R_{imjn} * \nabla_{I_b} \hat N |^2, \\
RN_2= & \sum_{a+b =l+1 \atop a \leq l}  |\nabla_{I_a} R_{imjn} * \nabla_{I_b} \hat N * \nabla_{I_{l+1}} \hat N |, \\
 SN_1 = &\sum_{a+b+c =l}   \big| \nabla_{I_a} N * \nabla_{I_{b}} \Tbar \phi * \nabla_{I_c} \Tbar\phi \big|^2,  \\
SN_2 = & \sum_{a+b+c =l}  \big| \nabla_{I_a} N * \nabla_{I_{b}} (\bar m (t) \phi) * \nabla_{I_c} (\bar m (t) \phi) \big|^2,  \\
SN_3 = & \sum_{a+b+c =l}   \big| \nabla_{I_a} N * \nabla_{I_{b}} \hat k* \nabla_{I_c}\hat k  \big|^2.
\end{split}
\end{equation}
Viewing \eqref{Gauss-Riem-hat-k}-\eqref{Gauss-Ricci-hat-k} and \eqref{def-Weyl}, the expansion for $R_{imjn}$ is rewritten as
$$R_{imjn} =  g*g \pm  \hat k *g \pm \hat k* \hat k +W_{imjn} \pm \bar R_{pq} *g \pm \bar R *g*g.$$
In the estimates, we only need the formula $\|A*B\| \lesssim \|A\| \|B\|$, while the detailed product structure and constants are irrelevant. As a $(0, 4)$-tensor on $\Sigma_t$, $R_{imjn}$ can be simplified as
\begin{equation}\label{R-imjn-E}
R_{imjn} \sim 1 \pm \hat k \pm |\hat k|^2 + E \pm |\D \phi|^2.
\end{equation}

By the assumption, \eqref{estimate-nabla2-hat-N-i-induction} holds for $i\leq l +1, l \leq 3$,
namely,
\begin{equation}\label{ineq-SN-0}
t^2 \mathcal{E}_{l+1} (\hat N, t) \lesssim \varepsilon^2 M^2 t^{4\delta}.
 \end{equation}

Since $SN_1$ and $SN_2$ can be treated in the same way, we only take $SN_2$ for example., 
\begin{itemize}
\item {\bf Case $SN_{21}$}: $a=0$. We always apply $L^4$ to all the four factors in $SN_2,$ for $l \leq 3.$ That is, in this case, $SN_2$ becomes ($b+c = l$)
\begin{equation*}
\begin{split}
 & \int_{\Sigma_t} \big|N \nabla_{I_{b}} (\bar m (t) \phi) * \nabla_{I_c} (\bar m (t) \phi) \big|^2
  \lesssim  \mathcal{E}^2_{l+1}(\phi,t). 
\end{split}
\end{equation*}
This estimate will be treated in a more subtle way later.
\item {\bf Case $SN_{22}$}: $a \geq 1$. Then  $0 \leq b,c \leq l-1 \leq 2$. We can apply $L^6$ to all the six factors in $SN_2$ ($a+b+c =l, a \geq 1$)
\begin{equation*}
\begin{split}
 & \int_{\Sigma_t}  \big|( t\nabla)_{I_a } \hat N * \nabla_{I_{b}} (\bar m (t) \phi) * \nabla_{I_c} (\bar m (t) \phi) \big|^2 
  \lesssim \mathcal{E}_{l+1}(\hat N,t) \mathcal{E}^2_{l}(\phi,t).  
\end{split}
\end{equation*}
\end{itemize}
Similar procedures apply to $SN_3$. Roughly speaking, $\hat k$ enjoys better estimates than $\phi$, and we derive 
\begin{equation}\label{esti-SN3}
\begin{split}
\int_{\Sigma_t} SN_3 \lesssim & \mathcal{E}^2_{l+1}(\hat k,t) + \mathcal{E}_{l+1}(\hat N,t) \mathcal{E}^2_{l}(\hat k,t).
\end{split}
\end{equation}

We summarize these estimates as
\begin{equation}\label{esti-SN123}
\begin{split}
\int_{\Sigma_t} \sum_{i=1}^3 |SN_i|  \lesssim &  \mathcal{E}^2_{l+1}(\phi,t) +  \mathcal{E}^2_{l+1}(\hat k,t) \\
& + \mathcal{E}_{l+1}(\hat N,t) \left(\mathcal{E}^2_{l}(\hat k,t) + \mathcal{E}^2_{l}(\phi,t) \right)  \lesssim  \varepsilon^2 M^2 t^{-2+ 4\delta}.
\end{split}
\end{equation}

For the terms involving curvature $RN_1$, 
\begin{itemize}
\item {\bf Case $RN_{11}$}: $a=0, b=l$. In views of the expansion for $R_{imjn}$ \eqref{R-imjn-E}, and the $L^\infty$ estimates for $\hat k, \phi$, $RN_{11}$ is estimated as
 \begin{equation*}
 \begin{split}
&\int_{\Sigma_t}  \left( 1 + |\hat k|^2 + |\hat k|^4 + |\D \phi|^4 + |E|^2 \right) |\nabla_{I_l} \hat N|^2\\
\lesssim & \mathcal{E}_{l}(\hat N,t) \left( \mathcal{E}^2_{2}(\hat k,t)  + \mathcal{E}_{2}(\hat k,t)  + \mathcal{E}_{2}(W,t) + \mathcal{E}^2_{2}(\phi,t) +1 \right).  
\end{split}
\end{equation*}
\item {\bf Case $RN_{12}$}: $1 \leq a \leq l-1.$ Then $1 \leq b \leq l-1,$ and  $RN_{12}$ becomes ($a+b =l, 1 \leq a ,b \leq l-1, l \leq 3$)
 \begin{equation*}
 \begin{split}
 & \big| \nabla_{I_a}  \left( |\hat k| + |\hat k|^2  + |\D \phi|^2  + E \right)\big|^2  \big|\nabla_{I_b} \hat N |^2.
\end{split}
\end{equation*}
We can always apply $L^4$ to all the four factors or $L^6$ to all the six factors of each term in $RN_{12}$, so that
 \begin{equation*}
 \begin{split}
 \int_{\Sigma_t} |RN_{12}| \lesssim & \mathcal{E}_l(\hat N,t)  \left( \mathcal{E}^2_{l}(\phi,t) +\mathcal{E}^2_{l}(\hat k,t) +  \mathcal{E}_{l}(W,t) + \mathcal{E}_{l}(\hat k,t)  \right).
\end{split}
\end{equation*}
\end{itemize}

For the term involving curvature $RN_2$, 
\begin{itemize}
\item {\bf Case $RN_{21}$}: $a=0, b=l+1$. Analogous to $RN_{11}$, $RN_{21}$ shares the estimate
 \begin{equation*}
 \begin{split}
&\int_{\Sigma_t} |RN_{21}| \lesssim  \mathcal{E}_{l+1}(\hat N,t). 
\end{split}
\end{equation*}
\item {\bf Case $RN_{22}$}:  $1 \leq a \leq l.$ Then $1 \leq b \leq l,$ and $RN_{22}$ is ($a+b =l+1, 1 \leq a,b \leq l, l \leq 3$) 
 \begin{equation*}
 \begin{split}
& \nabla_{I_a}  \left(|\hat k| + |\hat k|^2 + |\D \phi|^2 + |E| \right) * \nabla_{I_b} \hat N *\nabla_{I_{l+1}} \hat N.
\end{split}
\end{equation*}
\begin{enumerate}
\item If $a \leq 2$, then similar to $RN_{12}$, we apply $L^4, L^4, L^2$ to the three factors or $L^6, L^6, L^6, L^2$ to the four factors in each term of $RN_{22}$. 
so that
 \begin{equation*}
 \begin{split}
 \int_{\Sigma_t} |RN_{22}| \lesssim & \mathcal{E}_{l+1}(\hat N,t)  \left( \mathcal{E}_{a+1}(\phi,t) +\mathcal{E}_{a+1}(\hat k,t) \right) \\
 & + \mathcal{E}_{l+1}(\hat N,t) \left( \mathcal{E}^{\frac{1}{2}}_{a+1}(W,t) + \mathcal{E}^{\frac{1}{2}}_{a+1}(\hat k,t)  \right).
\end{split}
\end{equation*}
\item If $a=3$, then $l=3$ and $b=1$. We apply $L^2, L^\infty, L^2$ to the three factors or $L^4, L^4, L^\infty, L^2$ to the four factors in each term of $RN_{22}$ to derive (noting that here $l=3$)
 \begin{equation*}
 \begin{split}
 \int_{\Sigma_t} |RN_{22}| \lesssim & \mathcal{E}^{\frac{1}{2}}_{l+1}(\hat N,t)  \mathcal{E}^{\frac{1}{2}}_{l}(\hat N,t)  \left( \mathcal{E}_{4}(\phi,t) +\mathcal{E}_{4}(\hat k,t) \right) \\
 & + \mathcal{E}^{\frac{1}{2}}_{l+1}(\hat N,t)  \mathcal{E}^{\frac{1}{2}}_{l}(\hat N,t)  \left( \mathcal{E}^{\frac{1}{2}}_{3}(W,t) + \mathcal{E}^{\frac{1}{2}}_{3}(\hat k,t)  \right).
\end{split}
\end{equation*}
\end{enumerate}
\end{itemize}
 In summary, $RN_1, RN_2$ enjoy the following estimate
\begin{equation}\label{esti-RN1-RN2}
\begin{split}
\int_{\Sigma_t} |RN_1| + |RN_2| \lesssim & \mathcal{E}_{l+1}(\hat N,t) \left( 1+  \varepsilon M t^{-1+ 2\delta} \right) \lesssim \varepsilon^2 M^2 t^{-2+4\delta}.
\end{split}
\end{equation}

Putting all those estimates \eqref{esti-SN123}-\eqref{esti-RN1-RN2} together, we can close the inductive argument  and complete the proof for  \eqref{N-inte-ineq-123-lemma}. In particular, we have proved that for $0 \leq l \leq 3,$
\begin{equation}\label{estimate-lapse-nabla-l-3-4-5-1}
\begin{split}
E_{l+2}(\hat N,t)  \lesssim &  \int_{\Sigma_t} \sum_{b+c =l} \big|N \nabla_{I_{b}}\Tbar \phi * \nabla_{I_c} \Tbar \phi \big|^2\\
 &+\int_{\Sigma_t} \sum_{b+c =l} \big|N \nabla_{I_{b}} (\bar m (t) \phi) * \nabla_{I_c}  (\bar m (t) \phi) \big|^2\\
 & + \mathcal{E}_{l+1}(\hat N,t) \left(1+  \varepsilon M t^{-1+ 2\delta}\right) +  \mathcal{E}^2_{l+1}(\hat k,t).
 \end{split}
\end{equation}

In the above proof, we find out that when $l=0$, both $RN_1$ and $RN_{22}$ are absent. More precisely, we have, in views of \eqref{estimate-nabla-N-1},
 \begin{equation}\label{ineq-hat-N-2order-source}
\begin{split}
& E_2(\hat N,t)  \lesssim \mathcal{E}_{1}(\hat N,t) +  \mathcal{E}^2_{1}(\hat k,t) +  r^2(\phi, t) \mathcal{E}_0(\phi, t) \\
\lesssim  &  \mathcal{E}^2_{1}(\hat k,t) +  r^2(\phi, t) \mathcal{E}_0(\phi, t).
\end{split}
\end{equation}
When $l=1,$ \eqref{estimate-lapse-nabla-l-3-4-5-1} and the estimates for $\mathcal{E}_2(\hat N,t)$, \eqref{estimate-nabla-N-1} and \eqref{ineq-hat-N-2order-source}, show that
 \begin{equation}\label{N-inte-ineq-3order-source}
\begin{split}
& E_3(\hat N,t)  \lesssim   \mathcal{E}^2_{2}(\hat k,t) +  r^2(\phi, t) \mathcal{E}_0(\phi, t) \\
& \quad + \int_{\Sigma_t}  r^2(\phi, t) \left( |\nabla \Tbar \phi|^2 + |\nabla (\bar m (t) \phi)|^2 \right),
\end{split}
\end{equation}
which derives \eqref{lemma-N-inte-ineq-nabla3-source}.
When $l=2,$ we combine \eqref{estimate-lapse-nabla-l-3-4-5-1} and the estimates for $\hat N$ up to three derivatives \eqref{estimate-nabla-N-1}, \eqref{ineq-hat-N-2order-source}-\eqref{N-inte-ineq-3order-source} to deduce that
 \begin{equation}\label{N-inte-ineq-4order-source}
\begin{split}
E_4(\hat N,t)   \lesssim &   \mathcal{E}^2_{3}(\hat k,t) +  r^2(\phi, t) \mathcal{E}_1(\phi, t)+ \int_{\Sigma_t}  \big| \nabla \Tbar \phi \big|^4 + \big| \nabla ( \bar m (t) \phi) \big|^4 \\
& + \int_{\Sigma_t} r^2(\phi, t) \left( |\nabla^2 \Tbar \phi|^2 + |\nabla^2 (\bar m (t) \phi)|^2 \right).
\end{split}
\end{equation}
By the Sobolev inequality, we obtain \eqref{lemma-N-inte-ineq-nabla4-source}.
When $l=3,$ by \eqref{estimate-lapse-nabla-l-3-4-5-1} and the estimates for $\hat N$ up to four derivatives \eqref{estimate-nabla-N-1}, \eqref{ineq-hat-N-2order-source}-\eqref{N-inte-ineq-4order-source}, there is
 \begin{equation}\label{N-inte-ineq-5order-source}
\begin{split}
&E_5(\hat N,t)   \lesssim   \mathcal{E}^2_{4}(\hat k,t) +  r^2(\phi, t) \mathcal{E}_2(\phi, t)+\mathcal{E}^2_2(\phi, t) \\
& + \int_{\Sigma_t} \left( | \nabla \Tbar \phi * \nabla^2 \Tbar \phi | +  | \nabla ( \bar m (t) \phi) * \nabla^2 ( \bar m (t) \phi) | \right)^2 \\
& \quad + \int_{\Sigma_t} r^2(\phi, t) \left( |\nabla^3 \Tbar \phi|^2 + |\nabla^3 (\bar m (t) \phi)|^2 \right).
\end{split}
\end{equation}
By the Sobolev inequality, we prove \eqref{lemma-N-inte-ineq-nabla5-source}. Note that, we only need the refined estimate up to four order derivatives for $\hat N$ ($\mathcal{E}_4(\hat N,t)$) in the hierarchy of energy estimate for the KG field.
\end{proof}

\begin{remark}\label{rk-TT-phi}
In views of the definition of partial energies \eqref{def-energy-density-l-homo-kg-g}-\eqref{def-energy-l-homo-kg-g}, there is no energy norm involving two normal derivatives, such as $\|\nabla^i \Tbar ^2\phi\|_{L^2(\Sigma_t)}$. And hence we lose the control  for $\|\Tbar ^2\phi\|_{L^\infty}$ as well. However, by the KG equation in $1+3$ form \eqref{eq-rescale-kg-1+3-0}, there is
\begin{equation}\label{TT-KG-L-infty}
\begin{split}
\|\Tbar ^2\phi\|_{L^\infty} &\lesssim  \|\Tbar\phi\|_{L^\infty} + \|\bar m^2 (t) \phi\|_{L^\infty} + \|\nabla^2\phi\|_{L^\infty}   \\
&\quad \quad + \|N^{-1} \nabla \hat N\|_{L^\infty} \| \nabla \phi\|_{L^\infty}.
 \end{split}
\end{equation}
Thus, by the bootstrap assumption \eqref{BT-KG} and the $L^\infty$ estimate \eqref{BT-KG-L-infty-4-6} followed, we achieve 
\begin{equation}\label{TT-KG-L-2-infty-0}
\begin{split}
\|\Tbar^2\phi\|_{L^2(\Sigma_t)} \lesssim &\bar m (t) \mathcal{E}^{\frac{1}{2}}_{1} (\phi, t) + \mathcal{E}^{\frac{1}{2}}_{1} (\phi, t) \mathcal{E}^{\frac{1}{2}}_{2} (\hat N, t) \lesssim \varepsilon M t^{\frac{1}{2} + \delta},\\
\|\Tbar^2\phi\|_{L^\infty(\Sigma_t)} \lesssim & \bar m (t) \mathcal{E}^{\frac{1}{2}}_{3} (\phi, t) + \mathcal{E}^{\frac{1}{2}}_{2} (\phi, t) \mathcal{E}^{\frac{1}{2}}_{3} (\hat N, t) \lesssim \varepsilon M t^{\frac{1}{2} + \delta}
\end{split}
\end{equation}
This will be used to derive the preliminary estimates for $N^\prime$, see Lemma \ref{lem-TN-ineq-source} below.
\end{remark}

From now on, we let
\begin{equation}\label{def-energy-mertic-rescale}
\mathcal{E}_{i} (g, t) =  \mathcal{E}_{i} (W, t) + \mathcal{E}_{1+i} (\hat k, t) + \mathcal{E}_{2+i} (\hat N, t).
\end{equation}

The hierarchy of energy estimates for the KG filed, namely the idea of linearization, also relies deeply on an $L^\infty - L^\infty$ estimate for the KG field, which indeed retrieves the $\delta$ loss for $\|\phi\|_{L^\infty}, \|T\phi\|_{L^\infty}$. This will be explained in Section \ref{sec-L-infty-L-infty-kg}. To continue these procedures, we first need $\|N^\prime\|_{L^\infty}$.
Recall the definition
 \begin{equation}\label{def-N--prime}
N^\prime = \dtau N,
\end{equation}
and the elliptic equation for $N^\prime$
\begin{equation*}
\begin{split}
& \Delta N^\prime - 3 N^\prime
 = N^\prime \left(\bar R_{\Tbar \Tbar}(\phi) + |\hat k|^2 \right) + N \dtau \left(\bar R_{\Tbar \Tbar}(\phi) + |\hat k|^2 \right) \\
 &  - |\nabla \hat N |^2  -2 \hat k^{ij} \left( N\nabla_i\nabla_j \hat N+ \nabla_i \hat N \nabla_j \hat N \right) +2 \hat N \Delta \hat N + 2N \nabla^i \hat N \bar R_{\Tbar i} (\phi).
 \end{split}
\end{equation*}

\begin{lemma}\label{lem-TN-ineq-source}
Fixing $0<\delta<\frac{1}{6},$ and assuming \eqref{BT-KG}-\eqref{BT-k-N-L-infty}, we have for $N^\prime$,
 \begin{equation}\label{N-prime-inte-ineq-123-lemma}
 \int_{\Sigma_t} |\nabla N^\prime|^2 +  |N^\prime|^2 + |\nabla^2 N^\prime|^2
\lesssim \varepsilon^2 M^2 t^{4\delta}.
\end{equation}
And hence by Sobolev inequality,
\begin{equation}\label{nabla-N-L-infty}
\|N^\prime \|_{L^\infty} \lesssim  \varepsilon M t^{ 2\delta}.
\end{equation}
\end{lemma}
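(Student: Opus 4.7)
The plan is to treat the elliptic equation for $N'$ displayed just before the lemma in complete analogy with the argument for $\hat N$ in Lemma \ref{lem-N-ineq-source}. First, I would multiply the equation by $N'$ and integrate by parts on $\Sigma_t$. Because of the coercive term $-3N'$ on the left, the resulting identity reads, schematically,
\[
\int_{\Sigma_t} |\nabla N'|^2 + 3|N'|^2 = \int_{\Sigma_t} N' \cdot (\text{RHS}),
\]
where the RHS terms include (a) $N'(\bar R_{\Tbar\Tbar}(\phi) + |\hat k|^2)$, (b) $N\,\dtau(\bar R_{\Tbar\Tbar}(\phi) + |\hat k|^2)$, (c) $|\nabla \hat N|^2$, (d) $\hat k^{ij}(N\nabla_i\nabla_j\hat N + \nabla_i\hat N\,\nabla_j\hat N)$, (e) $\hat N \Delta \hat N$, and (f) $N\nabla^i \hat N \,\bar R_{\Tbar i}(\phi)$. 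Terms of type (a) carry the smallness factor $\|\bar R_{\Tbar\Tbar} + |\hat k|^2\|_{L^\infty} \lesssim \varepsilon^2 M^2 t^{-2+4\delta}$ by \eqref{BT-KG-L-infty-4-6}, \eqref{BT-k-L-infty-4-6}, and Sobolev applied to \eqref{N-inte-ineq-123-lemma}; the Cauchy--Schwarz inequality then allows them to be absorbed into the LHS. Terms (c)--(f) depend only on $\hat N$, $\hat k$, and $\phi$, and they are controlled by Lemma \ref{lem-N-ineq-source} together with the bootstrap assumptions \eqref{BT-KG}--\eqref{BT-k-N-L-infty}, yielding a bound of order $\varepsilon^2 M^2 t^{-2+4\delta}$ after Cauchy--Schwarz.

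The main obstacle is term (b), which upon expanding $\dtau \bar R_{\Tbar\Tbar}(\phi) = 2\Tbar\phi\,\Tbar^2\phi - \bar m^2(t)\phi\,\Tbar\phi + (\text{l.o.t.})$ and $\dtau|\hat k|^2 = 2\hat k^{ij}\lie_{\dtau}\hat k_{ij} + (\text{l.o.t.})$, produces a $\Tbar^2\phi$ factor and a $\lie_{\dtau}\hat k$ factor, neither of which is directly in our energy hierarchy. I would handle these precisely as anticipated in Remark \ref{rk-TT-phi}: use the KG equation in $1+3$ form \eqref{eq-rescale-kg-1+3-0} to substitute for $\Tbar^2\phi$, which gives the pointwise control \eqref{TT-KG-L-2-infty-0}, so that $\|\Tbar\phi\cdot\Tbar^2\phi\|_{L^2(\Sigma_t)} \lesssim \varepsilon^2 M^2 t^{-\frac{1}{2}+2\delta}$ (using $\|\Tbar\phi\|_{L^\infty} \lesssim \varepsilon M t^{-1+\delta}$). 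For the $\lie_{\dtau}\hat k$ factor, the evolution equation \eqref{eq-evolution-2} expresses it in terms of $\hat N$, $\nabla^2 N$, $\bar R_{i\Tbar j\Tbar}$, and $\hat k \cdot \hat k$, each of which has a favorable bound already established; combining with $\|\hat k\|_{L^\infty} \lesssim \varepsilon M t^{-1+\delta}$ gives the same order $\varepsilon^2 M^2 t^{-2+4\delta}$. This closes the zeroth-order bound $\|N'\|_{L^2}^2 + \|\nabla N'\|_{L^2}^2 \lesssim \varepsilon^2 M^2 t^{-2+4\delta}$.

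For the second-order bound, I would estimate $\|\nabla^2 N'\|_{L^2(\Sigma_t)}$ by $\|\Delta N'\|_{L^2(\Sigma_t)}$ plus lower order commutator terms controlled by $\mathcal{R}_{I_1}(N')$, exactly as in the inductive step \eqref{ineq-E-l+2-source} for $\hat N$. From the elliptic equation, $\|\Delta N'\|_{L^2(\Sigma_t)} \lesssim 3\|N'\|_{L^2(\Sigma_t)}$ plus $L^2$ norms of (a)--(f), each of which is bounded by the same arguments as above (applying $L^\infty$ on the small factor and $L^2$ on the other, or Sobolev $L^4$--$L^4$ whenever one factor is a product). The commutator $\mathcal{R}_{I_1}(N') \sim R_{imjn} * \nabla N'$ is absorbed using \eqref{R-imjn-E} and the already-obtained zeroth-order bound on $\nabla N'$. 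Finally, the $L^\infty$ estimate \eqref{nabla-N-L-infty} follows from the Sobolev embedding $H^2(\Sigma_t) \hookrightarrow L^\infty(\Sigma_t)$, whose constant is uniformly controlled by the remark after \eqref{BT-k-L-infty-4-6}.
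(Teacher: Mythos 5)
Your proposal follows essentially the same route as the paper: multiply the elliptic equation by $N^\prime$, integrate by parts, absorb the $N^\prime(\bar R_{\Tbar\Tbar}+|\hat k|^2)$ term by Cauchy--Schwarz, control the $\dtau$ of the source using the evolution equation \eqref{eq-evolution-2} for $\hat k$ and the substitution of $\Tbar^2\phi$ via the KG equation as in Remark \ref{rk-TT-phi}, then treat $\|\nabla^2 N^\prime\|_{L^2}$ through $\|\Delta N^\prime\|_{L^2}$ plus a curvature term (your commutator step is the paper's B\"ochner identity), and finish with Sobolev embedding. The only issue is quantitative bookkeeping: the bootstrap \eqref{BT-KG} gives $\|\Tbar\phi\|_{L^\infty}\lesssim \varepsilon M t^{-\frac{1}{2}+\delta}$, not $t^{-1+\delta}$, so with \eqref{TT-KG-L-2-infty-0} one only gets $\|\Tbar\phi\,\Tbar^2\phi\|_{L^2(\Sigma_t)}\lesssim \varepsilon^2M^2 t^{2\delta}$; moreover the term $N\bar m^2(t)\,\bar m(t)\phi\,\Tbar\phi$ coming from $\dtau\bar R_{\Tbar\Tbar}$ (the dominant contribution in the paper's \eqref{N-prime-inte-ineq-1}--\eqref{N-prime-inte-ineq-2}, of $L^2$ size $\varepsilon^2M^2t^{2\delta}$) is mentioned in your expansion but not estimated. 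Consequently your claimed intermediate bound $\|N^\prime\|^2_{L^2}+\|\nabla N^\prime\|^2_{L^2}\lesssim \varepsilon^2M^2t^{-2+4\delta}$ is not attainable at this stage; the correct outcome is $\lesssim \varepsilon^2M^2 t^{4\delta}$, which is exactly what the lemma asserts, so after fixing these rates your argument coincides with the paper's proof.
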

\begin{proof}
We multiply $N^\prime$ on both hand sides of the elliptic equation for $N^\prime$, and integrate by parts,
 \begin{equation}\label{N-prime-inte-1}
\begin{split}
& \int_{\Sigma_t} |\nabla N^\prime|^2 + 3 |N^\prime|^2=  \int_{\Sigma_t} \left(  |\nabla N |^2  - 2 \hat N \Delta N \right)  N^\prime  \\
&\quad \quad - \int_{\Sigma_t} |N^\prime|^2 \left( |\hat k|^2 + \bar R_{\Tbar \Tbar} \right) + N^\prime  N \dtau \left( |\hat k|^2 + \bar R_{\Tbar \Tbar} \right)\\
&\quad \quad + \int_{\Sigma_t} 2 N^\prime \hat k^{ij} \left( N \nabla_i\nabla_j \hat N+ \nabla_i \hat N \nabla_j \hat N \right) -2 N^\prime  N \nabla^i \hat N \bar R_{\Tbar i}.
\end{split}
\end{equation}
Then by Cauchy-Schwarz inequality, and the $L^\infty$ estimate for $\hat k$, $\D \phi$,
 \begin{equation*}
 \begin{split}
& \int_{\Sigma_t} |\nabla N^\prime|^2 +  |N^\prime|^2 \lesssim
\int_{\Sigma_t}   \left( \dtau |\hat k|^2 + \dtau \bar R_{\Tbar \Tbar} \right)^2  +  |\nabla \hat N |^2 |\bar R_{\Tbar i}|^2 \\
& \quad \quad \quad \quad + |\nabla \hat N|^4(1+|\hat k|^2) + |\nabla^2 \hat{N}|^2 (|\hat N|^2 + |\hat k|^2).
 \end{split}
\end{equation*}
Using the evolution equation for $\hat k$ (c.f. \eqref{eq-evolution-2}), and substituting the estimates for $\hat N$ (c.f. Lemma \ref{lem-N-ineq-source}) into the above formula, we arrive at  
 \begin{equation}\label{N-prime-inte-ineq-1}
 \begin{split}
 \int_{\Sigma_t} |\nabla N^\prime|^2 +  |N^\prime|^2 & \lesssim \mathcal{E}^2_1 (g,t) \left(1+ \mathcal{E}_1 (g,t) \right) + \mathcal{E}_1 (g,t)\mathcal{E}^2_1 (\phi,t) \\
& +   \int_{\Sigma_t} N |\Tbar \phi|^2 |\Tbar^2\phi|^2  +N \bar m^2 (t) |\bar m (t) \phi|^2 |\Tbar \phi|^2.
 \end{split}
\end{equation}
Referring to Remark \ref{rk-TT-phi} for $\|\Tbar^2 \phi \|_{L^2(\Sigma_t)}$, there is
 \begin{equation}\label{N-prime-inte-ineq-2}
  \int_{\Sigma_t} |\nabla N^\prime|^2 + |N^\prime|^2 \lesssim  t^2 \mathcal{E}^2_2 (\phi,t)  + \text{l.o.t.} \lesssim  \varepsilon^2 M^2 t^{4\delta}.
\end{equation}
$\text{l.o.t.}$ means lower order terms.

To proceed to the second order derivative, we make use of the B\"{o}chner identity, then
 \begin{equation*}
\begin{split}
& \int_{\Sigma_t} |\nabla^2 N^\prime|^2 =  \int_{\Sigma_t} | \Delta N^\prime|^2 -  R_{ij} \nabla^i N^\prime \nabla^j N^\prime.
\end{split}
\end{equation*}
Noting the expansion for $R_{imjn}$ \eqref{R-imjn-E}, $R_{ij}$ can be expanded in the same way.
Therefore, with the help of Cauchy-Schwarz inequality, we obtain,
 \begin{equation*}
\begin{split}
 \int_{\Sigma_t} |\nabla^2 N^\prime|^2 & \lesssim    \int_{\Sigma_t}    \left(\dtau \bar R_{\Tbar \Tbar} + \dtau |\hat k|^2 \right)^2+ |N^\prime|^2 \left( |\Tbar \phi|^4 + |\bar m (t) \phi|^4 + |\hat k|^4 \right) \\
 &+ |\nabla \hat N|^2 \left( |\nabla \hat N|^2 \left( 1+ |\hat k|^2 \right)  +|\Tbar \phi \nabla \phi|^2 \right) + |\nabla^2 N|^2 \left( |\hat N|^2 + |\hat k|^2\right) \\
 &+  |\nabla N^\prime|^2 \left(1+ |\hat k| + |\hat k|^2 + |\D \phi|^2 + |E| \right).
\end{split}
\end{equation*}
Substituting the estimates for $N$ (see Lemma \ref{lem-N-ineq-source}) and the known results \eqref{N-prime-inte-ineq-2}, we get
 \begin{equation*}
\begin{split}
 \int_{\Sigma_t} |\nabla^2 N^\prime|^2 \lesssim & t^2 \mathcal{E}^2_2 (\phi,t)  + \text{l.o.t.} \lesssim \varepsilon^2 M^2 t^{ 4\delta}.
\end{split}
\end{equation*}
Finally, the $L^\infty$ estimate \eqref{nabla-N-L-infty} is due to the Sobolev inequality.
\end{proof}

\section{Energy estimate for KG field}\label{sec-energy-kg}

\subsection{The $L^\infty - L^\infty$ estimate for KG field}\label{sec-L-infty-L-infty-kg}
We begin with the $L^\infty$ estimates below which follows from the weak bootstrap assumption for the KG field:
\begin{align*}
& \sum_{i=0}^{2} \|\nabla^i \D \phi\|_{L^\infty} \lesssim \varepsilon M  t^{- \frac{1}{2} + \delta}.
\end{align*}
Obviously, there is $\delta$ loss in the above estimate.
In this section, we will apply the technical ODE estimate (c.f. Lemma 3.5 in \cite{Ma-Lefoch-16}, of which a specific case for application is exhibited in Lemma \ref{lemma-ode} of appendix) to retrieve the $\delta$ loss.
\begin{corollary}[Improved $L^\infty$ estimates of KG field]\label{coro-Improved L-infty of Klein Gordon field}
With the bootstrap assumption \eqref{BT-KG}-\eqref{BT-k-N-L-infty}, we have the improved estimate for $\phi$,
\begin{equation}\label{improved-L-infty-kg-1}
|t^{\frac{1}{2}} \bar m (t) \phi| + |t^{\frac{1}{2}} \Tbar \phi| \lesssim \varepsilon I_4 (\phi) + \varepsilon M.
\end{equation}
\end{corollary}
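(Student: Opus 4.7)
The plan is to view the Klein--Gordon equation \eqref{eq-rescale-kg-1+3-0} as a damped second-order ODE along integral curves of $T$ and to invoke the technical ODE estimate Lemma \ref{lemma-ode}, i.e.\ the specific case of Lemma 3.5 in \cite{Ma-Lefoch-16}. Since $\tr k=-3$ after the rescaling, dividing \eqref{eq-rescale-kg-1+3-0} by $N$ puts the equation in the form
\begin{equation*}
\bar T^2\phi + (3-N^{-1})\bar T\phi + \bar m^2(t)\,\phi = \Delta\phi + N^{-1}\nabla^i N\,\nabla_i\phi =: F.
\end{equation*}
Along an integral curve of $\partial_t=NT$ this is a damped harmonic oscillator in $t$ with friction coefficient $\simeq 3/t$, natural frequency $m$, and source $F$ that collects the spatial Laplacian, the drift, and all deviations of $N$ from $1$.

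Applying Lemma \ref{lemma-ode} orbit-by-orbit then produces a pointwise bound of the schematic form
\begin{equation*}
t^{1/2}\bigl(|\bar T\phi|+\bar m(t)|\phi|\bigr)(t) \lesssim t_0^{1/2}\bigl(|\bar T\phi|+\bar m(t_0)|\phi|\bigr)(t_0) + \int_{t_0}^{t} s^{-1/2}\,\|F(s,\cdot)\|_{L^\infty(\Sigma_s)}\,\di s,
\end{equation*}
the weight $s^{-1/2}$ being dictated by $\tr k=-3$ and the mass $m$. The data term is bounded using Sobolev embedding and the convention \eqref{def-initial-convention}: $t_0^{1/2}\mathcal{E}_4^{1/2}(\phi,t_0)=\varepsilon I_4(\phi)$.

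The delicate step is the source estimate. A naive Sobolev bound from \eqref{BT-KG-L-infty-4-6} yields only $\|\Delta\phi\|_{L^\infty}\lesssim \mathcal{E}_4^{1/2}(\phi,t)\lesssim \varepsilon M\,t^{-1/2+\delta}$, and then the weighted integral $\int s^{-1+\delta}\di s$ diverges. The point is to exploit the mass factor in the energies $E_l(\phi,t)$: since $\|\nabla^j\phi\|_{L^2}\leq \bar m(t)^{-1}\|\nabla^j(\bar m(t)\phi)\|_{L^2}$, applying Sobolev to $\nabla^2\phi$ after this trade gives the improved bound $\|\Delta\phi\|_{L^\infty(\Sigma_s)}\lesssim (sm)^{-1}\mathcal{E}_4^{1/2}(\phi,s)\lesssim (\varepsilon M/m)\,s^{-3/2+\delta}$. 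Combined with $\|\nabla\hat N\|_{L^\infty}\lesssim \varepsilon M\,s^{-1+2\delta}$ from Lemma \ref{lem-N-ineq-source} (which renders the drift term subleading), we get $\|F\|_{L^\infty(\Sigma_s)}\lesssim (\varepsilon M/m)\,s^{-3/2+\delta}$. The source integral is therefore at most $(\varepsilon M/m)\int_{t_0}^{\infty} s^{-2+\delta}\,\di s\lesssim \varepsilon M/(m\,t_0^{1-\delta})\lesssim \varepsilon M$, using $t_0 m>9$ and $\delta<1/6$.

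The main obstacle is precisely this $\delta$-loss in the Laplacian source: the weak bootstrap alone does not afford integrability in the ODE estimate, and the argument hinges on trading one derivative for the mass factor $(tm)^{-1}$ to manufacture an extra $t^{-1}$ of decay. This interplay between the mass structure of the KG energies and the damping in the ODE is exactly what lets Lemma \ref{lemma-ode} close the bound and retrieve the $\delta$-loss.
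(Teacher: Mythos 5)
Your proposal is correct and takes essentially the same route as the paper: recast the Klein--Gordon equation as an oscillator ODE along the normal direction, invoke the Ma--LeFloch ODE estimate (Lemma \ref{lemma-ode}), and recover integrability of the source by trading one spatial derivative of $\phi$ for the mass factor $(tm)^{-1}$, which is exactly the paper's key bound $\|\Delta\phi\|_{L^\infty}\lesssim t^{-1}\|\Delta(\bar m(t)\phi)\|_{L^\infty}\lesssim t^{-1}\mathcal{E}^{\frac{1}{2}}_4(\phi,t)$. The only (harmless) imprecision is that your weighted damped-oscillator bound tacitly presupposes the friction-removing substitution $\Phi=t^{3/2}\phi$ that the paper carries out explicitly in \eqref{ode-kg}; this substitution generates extra source terms of the form $\hat N\,\Tbar\phi$ and $N^\prime\,\bar m(t)\phi$ which are not contained in your displayed $F$, but they are controlled by Lemma \ref{lem-N-ineq-source} and Lemma \ref{lem-TN-ineq-source}, decay like $\varepsilon^2M^2t^{-2+3\delta}$, and therefore do not affect the conclusion.
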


\begin{proof}
Set
\begin{equation}\label{def-Phi}
\Phi = t^{\frac{3}{2}} \phi.
\end{equation}
Then a direct calculation with respect to the KG equation in $1+3$ form \eqref{eq-rescale-kg-1+3-0} leads to the second order ODE
\begin{equation}\label{ode-kg}
\begin{split}
T^2 \Phi + m \Phi =&    t^{-\frac{1}{2}} \left( \Delta \phi +N^{-1} \nabla_p N \nabla^p \phi  \right) - \frac{3}{2}  t^{-\frac{1}{2}} N^{-3} N^\prime t^{-1} \bar m (t) \phi  \\
&-3t^{-\frac{1}{2}} N^{-1}  \hat N \Tbar \phi  +\frac{3}{4} t^{-\frac{1}{2}} N^{-2} t^{-1} \bar m (t) \phi.
\end{split}
\end{equation}
We use the $L^\infty$ estimates for $\Delta \phi$,
\begin{align*}
&\|\Delta \phi\|_{L^\infty} \lesssim t^{-1} \|\Delta (\bar m (t) \phi) \|_{L^\infty} \lesssim t^{-1} \mathcal{E}^{\frac{1}{2}}_4(\phi, t).
\end{align*}
Then, the right hand side of \eqref{ode-kg} can be bounded by 
\begin{align*}
|T^2 \Phi + m \Phi | \lesssim& t^{-\frac{3}{2}} \left( \mathcal{E}^{\frac{1}{2}}_4(\phi, t) + \mathcal{E}^{\frac{1}{2}}_2(N^\prime, t) \mathcal{E}^{\frac{1}{2}}_2(\phi, t) \right) \\
&+ t^{-\frac{1}{2}}  \mathcal{E}^{\frac{1}{2}}_3(\hat N, t) \mathcal{E}^{\frac{1}{2}}_2(\phi, t) \lesssim \varepsilon M t^{-2 + 3\delta}.
\end{align*}
We integrate along the integral curves of $T$, which are the same as that of $\p_t$ but parametrized by arc
length $s$, i.e. $T=\frac{\p}{\p s}$. Along this curve $\di s = N\di t$. It follows from the technical ODE estimate \cite{Ma-Lefoch-16} (or Lemma \ref{lemma-ode}) that
\begin{equation}\label{ineq-ode-kg}
|\Phi(t)| + |T\Phi(t)| \lesssim \varepsilon I_4(\phi) + \int_{t_0}^{t} \varepsilon M t^{-2 + 3\delta} N \di t.
\end{equation}
This further leads to the improved $L^\infty$ estimates of $\phi$ \eqref{improved-L-infty-kg-1}.
\end{proof}

\subsection{The lower order terms}\label{sec-e-lot-kg} 
In this subsection, we will prove the following estimates for the lower order terms.
\begin{corollary}\label{coro-LK-1-LK_3}
With the bootstrap assumption  \eqref{BT-KG}-\eqref{BT-k-N-L-infty}, the lower order terms ${}_0LK$ \eqref{def-0-LK} and ${}_lLK_1-{}_lLK_3$ ($l \leq 4$) \eqref{energy-estimate-kg-1-id-l-o-t-1}-\eqref{energy-estimate-kg-1-id-l-o-t-3} in the energy identities for KG field 
 share the estimates
\begin{equation}\label{estimate-0LK}
 \int_{\Sigma_t} |{}_0LK|  \lesssim t^{-1} E_0 (\phi, t) +  \mathcal{E}^{\frac{1}{2}}_1 (g, t) E_0 (\phi, t),
\end{equation}
\begin{equation}\label{estimate-L-1-2-3}
\begin{split}
\int_{\Sigma_t}  \sum_{i=1}^3 |{}_lLK_i| \lesssim & t^{-1}E_l (\phi, t) + \mathcal{E}_4 (\phi, t) \left( \mathcal{E}^{\frac{1}{2}}_3 (g, t) +  \mathcal{E}_4 (\phi, t) \right).
\end{split}
\end{equation}
\end{corollary}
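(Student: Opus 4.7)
The plan is to bound each lower-order term pointwise and then use a standard $L^2\times L^\infty$ or $L^4\times L^4$ (and for the worst factor, $L^6\times L^6\times L^6$) distribution, feeding the resulting $H^k$ norms to the Sobolev embeddings of Section~\ref{sec-Sobolev} together with the lapse estimates of Lemma~\ref{lem-N-ineq-source}. The key observation that makes the claimed estimates possible is the identity $\phi = m^{-1}t^{-1}(\bar m(t)\phi)$, which lets every bare factor of $\phi$ pick up a $t^{-1}$ that is turned into the $t^{-1}E_l(\phi,t)$ on the right-hand side.

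For ${}_0LK$, I treat the four pieces separately. The terms $3N\phi T\phi$ and $4\hat N\phi T\phi$ are both handled by rewriting $\phi$ as $m^{-1}t^{-1}(\bar m(t)\phi)$ and applying Cauchy--Schwarz; since $|N|,|\hat N|\lesssim 1$, they contribute at most $t^{-1}E_0(\phi,t)$. For $2N\hat k_{ij}\nabla^i\phi\nabla^j\phi$ I pull out $\|\hat k\|_{L^\infty}\lesssim \mathcal E_2^{1/2}(\hat k,t)\lesssim \mathcal E_1^{1/2}(g,t)$ via the Sobolev embedding $H^2(\Sigma)\hookrightarrow L^\infty$. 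For $2\nabla_i N\,\nabla^i\phi\,T\phi$ I use $\|\nabla N\|_{L^\infty}\lesssim \mathcal E_3^{1/2}(\hat N,t)\lesssim \mathcal E_1^{1/2}(g,t)$ together with $|\nabla\phi||T\phi|\leq \tfrac12|\nabla\phi|^2+\tfrac12|T\phi|^2$. Summing gives \eqref{estimate-0LK}.

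For $l\leq 4$, I unfold the definitions \eqref{def-commuting-KN-l}, \eqref{def-commuting-K-l}, \eqref{def-N-l}, \eqref{def-R-l-commute-nabla-laplacian}, \eqref{def-KN-prime-l}, \eqref{def-eq-kg-N-l-1+3-error}, \eqref{def-hat-k-N-A-Bianchi}, to see that each of ${}_lLK_1$, ${}_lLK_2$, ${}_lLK_3$ is a finite sum of trilinear (occasionally quadrilinear) products of the form
\[
\nabla^{a}(N\hat k,\hat N)\,*\,\nabla^{b}\mathcal D\phi\,*\,\nabla^{l}\mathcal D\phi,\qquad a+b\leq l+1,
\]
plus terms of the schematic shape $\nabla^{I_l}\mathcal D\phi\,*\,\mathcal R_{I_l}(\phi)$ with $\mathcal R$ containing the curvature $R_{imjn}\sim 1\pm\hat k\pm|\hat k|^2+E\pm|\mathcal D\phi|^2$ from \eqref{R-imjn-E}. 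I put the top-order factor ($\nabla^{l}$ on either the $\phi$--side or the $N,\hat k$--side) in $L^2$, and distribute the remaining derivatives so that every factor of order $\leq l-2$ is taken in $L^\infty$, a factor of order $l-1$ is placed in $L^4$ or $L^6$, and any two such middle-order factors are paired via H\"older. The Sobolev embeddings \eqref{BT-KG-L-infty-4-6}--\eqref{BT-k-L-infty-4-6} bound these middle/low-order pieces by $\mathcal E_3^{1/2}(g,t)+\mathcal E_4(\phi,t)$; whenever the top-order factor is $\nabla^{l}\phi$ I again extract $t^{-1}$ via $\phi\sim t^{-1}\bar m(t)\phi$, producing the $t^{-1}E_l(\phi,t)$ absorber; in the remaining cases the top-order $L^2$ factor is exactly $E_l^{1/2}(\phi,t)$ or $E_{l+1}^{1/2}(\hat N,t)\lesssim \mathcal E_3^{1/2}(g,t)$, and Cauchy--Schwarz on the whole product yields the bound $\mathcal E_4(\phi,t)\,(\mathcal E_3^{1/2}(g,t)+\mathcal E_4(\phi,t))$ advertised in \eqref{estimate-L-1-2-3}.

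The only genuine obstacle is the careful book-keeping at the top order $l=4$: because the assumptions provide $\mathcal E_4(\phi,t)$ but only $\mathcal E_3(W,t)$ and $\mathcal E_4(\hat k,t)$, I must verify that in every term the fifth covariant derivative lands on $\phi$ or on $\hat N$ (which by \eqref{lemma-N-inte-ineq-nabla5-source} is controlled by lower-order $\phi$--energies times $r(\phi,t)$, hence ultimately by $\mathcal E_4(\phi,t)$), never on the Weyl field. A quick inspection of \eqref{energy-estimate-kg-1-id-l-o-t-1}--\eqref{energy-estimate-kg-1-id-l-o-t-3} shows this is automatic, since curvature appears only through $\mathcal R_{I_l}(\phi)$ which contains at most $\nabla^{l-1}R_{imjn}\sim \nabla^{l-1}E$ and thus at most $\nabla^3 E$, matching the available $\mathcal E_3(W,t)$.
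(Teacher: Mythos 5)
Your proposal is correct and follows essentially the same route as the paper's Section \ref{sec-e-lot-kg}: rewrite bare factors of $\phi$ as $m^{-1}t^{-1}\bar m(t)\phi$ to extract the $t^{-1}E_l(\phi,t)$ contribution, expand $R_{imjn}$ as in \eqref{R-imjn-E}, and close every remaining product by H\"older ($L^2\times L^\infty$, $L^4\times L^4$, $L^6\times L^6\times L^6$) combined with the Sobolev bounds \eqref{BT-KG-L-infty-4-6}--\eqref{BT-k-L-infty-4-6} and the lapse estimates, noting that the top-order factors $\nabla^{l+1}\hat N$, $\nabla^{l}\hat k$, $\nabla^{l-1}E$ are exactly absorbed by $\mathcal{E}^{\frac{1}{2}}_3(g,t)$ while ${}_lLK_3$ is $t^{-1}$ times ${}_lLK_2$-type terms. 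The only difference is that the paper carries out the case-by-case bookkeeping explicitly (and your appeal to \eqref{lemma-N-inte-ineq-nabla5-source} for $\nabla^5\hat N$ is unnecessary, since $E_5(\hat N,t)\leq \mathcal{E}_3(g,t)$ by definition \eqref{def-energy-mertic-rescale}), but this does not affect the validity of your argument.
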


\subsubsection{The estimate for ${}_0LK$}\label{sec-e-lot-kg-LK0} 
The formula for ${}_0LK$ is given in \eqref{def-0-LK}.  We should note that $\phi =m^{-1} t^{-1} \bar m (t) \phi$, hence ${}_0LK$ can be rewritten as 
\begin{equation*}
\begin{split}
  {}_0LK = & \left( t^{-1} N + t^{-1} \hat N +  N \hat k + \nabla \hat N  \right) |\D \phi|^2.
\end{split}
\end{equation*}
Therefore, by $\|\hat N\|_{L^\infty}$ \eqref{BT-k-N-L-infty} and $\|\hat k\|_{L^\infty}$ \eqref{BT-k-L-infty-4-6}, we derive \eqref{estimate-0LK}.

\subsubsection{The estimate for ${}_lLK_1$}\label{sec-e-lot-kg-L1} 
Since $$\nabla^{I_l} \phi = m^{-1}t^{-1} \nabla^{I_l} (\bar m (t) \phi),$$ the first line in ${}_lLK_1$ \eqref{energy-estimate-kg-1-id-l-o-t-1} can be rewritten as 
\begin{equation*}
\begin{split}
  {}_lLK^1_1 =& \left( \hat N + N\hat k + t^{-1} N + t^{-1} \hat N + t^{-1} N \hat k  \right) | \nabla_{I_l} \D \phi|^2,
\end{split}
\end{equation*}
while second line in  ${}_lLK_1$ \eqref{energy-estimate-kg-1-id-l-o-t-1} is
\begin{equation*}
\begin{split}
  {}_lLK^2_1 = & -  N \nabla^{I_l} \Tbar \phi  \mathcal{R}_{I_l} (\phi)  - t^{-1} N \nabla^{I_l} (\bar m (t) \phi)   \mathcal{R}_{I_l}(\phi).
\end{split}
\end{equation*}

Making use of $\|\hat N\|_{L^\infty}$ \eqref{BT-k-N-L-infty} and $\|\hat k\|_{L^\infty}$ \eqref{BT-k-L-infty-4-6}, we have 
\begin{equation}\label{estimate-LK-11}
 \int_{\Sigma_t} |{}_lLK^1_1|  \lesssim t^{-1} E_l (\phi, t) +  \mathcal{E}^{\frac{1}{2}}_1 (g, t) E_l (\phi, t).  
\end{equation}

For ${}_lLK^2_1$, comparing to the first term, the second term in ${}_lLK^2_1$ taking the form of $t^{-1} N \nabla^{I_l} \D \phi  \mathcal{R}_{I_l} (\phi)$ is of lower order.  We now consider the general form $N \nabla^{I_l} \D \phi  \mathcal{R}_{I_l} (\phi)$, which can be further rewritten as
\begin{equation}\label{def-L-23}
\begin{split}
t^{-1} \nabla^{I_l} \D \phi \sum_{a +b= l, \,\, a \leq l-1} N \nabla_{I_a} R_{imjn} *  \nabla_{I_b} (\bar m (t) \phi).
\end{split}
\end{equation}

\begin{itemize}
\item {\bf Case ${}_lLK^2_{1}$-I}: $a=0$. Then $b=l.$ In views of the expansion for $R_{imjn}$ \eqref{R-imjn-E}, we have
\begin{equation}\label{estimate-L231-1}
\begin{split}
&\int_{\Sigma_t} t^{-1}\big| \nabla^{I_l} \D \phi * N R_{imjn} * \nabla_{I_l} \D \phi \big| \\
 \lesssim & t^{-1} E_l(\phi, t)  +t^{-1} E_l(\phi, t) \left( \mathcal{E}^{\frac{1}{2}}_2 (g, t) +  \mathcal{E}_2 (\phi, t). \right)
\end{split}
\end{equation}
\item {\bf Case ${}_lLK^2_{1}$-II:} $a \geq 1$. Then $1 \leq a, b \leq l-1 \leq 3$.
\begin{enumerate}
\item If $a \leq 2$, we can apply $L^4, L^4$ on $\nabla_{I_a}  R_{imjn} * \nabla_{I_b}  \D \phi$. To estimate $\| \nabla_{I_a} R_{imjn}\|_{L^4(\Sigma_t)}$,
we note that $\nabla_{I_a}  R_{imjn} = \nabla_{I_a}  ( \hat k * \hat k \pm  \hat k \pm E \pm \mathcal{D} \phi * \mathcal{D} \phi ).$  Furthermore, to estimate 
$ \|\nabla_{I_a} \left(\D \phi * \D \phi \right)\|_{L^4(\Sigma_t)},$ or $ \|\nabla_{I_a} \left(\hat k* \hat k \right)\|_{L^4(\Sigma_t)}$, noticing that $[\frac{a}{2}] + 2 \leq [\frac{l-1}{2}] + 2 \leq 3 <4$ for $(l \leq 4)$, we can apply $L^4, L^\infty$ or $ L^\infty, L^4$ to the two factors. Finally, we derive for $1 \leq a \leq 2$,
\begin{equation*}
\begin{split}
\| \nabla_{I_a} R_{imjn} \|_{L^4(\Sigma_t)}  \|  \nabla_{I_b} \D \phi \|_{L^4(\Sigma_t)} \lesssim & \left( \mathcal{E}^{\frac{1}{2}}_3 (g, t) +  \mathcal{E}_4 (\phi, t) \right) \mathcal{E}^{\frac{1}{2}}_l (\phi, t).
\end{split}
\end{equation*}

\item If $a=3$, which occurs only when $l=4,$ and therefore $b=1$. In this case, we apply $L^2, L^\infty$ on $\nabla^3  R_{imjn} * \nabla \D \phi$, and
\begin{equation*}
\begin{split}
\| \nabla^3 R_{imjn} \|_{L^2(\Sigma_t)} \|\nabla \D \phi\|_{L^\infty} \lesssim &  \left( \mathcal{E}^{\frac{1}{2}}_3 (g, t) +  \mathcal{E}_3 (\phi, t) \right) \mathcal{E}^{\frac{1}{2}}_3 (\phi, t).
\end{split}
\end{equation*}
\end{enumerate}
Thus, it follows that for ${}_lLK^2_{1}$-II 
\begin{equation}\label{estimate-L231-2}
\begin{split}
&\int_{\Sigma_t} t^{-1} \big| \nabla^{I_l} \D \phi * \nabla_{I_a}  R_{imjn} * \nabla^{I_b} \D \phi \big| \\
\lesssim & t^{-1} \mathcal{E}_4 (\phi, t) \left( \mathcal{E}^{\frac{1}{2}}_3 (g, t) +  \mathcal{E}_3 (\phi, t) \right).
\end{split}
\end{equation}
\end{itemize}
We deduced for ${}_lLK^2_{1}$
\begin{equation*}
\begin{split}
&\int_{\Sigma_t} | {}_lLK^2_{1} |
\lesssim t^{-1}  E_l(\phi, t)  +  t^{-1} \mathcal{E}_4 (\phi, t) \left( \mathcal{E}^{\frac{1}{2}}_3 (g, t) +  \mathcal{E}_4 (\phi, t) \right).
\end{split}
\end{equation*}
Combine with \eqref{estimate-LK-11},
\begin{equation}\label{estimate-LK1-2}
\begin{split}
&\int_{\Sigma_t} | {}_lLK_{1} |
\lesssim t^{-1}  E_l(\phi, t)  +  t^{-1} \mathcal{E}_4 (\phi, t) \left( \mathcal{E}^{\frac{1}{2}}_3 (g, t) +  \mathcal{E}_4 (\phi, t) \right).
\end{split}
\end{equation}

\subsubsection{The estimate for ${}_lLK_2$}\label{sec-e-lot-kg-L2}
${}_lLK_2$ \eqref{energy-estimate-kg-1-id-l-o-t-2} can be further written as ${}_lLK_2 = {}_lLK_{21} + {}_lLK_{22}$, where (we have always ignored irrelevant constants) ${}_lLK_{2i}, \, i=1,2$ are given as follows: ${}_lLK_{21} = {}_lLK^1_{21} + {}_lLK^2_{21}$
\begin{equation}\label{def-LK-21}
\begin{split}
{}_lLK^1_{21} := &  \nabla^{I_l} \D \phi \sum_{a +b+1 = l+1} \nabla_{I_{a}} \nabla \hat N  * \nabla_{I_{b}} \D \phi, \\
 {}_lLK^2_{21}:=& \nabla^{I_l} \D \phi \sum_{a +1+b = l}  \nabla_{I_{a}} \nabla \hat N  * \nabla_{I_b} \D \phi,
 \end{split}
\end{equation}
 which comes from $\nabla^{I_l} \D \phi \mathcal{N}_{l+1}(\D \phi),$ and $\nabla^{I_l} \D \phi \mathcal{N}_{l}(\D \phi).$ And
 \begin{equation}\label{def-LK-22}
\begin{split}
{}_lLK_{22} :=&  \nabla^{I_{l}} \D \phi \sum_{a +b + 1 = l } \nabla_{I_{a+1}} \left( N \hat k + \hat N \right) * \nabla_{I_{b}} \D \phi,
\end{split}
\end{equation}
which is $\nabla^{I_l} \D \phi  \mathcal{K}\mathcal{N}_{I_l}(\D \phi)$.
We remark that $1 \leq l \leq 4$.
 
For ${}_lLK^1_{21}, {}_lLK^2_{21}$, $a+ b \leq l,$ and $[\frac{l}{2}] + 2 \leq 4 (l \leq 4),$  we can always apply $L^2, L^\infty$  or $L^\infty, L^2$ to the two factors in $ \nabla_{I_{a}} \nabla \hat N * \nabla_{I_b} \D \phi$ to derive 
\begin{equation}\label{estimate-LK211}
\begin{split}
\int_{\Sigma_t} |{}_lLK_{21}| 
 \lesssim &E^\frac{1}{2}_l(\phi, t) \mathcal{E}^{\frac{1}{2}}_5(\hat N, t)  \mathcal{E}^{\frac{1}{2}}_4(\phi, t),  \lesssim  \varepsilon^3 M^3  t^{-2+4 \delta}.
\end{split}
\end{equation}

${}_lLK_{22}$ can be further split into the following two terms:
\begin{align*}
 {}_lLK^{1}_{22} &= \nabla^{I_{l}} \mathcal{D} \phi \sum_{p+1+q +b =l} \nabla_{I_{p+1}} \hat N *\nabla_{I_{q}} \hat k * \nabla_{I_{b}} \mathcal{D} \phi, \\
{}_lLK^{2}_{22} &= \nabla^{I_{l}} \mathcal{D} \phi \sum_{p+b=l, p\geq 1} \left( N \nabla_{I_{p}} \hat k + \nabla_{I_{p}} \hat N \right) * \nabla_{I_{b}} \mathcal{D}\phi.
\end{align*} 
For ${}_lLK^{1}_{22}$, since $ b,q \leq l-1$, and $p+1 \leq l$ ($l \leq 4$). Then we can apply $L^2, L^6, L^6, L^6$ to the four factors, so that
\begin{equation}\label{estimate-LK221}
\begin{split}
\int_{\Sigma_t} |{}_lLK^{1}_{22}| 
 \lesssim &\mathcal{E}_l(\phi, t) \mathcal{E}_{l-1}(g, t) \lesssim  \varepsilon^4 M^4 t^{-3+5 \delta}.
\end{split}
\end{equation}
As ${}_lLK^{2}_{22}$, $[\frac{l}{2}] + 2 \leq 4 (l \leq 4),$ we can always apply $L^2, L^2, L^\infty$  or $L^2, L^\infty, L^2$ to the three factors in  ${}_lLK_{22}^{2}$
\begin{equation}\label{estimate-LK222}
\begin{split}
\int_{\Sigma_t} |{}_lLK^{2}_{22}| 
 \lesssim & E^\frac{1}{2}_l(\phi, t) \mathcal{E}^{\frac{1}{2}}_3(g, t)  \mathcal{E}^{\frac{1}{2}}_4(\phi, t) \lesssim  \varepsilon^3 M^3 t^{-2+4 \delta}.
\end{split}
\end{equation}
 In summary, we have
\begin{equation}\label{estimate-LK2-2}
\begin{split}
& \int_{\Sigma_t}  |{}_lLK_{22}| \lesssim   \mathcal{E}_4(\phi, t)\mathcal{E}^{\frac{1}{2}}_3(g, t)
\end{split}
\end{equation}

Combining \eqref{estimate-LK211} with \eqref{estimate-LK2-2}, we conclude that for ${}_lLK_{2}$
\begin{equation}\label{estimate-LK2}
\begin{split}
&\int_{\Sigma_t}|  {}_lLK_{2}|
\lesssim \mathcal{E}_4 (\phi, t) \left( \mathcal{E}^{\frac{1}{2}}_3 (g, t) +  \mathcal{E}_4 (\phi, t) \right).
\end{split}
\end{equation}

\subsubsection{The estimate for ${}_lLK_3$}\label{sec-e-lot-kg-L3}
Generally speaking, comparing to ${}_lLK_2$, ${}_lLK_3$  \eqref{energy-estimate-kg-1-id-l-o-t-3} has additionally factor $t^{-1}$, for $\nabla_{I_l} \phi \sim t^{-1} \nabla_{I_l} \left( \bar m(t) \phi \right)$.  To be more specific, the following terms in ${}_lLK_3$ admit the quantitatively equivalence \footnote{Here $\psi_1$ and $\psi_2$ are quantitatively equivalent, denoted by $\psi_1 \dot\sim \psi_2$, in the sense that $\psi_1$ and $\psi_2$ share the same structures and they can be estimate in the same way.}:
\begin{subequations}
\begin{equation}
 \nabla^{I_l} \Tbar \phi  \mathcal{K}\mathcal{N}_{I_l}(\phi) +   \nabla^{I_l} \phi \mathcal{K}\mathcal{N}^\prime_{I_l}(\Tbar \phi) \dot\sim t^{-1} \nabla^{I_l} \D \phi  \mathcal{K}\mathcal{N}^\prime_{I_l}( \D\phi) = t^{-1} {}_lLK_{22},\label{eq-LK31-1}
\end{equation}
\begin{equation}
  \nabla^{I_l} \phi  \mathcal{N}_{I_l}(\Tbar \phi) \dot\sim t^{-1} \nabla^{I_l} \D \phi \mathcal{N}_l(\D \phi)  = t^{-1} {}_lLK^2_{21},\label{eq-LK31-2}
\end{equation}
\begin{equation}
 \nabla^{I_l} \phi \mathcal{N}_{I_{l+1}}(\nabla\phi) \dot\sim t^{-1} \nabla^{I_l} \D \phi \mathcal{N}_{I_{l+1}}(\D \phi)  = t^{-1} {}_lLK^1_{21}. \label{eq-LK31-6}
\end{equation}
\end{subequations} 
Due to the estimates for ${}_lLK_{2}$ \eqref{estimate-LK2}, ${}_lLK_{3}$ can be bounded as 
\begin{equation}\label{estimate-LK3}
\begin{split}
&\int_{\Sigma_t}  | {}_lLK_{3}|
\lesssim t^{-1} \mathcal{E}_4 (\phi, t) \left( \mathcal{E}^{\frac{1}{2}}_3 (g, t) +  \mathcal{E}_4 (\phi, t) \right).
\end{split}
\end{equation}

\subsection{The borderline terms and hierarchy of energy}\label{sec-0-ee-kg} 
In this section, we focus on the borderline terms $BL_l$ \eqref{def-energy-estimate-kg-id-bl}. We will take advantage of the relation between lapse and KG field (c.f. Lemma \ref{lem-N-ineq-source}) and the enhanced $L^\infty$ estimates \eqref{improved-L-infty-kg-1} to conduct an hierarchy of energy estimate. Then each order of KG energy and lapse will be refined step by step, base on which the nonlinear terms $BL_l$ are reduced  to linear ones. The idea of linearization and hierarchy of energy work together to close the energy argument for KG field.  

\subsubsection{The zeroth order energy estimate}\label{sec-0-ee-kg-1} 
\begin{lemma}\label{lemma-improved-0-energy-KG}
We have the improved estimate for the zeroth order energy 
\begin{equation}\label{eq-improved-energy-0-order}
t E_0(\phi,t) \lesssim \varepsilon^2I^2_0(\phi).
\end{equation}
\end{lemma}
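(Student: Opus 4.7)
The key simplification at the zeroth order is that no borderline terms appear: the structure \eqref{def-BLl-123l} requires $l\geq 1$, so Theorem \ref{Thm-energy-id-KG-0} gives the clean identity
\begin{equation*}
\dtau \tilde E_0(\phi,t) + \tilde E_0(\phi,t) + \int_{\Sigma_t} 2N|\nabla\phi|^2\,\di\mu_g = -\int_{\Sigma_t} {}_0LK\,\di\mu_g,
\end{equation*}
with the positive dissipative term $\int 2N|\nabla\phi|^2$ on the good side. The plan is therefore to (i) use the equivalence $\tilde E_0 \sim E_0$ (valid once $t_0 > 3m^{-1}$ so that $|\tr k|=3 < \bar m(t)$), (ii) discard the good term $\int 2N|\nabla\phi|^2 \geq 0$, (iii) insert the bound \eqref{estimate-0LK} from Corollary \ref{coro-LK-1-LK_3}, and (iv) close by Gr\"onwall.

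Carrying this out, I would combine the above with Corollary \ref{coro-LK-1-LK_3} to obtain
\begin{equation*}
\dtau E_0(\phi,t) + E_0(\phi,t) \lesssim \bigl(t^{-1} + \mathcal{E}_1^{1/2}(g,t)\bigr) E_0(\phi,t).
\end{equation*}
Next I would substitute the bootstrap consequences \eqref{BT-KG}--\eqref{BT-k-N-L-infty} together with Lemma \ref{lem-N-ineq-source} into $\mathcal{E}_1(g,t) = \mathcal{E}_1(W,t) + \mathcal{E}_2(\hat k,t) + \mathcal{E}_3(\hat N,t)$ to get $\mathcal{E}_1^{1/2}(g,t) \lesssim \varepsilon M\, t^{-1+2\delta}$, hence
\begin{equation*}
\dtau E_0(\phi,t) + E_0(\phi,t) \lesssim \bigl(t^{-1} + \varepsilon M\, t^{-1+2\delta}\bigr) E_0(\phi,t).
\end{equation*}

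To turn this into the desired $t$-weighted bound, I would pass to the variable $t$ via $\dtau = t\p_t$ and note that $\dtau(tE_0) = tE_0 + t\,\dtau E_0$, so the inequality rewrites as
\begin{equation*}
\p_t\bigl(tE_0(\phi,t)\bigr) \lesssim \bigl(t^{-2} + \varepsilon M\, t^{-2+2\delta}\bigr)\bigl(tE_0(\phi,t)\bigr).
\end{equation*}
Since $0<\delta<\tfrac{1}{6}$, the right-hand coefficient is integrable on $[t_0,\infty)$, so Gr\"onwall yields $tE_0(\phi,t) \lesssim t_0 E_0(\phi,t_0) = \varepsilon^2 I_0^2(\phi)$, which is \eqref{eq-improved-energy-0-order}. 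There is no real obstacle here; the whole point is that, absent a borderline term, the $L^2$ energy dissipates fast enough that the weak bootstrap input closes trivially. This simplicity is exactly what makes the zeroth order the base case of the hierarchy to be propagated upward in the subsequent subsections.
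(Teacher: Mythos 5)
Your proposal is correct and follows essentially the same route as the paper: the zeroth order identity of Theorem \ref{Thm-energy-id-KG-0}, the bound \eqref{estimate-0LK}, the bootstrap/lapse estimates giving $\mathcal{E}^{\frac{1}{2}}_1(g,t)\lesssim \varepsilon M t^{-1+2\delta}$, a change to the $t$ variable, and Gr\"onwall. The one small point to tidy is that the energy identity controls $\dtau\tilde E_0$, not $\dtau E_0$, so one should run the Gr\"onwall argument for $t\tilde E_0$ (bounding the right-hand side by $tE_0\sim t\tilde E_0$) and invoke the equivalence $\tilde E_0\sim E_0$ only on the data and on the final conclusion, exactly as the paper does.
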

\begin{proof}
According to the zeroth order energy identity \eqref{energy-id-0-kg} (see Theorem \ref{Thm-energy-id-KG-0}) and the estimate for ${}_0LK$ \eqref{estimate-0LK}, we obtain
\begin{equation*}
\dtau \tilde E_0(\phi, t) + \tilde E_0(\phi, t)  \lesssim t^{-1} E_0(\phi, t) +  E_0 (\phi, t) \mathcal{E}^{\frac{1}{2}}_1 (g, t).
\end{equation*}
We then change to the $t$ time function to derive that
\begin{equation}\label{ineq-energy-diff-0-order}
\dt ( t\tilde E_0(\phi, t)) \lesssim t^{-2} (t E_0(\phi, t)) + t^{-1} \mathcal{E}^{\frac{1}{2}}_1 (g, t) (tE_0 (\phi, t)).
\end{equation}
We recall the definition for $\mathcal{E}_1 (g, t)$.
Substitute the bound for $\mathcal{E}_1 (g, t)$ \eqref{BT-E-H}-\eqref{BT-k-N-L-infty}, \eqref{N-inte-ineq-123-lemma} and integrate along $t$ interval, noting that $ t\tilde E_0(\phi, t) \sim t E_0(\phi, t)$
\begin{equation}\label{ineq-energy-0-order}
t^\prime E_0(\phi, t^\prime) \lesssim \varepsilon^2I^2_0(\phi) + \int^{t^\prime}_{t_0} \left( \frac{1}{t^2} + \frac{\varepsilon M}{t^{2-2\delta}} \right) (t E_0(\phi,t)) \di t, \quad 0 < \delta < \frac{1}{6}.
\end{equation}
An application of the Gr\"{o}nwall inequality yields \eqref{eq-improved-energy-0-order}.
\end{proof}

With the result of Lemma \ref{lemma-improved-0-energy-KG}, we can further sharpen the estimates for lapse up to two derivatives.
\begin{corollary}\label{lem-N-ineq-source-improved-2}
The estimate for $\mathcal{E}_2(\hat N,t)$ is improved as
 \begin{equation}\label{N-improved-estimate-source}
t^2 \mathcal{E}_2(\hat N,t) \lesssim  \varepsilon^2  I^2_4(\phi).
\end{equation}
\end{corollary}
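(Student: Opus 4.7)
\emph{Plan.} The natural starting point is the structurally refined elliptic estimate \eqref{lemma-N-inte-ineq-2order-source} of Lemma~\ref{lem-N-ineq-source}:
\[
\mathcal{E}_2(\hat N, t)\;\lesssim\;\mathcal{E}^2_{1}(\hat k, t)\;+\;r^2(\phi, t)\,\mathcal{E}_0(\phi, t),
\]
where $r^2(\phi,t)=\|\Tbar\phi\|_{L^\infty}^2+\|\bar m(t)\phi\|_{L^\infty}^2$. The reason this is the correct tool at this stage, as opposed to the cruder elliptic estimate indicated in Remark~\ref{rk-rough-lapse} that controls $\mathcal{E}_2(\hat N,t)$ only by $\mathcal{E}_4^2(\phi,t)$ and hence inevitably transmits the bootstrap $t^{2\delta}$ loss, is that it splits the KG contribution into a \emph{product} of a pointwise $L^\infty$ size times the lowest order $L^2$ energy. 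Both factors have been linearized just now: the pointwise factor by \eqref{improved-L-infty-kg-1} in Corollary~\ref{coro-Improved L-infty of Klein Gordon field}, and the $L^2$ factor by \eqref{eq-improved-energy-0-order} in Lemma~\ref{lemma-improved-0-energy-KG}. A factor-by-factor combination thus closes the bound with no residual $t^{2\delta}$ loss.

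\emph{Execution.} I would handle the two terms on the right hand side separately. For the KG product term, plugging in \eqref{improved-L-infty-kg-1} and \eqref{eq-improved-energy-0-order} and using the monotonicity $I_0(\phi)\le I_4(\phi)$ gives
\[
r^2(\phi, t)\,\mathcal{E}_0(\phi, t)\;\lesssim\;\varepsilon^2(I_4(\phi)+M)^2\,t^{-1}\cdot \varepsilon^2 I_4^2(\phi)\,t^{-1},
\]
so after multiplying by $t^2$ the prefactor $\bigl(\varepsilon(I_4(\phi)+M)\bigr)^2$ is $O(1)$ by the standing smallness $\varepsilon M\le 1$ together with the a priori bound $I_4(\phi)<2M$, yielding the desired size $\varepsilon^2 I_4^2(\phi)$. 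For the residual $\mathcal{E}^2_1(\hat k,t)$, the bootstrap \eqref{BT-k-N-L-infty} gives $t^2\mathcal{E}^2_1(\hat k,t)\lesssim \varepsilon^4 M^4\,t^{-2+4\delta}$, and since $\delta<1/6$ and $t\ge t_0\ge 1$ this reduces to $\varepsilon^2\cdot(\varepsilon M)^2\cdot M^2 \cdot t_0^{-2+4\delta}$; because this is quartic in the bootstrap smallness $\varepsilon M\le 1$, it is of strictly lower order and absorbed into the right hand side $\varepsilon^2 I_4^2(\phi)$.

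\emph{Main obstacle.} The genuine content of the corollary is not computational but organizational: it is the statement that, at this exact point in the hierarchy described in Section~\ref{sec-main-proof}, enough $\phi$-information is in place to tighten the lapse estimate. One cannot sharpen $\mathcal{E}_2(\hat N,t)$ until the pointwise and lowest order $L^2$ estimates for $\phi$ have both been linearized; and those linearizations in turn relied on the rough estimate \eqref{N-inte-ineq-123-lemma} for $\hat N$. The subtlety to recognize is that \eqref{lemma-N-inte-ineq-2order-source} has been engineered to channel \emph{all} of the genuinely dangerous KG dependence into the product $r^2(\phi,t)\mathcal{E}_0(\phi,t)$; neither a transport estimate for $\hat k$ (which appears only in Section~\ref{sec-closing-bt-k}) nor any higher-order $\phi$-energy is used. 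This hand-off between the two linearization steps and the refined elliptic estimate is the essential idea.
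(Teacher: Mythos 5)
Your proposal is correct and follows essentially the same route as the paper: it invokes the refined elliptic estimate \eqref{lemma-N-inte-ineq-2order-source}, substitutes the improved $L^\infty$ bound \eqref{improved-L-infty-kg-1} and the improved zeroth order energy \eqref{eq-improved-energy-0-order} into the product term, and disposes of the $\mathcal{E}_1^2(\hat k,t)$ remainder via the bootstrap assumption \eqref{BT-k-N-L-infty}, exactly as in the paper's proof of Corollary \ref{lem-N-ineq-source-improved-2}.
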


\begin{proof}
By virtue of \eqref{lemma-N-inte-ineq-2order-source} in Lemma \ref{lem-N-ineq-source}, we substitute the enhanced $|\bar m (t) \phi |, |\Tbar \phi|$ \eqref{improved-L-infty-kg-1}, $E_0(\phi,t)$ \eqref{eq-improved-energy-0-order} and the bootstrap assumption for $\mathcal{E}_1 (\hat k, t)$ into \eqref{lemma-N-inte-ineq-2order-source}, 
 \begin{align*}
  t^2 \mathcal{E}_2(\hat N,t) &\lesssim  \varepsilon^2 I^2_0(\phi) \left( \varepsilon^2 M^2 + \varepsilon^2 I^2_4(\phi) \right) + \varepsilon^4 M^4 t^{-2 + 4\delta}\lesssim \varepsilon^2  I^2_4(\phi).
\end{align*}
\end{proof}

\subsubsection{The high order energy estimates}\label{sec-2-ee-kg}
Now we make use of the previously improved $\phi, \hat N$ to further refine the estimate for $E_l(\phi, t)$ step by step. 
\begin{lemma}\label{lemma-improved-l-energy-KG}
There is some constant $C_M$ depending linearly on $M$, such that
\begin{equation}\label{improved-energy-l-order-1}
t E_l(\phi, t) \lesssim \left( \varepsilon^2 I^2_4(\phi) + \varepsilon^3 M^3 \right) t^{C_M \varepsilon}, \, 1 \leq l \leq 4.
\end{equation}
Meanwhile, the lapse admits the enhanced estimates
 \begin{equation}\label{N-improved-estimate-source-to-4}
t^2 \mathcal{E}_{i+2}(\hat N, t) \lesssim \left( \varepsilon^2 I^2_{i}(\phi) + \varepsilon^3 M^3 \right) t^{C_M \varepsilon}, 1 \leq i \leq 2.
\end{equation}
\end{lemma}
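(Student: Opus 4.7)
The plan is to establish both conclusions simultaneously by a finite induction on $l$, alternating between an energy estimate for $E_l(\phi,t)$ and an elliptic upgrade for $\mathcal{E}_{l+1}(\hat N,t)$. The induction base ($l=0$) is already provided by Lemma~\ref{lemma-improved-0-energy-KG} together with Corollary~\ref{lem-N-ineq-source-improved-2}. For the inductive step, I would take as known the sharpened bounds
$t E_{l'}(\phi,t)\lesssim(\varepsilon^2I_4^2(\phi)+\varepsilon^3M^3)t^{C_M\varepsilon}$ for all $l'<l$, together with the refined lapse bounds $t^2\mathcal{E}_{l'+2}(\hat N,t)\lesssim (\varepsilon^2 I_{l'}^2(\phi)+\varepsilon^3 M^3)t^{C_M\varepsilon}$ for $l'\leq l-1$, and produce the analogues at level $l$.

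Starting from Corollary~\ref{coro-energy-id-KG-l}, the lower order terms $\,{}_lLK_1,\,{}_lLK_2,\,{}_lLK_3$ are already controlled by Corollary~\ref{coro-LK-1-LK_3}, giving a contribution absorbed either by $t^{-1}E_l(\phi,t)$ or by a time-integrable source. The substance of the proof is therefore the treatment of the borderline family $BL_l=\sum_{a=1}^l BL_l^a$ defined in \eqref{def-BLl-123l}. For each $a$ I would estimate $BL_l^a$ by placing the top-order factor $\nabla^l\Tbar\phi$ or $\nabla^l(\bar m(t)\phi)$ in $L^2$, thereby producing the factor $E_l^{1/2}(\phi,t)$, and by placing the remaining two factors $\nabla^{l-a+1}\hat N$ and $\nabla^{a-1}\mathcal D\phi$ into $L^\infty$ and $L^2$ (or vice-versa, whichever respects Sobolev embedding: one gets $L^\infty$ for indices $\leq 2$ and $L^2$ for indices $\geq 1$). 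Crucially, the lapse factor $\nabla^{l-a+1}\hat N$ with $l-a+1\leq l$ is at order $\leq l$, hence controlled by the inductively improved $\mathcal{E}_{l+1}(\hat N,t)$ (and by Lemma~\ref{lem-N-ineq-source} relating $\nabla^{\leq l+1}\hat N$ to lower order KG energies); and the factor $\nabla^{a-1}\mathcal D\phi$ with $a-1\leq l-1$ is controlled either by the inductively improved $E_{a-1}(\phi,t)$ for $a\geq 2$ or by the improved $L^\infty$ estimate \eqref{improved-L-infty-kg-1} of Corollary~\ref{coro-Improved L-infty of Klein Gordon field} for $a=1$. Either way, the product of these two factors is an $O(\varepsilon M)$ coefficient times an initial-data constant, rather than a quantity still proportional to $\mathcal{E}_4(\phi,t)$. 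This is precisely the linearization announced in the introduction: the nonlinear borderline is replaced by $\varepsilon M\cdot E_l^{1/2}(\phi,t)$ times a lower-order bounded source.

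Combining these bounds yields an inequality of the schematic form
\begin{equation*}
\p_{\bar\tau}\tilde E_l(\phi,t)+\tilde E_l(\phi,t)\lesssim (t^{-1}+\varepsilon M)\,E_l(\phi,t)+t^{-1/2}\varepsilon M\bigl(\varepsilon I_l(\phi)+\varepsilon^{3/2}M^{3/2}\bigr)E_l^{1/2}(\phi,t),
\end{equation*}
which, after Cauchy--Schwarz to split the second term and the change of variable $\p_{\bar\tau}=t\p_t$, becomes
\begin{equation*}
\p_t\!\bigl(tE_l(\phi,t)\bigr)\lesssim (t^{-2}+\varepsilon M t^{-1})\bigl(tE_l(\phi,t)\bigr)+\bigl(\varepsilon^2I_l^2(\phi)+\varepsilon^3M^3\bigr)t^{-1}.
\end{equation*}
Gr\"onwall then delivers the claimed bound $tE_l(\phi,t)\lesssim (\varepsilon^2 I_4^2(\phi)+\varepsilon^3 M^3)t^{C_M\varepsilon}$ for a constant $C_M$ linear in $M$. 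Once this is in hand for the current value of $l$, I feed it into \eqref{lemma-N-inte-ineq-nabla3-source}--\eqref{lemma-N-inte-ineq-nabla5-source} of Lemma~\ref{lem-N-ineq-source}, together with the improved $L^\infty$ bound of $r(\phi,t)$ from \eqref{improved-L-infty-kg-1}, to conclude $t^2\mathcal{E}_{l+1}(\hat N,t)\lesssim(\varepsilon^2I_{l-1}^2(\phi)+\varepsilon^3M^3)t^{C_M\varepsilon}$, which completes the inductive step and supplies the ingredient needed to run the next level.

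The main obstacle is the top order case $l=4$. For the extreme values $a=1$ and $a=l$ in $BL_l^a$, one of the two ``auxiliary'' factors is itself at a genuinely top order: when $a=1$ the lapse factor is $\nabla^l\hat N$, and when $a=l$ the field factor is $\nabla^{l-1}\Tbar\phi$. The saving is asymmetric in the two cases: when $a=1$ one must use the improved $L^\infty$ bound \eqref{improved-L-infty-kg-1} on $\Tbar\phi$ and $\bar m(t)\phi$ to absorb the $\delta$ loss, while keeping $\nabla^l\hat N$ in $L^2$ via $\mathcal{E}_l(\hat N,t)$ (crucially no $\mathcal{E}_5(\hat N)$ is required, as noted after \eqref{N-inte-ineq-5order-source}); when $a=l$ one places $\nabla\hat N$ in $L^\infty$ via the already-improved $\mathcal{E}_2(\hat N,t)$ and $\nabla^{l-1}\Tbar\phi$ in $L^2$ via $E_{l-1}(\phi,t)$, now available by induction. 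Ensuring that every combination of indices $(a,l)$ with $l\leq 4$ admits such a placement, and that all the resulting constants depend only linearly on $M$ (so that $C_M$ remains linear), is the combinatorial heart of the argument; it is a direct consequence of the Sobolev cutoff $[\frac{l}{2}]+2\leq 4$ for $l\leq 4$, which is the reason $I=4$ is the minimal regularity that makes the scheme close.
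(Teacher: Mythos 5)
Your proposal follows essentially the same route as the paper's own proof: a hierarchy of energy estimates in increasing order $l$, in which the borderline terms $BL_l^a$ are linearized by pairing the refined lapse bounds of Lemma \ref{lem-N-ineq-source} with the already-improved lower-order KG energies and the improved $L^\infty$ estimate \eqref{improved-L-infty-kg-1}, followed by Cauchy--Schwarz and Gr\"onwall; the paper merely performs the lapse upgrade in one block (its Step III, after improving $E_1$ and $E_2$) and sometimes uses $L^2$--$L^4$--$L^4$ placements together with $\mathcal{E}_l^{1/2}(\phi,t)\lesssim\mathcal{E}_{l-1}^{1/2}(\phi,t)+E_l^{1/2}(\phi,t)$ where you use $L^2$--$L^\infty$ placements. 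One bookkeeping point: your inductive hypothesis assumes refined $\mathcal{E}_{l+1}(\hat N,t)$ at the start of level $l$, while your step only produces it at the end, so you should run the elliptic upgrade (feeding the refined $E_{l-1}(\phi,t)$ into Lemma \ref{lem-N-ineq-source}) \emph{before} the level-$l$ energy estimate; for instance at $l=3$, $a=2$ the placement $\nabla\D\phi\in L^2$, $\nabla^2\hat N\in L^\infty$ genuinely requires refined $\mathcal{E}_4(\hat N,t)$ (otherwise one must fall back on the paper's $L^4$--$L^4$ placement using $\mathcal{E}_3(\hat N,t)$ and $\mathcal{E}_2(\phi,t)$), and with that reordering your scheme closes exactly as in the paper.
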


\begin{proof}
After substituting the estimates for the lower order terms ${}_lLK_1-{}_lLK_3$ (Corollary \ref{coro-LK-1-LK_3}) into the energy identity for $\phi$ \eqref{energy-id-l-kg-rescale} (Corollary \ref{coro-energy-id-KG-l}), we arrive at for $ l \geq 1$ (noting that $\tilde E_l(\phi, t) \sim E_l(\phi, t)$)
\begin{equation}\label{ineq-energy-diff-l-order-1}
\begin{split}
\dtau \tilde E_l(\phi, t) + \tilde E_l(\phi, t)  \lesssim & t^{-1} E_l (\phi, t) + \int_{\Sigma_t} | BL_l |\\
&+ \mathcal{E}_4 (\phi, t) \left( \mathcal{E}^{\frac{1}{2}}_3 (g, t) +  \mathcal{E}_4 (\phi, t) \right),
\end{split}
\end{equation}
where the borderline terms $BL_l$ \eqref{def-energy-estimate-kg-id-bl}  take the form:
$BL_l=BL_l^1 + \cdots+BL_l^l$, and $BL_l^i, \, i=1,\cdots, l$ read
\begin{equation}\label{def-BLl-1234}
\begin{split}
BL_l^i =&\bar m (t) \nabla^{l-i+1} N *\left( \nabla^l (\bar m (t) \phi) * \nabla^{i-1} \bar T \phi \pm \nabla^l \Tbar \phi * \nabla^{i-1}( \bar m (t) \phi )\right).
\end{split}
\end{equation}
The special structure of $BL_l^i$  allows us to refine the estimates step by step.

{\bf Step I: the improvement for $E_1(\phi, t)$}. 
 With the help of Corollary \ref{lem-N-ineq-source-improved-2} and the enhanced $\|\bar m (t) \phi\|_{L^\infty}, \|\Tbar \phi\|_{L^\infty}$ \eqref{improved-L-infty-kg-1}, the borderline term $BL_1= BL_1^1$, which reads $$\bar m (t) \nabla N * \nabla (\bar m (t) \phi) * \Tbar \phi - \bar m (t) \nabla N * \nabla \Tbar \phi * (\bar m (t) \phi)$$ is in fact a linear term.
 
Applying $L^2, L^2, L^\infty$ to the three factors in $BL_1,$ we have, by Cauchy-Schwarz inequality,
\begin{equation}\label{ineq-energy-1-order-bl}
\begin{split}
 \int_{\Sigma_t} | BL_1|  \lesssim & t \|\bar T\phi\|_{L^\infty} \| \nabla N \|_{L^2(\Sigma_t)}  \|\nabla (\bar m (t) \phi) \|_{L^2(\Sigma_t)}   \\
 &\quad + t \|\bar m (t) \phi\|_{L^\infty} \| \nabla N \|_{L^2(\Sigma_t)}  \| \nabla \Tbar \phi \|_{L^2(\Sigma_t)} \\
  \lesssim & t^{\frac{1}{2}} \varepsilon M \mathcal{E}^{\frac{1}{2}}_1 (\hat N, t)  E_1^{\frac{1}{2}}(\phi,t)  
  \lesssim t^{-\frac{1}{2}} \varepsilon^2 M I_4(\phi)  E_1^{\frac{1}{2}}(\phi,t).
  \end{split}
\end{equation}
Hence adding \eqref{ineq-energy-1-order-bl} to \eqref{ineq-energy-diff-l-order-1}  with $l=1$,
\begin{equation*}
\begin{split}
\dtau \tilde E_l(\phi, t) + \tilde E_l(\phi, t)  \lesssim & \left( t^{-1} + \varepsilon^2 M^2 \right) E_l (\phi, t) +  \varepsilon^2 I_4^2 (\phi) t^{-1} \\
&+ \mathcal{E}_4 (\phi, t) \left( \mathcal{E}^{\frac{1}{2}}_3 (g, t) +  \mathcal{E}_4 (\phi, t) \right).
\end{split}
\end{equation*}
We then change to $t$ function,
\begin{equation}\label{ineq-energy-diff-1-order-1}
\begin{split}
\dt (t \tilde E_l(\phi, t)) \lesssim & \left( t^{-2} + \varepsilon M t^{-1} \right) (tE_l (\phi, t) ) +  \varepsilon^2 I_4^2 (\phi) t^{-1} \\
&+ \varepsilon^3 M^3 t^{-2+3\delta},
\end{split}
\end{equation}
Integrate along $\dt$, $\tilde E_l(\phi, t) \sim E_l(\phi, t)$,
\begin{equation}\label{ineq-energy-1-order-2}
\begin{split}
t^\prime E_1(\phi, t') \lesssim&  \varepsilon^2 I^2_4(\phi) t'^{\varepsilon M} + \varepsilon^3 M^3.
  \end{split}
\end{equation}
By Gr\"{o}nwall inequality, there is some constant $C_M$ depending linearly on $M$, so that \eqref{improved-energy-l-order-1} holds with $l=1$.

{\bf Step II: the improvement for $E_2(\phi, t)$}. The borderline term $BL_2=BL_2^1 + BL_2^2$, where $BL_2^1, \, BL_2^2$ read as below
\begin{equation*}
\begin{split}
BL_2^1 =& \bar m (t) \nabla^2 (\bar m (t) \phi) * \nabla^2 N * \Tbar \phi + \bar m (t) \nabla^2 \Tbar \phi * \nabla^2 N * \bar m (t) \phi, \\
BL_2^2 =& \bar m (t) \nabla^2 (\bar m (t) \phi) * \nabla N *  \nabla \Tbar \phi + \bar m (t) \nabla^2 \Tbar \phi * \nabla N *  \nabla (\bar m(t) \phi).
\end{split}
\end{equation*}
Now, in additional to the enhanced $E_0(\phi, t)$ \eqref {ineq-energy-0-order}, $\mathcal{E}_2(\hat N, t)$ \eqref{N-improved-estimate-source}, and $\|\bar m (t) \phi\|_{L^\infty}$, $\|\Tbar \phi\|_{L^\infty}$ \eqref{improved-L-infty-kg-1}, we should also utilize the newly improved $E_1(\phi, t)$.

For $BL_2^1$,  we follow the estimate for of $BL_1$ \eqref{ineq-energy-1-order-bl},
\begin{equation}\label{ineq-energy-2-order-bl-1}
\begin{split}
\int_{\Sigma_t} | BL_2^1| 
  \lesssim & t^{-\frac{1}{2}} \varepsilon^2 M I_4(\phi)  E_2^{\frac{1}{2}}(\phi,t) \lesssim   \varepsilon^2 M^2 E_2 (\phi, t) +  \varepsilon^2 I_4^2 (\phi) t^{-1}.
  \end{split}
\end{equation}

For $BL_2^2$, we apply $L^2, L^4, L^4$ to the three factors. By Sobolev inequalities, $\|\nabla \D \phi \|_{L^4(\Sigma_t)} \lesssim \mathcal{E}_2^\frac{1}{2} (\phi, t)$. Note that $\mathcal{E}^\frac{1}{2}_2 (\phi, t) \lesssim \mathcal{E}^\frac{1}{2}_1(\phi, t) + E^\frac{1}{2}_2(\phi, t)$, and $\mathcal{E}^\frac{1}{2}_1(\phi, t)$ has been improved via Step I. Then, $BL_2^2$ is indeed a linear term,
\begin{equation}\label{ineq-energy-2-order-bl-2}
\begin{split}
\int_{\Sigma_t}  | BL_2^2|  \lesssim & t  \| \nabla N \|_{L^4(\Sigma_t)}  \|\nabla \Tbar  \phi \|_{L^4(\Sigma_t)} \|\nabla^2 \left( \bar m (t) \phi \right) \|_{L^2(\Sigma_t)}  \\
&+ t  \| \nabla N \|_{L^4(\Sigma_t)}  \| \nabla \left( \bar m (t)  \phi \right) \|_{L^4(\Sigma_t)} \| \nabla^2 \Tbar \phi\|_{L^2(\Sigma_t)} \\
 &\lesssim  t  \mathcal{E}^{\frac{1}{2}}_2 (\hat N, t) \left( \mathcal{E}^\frac{1}{2}_1(\phi, t) + E^\frac{1}{2}_2(\phi, t) \right)  E_2^{\frac{1}{2}}(\phi,t)  \\
 & \lesssim  \varepsilon I_4(\phi) \left(  \left( \varepsilon I_4(\phi) +\varepsilon^\frac{3}{2} M^\frac{3}{2} \right)  t^{-\frac{1}{2} + C_M \varepsilon} + E^\frac{1}{2}_2(\phi, t)\right) E^\frac{1}{2}_2 (\phi,t)  \\
 & \lesssim \varepsilon^2 I_4^2(\phi) t^{-1+C_M \varepsilon} +   \left( \varepsilon^3 M^3  +  \varepsilon I_4(\phi)  \right) E_2(\phi,t).
\end{split}
\end{equation}

With \eqref{ineq-energy-2-order-bl-1} and \eqref{ineq-energy-2-order-bl-2}, the energy inequality \eqref{ineq-energy-diff-l-order-1} with $l=2$ becomes
\begin{equation}\label{ineq-energy-2-order-2}
\begin{split}
\dtau E_2(\phi, t) + E_2(\phi, t) \lesssim &  \left( t^{-1} + \varepsilon  M \right)   E_2(\phi,t)  \\
&+\varepsilon^2 I_4^2(\phi) t^{-1+C_M \varepsilon} + \varepsilon^3 M^3  t^{-2+3\delta}.
\end{split}
\end{equation}
Change to $t$,
\begin{equation*}
\begin{split}
\dt (t E_2(\phi, t))  \lesssim &  \left( t^{-2} + \varepsilon  Mt^{-1} \right)  (t E_2(\phi,t) ) \\
&+\varepsilon^2 I_4^2(\phi) t^{-1+C_M \varepsilon} + \varepsilon^3 M^3  t^{-2+3\delta}.
\end{split}
\end{equation*}
And the Gr\"{o}nwall inequality yields \eqref{improved-energy-l-order-1} with $l=2.$

{\bf Setp III: the improvement for $\hat N$}. In general, the refinement for $\mathcal{E}_{l}(\phi, t)$ leads to improvement for $\mathcal{E}_{l+2}(\hat N, t).$
We substitute the previously improved $\mathcal{E}_2(\phi, t)$ and $\|\bar m (t) \phi\|_{L^\infty}$, $\|\Tbar \phi\|_{L^\infty}$ \eqref{improved-L-infty-kg-1} into the corresponding inequality for $\hat N$ \eqref{lemma-N-inte-ineq-nabla3-source}-\eqref{lemma-N-inte-ineq-nabla5-source} ($l=1$ corresponds to \eqref{lemma-N-inte-ineq-nabla3-source}, $l=2$ corresponds to \eqref{lemma-N-inte-ineq-nabla4-source}). Then,
 \begin{equation*}
t^2 E_{l+2}(\hat N, t) \lesssim  \varepsilon^3 M^3 t^{-2 + 4\delta} + \varepsilon^2 M^2 \left( \varepsilon^2 I^2_4(\phi) + \varepsilon^3 M^3 \right) t^{C_M \varepsilon}, \, l \leq 2,
\end{equation*}
where $0<\delta < \frac{1}{6}.$ Hence \eqref{N-improved-estimate-source-to-4} follows.

{\bf Step IV: the improvement for $E_3(\phi, t)$}. We make use of the refined $E_l(\phi, t), l \leq 2$, $\mathcal{E}_3(\hat N, t)$ \eqref{N-improved-estimate-source}, \eqref{N-improved-estimate-source-to-4}, and $\|\bar m (t) \phi\|_{L^\infty}$, $\|\Tbar \phi\|_{L^\infty}$ \eqref{improved-L-infty-kg-1} to linearize the borderline term $BL_3$. Note that $BL_3=BL_3^1 + BL_3^2+BL_3^3$, where $BL_3^i,\, i=1,2,3$ are shown in \eqref{def-BLl-1234}.

For $BL_3^1$, we follow the argument for $BL_1$ \eqref{ineq-energy-1-order-bl} and $BL_2^1$ \eqref{ineq-energy-2-order-bl-1} to derive,
\begin{equation}\label{ineq-energy-3-order-bl-1}
\begin{split}
\int_{\Sigma_t} |BL_3^1| 
  \lesssim &  t^{-\frac{1}{2}} \varepsilon M \left( \varepsilon I_4(\phi) +\varepsilon^\frac{3}{2} M^\frac{3}{2}  \right)  t^{C_M \varepsilon} E_3^{\frac{1}{2}}(\phi,t)\\ \lesssim  &  \varepsilon^2 M^2 E_3 (\phi, t) + \left( \varepsilon^2 I^2_4(\phi) +\varepsilon^3 M^3 \right) t^{-1+C_M \varepsilon}.
  \end{split}
\end{equation}
For $BL_3^2$, we apply $L^2, L^4, L^4$ on the three factors, then
\begin{equation}\label{ineq-energy-3-order-bl-2}
\begin{split}
 \int_{\Sigma_t}  | BL_3^2| &\lesssim  t  \| \nabla^2 N \|_{L^4(\Sigma_t)}  \| \nabla \Tbar \phi \|_{L^4(\Sigma_t)} \| \nabla^3 \left( \bar m (t) \phi \right)) \|_{L^2(\Sigma_t)}  \\
&+ t \| \nabla^2 N \|_{L^4(\Sigma_t)}  \|\nabla \left( \bar m (t)  \phi \right) \|_{L^4(\Sigma_t)} \|\nabla^3 \Tbar \phi\|_{L^2(\Sigma_t)}  \\
 &\lesssim  t  \mathcal{E}^{\frac{1}{2}}_3 (\hat N, t)  \mathcal{E}^\frac{1}{2}_2(\phi, t)   E_3^{\frac{1}{2}}(\phi,t)  \\
 & \lesssim  t^{-\frac{1}{2} }  \varepsilon M  \left( \varepsilon I_4(\phi) +\varepsilon^\frac{3}{2} M^\frac{3}{2} \right)^2  t^{C_M \varepsilon}   E^\frac{1}{2}_3 (\phi,t)  \\
 & \lesssim \left( \varepsilon^2 I^2_4(\phi) +\varepsilon^3 M^3 \right)  t^{-1+C_M \varepsilon} +   \left( \varepsilon^3 M^3  +  \varepsilon^2 I^2_4(\phi)  \right) E_3(\phi,t).
\end{split}
\end{equation}
For $BL_3^3$, an analogous argument for $BL_2^2$ \eqref{ineq-energy-2-order-bl-2} can be adapted here, noting that $\mathcal{E}^\frac{1}{2}_3 (\phi, t) \lesssim \mathcal{E}^\frac{1}{2}_2(\phi, t) + E^\frac{1}{2}_3(\phi, t)$, and it leads to
\begin{equation}\label{ineq-energy-3-order-bl-3}
\begin{split}
 \int_{\Sigma_t}   |BL_3^3| 
\lesssim &  \varepsilon^2 I_4^2(\phi) t^{-1+C_M \varepsilon} +   \left( \varepsilon^3 M^3  +  \varepsilon^2 I^2_4(\phi)  \right) E_3(\phi,t).
\end{split}
\end{equation}

Combining the above estimates for $BL_3$ and  the energy inequality \eqref{ineq-energy-diff-l-order-1} with $l=3$ yields
\begin{equation}\label{ineq-energy-3-order}
\begin{split}
 \dtau E_3(\phi, t) + E_3(\phi, t)  \lesssim&  \left( t^{-1} + \varepsilon  M \right)   E_3(\phi,t)  \\
&+  \left( \varepsilon^2 I^2_4(\phi) +\varepsilon^3 M^3 \right) t^{-1+C_M \varepsilon}.
  \end{split}
\end{equation}
Then \eqref{improved-energy-l-order-1} with $l=3$ follows by Gr\"{o}nwall inequality.

{\bf Step V: the improvement for $E_4(\phi, t)$}.  Comparing to Step III, we now additionally take advantage of the previously improved $E_3(\phi, t)$ and $E_4(\hat N, t)$ to estimate the borderline term $BL_4$. It follows from \eqref{def-BLl-1234} that $BL_4=BL_4^1 + BL_4^2+BL_4^3+BL_4^4$.

$BL_4^1$ can be argued in the same way as $BL_1$ \eqref{ineq-energy-1-order-bl},  $BL_2^1$ \eqref{ineq-energy-2-order-bl-1}, and $BL_3^1$ \eqref{ineq-energy-3-order-bl-1},
\begin{equation}\label{ineq-energy-4-order-bl-1}
\begin{split}
 \int_{\Sigma_t}  |BL_4^1|
  \lesssim & \varepsilon^2 M^2 E_4 (\phi, t) + \left( \varepsilon^2 I^2_4(\phi) +\varepsilon^3 M^3 \right) t^{-1+C_M \varepsilon}.
  \end{split}
\end{equation}
Analogous to $BL_3^2$ \eqref{ineq-energy-3-order-bl-2}, $BL_4^2$ and $BL_4^3$ share the following estimates:
\begin{equation}\label{ineq-energy-4-order-bl-2-3}
\begin{split}
 \int_{\Sigma_t} |BL_4^2| + |BL_4^2| 
  \lesssim&  \left( \varepsilon^2 I^2_4(\phi) +\varepsilon^3 M^3 \right)  t^{-1+C_M \varepsilon} \\
  & +   \left( \varepsilon^3 M^3  +  \varepsilon^2 I^2_4(\phi)  \right) E_4(\phi,t).
\end{split}
\end{equation}
Finally,  following the argument for $BL_2^2$ \eqref{ineq-energy-2-order-bl-2}, $BL_3^3$ \eqref{ineq-energy-3-order-bl-3}, and noting that $\mathcal{E}^\frac{1}{2}_4 (\phi, t) \lesssim \mathcal{E}^\frac{1}{2}_3(\phi, t) + E^\frac{1}{2}_4(\phi, t)$, we obtain for $BL_4^4$,
\begin{equation}\label{ineq-energy-4-order-bl-4}
\begin{split}
 \int_{\Sigma_t}  |BL_4^4|  
\lesssim & \varepsilon^2 I_4^2(\phi) t^{-1+C_M \varepsilon} +   \left( \varepsilon^3 M^3  +  \varepsilon^2 I^2_4(\phi)  \right) E_4(\phi,t).
\end{split}
\end{equation}
Summarizing the above estimates, we derive
\begin{equation}\label{ineq-energy-diff-4-order}
\begin{split}
 \dtau E_4(\phi, t) + E_4(\phi, t)  \lesssim&  \left( t^{-1} + \varepsilon  M \right)   E_4(\phi,t)  \\
&+  \left( \varepsilon^2 I^2_4(\phi) +\varepsilon^3 M^3 \right) t^{-1+C_M \varepsilon}.
  \end{split}
\end{equation} 
Making use of the Gr\"{o}nwall inequality, we accomplish the proof for \eqref{improved-energy-l-order-1} with $l=4.$
\end{proof}

\section{Energy estimate for Weyl field}\label{sec-energy-Bianchi}
With improved energy for KG field \eqref{eq-improved-energy-0-order}, \eqref{improved-energy-l-order-1}, we are ready to improve energy estimates for Weyl tensor.
\begin{lemma}\label{prop-energy-estimate-Bianchi}
For some constant $C_M$ depending linearly on $M$, there is,
\begin{equation}\label{eq-energy-estimate-Bianchi-l-leq-2}
t^2 E_l(W,t) \lesssim  \varepsilon^2 I^2_{l}(W)+  \varepsilon^3 M^3, \quad 0 \leq l \leq 2.
\end{equation}
\begin{equation}\label{eq-energy-estimate-Bianchi-l-3}
t^2 E_3(W,t) \lesssim \left( \varepsilon^2 I^2_{4}(\phi) + \varepsilon^2 I^2_{3}(W)+  \varepsilon^3 M^3 \right)t^{C_M \varepsilon}.
\end{equation}
\end{lemma}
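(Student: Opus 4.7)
The plan is to integrate the Bianchi energy identities (Theorem \ref{lemma-energy-id-1-Bianchi} for $l=0$ and Corollary \ref{coro-energy-id-l-Bianchi} for $1\leq l \leq 3$), rewritten via $\dtau = t\dt$ as
\begin{equation*}
\dt\bigl(t^2 E_l(W,t)\bigr) = t\int_{\Sigma_t}\bigl({}_lLW_1 + {}_lLW_2 + {}_lLW_3 + S_l\bigr)\di\mu_g,
\end{equation*}
and to bound the right-hand side by a sum of an absolutely integrable-in-$t$ quantity and, at the top order $l=3$ only, a linear-in-$(t^2E_3(W,t))^{1/2}$ term with small $\varepsilon$-dependent coefficient. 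The bound \eqref{eq-energy-estimate-Bianchi-l-leq-2} then follows by direct integration while \eqref{eq-energy-estimate-Bianchi-l-3} follows from a square-root Gr\"onwall argument.

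For the lower order terms ${}_lLW_1,{}_lLW_2,{}_lLW_3$ defined in \eqref{def-LB-1-2-S}, I would distribute Sobolev $L^\infty,L^4,L^6$ inequalities (Section \ref{sec-Sobolev}) together with the refined lapse estimates of Lemma \ref{lem-N-ineq-source} and Corollary \ref{lem-N-ineq-source-improved-2}. Each term is at least cubic in the small quantities $\hat N,\nabla\hat N,\hat k,\bar R_{imjn}$; extracting one bootstrap factor of $\varepsilon M t^{-1+\delta}$ and pairing the remaining two factors by Cauchy--Schwarz with $E_l^{1/2}(W,t)$ yields $t\int_{\Sigma_t}|{}_lLW_i|\di\mu_g \lesssim \varepsilon M t^{-2+\delta}\bigl(t^2 E_l(W,t)\bigr) + \varepsilon^3 M^3 t^{-2+C_M\varepsilon}$, which is harmless under Gr\"onwall on $[t_0,t']$.

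For the source term $S_l$, the crucial input is the structural expansion of $J_{i\Tbar j}$. Substituting $\bar R_{\mu\nu} = D_\mu\phi D_\nu\phi + \tfrac{1}{2}m^2\phi^2\bar g_{\mu\nu}$ into \eqref{def-J} and using the KG equation to trade any $\Tbar^2\phi$ appearing through $D_\Tbar\bar R$ for spatial derivatives, together with the Gauss identity $D_jD_i\phi = \nabla_j\nabla_i\phi + k_{ij}\Tbar\phi$, produces the schematic decomposition
$J_{i\Tbar j} \sim \nabla\Tbar\phi\cdot\nabla\phi + \nabla^2\phi\cdot\Tbar\phi + \bar m^2(t)\phi\,\Tbar\phi\,g_{ij} + (\text{lower order})$, and analogously for $J^*_{i\Tbar j}$. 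The pure-trace piece $\bar m^2(t)\phi\Tbar\phi\,g_{ij}$ disappears upon contraction with $\nabla^{I_l}E^{ij}$ and $\nabla^{I_l}H^{ij}$ since $\nabla g = 0$ and $g_{ij}E^{ij} = g_{ij}H^{ij} = 0$. For $l\leq 2$, distributing $\nabla^{I_l}$ and pairing the factors by $L^\infty$--$L^2$ using Corollary \ref{coro-Improved L-infty of Klein Gordon field} and Lemmas \ref{lemma-improved-0-energy-KG}--\ref{lemma-improved-l-energy-KG} yields $\int_{\Sigma_t}|S_l|\di\mu_g \lesssim \varepsilon^3 M^3 t^{-2+C_M\varepsilon}E_l^{1/2}(W,t)$, from which \eqref{eq-energy-estimate-Bianchi-l-leq-2} follows.

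The main obstacle is the top-order case $l=3$. Distributing $\nabla^{I_3}$ onto $\nabla^2\phi\cdot\Tbar\phi$ or $\nabla\Tbar\phi\cdot\nabla\phi$ generates the borderline combinations flagged in the introduction, such as $\nabla^5\phi\cdot\Tbar\phi$, in which essentially all five spatial derivatives are concentrated on a single KG factor; the factor $\|\nabla^5\phi\|_{L^2}$ is controlled only through $\mathcal{E}_4^{1/2}(\phi,t)$, which itself carries the $t^{C_M\varepsilon/2}$ loss from Lemma \ref{lemma-improved-l-energy-KG}. Combining $\|\Tbar\phi\|_{L^\infty} + \|\bar m(t)\phi\|_{L^\infty} \lesssim (\varepsilon I_4 + \varepsilon M)t^{-1/2}$ with $\mathcal{E}_4^{1/2}(\phi,t)\lesssim (\varepsilon I_4 + \varepsilon^{3/2}M^{3/2})t^{-1/2+C_M\varepsilon/2}$ bounds the worst source contribution by
\begin{equation*}
t\int_{\Sigma_t}|S_3|\di\mu_g \lesssim (\varepsilon I_4 + \varepsilon M)(\varepsilon I_4 + \varepsilon^{3/2}M^{3/2})\,t^{-1+C_M\varepsilon/2}\bigl(t^2 E_3(W,t)\bigr)^{1/2}.
\end{equation*}
Setting $Y = (t^2 E_3(W,t))^{1/2}$ and combining with the contributions from ${}_lLW_i$ produces $Y' \lesssim \varepsilon M t^{-2+\delta}Y + C\varepsilon^2 t^{-1+C_M\varepsilon/2}$ with $C$ depending on $I_4$ and $M$; Gr\"onwall integration and squaring then yield exactly the stated $t^{C_M\varepsilon}$ growth in \eqref{eq-energy-estimate-Bianchi-l-3}, closing the hierarchy argument begun in Section \ref{sec-0-ee-kg}.
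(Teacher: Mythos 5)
Your proposal is correct and follows essentially the same route as the paper: the $1+3$ Bianchi energy identities, Sobolev/bootstrap bounds for the cubic error terms, the structural decomposition of $J_{i \Tbar j}$, $J^*_{i \Tbar j}$ in which the trace pieces die against the trace-free $\nabla^{I_l}E$, $\nabla^{I_l}H$ and no $\Tbar^2\phi$ survives, and the linearization of the top-order borderline term $\nabla^4\nabla\phi * \Tbar\phi * \nabla^3 E$ via the improved $\|\Tbar\phi\|_{L^\infty}$ and $\mathcal{E}_4(\phi,t)$, followed by Gr\"onwall giving no growth for $l\leq 2$ and $t^{C_M\varepsilon}$ growth for $l=3$. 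The only cosmetic differences are that the paper splits the borderline by Cauchy--Schwarz into $\varepsilon^2 M^2 E_3(W,t) + t^{-1}\mathcal{E}_4(\phi,t)$ instead of running Gr\"onwall on $\bigl(t^2E_3(W,t)\bigr)^{1/2}$, and it never needs the KG equation to remove $\Tbar^2\phi$, since that term only enters through the pure-trace pieces which you also discard by trace-freeness.
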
 

We start with the following Corollary which will be proved later in subsections \ref{sec-estimate-error-bianchi} and subsection \ref{sec-source-bianchi}.
\begin{corollary}\label{coro-estimate-nonlinear-error-source-Bianchi}
For the nonlinear terms ${}_lLW_1, \, {}_lLW_2$ and source term $S_l$ in the Bianchi identities \eqref{energy-id-0-bianchi} (Lemma \ref{lemma-energy-id-1-Bianchi}) and  \eqref{energy-id-l-bianchi} with $l \leq 3$ (Corollary \ref{coro-energy-id-l-Bianchi}), we have
\begin{equation}\label{estimate-LW-1-2-Sl-leq-2}
\begin{split}
& \int_{\Sigma_t}  \sum_{l=0}^3 | {}_lLW_1| + |{}_lLW_2|  +  \sum_{l=0}^2 |S_l| \\
\lesssim & \mathcal{E}^{\frac{3}{2}}_3(g, t) +\left(  t^{-1} \mathcal{E}^{\frac{1}{2}}_3 (g, t) + \mathcal{E}_3 (g, t)\right) \mathcal{E}_4 (\phi, t).
\end{split}
\end{equation}
And the top order source term $S_3$ admits
\begin{equation}\label{estimate-S3}
\begin{split}
&\int_{\Sigma_t}   |S_{3}| \lesssim \varepsilon^2 M^2  E_3(W, t) + t^{-1} \mathcal{E}_4 (\phi, t).
\end{split}
\end{equation}
\end{corollary}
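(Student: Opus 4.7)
The proof proceeds by a detailed term-by-term analysis, exploiting the schematic structure of each error, the Sobolev inequalities of Section \ref{sec-Sobolev}, and the lapse estimates from Lemma \ref{lem-N-ineq-source} and Corollary \ref{lem-N-ineq-source-improved-2}. The energy norm $\mathcal{E}_3(g,t)$ controls $\mathcal{E}_3(W,t), \mathcal{E}_4(\hat k,t), \mathcal{E}_5(\hat N,t)$, so the available $L^2$, $L^4$, $L^6$, $L^\infty$ bounds on $\nabla^i W$ ($i \leq 3$), $\nabla^i \hat k$ ($i \leq 4$), $\nabla^i \hat N$ ($i \leq 5$), and $\nabla^i \D\phi$ ($i \leq 4$) will suffice for all derivative counts encountered.

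\textbf{Estimates for ${}_lLW_1$ and ${}_lLW_2$, $l\leq 3$.} These are cubic in $(W,\hat k, \hat N)$-quantities. Inspecting \eqref{def-LB-1-2-S} together with \eqref{def-commuting-K-R-T-l-tensor}, \eqref{def-eq-kg-N-l-1+3-error}, \eqref{def-hat-k-N-A-Bianchi}, each summand has the form $\nabla^{a_1}\hat N * \nabla^{a_2}\hat k * \nabla^b W * \nabla^l W$ (possibly with $\hat k$ absent or with $\nabla\hat N$ taking the role of $\nabla(\hat N+N\hat k)$), where the total number of derivatives on the first three factors is $\leq l \leq 3$. I will split on which factor carries the most derivatives: when a single factor has $\geq 2$ derivatives, assign it $L^2$, the factor $\nabla^l W$ gets $L^2$ too (contributing $E_l(W,t)^{1/2}$), and remaining low-order factors get $L^\infty$; when the derivatives are spread (top orders $\leq 1$), use $L^4$-$L^4$-$L^\infty$-$L^2$ or $L^6$-$L^6$-$L^6$-$L^2$. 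In every case the product of the first three factors in $L^2$ is bounded by $\mathcal{E}_3(g,t)$, yielding a contribution $\lesssim \mathcal{E}_3^{3/2}(g,t)$. This handles the first bulk term on the right of \eqref{estimate-LW-1-2-Sl-leq-2}.

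\textbf{Estimates for $S_l$, $l\leq 2$.} Using \eqref{def-J} together with the EKG Ricci relation $\bar R_{\mu\nu}=D_\mu\phi D_\nu\phi+\tfrac{m^2}{2}\bar g_{\mu\nu}\phi^2$, one has schematically $J_{i\Tbar j}\sim \D\phi *\D\D\phi$ plus mass terms $\bar m^2(t)\phi*\D\phi$. Applying $\nabla_{I_l}$ and Leibniz distributes up to $l+2$ derivatives over two $\phi$-factors, i.e. at most $4$ derivatives for $l\leq 2$, comfortably inside $\mathcal{E}_4(\phi,t)$; the leading factor $N$ and its derivatives are absorbed by Lemma \ref{lem-N-ineq-source}. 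Pairing with $\nabla^l W$ via Cauchy--Schwarz and distributing derivatives by Hölder (with the same $L^2$-$L^\infty$ versus $L^4$-$L^4$ scheme as above), using the structural cancellation \eqref{stru-source-J-bianchi-0} to avoid the unavailable $\Tbar^2\phi$, one obtains the bound $\lesssim t^{-1}\mathcal{E}_3^{1/2}(g,t)\mathcal{E}_4(\phi,t)$, matching the second group on the right of \eqref{estimate-LW-1-2-Sl-leq-2} (the $t^{-1}$ accounts for the one mass factor $\bar m(t)^{\pm}$ or $t^{-1}$ conversion between $\phi$ and $\bar m(t)\phi$).

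\textbf{The borderline source term $S_3$.} The main obstacle is that a naive derivative count in $\nabla_{I_3}(NJ_{i\Tbar j})$ would require a $\nabla^5\phi$ estimate, exceeding our fourth-order budget $\mathcal{E}_4(\phi,t)$. The remedy, analogous to the cancellation already exploited in \eqref{stru-source-J-bianchi-0}-\eqref{stru-source-AJ-bianchi-0-1}, is that whenever two $\Tbar$-derivatives of $\phi$ would appear they can be substituted via the KG equation \eqref{eq-rescale-kg-1+3-0}: $\Tbar^2\phi=N(\Delta\phi-\bar m^2(t)\phi)+N^{-1}\nabla N\cdot\nabla\phi+\text{l.o.t.}$, so that $\nabla^3(NJ_{i\Tbar j})$ is reducible to a sum of products $\nabla^a\D\phi*\nabla^b\D\phi$ with $a+b\leq 4$, plus lower order terms involving $\hat N$ already controlled by Lemma \ref{lem-N-ineq-source}. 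I then apply Cauchy--Schwarz in the pairing with $\nabla^{I_3}W$ followed by Young's inequality with weight $\varepsilon M$:
\begin{equation*}
\int_{\Sigma_t}|\nabla^{I_3}(NJ)\nabla^{I_3}W|\leq \varepsilon^2 M^2 E_3(W,t)+(\varepsilon M)^{-2}\|\nabla^{I_3}(NJ)\|_{L^2(\Sigma_t)}^2.
\end{equation*}
For the remaining $L^2$ norm, typical summands $\nabla^{a}\D\phi*\nabla^{b}\D\phi$ with $a\leq 2$, $b=4$ are estimated as $\|\nabla^a\D\phi\|_{L^\infty}\|\nabla^4\D\phi\|_{L^2}\lesssim \varepsilon M t^{-1/2+\delta}\mathcal{E}_4^{1/2}(\phi,t)$, while spread cases use $L^4$-$L^4$. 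Squaring and dividing by $\varepsilon^2 M^2$ yields $\lesssim t^{-1}\mathcal{E}_4(\phi,t)$ after exploiting the factor of $\|\D\phi\|_{L^\infty}$ that $J$ provides, which gives \eqref{estimate-S3}. The main difficulty throughout is precisely this book-keeping at the top order, ensuring the KG-substitution is used before the derivative count exceeds four.
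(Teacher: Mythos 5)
Your treatment of ${}_lLW_1$, ${}_lLW_2$ and of $S_l$ for $l\leq 2$ follows essentially the paper's route (Hölder splittings $L^2$--$L^\infty$, $L^4$--$L^4$, $L^6$--$L^6$--$L^6$ against $\nabla^{I_l}W$ in $L^2$), modulo some schematic sloppiness: ${}_lLW_2$ involves $\hat R_{imjn}$, which contains $E$ and $|\D\phi|^2$, and $J_{i\Tbar j}$ also contains $k*\nabla\phi*\nabla\phi$ and $(\Tbar\phi)^2*\hat k$ pieces; these are what generate the $\mathcal{E}_3(g,t)\,\mathcal{E}_4(\phi,t)$ contributions on the right of \eqref{estimate-LW-1-2-Sl-leq-2}, which your accounting omits but which are covered by the stated bound. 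That part is repairable bookkeeping.

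The genuine gap is in your treatment of $S_3$. First, your diagnosis of the top-order obstruction is off: $\Tbar^2\phi$ never appears in $S_l$ at any order, because of the trace-free/$T$-tangency cancellation \eqref{stru-source-J-bianchi-0}--\eqref{stru-source-AJ-bianchi-0-1}, so no substitution via the KG equation is needed (and it would not help); moreover $\|\nabla^5\phi\|_{L^2}=\|\nabla^4\nabla\phi\|_{L^2}$ is already inside $E_4(\phi,t)$, so there is no derivative-count overflow. The real issue is that at top order the summand $N\nabla^3(\nabla^2\phi)*\Tbar\phi*\nabla^{I_3}E$ cannot be rewritten as $t^{-1}\nabla^5(\bar m(t)\phi)*\cdots$ (that would require $E_5$), so the usual $t^{-1}$ gain is lost and the term is borderline. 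Second, your quantitative chain does not close: with the bootstrap bound $\|\D\phi\|_{L^\infty}\lesssim\varepsilon M t^{-1/2+\delta}$ you get $(\varepsilon M)^{-2}\bigl(\varepsilon M t^{-1/2+\delta}\bigr)^2\mathcal{E}_4(\phi,t)= t^{-1+2\delta}\mathcal{E}_4(\phi,t)$, not $t^{-1}\mathcal{E}_4(\phi,t)$, and the concluding remark about ``exploiting the factor of $\|\D\phi\|_{L^\infty}$'' does not remove the $t^{2\delta}$. This loss is fatal downstream: inserting $t^{-1+2\delta}\mathcal{E}_4(\phi,t)$ into the $E_3(W)$ inequality yields $t^2E_3(W,t)\lesssim \varepsilon^2 t^{2\delta+C_M\varepsilon}$, which does not improve the bootstrap assumption \eqref{BT-E-H}. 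The paper's point at precisely this term (see \eqref{estimate-Sl11-2-1}) is to use the improved $\|\Tbar\phi\|_{L^\infty}\lesssim \varepsilon M t^{-1/2}$ of Corollary \ref{coro-Improved L-infty of Klein Gordon field} together with the already improved $E_4(\phi,t)$ of Lemma \ref{lemma-improved-l-energy-KG}, giving $\|\nabla^4\nabla\phi\|_{L^2}\,\|\Tbar\phi\|_{L^\infty}\,\|\nabla^3E\|_{L^2}\lesssim E_4^{1/2}(\phi,t)\cdot\varepsilon M t^{-1/2}\cdot E_3^{1/2}(W,t)$ and then Young's inequality, which is \eqref{estimate-S3} with no $\delta$ loss. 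Your Young-inequality step is fine once the sharp $L^\infty$ input replaces the lossy bootstrap one; as written, the proof of \eqref{estimate-S3} does not go through.
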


With the help of Corollary \ref{coro-estimate-nonlinear-error-source-Bianchi}, we now prove Lemma \ref{prop-energy-estimate-Bianchi}.
\begin{proof}[Proof of Lemma \ref{prop-energy-estimate-Bianchi}]
Substituting the results for $S_l, {}_lLW_1, {}_lLW_2 (0 \leq l \leq 2)$ \eqref{estimate-LW-1-2-Sl-leq-2} and the bootstrap assumption \eqref{BT-KG}-\eqref{BT-k-N-L-infty}, and estimates for $\mathcal{E}_i(\hat N, t)$ Lemma \ref{lem-N-ineq-source}, into the $l^{\text{th}}$ order energy identities \eqref{energy-id-l-bianchi}, we derive for $0 \leq l \leq 2,$
\begin{equation}\label{eq-energy-inequality-Bianchi-leq-2}
\begin{split}
 \dtau E_l(W,t) + 2 E_l(W, t)  \lesssim & \varepsilon^3 M^3 t^{-3+4\delta}, \quad 0<\delta < \frac{1}{6}.
 \end{split}
\end{equation}
We then change to the time function $t$  and  multiplying $t$ on both hand side of \eqref{eq-energy-inequality-Bianchi-leq-2} to obtain
\begin{equation*}
\begin{split}
 \dt (t^2 E_l(W,t))  \lesssim & \varepsilon^3 M^3 t^{-2+4\delta}, \quad 0<\delta < \frac{1}{6}.
 \end{split}
\end{equation*}
Then \eqref{eq-energy-estimate-Bianchi-l-leq-2} with $0 \leq l \leq 2$ follows from the Gr\"{o}nwall inequality.
Substituting the results for $S_3, {}_3LW_1, {}_3LW_2$ \eqref{estimate-LW-1-2-Sl-leq-2}, \eqref{estimate-S3} and the improved estimates for $\mathcal{E}_4 (\phi, t)$ (Lemma \ref{lemma-improved-l-energy-KG}) into the energy identity \eqref{energy-id-l-bianchi} with $l= 3$, we have
\begin{equation}\label{eq-energy-inequality-Bianchi}
\begin{split}
 \dtau E_3(W,t)  + 2 E_3(W, t) \lesssim &  \varepsilon M E_3(W, t)  \\
 &  +   \left( \varepsilon^2 I^2_{4}(\phi) +  \varepsilon^3 M^3 \right)t^{-2+C_M \varepsilon}.
 \end{split}
\end{equation}
Changing to $t$ and multiplying $t$ on both hand side of \eqref{eq-energy-inequality-Bianchi}, we deduce 
\begin{equation*}
\begin{split}
 \dt (t^2 E_3(W,t)) \lesssim &  \varepsilon M t^{-1} (t^2 E_3(W, t))  \\
 &  +   \left( \varepsilon^2 I^2_{4}(\phi) +  \varepsilon^3 M^3 \right)t^{-1+C_M \varepsilon}.
 \end{split}
\end{equation*}
An application of the Gr\"{o}nwall inequality then yields \eqref{eq-energy-estimate-Bianchi-l-3}.
\end{proof}

We remark that the energy estimates for the Weyl field are carried out only up to three derivatives, then $l \leq 3$ throughout this section.
\subsection{The nonlinear error terms}\label{sec-estimate-error-bianchi}
In this section we present the estimates for the nonlinear error terms ${}_lLW_1, {}_lLW_2 (l \leq 3)$.
\begin{corollary}\label{coro-estimate-nonlinear-error-Bianchi}
For the nonlinear error terms ${}_lLW_1, {}_lLW_2$ ($l \leq 3$), we have
\begin{equation}\label{estimate-LW-1-2}
\int_{\Sigma_t} |{}_0LW| + |{}_lLW_1| + |{}_lLW_2|  \lesssim \mathcal{E}^{\frac{3}{2}}_3(g, t) +   \mathcal{E}_3(g, t)  \mathcal{E}_3(\phi, t). 
\end{equation}
\end{corollary}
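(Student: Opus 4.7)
The plan is to treat each of the three error tensors ${}_0LW$, ${}_lLW_1$ and ${}_lLW_2$ (for $1\leq l\leq 3$) as a finite sum of products, and to bound each product in $L^1(\Sigma_t)$ by a Hölder distribution of $L^2$, $L^4$, $L^6$, $L^\infty$ norms, passing to the energies via the Sobolev inequalities \eqref{BT-Weyl-L-infty-4-6}, \eqref{BT-k-L-infty-4-6} and the lapse estimates of Lemma \ref{lem-N-ineq-source}. Recall that by definition $\mathcal{E}_3(g,t)$ already absorbs $\mathcal{E}_3(W,t)$, $\mathcal{E}_4(\hat k,t)$ and $\mathcal{E}_5(\hat N,t)$, so what I really have to do is arrange each product so that every factor falls under one of these three pieces (or under $\mathcal{E}_3(\phi,t)$ via the lapse relation).

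First, for ${}_0LW$ the three contributions are cubic in $(\hat N,\nabla N,\hat k;E,H)$ with exactly one low-order geometric factor and two Weyl factors. Placing $L^\infty$ on the geometric factor and $L^2\times L^2$ on the Weyl factors and using $\|\hat N\|_{L^\infty}+\|\nabla N\|_{L^\infty}+\|\hat k\|_{L^\infty}\lesssim \mathcal{E}_3^{1/2}(g,t)$ yields $\int|{}_0LW|\lesssim \mathcal{E}_3^{3/2}(g,t)$. Second, for ${}_lLW_1$ I expand
\[
{}_lLW_1=\sum_{a+b+1=l}\nabla_{I_a}\nabla(N\hat k+\hat N)*\nabla_{I_b}W*\nabla^{I_l}W
\;\pm\sum_{a+b+c=l}\nabla_{I_a}N*\nabla_{I_b}\hat k*\nabla_{I_c}W*\nabla^{I_l}W,
\]
and in both sums I case-split on which factor carries the top derivative. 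When the high derivative sits on $W$ (i.e.\ on $\nabla^{I_l}W$ and $\nabla_{I_c}W$ with $c\leq l-1$), place $L^2\times L^\infty\times L^\infty\times L^2$; when it sits on $\hat N$ or $\hat k$, place $L^2$ on that factor, $L^\infty$ on the scalar/tensor of lowest order and $L^\infty$ or $L^4$ on the intermediate factor, always checking $[\tfrac{l}{2}]+2\leq 3$ for $l\leq 3$ to guarantee Sobolev control. After summing, each cubic (or quartic) block is bounded by $\mathcal{E}_3^{3/2}(g,t)$.

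Third, for ${}_lLW_2$ I handle $-\hat N|\nabla_{I_l}W|^2$ by $\|\hat N\|_{L^\infty}E_l(W,t)\lesssim \mathcal{E}_3^{3/2}(g,t)$ and expand $\mathcal{N}_{I_a}(W)\nabla^{I_a}W$ as $\nabla_{I_b}\nabla\hat N*\nabla_{I_c}W*\nabla^{I_a}W$ with $b+c+1=a\leq 3$, applying the same Hölder schemes. The borderline subcase is $a=3$, $b=2$, $c=0$, where $\|\nabla^3\hat N\|_{L^2}\lesssim \mathcal{E}_3^{1/2}(\hat N,t)$ is controlled via \eqref{lemma-N-inte-ineq-nabla3-source}; in the very top case $b=3$, $c=0$ the factor $\|\nabla^4\hat N\|_{L^2}$ is controlled by \eqref{lemma-N-inte-ineq-nabla4-source}, which introduces exactly the mixed piece $\mathcal{E}_3(g,t)\mathcal{E}_3(\phi,t)$ that appears on the right-hand side of \eqref{estimate-LW-1-2}. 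Summing these pieces yields the stated bound.

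The main obstacle is the \emph{top-order allocation in} ${}_3LW_2$: the factor $\|\nabla^4\hat N\|_{L^2}$ cannot be absorbed purely into $\mathcal{E}_3^{1/2}(g,t)$ without using the refined lapse estimate \eqref{lemma-N-inte-ineq-nabla4-source}, which is precisely what produces the cross term $\mathcal{E}_3(g,t)\mathcal{E}_3(\phi,t)$; one must carefully separate the pure-geometry contribution $\mathcal{E}_3(\hat k,t)$ from the $\phi$-contribution $r^2(\phi,t)\mathcal{E}_2(\phi,t)$ so that the bound is recorded in the exact form stated, rather than in a cruder quadratic-in-$\mathcal{E}_4(\phi,t)$ form that would spoil the subsequent Grönwall argument in Lemma \ref{prop-energy-estimate-Bianchi}. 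Apart from this bookkeeping, the proof is a routine Hölder + Sobolev exercise once the definitions \eqref{def-commuting-K-R-T-l-tensor}, \eqref{def-eq-kg-N-l-1+3-error}, \eqref{def-hat-k-N-A-Bianchi} are unpacked.
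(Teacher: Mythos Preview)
Your H\"older/Sobolev bookkeeping for ${}_0LW$ and for the $\mathcal{K}\mathcal{N}$, $\hat{\mathcal{K}}\mathcal{N}$, $\mathcal{N}$ pieces is fine and matches the paper's scheme. But your account of where the cross term $\mathcal{E}_3(g,t)\mathcal{E}_3(\phi,t)$ comes from is wrong. In your expansion of $\mathcal{N}_{I_a}(W)\nabla^{I_a}W$ you impose $b+c+1=a\leq l\leq 3$, which forces $b\leq 2$; there is no ``very top case $b=3$'' and $\nabla^4\hat N$ never appears in any of the original ${}_lLW_1,{}_lLW_2$ terms. Even if it did, $\|\nabla^4\hat N\|_{L^2}^2\leq \mathcal{E}_5(\hat N,t)$ is already a summand of $\mathcal{E}_3(g,t)$ by definition \eqref{def-energy-mertic-rescale}, so no refined lapse estimate and no $\phi$-contribution would enter. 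In short, the original ${}_lLW_1$ and ${}_lLW_2$ from \eqref{def-LB-1-2-S} are controlled by $\mathcal{E}_3^{3/2}(g,t)$ alone.

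The genuine source of $\mathcal{E}_3(g,t)\mathcal{E}_3(\phi,t)$ is the curvature-error term ${}_lLW_3=\pm\hat{\mathcal{R}}_{I_{l-1}}(W)\nabla^{I_l}W$, which you do not treat at all. The paper's proof silently relabels: its ``${}_lLW_1$'' in \eqref{def-LB-1-re} absorbs the original ${}_lLW_1,{}_lLW_2$, while its ``${}_lLW_2$'' in \eqref{def-LB-2-re} is $N\nabla_{I_a}\hat R_{imjn}*\nabla_{I_b}W*\nabla^{I_l}W$, i.e.\ the original ${}_lLW_3$. Since $\hat R_{imjn}\sim \hat k\pm|\hat k|^2\pm E\pm|\D\phi|^2$ (cf.\ \eqref{def-hat-R} and \eqref{R-imjn-E}), the quartic piece $\nabla_{I_a}(\D\phi*\D\phi)*\nabla_{I_b}W*\nabla^{I_l}W$ with $a+b=l-1\leq 2$ is what produces, via $L^6\times L^6\times L^6\times L^2$, the bound $\mathcal{E}_3(\phi,t)\,\mathcal{E}_3^{1/2}(W,t)\,\mathcal{E}_3^{1/2}(W,t)\lesssim \mathcal{E}_3(g,t)\mathcal{E}_3(\phi,t)$. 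Without this term your proof is incomplete, and your proposed mechanism for the $\phi$-coupling is a red herring.
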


\begin{proof}
The error terms ${}_lLW_1, {}_lLW_2$ can be rearranged as ${}_lLW_1=\sum_{i=1}^3 {}_lLW_{1i},$
\begin{equation}\label{def-LB-1-re}
\begin{split}
{}_lLW_{11} &= \sum_{a+b +c=l } \nabla_{I_{a}} N *\nabla_{I_{b}} \hat k *  \nabla_{I_{c}} W * \nabla^{I_{l}} W, \\
{}_lLW_{12} & = \sum_{a+1+b =l+1 }  \nabla_{I_{a+1}} \hat N * \nabla_{I_{b}} W * \nabla^{I_{l}} W, \\
{}_lLW_{13 } &= \sum_{a+b=l } \nabla_{I_{a}} \hat N * \nabla_{I_{b}} W * \nabla^{I_{l}} W, 
\end{split}
\end{equation}
(note that, when $l=0$, the above ${}_lLW_1$ covers ${}_0LW$ \eqref{def-e-s-1-bianchi}),
and ${}_lLW_2$
\begin{equation}\label{def-LB-2-re}
\begin{split}
{}_lLW_{2} & =  \sum_{a+b =l-1} N \nabla_{I_{a}} \hat R_{imjn} * \nabla_{I_{b}} W * \nabla^{I_{l}} W.
\end{split}
\end{equation}

For ${}_lLW_{11}$, if $a=0$, noting that $[\frac{l}{2}] + 2 \leq 3 (l \leq 3)$, then we can perform $L^\infty, L^2, L^2$ or $L^2, L^\infty, L^2$ on the three factors; If $a \geq 1$, then $b,c \leq l-1$ and $a \leq l \leq 3$, we can perform $L^6, L^6, L^6, L^2$ on the four factors. Putting these two cases together, we have
\begin{equation}\label{estimate-LW11}
\int_{\Sigma_t} |{}_lLW_{11}|  \lesssim  \mathcal{E}^{\frac{3}{2}}_3(g, t)  + \mathcal{E}^{2}_3(g, t). 
\end{equation}

For ${}_lLW_{12}$ and ${}_lLW_{13}$, noting that $[\frac{l}{2}] + 2 \leq 3 (l \leq 3)$ and $[\frac{l}{2}] +1+ 2 \leq 4$ (to perform $L^\infty$ on $\nabla_{I_{a+1}} \hat N$ in ${}_lLW_{12}$), we can always perform $L^2, L^\infty, L^2$ or $L^2, L^\infty, L^2$ on the three factors in each of ${}_lLW_{12}$, ${}_lLW_{13}$ so that
\begin{equation}\label{estimate-LW1213}
\int_{\Sigma_t} |{}_lLW_{12}|  + |{}_lLW_{12}|  \lesssim  \mathcal{E}^{\frac{3}{2}}_3(g, t). 
\end{equation}

Finally, for ${}_lLW_{2}$, we recall that $\hat R_{imjn} = \hat k \pm |\hat k|^2 \pm E \pm |\D \phi|^2$. Since, $a, b \leq l-1$, we can always perform $L^4, L^4, L^2$ on the three factors (such as $\nabla_{I_a} |\hat k| * \nabla_{I_b} W * \nabla^{I_l} W$ ) or $L^6, L^6, L^6, L^2$ on the four factors ($\nabla_{I_a} |\hat k|^2 * \nabla_{I_b} W * \nabla^{I_l} W$) in ${}_lLW_{2}$. As a consequence, we derive 
\begin{equation}\label{estimate-LW1213}
\int_{\Sigma_t} |{}_lLW_{2}|  \lesssim  \mathcal{E}^{\frac{3}{2}}_3(g, t) +   \mathcal{E}_3(g, t)  \mathcal{E}_3(\phi, t).
\end{equation}

Summarize all these estimates, we obtain \eqref{estimate-LW-1-2}
\end{proof}

\subsection{The nonlinear coupling terms}\label{sec-source-bianchi}
In this section, we will prove the following estimate for the nonlinear couplings in the Bianchi equations.
\begin{corollary}\label{coro-estimate-source-Bianchi}
For the coupling terms $S_l, l \leq 3$, (see \eqref{energy-id-0-bianchi} and \eqref{energy-id-l-bianchi}), we have
\begin{equation}\label{estimate-Sl-leq-2}
 \int_{\Sigma_t}  |S_{l}|  \lesssim \left(  t^{-1} \mathcal{E}^{\frac{1}{2}}_l (g, t) + \mathcal{E}_l (g, t)\right) \mathcal{E}_4 (\phi, t), \quad 0 \leq l \leq 2,
\end{equation}
and the top order case $S_3$ admits  
\begin{equation}\label{estimate-Source-S3}
 \int_{\Sigma_t}   |S_{3}| \lesssim \varepsilon^2 M^2  E_3(W, t) + t^{-1} \mathcal{E}_4 (\phi, t).  
\end{equation}
\end{corollary}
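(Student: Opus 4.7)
The strategy is to first expose the algebraic structure of $NJ_{i\bar T j}$ and $NJ^{*}_{i\bar T j}$ coming from the definition \eqref{def-J}, \eqref{def-J-*} together with the EKG equations, and only then estimate $\nabla^{I_l}(NJ)\cdot\nabla^{I_l}W$ term by term. Plugging $\bar R_{\alpha\beta}$ from \eqref{eq-Einstein-source}--\eqref{def-energy-Mom-kg} into \eqref{def-J} and projecting onto $\{T,e_i\}$, one finds that $J_{\beta\gamma\delta}$ is a linear combination of products of (covariant) derivatives of $\phi$ that are a priori quadratic in $\D\phi\in\{\bar T\phi,\nabla\phi,\bar m(t)\phi\}$. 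The crucial point is the cancellation already advertised in \eqref{stru-source-J-bianchi-0} and \eqref{stru-source-AJ-bianchi-0-1}: after using the KG equation to eliminate $\bar T^2\phi$ in favor of $\Delta\phi$, $\bar m^2(t)\phi$ and lapse/second fundamental form terms, the components $NJ_{i\bar T j}$ and $NJ^{*}_{i\bar T j}$ admit the schematic representation
\begin{equation*}
N J_{i\bar T j},\; NJ^{*}_{i\bar T j}\;\sim\;\sum \nabla^{a}\phi*\nabla^{b}\bar T\phi+\sum\bar m^{2}(t)\,\nabla^{a}\phi*\nabla^{b}\phi+\text{l.o.t.},
\end{equation*}
with $a+b\le 2$ and absolutely no $\bar T^{2}\phi$ occurring. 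This is the input that makes the pointwise form $\nabla\bar T\phi\nabla\phi W$ and $\nabla^{2}\phi\,\bar T\phi\,W$ alluded to in the introduction the only couplings one needs to control.

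For the range $0\le l\le 2$, I would apply Leibniz to $\nabla^{I_l}(NJ_{i\bar Tj})$, writing every summand as a product of three or four factors of the form $\nabla^{p}\hat N$, $\nabla^{q}\D\phi$, $\nabla^{r}\D\phi$ and $\nabla^{I_l}W$, with $p+q+r\le l+1$. Because $l\le 2$, at most one of $q,r$ needs an $L^{2}$ estimate while the other factors take $L^{\infty}$ (using \eqref{BT-KG-L-infty-4-6}--\eqref{BT-Weyl-L-infty-4-6} and Lemma~\ref{lem-N-ineq-source}), or else all factors use $L^{4}$/$L^{6}$ via the Sobolev inequalities of Section~\ref{sec-Sobolev}. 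The two pieces of the right-hand side of \eqref{estimate-Sl-leq-2} arise naturally: the pieces where one $\phi$ factor is not differentiated contribute via the rewriting $\phi= m^{-1}t^{-1}\bar m(t)\phi$, producing the $t^{-1}\mathcal{E}_{l}^{1/2}(g,t)\mathcal{E}_{4}(\phi,t)$ part, while the remaining combinations (mass terms and those with the $\hat N$/$\hat k$ factors fully differentiated) are bounded by $\mathcal{E}_{l}(g,t)\mathcal{E}_{4}(\phi,t)$ after using the weak-bootstrap-driven $L^{\infty}$ bounds.

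The main obstacle is the top order case $l=3$, where after Leibniz one meets the borderline term
\begin{equation*}
N\,\bar T\phi\,\nabla^{5}\phi\,\nabla^{I_{3}}W
\end{equation*}
(and its analogue with $\bar m(t)\phi$ in place of $\bar T\phi$), corresponding to all three new derivatives landing on the already-differentiated $\nabla^{2}\phi$ inside $J_{i\bar Tj}$. The naive $L^{\infty}$ bound on $\bar T\phi$ carries a $\delta$ loss, so the argument from step two would not close. Instead I would invoke the improved $L^{\infty}$ estimate of Corollary~\ref{coro-Improved L-infty of Klein Gordon field}, giving $\|\bar T\phi\|_{L^\infty}+\|\bar m(t)\phi\|_{L^\infty}\lesssim \varepsilon M\,t^{-1/2}$, and then apply a weighted Cauchy--Schwarz
\begin{equation*}
\|\bar T\phi\|_{L^\infty}\|\nabla^{5}\phi\|_{L^{2}(\Sigma_{t})}\|\nabla^{I_{3}}W\|_{L^{2}(\Sigma_{t})}\le \tfrac{1}{2\varepsilon^{2}M^{2}}\|\bar T\phi\|_{L^\infty}^{2}\|\nabla^{5}\phi\|_{L^{2}(\Sigma_{t})}^{2}+\tfrac{\varepsilon^{2}M^{2}}{2}E_{3}(W,t).
\end{equation*}
The prefactor $\varepsilon^{-2}M^{-2}\|\bar T\phi\|_{L^\infty}^{2}\lesssim t^{-1}$ then turns the first summand into $t^{-1}\mathcal{E}_{4}(\phi,t)$, precisely yielding \eqref{estimate-Source-S3}. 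All non-borderline contributions to $\nabla^{I_{3}}(NJ_{i\bar Tj})\nabla^{I_{3}}W$ can be absorbed in the same two right-hand-side terms by the lower order Sobolev/$L^{\infty}$ bookkeeping of step two, together with the refined lapse estimates \eqref{lemma-N-inte-ineq-nabla3-source}--\eqref{lemma-N-inte-ineq-nabla5-source}. The main delicacy is to verify that the cancellation removing $\bar T^{2}\phi$ is preserved under the three commuted derivatives, so that no fifth-order normal derivative of $\phi$ (which the energies \eqref{intro-def-energy-l-homo-kg} do not control) ever appears.
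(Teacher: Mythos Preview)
Your overall strategy matches the paper's: expose the structure of $NJ_{i\bar T j}$, $NJ^*_{i\bar T j}$, then Leibniz plus H\"older for $l\le 2$, and treat the single borderline term at $l=3$ via the improved $L^\infty$ bound of Corollary~\ref{coro-Improved L-infty of Klein Gordon field} followed by weighted Cauchy--Schwarz. Your handling of $N\,\bar T\phi\,\nabla^5\phi\,\nabla^{I_3}W$ is exactly what the paper does in \eqref{estimate-Sl11-2-1}.

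There is, however, a real gap in how you describe the cancellation. You say one ``uses the KG equation to eliminate $\bar T^2\phi$ in favor of $\Delta\phi$, $\bar m^2(t)\phi$ \dots'', and accordingly your schematic carries terms $\bar m^{2}(t)\,\nabla^{a}\phi*\nabla^{b}\phi$. That is not the mechanism, and those mass terms are not present in $S_l$. The point (see the paragraph just before \eqref{stru-source-J-bianchi-1}) is that $E_{ij}$, $H_{ij}$ are trace-free and symmetric, and so are $\nabla^{I_l}E_{ij}$, $\nabla^{I_l}H_{ij}$; hence every $g_{ij}$-proportional piece of $J_{i\bar T j}$ --- which is precisely where both the $\bar T^2\phi$ contribution (inside $D_{\bar T}\bar R$) and all $\bar m^2(t)\phi^2$ contributions live --- drops out upon contraction. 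For $J^*$ one uses in addition the antisymmetry of $\epsilon^{pq}{}_{\!j}$ against the symmetry of $H^{ij}$ (the vanishing of $Tr_1,Tr_2$ in \eqref{stru-source-AJ-bianchi-0}). No appeal to the KG equation is made, and no $\bar m^2(t)$ survives.

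This matters, because your spurious mass terms are not estimable at top order. A term of the form $\bar m^2(t)\,\nabla^5\phi\cdot\phi\cdot\nabla^{I_3}W$ rewrites as $mt\cdot\nabla^4\nabla\phi\cdot\bar m(t)\phi\cdot\nabla^{I_3}W$; even with the improved bound $\|\bar m(t)\phi\|_{L^\infty}\lesssim\varepsilon M t^{-1/2}$ this produces a factor $t^{1/2}$ that cannot be absorbed into $\varepsilon^2M^2E_3(W,t)+t^{-1}\mathcal{E}_4(\phi,t)$. Consequently your ``analogue with $\bar m(t)\phi$ in place of $\bar T\phi$'' does not exist in the actual source; there is exactly one borderline term, and once you invoke the trace-free cancellation the remaining concern in your last sentence (normal derivatives reappearing after commutation) is automatically resolved --- the $g_{ij}$ pieces stay orthogonal to $\nabla^{I_l}E^{ij}$ for every $l$.
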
 

We postpone the proof to subsection \ref{sec-estimate-source-bianchi} and explore the structures for the couplings first.
\subsubsection{The structures of the couplings}\label{sec-stru-source-bianchi}
The source terms $S_l, \, l \leq 3$ can be split into electric and magnetic parts: $S_l=S_{l1} + S_{l2}$, where
\begin{equation}\label{def-S-re}
\begin{split}
S_{l1} = & \nabla_{I_l} \left( N J_{i \Tbar j} \right) \nabla^{I_l} E^{ij},  \\
S_{l2} = & \nabla_{I_l} \left( N J^*_{i \Tbar j} \right)  \nabla^{I_l} H^{ij}.
\end{split}
\end{equation}
 Due to the fact that $E_{ij}$ is trace free and $T$-tangent, we have
\begin{equation}\label{stru-source-J-bianchi-0}
\begin{split}
  J_{i \Tbar j} E^{ij} =& \frac{1}{2} \left( D_{\Tbar} \bar R_{ij} - D_j \bar R_{\Tbar i} \right) E^{ij} \\
  =&  \frac{1}{2} \left(D_{\Tbar} D_i \phi D_j \phi - D_i D_j \phi  \Tbar \phi \right) E^{ij} \\
  = & \left( \nabla \Tbar \phi \pm k_{i}^j * \nabla \phi \right) * \nabla \phi * E \pm \nabla^2 \phi * \Tbar \phi * E \pm (\Tbar \phi)^2 \hat k * E.
\end{split}
\end{equation}
and 
\begin{equation}\label{stru-source-AJ-bianchi-0}
\begin{split}
 J^*_{i \Tbar j} H^{ij} 
  =&   \frac{1}{4} \left( D_\mu \bar R_{\nu i} - D_\nu \bar R_{\mu i} \right) \cdot \epsilon^{\mu \nu} {}_{\! \Tbar j}  H^{ij} \\
  & \quad \quad  - \frac{1}{24} \left( g_{i\nu} D_\mu \bar R - g_{i\mu} D_\nu \bar R \right) \cdot \epsilon^{\mu \nu} {}_{\! \Tbar j}  H^{ij}\\
  =&   \frac{1}{4} \left( D_p D_i \phi  D_q \phi - D_q D_i \phi  D_p \phi \right) \epsilon^{pq} {}_{\! j}  H^{ij} +Tr_1 +Tr_2, 
  \end{split}
\end{equation}
where
  \begin{equation*}
\begin{split}
Tr_1 &= \frac{1}{4} \bar m^2(t) \phi D_p \phi \epsilon^{p} {}_{\! ij}  H^{ij} -  \frac{1}{4} \bar m^2(t) \phi D_q \phi  \epsilon_i{}^{ q} {}_{\! j}  H^{ij}, \\
Tr_2 &=   - \frac{1}{24} D_p \bar R  \epsilon^{p} {}_{\! ij}  H^{ij} +  \frac{1}{24} D_q \bar R  \epsilon_i{}^{q} {}_{\! j}  H^{ij}.
\end{split}
\end{equation*}
Notice that, $\epsilon^{p} {}_{\! ij}, \epsilon_i{}^{\! q} {}_{\! j}$ are both antisymmetric in $i$ and $j$, while $H_{ij}$ is symmetric in $i,j.$ Hence, $Tr_1=Tr_2=0$. With the same reason, $ J^*_{i \Tbar j} H^{ij}$ can be further reduced as
\begin{equation}\label{stru-source-AJ-bianchi-0-1}
\begin{split}
  J^*_{i \Tbar j} H^{ij} =& \frac{1}{4} \left( D_p D_i \phi  D_q \phi - D_q D_i \phi  D_p \phi \right)  \epsilon^{pq} {}_{\! j}  H^{ij} \\
= & \left( \nabla^2 \phi \pm \hat k * \Tbar \phi \right) * \nabla \phi * H.
\end{split}
\end{equation}

The fact that $E_{ij}$ is trace free and $T$-tangent can be generalized to the case of high order derivatives. Obviously, $\nabla_{I_l} E_{i j}$, which is projected to $\Sigma_t$, is $T-$tangent.
In addition to this, we can derive that $ g^{ij}\nabla_{I_l} E_{ij} =0$. Hence, the results of \eqref{stru-source-J-bianchi-0} and \eqref{stru-source-AJ-bianchi-0-1} can be generalized as
\begin{equation}\label{stru-source-J-bianchi-1}
\begin{split}
 & \nabla_{I_l} \left( NJ_{i \Tbar j} \right) \nabla^{I_l} E^{ij} \\
=&\pm \nabla_{I_l} \left(N \nabla \Tbar \phi * \nabla \phi \pm  N \nabla^2 \phi * \Tbar \phi \right) * \nabla^{I_l} E\\
&\pm  \nabla_{I_l} \left( N\left(k_i^j * \nabla \phi * \nabla \phi \pm \hat k* (\Tbar \phi)^2 \right) \right) * \nabla^{I_l} E,
\end{split}
\end{equation}
and
\begin{equation}\label{stru-source-AJ-bianchi-1}
\begin{split}
 &  \nabla_{I_l} \left( N J^*_{i \Tbar j} \right) \nabla^{I_l} H^{ij} \\
= &   \nabla_{I_l} \left( N \left( \nabla^2 \phi \pm \hat k* \Tbar \phi \right) * \nabla \phi \right) * \nabla^{I_l} H.
\end{split}
\end{equation}

\subsubsection{The estimate for the coupling terms}\label{sec-estimate-source-bianchi}
We rewrite \eqref{stru-source-J-bianchi-1} and \eqref{stru-source-AJ-bianchi-1} as
\begin{equation}\label{def-Sl1-re}
\begin{split}
S_{l1} =  \sum_{a+b =l }& \nabla_{I_a} N * \nabla_{I_b} (\nabla \Tbar \phi * \nabla \phi) \nabla^{I_l} E^{ij}  \\
&+  \nabla_{I_a} N * \nabla_{I_b} (\nabla^2 \phi * \Tbar \phi) \nabla^{I_l} E^{ij}  \\
&+ \nabla_{I_a} N * \nabla_{I_b} (\nabla \phi * \nabla \phi * k_i^j) \nabla^{I_l} E^{ij}  \\
&+  \nabla_{I_a} N * \nabla_{I_b} ((\Tbar \phi)^2 * \hat k) \nabla^{I_l} E^{ij},
\end{split}
\end{equation}
and 
\begin{equation}\label{def-Sl2-re}
\begin{split}
S_{l2} =  \sum_{a+b =l }&  \nabla_{I_a} N * \nabla_{I_b} (\nabla^2\phi * \nabla \phi ) \nabla^{I_l} H^{ij}  \\
&+  \nabla_{I_a} N * \nabla_{I_b} ( \hat k* \Tbar \phi * \nabla \phi ) \nabla^{I_l} H^{ij}.
\end{split}
\end{equation}
Notice that, then energy estimate argument for the Weyl tensor is done up to three order derivatives, hence in this section, we have $l \leq 3.$

For the electric part $S_{l1},$ 
\begin{itemize}
\item {\bf Case $S_{l1}$-I}: $a=0.$ Then $b=l.$ $S_{l1}$-I consists
\begin{equation}\label{def-S-l11-1234}
\begin{split}
 S_{l11}^1= &N  \nabla_{I_l} \left( \nabla \Tbar \phi * \nabla \phi \right) * \nabla^{I_l} E,\\
 S^2_{l11} =& N \nabla_{I_l} \left( \nabla^2 \phi * \Tbar \phi \right) * \nabla^{I_l} E,\\
S^3_{l11} =&  N \nabla_{I_l} \left( k_i^j * \nabla \phi * \nabla \phi \right) * \nabla^{I_l} E, \\
S^4_{l11} =&N \nabla_{I_l} \left( (\Tbar \phi)^2  \hat k \right) * \nabla^{I_l} E.
\end{split}
\end{equation}

For $S_{l11}^1,$ we further rewrite it as 
\begin{equation}\label{def-S-l11-1}
S_{l11}^1= \sum_{p+q=l} (mt)^{-1} N \nabla_{I_{p+1}} \Tbar \phi * \nabla_{I_{q+1}} (\bar m (t) \phi) * \nabla^{I_l} E.
\end{equation}
 Since $[\frac{l}{2}] + 1 + 2 \leq 4 (l \leq 3),$ we can apply $L^2, L^\infty$ or $L^\infty, L^2$ to the two factors in $\nabla_{I_{p+1}} \Tbar \phi *\nabla_{I_{q+1}} \phi$ and derive 
\begin{equation}\label{estimate-S11-1}
\begin{split}
\int_{\Sigma_t} |S_{l11}^1| \lesssim & t^{-1} \mathcal{E}^{\frac{1}{2}}_l (W, t) \mathcal{E}_4 (\phi, t)   \lesssim  \varepsilon^3 M^3  t^{-3+3\delta}.
\end{split}
\end{equation}

For $S_{l11}^2,$ we further rewrite it as 
\begin{align*}
S_{l11}^2= \sum_{p+q=l} N \nabla_{I_{p+1}}  \nabla \phi * \nabla_{I_{q}}  \Tbar \phi *  \nabla^{I_l} E.
\end{align*}
\begin{enumerate}
\item In the case of $0 \leq l \leq 2$.  We rearrange it as $$\sum_{p+q=l \leq 2} t^{-1} N \nabla_{I_{p+2}} (\bar m (t) \phi) * \nabla_{I_q} \Tbar \phi *  \nabla^{I_l} E.$$
We apply $L^4, L^4, L^2$ (if $p \leq 1$) or $L^2, L^\infty, L^2$ (if $p=2$) on the three factors, since $p+q \leq 2$. With the same argument for $S_{l11}^1$ \eqref{estimate-S11-1}, we derive
\begin{equation}\label{estimate-S011-leq-2}
\begin{split}
\int_{\Sigma_t} |S_{l11}^2| \lesssim & t^{-1} \mathcal{E}^{\frac{1}{2}}_l (W, t) \mathcal{E}_4 (\phi, t), \, l \leq 2.
\end{split}
\end{equation}

\item In the case of $l =3.$

\begin{itemize}
 
\item[i)] If $p \leq l-1\leq 2,$ then $\nabla_{I_{p+2}}\phi$ will never exceed the top ($4$) order derivative. We can perform the same argument as in the case of $0 \leq l \leq 2$ \eqref{estimate-S011-leq-2}. The same bound $ t^{-1} \mathcal{E}^{\frac{1}{2}}_l (\phi, t) \mathcal{E}_4 (\phi, t) $ can be obtained as well.

\item[ii)] If $p=l=3$, the argument for \eqref{estimate-S011-leq-2} are not valid here. Because of the regularity, $\nabla^5(\bar m (t) \phi)$ is not allowed, hence $\nabla^4 \nabla \phi$ can not absorb an addition $t$ (or $\bar m (t)$) as the previous case, which makes it the borderline case. However, as in the proof that leads to Lemma \ref{lemma-improved-l-energy-KG}, the improved $\|\Tbar\phi\|_{L^\infty}$ and $E_{4}(\phi,t)$ \eqref{improved-energy-l-order-1} will help to linearize this borderline term, that is
\begin{equation}\label{estimate-Sl11-2-1}
\begin{split}
\int_{\Sigma_t} |S_{311}^2| =& \int_{\Sigma_t} \big| N  \nabla^4 \nabla \phi * \Tbar \phi *  \nabla^3 E \big| \\
\lesssim &  \| \nabla^4 \nabla \phi  \|_{L^2(\Sigma_t)} \|\Tbar \phi\|_{L^\infty}  \|\nabla^{3} E\|_{L^2(\Sigma_t)} \\
 \lesssim & E^{\frac{1}{2}}_4 (\phi, t) \cdot  \varepsilon M t^{-\frac{1}{2}} \cdot E^\frac{1}{2}_3(W, t) \\
 \lesssim & \varepsilon^2 M^2  E_3(W, t) + t^{-1} E_4 (\phi, t).
\end{split}
\end{equation}
 \end{itemize}
\end{enumerate}

Relative to $S_{l11}^1$ and $S_{l11}^2$, $S_{l11}^3$ and $S_{l11}^4$ are in fact lower order terms.  

$S_{l11}^3$ can be rewritten as 
\begin{align*}
S_{l11}^3= \sum_{p+q+h=l} N  \nabla_{I_h} k_i^j * \nabla_{I_{p+1}} \phi * \nabla_{I_{q+1}} \phi * \nabla^{I_l} E.
\end{align*}
If $h=0,$ it becomes $\sum_{p+q=l} N k_i^j * \nabla_{I_{p+1}} \phi * \nabla_{I_{q+1}} \phi * \nabla^{I_l} E$, which is equivalent to (noting that $p+1, q+1 \leq l+1 \leq 4$)
\begin{equation}\label{S-l-11-3-h=0}
\sum_{p+q=l} t^{-2} N  \nabla_{I_{p+1}} (\bar m (t) \phi) * \nabla_{I_{q+1}}(\bar m (t) \phi) * \nabla^{I_l} E,
\end{equation}
 Quantitatively, \eqref{S-l-11-3-h=0} turns out to be equivalent to $t^{-1} S_{l11}^1$ \eqref{def-S-l11-1}. Thus, similar to \eqref{estimate-S11-1}, \eqref{S-l-11-3-h=0} can be bounded by $t^{-2} \mathcal{E}^{\frac{1}{2}}_l (W, t) \mathcal{E}_4 (\phi, t)$.
If $h \geq 1,$ we get
\begin{equation}\label{S-l-11-3-h-geq-1}
\sum_{p+q+h=l \atop h \geq 1} N  t^{-2} \nabla_{I_h} \hat k* \nabla_{I_{p+1}}(\bar m (t) \phi ) * \nabla_{I_{q+1}} (\bar m (t) \phi) * \nabla^{I_l} E.
\end{equation}
Knowing that $1 \leq h \leq l \leq 3,$ and $0 \leq p, q \leq l-1 \leq 2,$ we can apply $L^6, L^6, L^6, L^2$ \eqref{S-l-11-3-h-geq-1} and derive  the bound $t^{-2} \mathcal{E}_l (g, t) \mathcal{E}_4 (\phi, t)$, which is of lower order than \eqref{S-l-11-3-h=0}. As a summary, we have
\begin{equation}\label{estimate-S11-3}
\begin{split}
 \int_{\Sigma_t} |S_{l11}^3|
\lesssim t^{-2} \mathcal{E}^{\frac{1}{2}}_l (g, t) \mathcal{E}_4 (\phi, t).
\end{split}
\end{equation}

$S_{l11}^4$ can be further rewritten as 
\begin{equation}\label{def-S-l11-4}
S_{l11}^4= \sum_{p+q + h=l} N \nabla_{I_p} \Tbar \phi * \nabla_{I_q} \Tbar \phi *\nabla_{I_h} \hat k * \nabla^{I_l} E.
\end{equation}
Noting that $p,q,h \leq l \leq 3,$  we can apply $L^6, L^6, L^6, L^2$ to the four factors to derive 
\begin{equation}\label{estimate-S11-4}
\begin{split}
 \int_{\Sigma_t}|S_{l11}^4| \lesssim \mathcal{E}_l (g, t) \mathcal{E}_4 (\phi, t).
\end{split}
\end{equation}

\item {\bf Case $S_{l1}$-II}: $a\geq 1.$  Letting $a =c+1, c\geq 0,$ $S_{l1}$-II composes of  
 \begin{equation}\label{def-S-l12-1234}
\begin{split}
 S_{l12}^1= & \sum_{c+1+b=l}\nabla_{I_{c+1}} \hat N * \nabla_{I_b} \left( \nabla \Tbar \phi * \nabla \phi \right) *  \nabla^{I_l} E\\
 S^2_{l12} =& \sum_{c+1+b=l} \nabla_{I_{c+1}} \hat N * \nabla_{I_b} \left( \nabla^2 \phi * \Tbar \phi \right) * \nabla^{I_l} E\\
S^3_{l12} =&  \sum_{c+1+b=l} \nabla_{I_{c+1}} \hat N * \nabla_{I_b} \left( k_i^j * \nabla \phi * \nabla \phi \right) * \nabla^{I_l} E \\
S^4_{l12} =& \sum_{c+1+b=l} \nabla_{I_{c+1}} \hat N* \nabla_{I_b} \left( (\Tbar \phi)^2 \hat k \right) * \nabla^{I_l} E.
\end{split}
\end{equation}
In fact, $S_{l1}$-II \eqref{def-S-l12-1234} enjoys better estimate than $S_{l1}$-I \eqref{def-S-l11-1234}. This can be roughly verified by comparing $N$ with $\hat N$.  
$S^1_{l12}$ can be rearranged as (for $b=b_1 + b_2 \leq l-1\leq 2$, then $b_1 +1, b_2 +1 \leq l \leq 3$), 
\begin{equation}\label{Sl12-1}
\begin{split}
&  \sum_{c+1+b_1 + b_2 =l } t^{-1}\nabla_{I_{c+1}} \hat N * \nabla_{I_{b_1+1}} \Tbar \phi * \nabla_{I_{b_2+1}} (\bar m (t) \phi) * \nabla^{I_l} E.
\end{split}
\end{equation}
Here $c+1 \leq 3$ and $b_1+1, b_2+1 \leq 3$, we can apply $L^6, L^6, L^6, L^2$ to the four factors in \eqref{Sl12-1} to derive
\begin{equation}\label{estimate-Sl12-1}
\begin{split}
& \int_{\Sigma_t} |S_{l12}^1| \lesssim  t^{-1} \mathcal{E}_3 (g, t) \mathcal{E}_4(\phi, t).
\end{split}
\end{equation}

$S^2_{l12}$ can be rearranged as (for $b=b_1 + b_2 \leq l-1\leq 2$, then $b_1 +2 \leq l+1 \leq 4$), 
\begin{equation}\label{Sl12-1}
\begin{split}
&  \sum_{c+1+b_1 + b_2 =l } t^{-1}\nabla_{I_{c+1}} \hat N * \nabla_{I_{b_1+2}} (\bar m (t) \phi) * \nabla_{I_{b_2}} \Tbar \phi * \nabla^{I_l} E.
\end{split}
\end{equation}
Here $c+1 \leq 3$ and $b_1, b_2 \leq 2$. If $b_1 \leq 1$, then $b_1 +2 \leq 3,$ and we can apply $L^6, L^6, L^6, L^2$ to the four facts in in \eqref{Sl12-1}; If $b_1 =2$, then $c=b_2=0$. We can use $L^\infty, L^2, L^\infty, L^2$ alternatively. Finally, $S_{l12}^2$ admits the estimate
\begin{equation}\label{estimate-Sl12-2}
\begin{split}
& \int_{\Sigma_t} |S_{l12}^2| \lesssim  t^{-1} \mathcal{E}_3 (g, t) \mathcal{E}_4(\phi, t).
\end{split}
\end{equation}

$S^3_{l12}$ can be rearranged as ($c+1+b_1 + b_2 + b_3 =l $ and $b_2 +1, b_3 +1 \leq l \leq3$),
 \begin{equation}\label{Sl12-3}
 t^{-2} \nabla_{I_{c+1}} \hat N * \nabla_{I_{b_1}} k_i^j * \nabla_{I_{b_2+1}} (\bar m (t) \phi ) * \nabla_{I_{b_3+1}} (\bar m (t) \phi) * \nabla^{I_l} E.
\end{equation}
Here $c+1, b_2+1, b_3+1 \leq l \leq 3$. If $b_1=0$, $k_i^j \sim 1$, we apply $L^6, L^\infty, L^6, L^6, L^2$ on the five factors in \eqref{Sl12-3} and derive the bound $t^{-2} \mathcal{E}_3 (g, t) \mathcal{E}_4 (\phi, t)$; If $1 \leq b_1,$ noting that $b_1 \leq l-1\leq 2,$  we apply $L^6, L^\infty, L^6, L^6, L^2$ on the five factors as well and the bound is replaced by $t^{-2}  \mathcal{E}^{\frac{3}{2}}_3 (g, t) \mathcal{E}_4 (\phi, t)$ which is of lower order. Finally, we have
\begin{equation}\label{estimate-Sl12-3}
\begin{split}
& \int_{\Sigma_t}  |S_{l12}^3| 
 \lesssim t^{-2} \mathcal{E}_3 (g, t) \mathcal{E}_4 (\phi, t).
\end{split}
\end{equation}

$S^4_{l12}$ can be rearranged as ($c+1+b_1 + b_2 + b_3 =l $), 
\begin{equation}\label{Sl12-4}
\begin{split}
& \nabla_{I_{c+1}} \hat N * \nabla_{I_{b_1}} \Tbar \phi * \nabla_{I_{b_2}} \Tbar \phi  * \nabla_{I_{b_3}} \hat k * \nabla^{I_l} E.
\end{split}
\end{equation}
Now that, $c+b_1 + b_2 + b_3 =l -1 \leq 2$, we can always apply $L^2, L^\infty, L^\infty, L^\infty, L^2$ to the five factors of \eqref{Sl12-4} and derive that 
\begin{equation}\label{estimate-Sl12-4}
\begin{split}
& \int_{\Sigma_t}  |S_{l12}^4| 
 \lesssim \mathcal{E}^{\frac{3}{2}}_3 (g, t) \mathcal{E}_4 (\phi, t).
\end{split}
\end{equation}
\end{itemize}

Putting  {\bf Case $S_{l1}$-I} and  {\bf Case $S_{l1}$-II} together, we have for $S_{l1},$
\begin{align}
 \int_{\Sigma_t}  |S_{l1}| &\lesssim \left(  t^{-1} \mathcal{E}^{\frac{1}{2}}_l (g, t) + \mathcal{E}_l (g, t)\right) \mathcal{E}_4 (\phi, t), \quad 0 \leq l \leq 2, \label{estimate-Sl1-l-leq-2}\\
 \int_{\Sigma_t}   |S_{31}| &\lesssim \varepsilon^2 M^2  E_3(W, t) + t^{-1} \mathcal{E}_4 (\phi, t). \label{estimate-Sl1-l=3}
\end{align}

For the magnetic part $S_{l2}$, 
\begin{itemize}
\item {\bf Case $S_{l2}$-I}: $a=0.$ Then $b=l.$ $S_{l2}$-I composes of 
\begin{equation}\label{def-S-l21-12}
\begin{split}
 S_{l21}^1= &N  \nabla_{I_{l}} \left( \nabla^2 \phi * \nabla \phi \right) * \nabla^{I_{l}} H, \\
S^2_{l21} =&  N \nabla_{I_{l}} \left(\hat k* \nabla \phi * \Tbar \phi \right) * \nabla^{I_{l}} H.
\end{split}
\end{equation}
Note that, $S^1_{l21}, S^2_{l21}$ possess the same structures as $S_{l11}^1, S_{l11}^4$ \eqref{def-S-l11-1234}  respectively if we replace $\nabla^2 \phi$ in $S^1_{l21}$ by $\nabla \Tbar \phi$, $\nabla \phi$ in $S^2_{l21}$ by $\Tbar \phi$. In fact, the structures in $S^1_{l21}, S^2_{l21}$ is better, since they both have one more order of spatial derivative than $S_{l11}^1, S_{l11}^4$ respectively.  Anyway, the analogous argument for $S_{l11}^1, S_{l11}^4$ can be applied to $S^1_{l21}, S^2_{l21}$ as well. Then 
\begin{equation}\label{estimate-Sl21-1-2}
\begin{split}
& \int_{\Sigma_t}  |S_{l21}^1| +  |S_{l21}^2|
 \lesssim  \left(  t^{-1} \mathcal{E}^{\frac{1}{2}}_l (g, t) + \mathcal{E}_l (g, t)\right) \mathcal{E}_4 (\phi, t).
\end{split}
\end{equation}

\item {\bf Case $S_{l2}$-II}: $a\geq 1.$ Letting $a =c+1, c \geq 0,$ $S_{l2}$-II consists
  \begin{equation}\label{def-S-l22-12}
\begin{split}
 S_{l22}^1= & \sum_{c+1+b=l}\nabla_{I_{c+1}} \hat N *  \nabla_{I_b} \left( \nabla^2 \phi * \nabla \phi \right) * \nabla^{I_l} H, \\
S^2_{l22} =& \sum_{c+1+b=l} \nabla_{I_{c+1}} \hat N* \nabla_{I_b} \left( \Tbar \phi * \nabla \phi *  \hat k \right) * \nabla^{I_l} H.
\end{split}
\end{equation}
Again, if replacing $\nabla^2 \phi$ in $S^1_{l22}$, $\nabla \phi$ in $S^2_{l22}$ by $\nabla \Tbar \phi, \Tbar \phi$ respectively, one gets the same structures as $S_{l12}^1, S_{l12}^4$.  Thus the estimate for $S_{l22}^1, S_{l22}^2$ will be just in an analogous to that for $S_{l12}^1, S_{l12}^4$. We obtain  
\begin{equation}\label{estimate-Sl22-12}
\begin{split}
& \int_{\Sigma_t}  |S_{l21}^1| + |S_{l22}^2| \lesssim  \left(  t^{-1} \mathcal{E}_3 (g, t) + \mathcal{E}^{\frac{3}{2}}_3 (g, t)\right) \mathcal{E}_4 (\phi, t).
\end{split}
\end{equation}
This is of course lower order than \eqref{estimate-Sl21-1-2}.
\end{itemize}  
Combining the results for Case $S_{l2}$-I  with  Case $S_{l2}$-II, we have for $S_{l2},$
\begin{equation}\label{estimate-S2}
 \int_{\Sigma_t}   |S_{l2}| \lesssim \left(  t^{-1} \mathcal{E}^{\frac{1}{2}}_l (g, t) + \mathcal{E}_l (g, t)\right) \mathcal{E}_4 (\phi, t).
\end{equation}

\section{The global existence}\label{sec-global-existence}
\subsection{Closing the bootstrap argument}\label{sec-close-b-t}
\subsubsection{For KG field and Weyl field}
In Lemma \ref{lemma-improved-0-energy-KG} and Lemma \ref{lemma-improved-l-energy-KG} and Lemma \ref{prop-energy-estimate-Bianchi}, we have proven that, for some universal constant $C$, and some constant $C_M$ depending linearly on $M$,
\begin{equation*}
\begin{split}
t E_0(\phi,t) \leq & C \varepsilon^2 I^2_{0}(\phi), \\
t E_l(\phi, t) \leq&C \left( \varepsilon^2 I^2_4(\phi) + \varepsilon^3 M^3 \right) t^{C_M \varepsilon}, \quad 1 \leq l \leq 4.
\end{split}
\end{equation*}
and 
\begin{equation*}
\begin{split}
t^2 E_l(W,t) \leq&  C\left( \varepsilon^2 I^2_{l}(W)+  \varepsilon^3 M^3 \right), \quad 0 \leq l \leq 2. \\
t^2 E_3(W,t) \leq& C\left( \varepsilon^2 I^2_{4}(\phi) + \varepsilon^2 I^2_{3}(W)+ \varepsilon^3 M^3 \right)t^{C_M \varepsilon}.
\end{split}
\end{equation*}
Then we choose $M$ satisfying $C( I^2_3(W) +I_4^2(\phi) )\leq \frac{M^2}{4}$ and $\varepsilon$ small enough such that $\varepsilon CM \leq \frac{1}{4}$, $C_M \varepsilon \leq \delta$. Hence, for $0\leq i \leq 4, 0 \leq j \leq 3,$
\begin{equation}\label{eq-closing-b-t-KG-Weyl}
t E_{i}(\phi,t) \leq \frac{\varepsilon^2 M^2}{2} t^{\delta}, \quad t^2 E_j(W,t) \leq \frac{\varepsilon^2 M^2}{2} t^{\delta},
\end{equation}
which improve the bootstrap assumptions \eqref{BT-KG}-\eqref{BT-E-H}.

\subsubsection{For the second fundamental form}\label{sec-closing-bt-k}
We will additionally take advantage of the transport equation for the second fundamental form to derive a bound for $\|\hat k\|_{L^2(\Sigma_t)}$. This helps to improve the weak bootstrap assumption for $\hat k$ \eqref{BT-k-N-L-infty}. 
\begin{lemma}[Improvement for $\hat k$]\label{lem-hat-k-estimate-original-0}
Fixing $0<\delta<\frac{1}{6},$  we have for some universal constant $C$,
\begin{equation}\label{eq-improved-hat-k-0}
 t^2 E_0(\hat k, t) \leq C   \varepsilon^2 I_0^2(\hat k)+ C \left( \varepsilon I_0(\phi) +  \varepsilon I_0(W) + \varepsilon^\frac{3}{2} M^\frac{3}{2} \right) \varepsilon M t^{\delta}.
\end{equation}
\end{lemma}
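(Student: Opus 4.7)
I will close the bootstrap for $\hat k$ by running the $L^2$ energy argument on the transport equation \eqref{eq-evolution-2},
\begin{equation*}
\lie_{\dtau}\hat k_{ij} + \hat k_{ij} = \hat N g_{ij} - \nabla_i\nabla_j N + N\bar R_{i\Tbar j\Tbar} - N\hat k_{ip}\hat k^p{}_j,
\end{equation*}
contracting with $2\hat k^{ij}$ and using $\lie_{\dtau}g^{ij} = -2\hat N g^{ij} + 2N\hat k^{ij}$ to evolve $|\hat k|^2$. Since $\hat k$ is traceless, the $\hat N g_{ij}$ source drops out, and after integrating over $\Sigma_t$ using the divergence identity \eqref{id-div-T} I arrive at
\begin{equation*}
\dtau E_0(\hat k,t) + 2 E_0(\hat k, t) = -2\int_{\Sigma_t}\!\hat k^{ij}\nabla_i\nabla_j N\,\di\mu_g + 2\int_{\Sigma_t}\!N\hat k^{ij}\bar R_{i\Tbar j\Tbar}\,\di\mu_g + \mathrm{l.o.t.},
\end{equation*}
where the l.o.t.\ are cubic in $\hat k, \hat N$ and already controlled by $\|\hat N\|_{L^\infty}$, $\|\hat k\|_{L^\infty}$ from \eqref{BT-k-N-L-infty}--\eqref{BT-k-L-infty-4-6}.

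The first source is the potentially troublesome one, because the elliptic equation for $\hat N$ itself feeds back on $\hat k$. To break this loop I integrate by parts and invoke Codazzi \eqref{Codazzi-div-k} together with the CMC condition $\nabla\tr k=0$, yielding
\begin{equation*}
\int_{\Sigma_t}\!\hat k^{ij}\nabla_i\nabla_j N = \int_{\Sigma_t}\!\bar R_{\Tbar}{}^i\,\nabla_i N,\qquad \bar R_{\Tbar i} = -\Tbar\phi\,\nabla_i\phi.
\end{equation*}
This is manifestly \emph{quadratic} in the KG field, and Cauchy--Schwarz, the improved lapse estimate $\mathcal{E}_2(\hat N,t)\lesssim t^{-2}\varepsilon^2 I_4^2(\phi)$ from Corollary \ref{lem-N-ineq-source-improved-2}, the sharp $L^\infty$ bound on $\Tbar\phi$ from Corollary \ref{coro-Improved L-infty of Klein Gordon field}, and the refined zeroth-order KG energy of Lemma \ref{lemma-improved-0-energy-KG} bound this by $(\varepsilon I_0(\phi)+\varepsilon^{3/2}M^{3/2})\,\varepsilon M\,t^{-2+\delta/2}\,E_0^{1/2}(\hat k,t)^{1-\alpha}$ up to harmless factors.

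For the second source I substitute the Gauss equation \eqref{eq-Gauss-E} combined with \eqref{Gauss-Ricci-hat-k} to express
\begin{equation*}
\bar R_{i\Tbar j\Tbar} = E_{ij} - \tfrac{1}{2}\bar R_{ij} + \tfrac{1}{6}(3\bar R_{\Tbar\Tbar}+\bar R)g_{ij},
\end{equation*}
so that the trace pieces annihilate against traceless $\hat k^{ij}$ and what survives is $\hat k^{ij}E_{ij}$ plus KG-stress-energy pieces of schematic form $\hat k*\nabla\phi*\nabla\phi$ and $\hat k*(\Tbar\phi)^2$. The Weyl piece is estimated by $\|\hat k\|_{L^2}\|E\|_{L^2}$, using the improved $t^2E_0(W,t)\lesssim \varepsilon^2 I_0^2(W)+\varepsilon^3 M^3$ from Lemma \ref{prop-energy-estimate-Bianchi}; the KG pieces are estimated by $\|\hat k\|_{L^2}\|\D\phi\|_{L^\infty}\|\D\phi\|_{L^2}$ using Corollary \ref{coro-Improved L-infty of Klein Gordon field} and Lemma \ref{lemma-improved-0-energy-KG}.

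Collecting these, the energy inequality takes the form
\begin{equation*}
\dtau E_0(\hat k,t) + 2 E_0(\hat k,t) \lesssim E_0^{1/2}(\hat k,t)\cdot \bigl(\varepsilon I_0(\phi) + \varepsilon I_0(W) + \varepsilon^{3/2}M^{3/2}\bigr)\,\varepsilon M\, t^{-2+\delta/2} + \mathrm{l.o.t.}
\end{equation*}
Dividing by $E_0^{1/2}$ gives $\dtau E_0^{1/2} + E_0^{1/2}\lesssim (\mathrm{source})\,t^{-2+\delta/2}$, and converting to $t$-time via $\dtau = t\dt$ this reads $\dt(t\,E_0^{1/2}(\hat k,t))\lesssim (\mathrm{source})\,t^{-2+\delta/2}$, so an integration from $t_0$ to $t$ and squaring yields \eqref{eq-improved-hat-k-0}. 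The main obstacle is precisely the first source: a naive estimate by $\|\hat k\|_{L^2}\|\nabla^2 N\|_{L^2}$ would couple back to $\mathcal{E}_1(\hat k)$ via the lapse equation and be circular; the Codazzi trick is what replaces $\nabla^2 N$ contracted against $\hat k$ by a genuinely quadratic KG term and thereby unlocks the closure.
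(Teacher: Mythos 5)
Your proposal is correct and, in its skeleton, is the paper's own argument: contract the transport equation \eqref{eq-evolution-2} with $2\hat k^{ij}$, integrate over $\Sigma_t$, isolate the two quadratic sources $\hat k^{ij}\nabla_i\nabla_j N$ and $N\hat k^{ij}E_{ij}$ together with cubic lower-order terms, insert the already-improved bounds for $\hat N$, $W$, $\phi$, and close by Gr\"{o}nwall in $t$; your reduction of the curvature source is exactly the paper's identity $\bar R_{i\Tbar j\Tbar}\hat k^{ij}=\bigl(E_{ij}-\tfrac12\nabla_i\phi\nabla_j\phi\bigr)\hat k^{ij}$ (the extra $\hat k*(\Tbar\phi)^2$ piece you list does not actually occur, since the spatial matter Ricci is $\nabla_i\phi\nabla_j\phi$ plus a pure trace annihilated by $\hat k$; it would in any case be harmless). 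The one genuine difference is the lapse term: the paper estimates $\int_{\Sigma_t}|\hat k^{ij}\nabla_i\nabla_j N|\lesssim E_2^{1/2}(\hat N,t)\,E_0^{1/2}(\hat k,t)$ directly, using the refined bound \eqref{N-improved-estimate-source} and the bootstrap bound on $E_0^{1/2}(\hat k,t)$, whereas you integrate by parts and use the momentum constraint \eqref{Codazzi-div-k} with $\nabla_j\tr k=0$ to rewrite it as $\int_{\Sigma_t}\bar R_{\Tbar i}\nabla^i N=\int_{\Sigma_t}\Tbar\phi\,\nabla_i\phi\,\nabla^i N$, a purely matter-quadratic source that needs only $\mathcal{E}_1(\hat N,t)$; both routes close, and yours even gives a slightly cleaner, $\delta$-free contribution from this term. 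Your stated motivation for the detour is, however, mistaken: the direct estimate is not circular, because in \eqref{lemma-N-inte-ineq-2order-source} the $\hat k$-dependence of $\mathcal{E}_2(\hat N,t)$ enters only through $\mathcal{E}^2_1(\hat k,t)\lesssim\varepsilon^4M^4t^{-4+4\delta}$ under the bootstrap, so Corollary \ref{lem-N-ineq-source-improved-2} delivers $t^2\mathcal{E}_2(\hat N,t)\lesssim\varepsilon^2I^2_4(\phi)$ independently of the present lemma. Two cosmetic points: the exponent in ``$E_0^{1/2}(\hat k,t)^{1-\alpha}$'' is unexplained, and dividing the differential inequality by $E_0^{1/2}$ requires the usual regularization; it is simpler to keep the quadratic Gr\"{o}nwall form, as the paper does, absorbing the cubic $\hat N|\hat k|^2$ and $N|\hat k|^3$ terms into an $\exp(C\varepsilon M)$ factor.
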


\begin{proof}
Reminding the evolution equation for $\hat k_{ij}$ \eqref{eq-evolution-2},
and noticing that
\begin{align*}
\bar R_{i\Tbar j \Tbar}  \hat k^{ij} 
=& \left( E_{ij} - \frac{1}{2}\bar R_{ij} \right) \hat k^{ij} =  \left( E_{ij} - \frac{1}{2} \nabla_i \phi \nabla_j \phi \right) \hat k^{ij},
\end{align*}
we multiply $\hat k^{ij}$ and integrate on $\Sigma_t,$ 
\begin{equation}\label{eq-transport-E0-hat-k-1} 
\begin{split}
\dtau E_0(\hat k, t) + 2 E_0(\hat k, t) =&\int_{\Sigma_t} \hat k^{ij}  \left( -2\nabla_i \nabla_j N + 2NE_{ij}  \right) \\
& - \hat N |\hat k|^2 + 2N |\hat k|^3-N \nabla_i \phi \nabla_j \phi \hat k^{ij} .
\end{split}
\end{equation}

Terms in the second line of \eqref{eq-transport-E0-hat-k-1} are cubic, and hence easier.
We make use of $\|\hat k\|_{L^\infty}$ \eqref{BT-k-L-infty-4-6} and $\|\hat N\|_{L^\infty}$ \eqref{BT-k-N-L-infty}, then
\begin{align*}
 \int_{\Sigma_t}  |\hat N| |\hat k|^2 + N |\hat k|^3 &\lesssim  \mathcal{E}^{\frac{1}{2}}_1(g, t) E_0(\hat k, t) \lesssim \varepsilon M t^{-1+2\delta}  E_0(\hat k, t).
\end{align*}
Rewrite $\nabla_i \phi * \nabla_j \phi* \hat k^{ij}$  as $(mt)^{-2} \nabla_i (\bar m (t) \phi)* \nabla_j (\bar m (t) \phi ) *\hat k^{ij}$ and perform $L^4, L^4, L^2$ on the three factors
\begin{align*}
\int_{\Sigma_t}  | \nabla_i \phi \nabla_j \phi \hat k^{ij}| 
 \lesssim & t^{-2}  \mathcal{E}_2(\phi, t) E^{\frac{1}{2}}_0(\hat k, t) \\
  \lesssim& \varepsilon M t^{-1+\delta}  E_0(\hat k, t) + \varepsilon^3 M^3 t^{-5+\delta}.
\end{align*}

The first line in \eqref{eq-transport-E0-hat-k-1} involves quadratic terms. We will make use of the refined $\mathcal{E}_2(\hat N, t)$  \eqref{N-improved-estimate-source}, and $E_0(W, t)$ \eqref{eq-energy-estimate-Bianchi-l-leq-2} to linearize them:
\begin{align*}
 \int_{\Sigma_t}  |\nabla_i \nabla_j N \hat k^{ij}|   
 &\lesssim E^{\frac{1}{2}}_2(\hat N, t) E^{\frac{1}{2}}_0(\hat k, t) \lesssim \varepsilon I_0(\phi) t^{-1}   E^{\frac{1}{2}}_0(\hat k, t), \\
\int_{\Sigma_t}  |E_{ij} \hat k^{ij}| 
&\lesssim  E^{\frac{1}{2}}_0(W, t) E^{\frac{1}{2}}_0(\hat k, t) \lesssim \left( \varepsilon I_0(W) + \varepsilon^\frac{3}{2} M^\frac{3}{2} \right) t^{-1} E^{\frac{1}{2}}_0(\hat k, t).
\end{align*}
Now, we substitute the bootstrap assumption for $E^\frac{1}{2}_0(\hat k, t)$ (see \eqref{BT-k-N-L-infty}) to achieve
\begin{equation*}
\begin{split}
\dtau E_0(\hat k, t) &+ 2 E_0(\hat k, t) \lesssim   \varepsilon M t^{-1+2\delta} E_0(\hat k, t) \\
&+  \left( \varepsilon I_0(\phi)+ \varepsilon I_0(W) + \varepsilon^\frac{3}{2} M^\frac{3}{2} \right)\varepsilon M t^{-2+\delta}.
\end{split}
\end{equation*}
Changing to $t$ and multiplying $t$ on the inequality, we achieve
\begin{equation}\label{ode-K-1}
\begin{split}
& \dt (t^2 E_0(\hat k, t)) \lesssim   \varepsilon M t^{-2+2\delta} (t^2 E_0(\hat k, t)) \\
&\quad +   \left( \varepsilon I_0(\phi)+ \varepsilon I_0(W) + \varepsilon^\frac{3}{2} M^\frac{3}{2} \right)\varepsilon M t^{-1+\delta}.
\end{split}
\end{equation}
where $0<\delta < \frac{1}{6}$. By the Gr\"{o}nwall inequality, 
\begin{equation*}
\begin{split}
 t^2 E_0(\hat k, t) \lesssim & \exp(\varepsilon M) \left(\varepsilon^2 I^2_0(\hat k) + \left( \varepsilon I_0(\phi) +  \varepsilon I_0(W) + \varepsilon^\frac{3}{2} M^\frac{3}{2} \right) \varepsilon M t^{\delta} \right).
\end{split}
\end{equation*}
$\varepsilon$ is chosen small enough such that $\varepsilon M <1$, thus it implies \eqref{eq-improved-hat-k-0}.

\end{proof}

$\hat k_{ij}$ being the trace free part of $k_{ij},$ satisfies the following elliptic system,
\begin{equation}\label{div-curl-k-hat}
\begin{split}
(\dive \hat k)_i &= - \bar R_{\Tbar i},\\
\curl \hat k_{ij} &= - H_{ij}, \\
\tr \hat k &=0.
\end{split}
\end{equation}
See Appendix for the definition of div and curl. 

\begin{lemma}[Improvement for derivative of $\hat k$]\label{lem-hat-k-estimate-original}
Fixing $0<\delta<\frac{1}{6},$ we have for $0 \leq i \leq 3$ and some universal constant $C$
\begin{equation}\label{estimate-nabla-1-4-hat-k}
t^2 E_{i+1}(\hat k, t)  \leq C  \varepsilon^2 I_0^2(\hat k)+ C\left( \varepsilon I_{3}(W) + \varepsilon I_4(\phi) +  \varepsilon^\frac{3}{2} M^\frac{3}{2} \right) \varepsilon M t^{\delta}.
\end{equation} 
\end{lemma}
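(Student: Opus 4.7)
The plan is to exploit the elliptic system \eqref{div-curl-k-hat} for the traceless symmetric tensor $\hat k$ on the compact hyperbolic manifold $(\Sigma_t,g)$, feeding in the improved estimates already obtained for the KG field (Lemma \ref{lemma-improved-l-energy-KG}), for the Weyl field (Lemma \ref{prop-energy-estimate-Bianchi}), for the lapse (Lemma \ref{lem-N-ineq-source}), and for $E_0(\hat k,t)$ itself (Lemma \ref{lem-hat-k-estimate-original-0}). Since $g$ is $\varepsilon M$-close to the hyperbolic metric $\gamma$ of sectional curvature $-1$ and on such a background the div-curl-trace operator on symmetric traceless $(0,2)$-tensors is an isomorphism (the Einstein operator $\mathcal{A}$ on TT tensors is strictly positive, as recalled in Remark \ref{rk-non-cmc}), the standard B\"{o}chner-type identity yields the elliptic inequality
\begin{equation*}
\int_{\Sigma_t} |\nabla \hat k|^2 \lesssim \int_{\Sigma_t} |\dive \hat k|^2 + |\curl \hat k|^2 + |\hat k|^2 + \text{l.o.t.},
\end{equation*}
which we then promote to all orders $i+1 \leq 4$ by induction.

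First I would handle the base case $i=0$. Using \eqref{div-curl-k-hat} and the identity $\bar R_{\Tbar i} = \nabla_i\phi \, \Tbar\phi$ derived from the KG energy-momentum tensor, the right hand side of the elliptic inequality becomes
\begin{equation*}
\int_{\Sigma_t} |H|^2 + |\nabla \phi|^2 (\Tbar \phi)^2 + |\hat k|^2 \lesssim E_0(W,t) + t^{-2} \|\Tbar\phi\|_{L^\infty}^2 E_0(\phi,t) + E_0(\hat k, t).
\end{equation*}
The improved $L^\infty$ estimate \eqref{improved-L-infty-kg-1}, the top-order Weyl bound \eqref{eq-energy-estimate-Bianchi-l-leq-2}, and the improved $E_0(\hat k, t)$ from Lemma \ref{lem-hat-k-estimate-original-0} then deliver \eqref{estimate-nabla-1-4-hat-k} for $i=0$.

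For the inductive step I would commute $\nabla^i$ with the div and curl operators. The commuting Lemma \ref{lemma-commuting-nabla-laplacian} (together with the structural identity $R_{imjn} \sim 1 \pm \hat k \pm |\hat k|^2 + E \pm |\D\phi|^2$, c.f. \eqref{R-imjn-E}) produces the higher-order elliptic inequality
\begin{equation*}
E_{i+1}(\hat k, t) \lesssim \int_{\Sigma_t} |\nabla^i H|^2 + |\nabla^i \bar R_{\Tbar \cdot}|^2 + \mathcal{E}_i(\hat k, t) + \sum_{a+b = i-1} |\nabla_{I_a} \hat R_{imjn} * \nabla_{I_b} \nabla \hat k|^2,
\end{equation*}
after routine rearrangement. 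The first two terms are bounded by the improved $\mathcal{E}_3(W,t)$ and $\mathcal{E}_4(\phi,t)$ (via Leibniz on $\nabla^i (\nabla\phi \, \Tbar\phi)$, splitting derivatives by $L^4$-$L^4$ or $L^\infty$-$L^2$ just as in Section \ref{sec-e-lot-kg}); the third is controlled inductively; and the commutator term is lower order, dominated by $\mathcal{E}_i(g,t) \mathcal{E}_i(\hat k,t)$ after a standard Sobolev splitting exactly analogous to the treatment of $RN_1, RN_2$ in the proof of Lemma \ref{lem-N-ineq-source}.

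The main obstacle I anticipate is the top-order case $i=3$, where no further derivative is available to apply to $\bar R_{\Tbar i}$; here I expect the borderline term $\nabla^3(\nabla \phi \, \Tbar \phi)$ to require invoking the linearized $L^\infty$-$L^2$ splitting $\|\Tbar \phi\|_{L^\infty} \|\nabla^4\phi\|_{L^2(\Sigma_t)}$ together with the improved $\|\Tbar\phi\|_{L^\infty} \lesssim \varepsilon M t^{-1/2}$ from \eqref{improved-L-infty-kg-1} and the improved $E_4(\phi,t)$ from \eqref{improved-energy-l-order-1}, mirroring the linearization trick \eqref{estimate-Sl11-2-1}. This produces the bound $\left(\varepsilon I_3(W) + \varepsilon I_4(\phi) + \varepsilon^{3/2} M^{3/2}\right) \varepsilon M t^\delta / t^2$, and completes the induction.
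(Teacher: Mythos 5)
Your proposal is correct and follows the same overall strategy as the paper: the base case comes from the Christodoulou--Klainerman-type integral identity for the div--curl system \eqref{div-curl-k-hat} (the paper's \eqref{eq-div-curl-id-CK}), and the higher orders follow by induction with curvature commutator errors handled by Sobolev splittings, feeding in the improved bounds for $\phi$, $W$, $\hat N$ and $E_0(\hat k,t)$. The one genuine difference is the mechanism of the inductive step: you commute $\nabla^i$ directly with $\dive$ and $\curl$, advancing one derivative at a time, whereas the paper first converts the first-order system into the second-order equation $\Delta \hat k_{ij} = \nabla^m \bar R_{mi\Tbar j} - \nabla_i \bar R_{\Tbar j} + R_i{}^{\,p}\hat k_{pj} - R_i{}^{\,p}{}_j{}^{\,q}\hat k_{pq}$ and commutes $\nabla_{I_l}$ with $\Delta$ via Lemma \ref{lemma-commuting-nabla-laplacian}, gaining two derivatives per step; your citation of that lemma for div/curl commutation is therefore slightly off target, and to make your one-step estimate rigorous you would in practice either redo the higher-order div--curl identity or pass through the Laplacian exactly as the paper does, so the content is the same. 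Two smaller points: your coercivity justification via ``the div--curl--trace operator is an isomorphism'' (strict stability of the Einstein operator on TT tensors) is not quite the statement being used --- on the negatively curved background the curvature terms in \eqref{eq-div-curl-id-CK} have an unfavorable sign, which is precisely why $\|\hat k\|_{L^2}^2$ must be kept on the right and controlled by the transport-equation Lemma \ref{lem-hat-k-estimate-original-0}, as you in fact do; and the top-order term $\nabla^3(\nabla\phi\,\Tbar\phi)$ is not actually borderline, since $\nabla^4\phi$ is already controlled by $E_3(\phi,t)$ (or gains a factor $t^{-1}$ through $\nabla^4(\bar m(t)\phi)$, which stays within $\mathcal{E}_4$), so the paper disposes of it by a plain Sobolev splitting \eqref{esti-hat-k-l-bar-R} with no linearization trick --- though your proposed $L^\infty$--$L^2$ splitting with the improved $\|\Tbar\phi\|_{L^\infty}$ and $E_4(\phi,t)$ also yields a bound compatible with \eqref{estimate-nabla-1-4-hat-k}.
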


\begin{proof}
The elliptic system \eqref{div-curl-k-hat} gives the identity (c.f. Proposition 4.4.1 in \cite{Christodoulou-K-93}), 
\begin{equation}\label{eq-div-curl-id-CK}
\int_{\Sigma_t} |\nabla \hat k|^2 + 3 R_{mn} \hat k^{im} \hat k_{i}^{\, n} - \frac{R}{2} |\hat k|^2 = \int_{\Sigma_t} |H_{ij}|^2 + \frac{1}{2} |\bar R_{\Tbar i}|^2.
\end{equation}
Noticing the formula for $R$ \eqref{Gauss-Ricci-trace-hat -k}, and the expansion \eqref{R-imjn-E} for $R_{imjn}$, we also get
\begin{equation}\label{R-ij-E}
R_{ij} \sim 1 \pm \hat k \pm |\hat k|^2 + E \pm |\D \phi|^2, \quad R  \sim 1 \pm |\hat k|^2 + E \pm |\D \phi|^2.
\end{equation}
Using Cauchy-Schwarz inequality, we  derive
\begin{equation}\label{ineq-nabla-hat-k-source}
\begin{split}
\int_{\Sigma_t}   |\nabla \hat k|^2 \lesssim & \int_{\Sigma_t}   \left(1+ |\D \phi|^2_{L^\infty} + |\hat k|_{L^\infty} + |\hat k|^2_{L^\infty}  +|E|_{L^\infty} \right) |\hat k|^2  \\
&+  \int_{\Sigma_t}  |H|^2 + |\Tbar \phi|^2_{L^\infty} |\nabla \phi|^2.
\end{split}
\end{equation}
That is,
\begin{equation*}
\begin{split}
\int_{\Sigma_t}   |\nabla \hat k|^2 \lesssim &  \left(1+ \mathcal{E}_2(\phi, t) + \mathcal{E}^{\frac{1}{2}}_2(g, t) + \mathcal{E}_2(g, t) \right) E_0(\hat k, t) \\
&+  E_0(W, t)  +\mathcal{E}_2(\phi, t) \mathcal{E}_0(\phi, t).
\end{split}
\end{equation*}
Using the bootstrap assumption \eqref{BT-KG}, the improved $E_0(W,t)$ \eqref{eq-energy-estimate-Bianchi-l-leq-2}, we prove \eqref{estimate-nabla-1-4-hat-k} with $i=0$, namely,
\begin{equation}\label{ineq-nabla-hat-k-source-1}
\begin{split}
\int_{\Sigma_t} t^2 |\nabla \hat k|^2 \lesssim &t^2 E_0(\hat k ,t)+  \varepsilon^2 I^2_{0}(W) +  \varepsilon^3 M^3.
\end{split}
\end{equation}

Proceeding to the higher order cases, we wish to prove that for $0\leq i \leq 3$,
\begin{equation}\label{estimate-nabla2-hat-k-i-induction}
\int_{\Sigma_t} t^2 |\nabla_{I_{i+1}} \hat k|^2 \lesssim t^2 E_0(\hat k, t)+ \left( \varepsilon^2 I^2_{i}(W) + \varepsilon^2 I^2_{4}(\phi) +  \varepsilon^3 M^3 \right) t^{C_M \varepsilon}.
\end{equation}
Now \eqref{ineq-nabla-hat-k-source-1} shows that \eqref{estimate-nabla2-hat-k-i-induction} holds for $i=0.$ Suppose \eqref{estimate-nabla2-hat-k-i-induction} holds for $i\leq l (l \leq 2),$ we need to prove that \eqref{estimate-nabla2-hat-k-i-induction} also holds for $i=l+1 (l \leq 2).$ 
 
In views of the commuting identity between $\nabla_{I_l}$ and $\Delta$ \eqref{commuting-nabla-laplacian-l-Psi-i-ij}-\eqref{Def-R-Psi-ij-commute-nabla-laplacian},  after an integration by parts, we calculate for $0 \leq l,$ 
\begin{align*}
 \int_{\Sigma_t}  |\nabla_{I_{l+2}} \hat k|^2 & = \int_{\Sigma_t} |\nabla_{I_l} \Delta \hat k|^2 \pm \mathcal{R}_{I_l}( \hat k_{ij} ) * \nabla_{I_l} \Delta \hat k  \pm \mathcal{R}_{I_{l+1}} ( \hat k_{ij} ) * \nabla_{I_{l+1}} \hat k.
\end{align*}
Apply the Cauchy-Schwarz inequality, then
\begin{align*}
 \int_{\Sigma_t}  |\nabla_{I_{l+2}} \hat k|^2 & \lesssim \int_{\Sigma_t} |\nabla_{I_l} \Delta \hat k|^2 + \big|\mathcal{R}_{I_l}( \hat k_{ij}) \big|^2 + \big|\mathcal{R}_{I_{l+1}} ( \hat k_{ij} )\nabla_{I_{l+1}} \hat k \big|.
\end{align*}
Meanwhile, 
\begin{align*}
\nabla_{I_l} \Delta \hat k_{ij} &= \nabla_{I_l} \left( \nabla^m \bar R_{mi \Tbar j} - \nabla_i \bar R_{\Tbar j} +R_{i}{}^{\!p} \hat k_{pj} - R_i{}^{\!p}{}_{\!j}{}^{\!q} \hat k _{pq} \right)
\end{align*}
and  $\nabla^m \bar R_{mi \Tbar j} = \epsilon_{mi}{}^{\!p} \nabla^m H_{pj} - \frac{1}{2} \nabla_j \bar R_{\Tbar i} +\frac{1}{2} g_{ij} \nabla^m \bar R_{m \Tbar},$
therefore, 
\begin{align*}
 \int_{\Sigma_t}  |\nabla_{I_{l+2}} \hat k|^2 \lesssim & \int_{\Sigma_t} |\nabla_{I_{l+1}} H|^2  +|\nabla_{I_{l+1}} \left(\bar R_{i \Tbar} \right)|^2 \\
 & + \big|\mathcal{R}_{I_l} (\hat k_{ij} ) \big|^2 + \big|\mathcal{R}_{I_{l+1}} ( \hat k_{ij}) \nabla_{I_{l+1}} \hat k \big|.
\end{align*}
Note that, $\bar R_{\Tbar j} = \nabla_j \phi   \Tbar \phi + \frac{\bar m^2(t)}{2}\phi^2 \bar g_{\Tbar j},$ $ \nabla_{I_l} (\Tbar^\mu \bar g_{\mu i}) =0,$  then $\nabla_{I_l} \bar R_{\Tbar j} = \nabla_{I_l} \left(  \nabla_j \phi  \Tbar \phi \right).$ And in views of the expansion for $R_{imjn}$ \eqref{R-imjn-E},
\begin{align*}
 \int_{\Sigma_t}  |\nabla_{I_{l+2}} \hat k|^2 \lesssim &\int_{\Sigma_t} | \nabla_{I_{l+1}} H|^2 +  \sum_{a+b =l+1} |\nabla_{I_{a+1}} \phi \nabla_{I_b} \Tbar \phi|^2\\
& \quad \quad \quad \quad + RK_1 + RK_2 + RK_3 + RK_4,
\end{align*}
where $RK_1 - RK_4$ are defined as below,
\begin{align*}
RK_1 =& \left( 1+ |\hat k|^2 + |\hat k|^4 +   |E|^2 + |\D\phi|^4 \right) |\nabla_{I_l} \hat k|^2, \\
 RK_2 =& \sum_{a+1+b=l} \big|\nabla_{I_{a+1}} \left( |\hat k| \pm |\hat k|^2 \pm E \pm |\D \phi |^2 \right)\big|^2  |\nabla_{I_b} \hat k|^2, \\
 RK_3 =& \left( 1+ |\hat k| + |\hat k|^2 +  |E| + |\D\phi|^2 \right) |\nabla_{I_{l+1}} \hat k|^2, \\
 RK_4 =& \sum_{a+1+b=l+1} \big|( t\nabla)_{I_{a+1}}  \left( |\hat k| \pm |\hat k|^2 \pm  E \pm  |\D \phi|^2 \right) \big| | \nabla_{I_b} \hat k \nabla_{I_{l+1}} \hat k|.
\end{align*}
Notice that $l \leq 2$.
First of all, 
\begin{equation}\label{esti-hat-k-l-H}
\int_{\Sigma_t} | \nabla_{I_{l+1}} H|^2 \lesssim \mathcal{E}_{l+1}(W, t), \quad l \leq 2. 
\end{equation}
Viewing $[\frac{l+1}{2}] +2 \leq 3 (l \leq 2)$, 
\begin{equation}\label{esti-hat-k-l-bar-R}
\begin{split}
 &\int_{\Sigma_t} \sum_{a+b =l+1} |\nabla_{I_{a+1}} \phi * \nabla_{I_b} \Tbar \phi|^2
 \lesssim \mathcal{E}_3(\phi, t) \mathcal{E}_{l+1}(\phi, t).
\end{split}
\end{equation}

For $RK_1$ and $RK_3,$ we apply $L^\infty, L^2, L^2$ to derive
\begin{equation}\label{esti-hat-k-l-RK13}
 \int_{\Sigma_t} RK_1 + RK_3
 \lesssim   \mathcal{E}_l(\hat k, t) +   \mathcal{E}_{l+1}(\hat k, t).
\end{equation}

For $RK_2,$ since $1 \leq a+1 \leq l \leq 2, b \leq l-1 \leq 1,$ we can always apply $L^4$ on each of the four factors (such as $ |\nabla_{I_{a+1}} E|^2  | \nabla_{I_b}\hat k|^2 $) or $L^6$ on each of the six factors (such as $ |\nabla_{I_{a+1}} (\D \phi)^2|^2   | \nabla_{I_b}\hat k|^2$). Then 
\begin{equation}\label{esti-hat-k-l-RK2}
\int_{\Sigma_t} RK_2 \lesssim \left( \mathcal{E}_{l+1}(g,t) + \mathcal{E}^2_{l+1}(\phi,t)  \right) \mathcal{E}_{l}(\hat k,t). 
\end{equation}

For $RK_4$, the argument is similar to that for $RK_2$. 
\begin{itemize}
\item If $a+1=l+1 \leq 3,$ then $b=0,$ we get $$\big|\nabla_{I_{l+1}}  \left( |\hat k| \pm |\hat k|^2 \pm  E \pm |\D \phi|^2  \right) \big| * |\hat k|  |\nabla_{I_{l+1}} \hat k|.$$ For three factors (such as $\nabla_{I_{l+1}} E * \hat k * \nabla_{I_{l+1}} \hat k$), we apply $L^2, L^\infty, L^2$; For four factors (such as $\nabla_{I_{l+1}} \hat k* \hat k * \nabla_{I_{l+1}} \hat k$), we apply $L^2, L^\infty, L^\infty, L^2$ or $L^\infty, L^2, L^\infty, L^2$, since $[\frac{l+1}{2}] +2 \leq 3(l\leq 2)$. Then we can deduce the bound
\begin{align*}
 \lesssim &\left( \mathcal{E}^{\frac{1}{2}}_{l+1}(W,t) +  \mathcal{E}^{\frac{1}{2}}_{l+1}(\phi,t) \mathcal{E}^{\frac{1}{2}}_{3}(\phi,t) \right) \mathcal{E}^{\frac{1}{2}}_{2}(\hat k,t) E^{\frac{1}{2}}_{l+1}(\hat k,t) \\
&+ \left( \mathcal{E}^{\frac{1}{2}}_{2}(\hat k,t) +  \mathcal{E}_{3}(\hat k,t) \right) \mathcal{E}_{l+1}(\hat k,t).
\end{align*}

\item If $1 \leq a+1 \leq l \leq 2,$ then $1 \leq b \leq l \leq 2,$ we apply $L^4, L^4, L^2$ or $L^6, L^6, L^6, L^2$ to derive the bound
\begin{align*}
& \left( \mathcal{E}^\frac{1}{2}_{l+1}(W,t)+ \mathcal{E}^\frac{1}{2}_{l+1}(\hat k ,t) \right) \mathcal{E}_{l+1}(\hat k,t) \\
& + \left( \mathcal{E}_{l+1}(\phi,t)  + \mathcal{E}_{l+1}(\hat k,t) \right) \mathcal{E}_{l+1}(\hat k,t).
\end{align*}
\end{itemize}
In summary, we have by Cauchy-Schwartz inequality,
\begin{equation}\label{esti-hat-k-l-RK4}
\begin{split}
\int_{\Sigma_t} RK_4 \lesssim & \left( \mathcal{E}^\frac{1}{2}_{l+1}(g,t) + \mathcal{E}_{l+1}(\phi,t) \right) \mathcal{E}_{l+1}(\hat k,t) \\
& +  \mathcal{E}_{l+1}(W,t) + \mathcal{E}_{3}(\phi,t)\mathcal{E}_{l+2}(\phi,t).
 \end{split}
\end{equation}

From \eqref{esti-hat-k-l-H}-\eqref{esti-hat-k-l-RK4}, we have for $ l \leq 2,$
\begin{equation}\label{eq-energy-hat-k-source}
\begin{split}
\int_{\Sigma_t}   |\nabla_{I_{l+2}} \hat k|^2 \lesssim & \mathcal{E}_{l+1}(W,t)+ \mathcal{E}_{l+1}(\hat k,t) + \mathcal{E}^2_4(\phi, t).
 \end{split}
\end{equation}
By induction and combining the improvement for $W, \phi$ with Lemma \ref{lem-hat-k-estimate-original-0}, we proved \eqref{estimate-nabla2-hat-k-i-induction} and hence \eqref{estimate-nabla-1-4-hat-k}.
\end{proof}

We choose $M$ additionally satisfying $C\left( I_0(\hat k) + I_3(W) + I_4(\phi) \right) \leq M/4$ and $\varepsilon$ sufficiently small such that $C \varepsilon^\frac{1}{2} M^\frac{1}{2} \leq 1/4$, then \eqref{eq-improved-hat-k-0} and \eqref{estimate-nabla-1-4-hat-k} imply
\begin{equation}\label{eq-energy-hat-k-improved}
t^2 E_{i}(\hat k,t ) \leq   \frac{1}{2}\varepsilon^2 M^2 t^{\delta}, \quad 0 \leq i \leq 4,
\end{equation}
which improves the bootstrap assumption for $\hat k$ in \eqref{BT-k-N-L-infty}.

\subsubsection{Closing the bootstrap assumption for lapse}
By the Sobolev inequalities, the improved energy for $\hat N$ \eqref{N-improved-estimate-source} (see Corollary \ref{lem-N-ineq-source-improved-2}) implies that 
 \begin{equation}\label{BT-N}
\|\hat N\|_{L^\infty} \leq C  \varepsilon  I_0(\phi)  t^{-1},
\end{equation}
for some universal constant $C$. We choose $C^2I^2_0(\phi) \leq M^2/2$, then
 \begin{equation}\label{ineq-hat-k-improved-infty}
 \|\hat N\|^2_{L^\infty} \leq \frac{\varepsilon^2 M^2}{2} t^{-2 + 4\delta}, \quad 0<\delta < \frac{1}{6},
\end{equation}
 which improves the bootstrap assumption for $\|\hat N\|_{L^\infty}$ in \eqref{BT-k-N-L-infty}.

We are in a position to state the global existence theorem.
\begin{theorem}\label{main-thm-global}
Let $(\Sigma, g_0, k_0, \phi_0, \phi_1)$ be CMC data for the EKG equations \eqref{eq-Einstein-source}-\eqref{def-energy-Mom-kg}, \eqref{eq-kg} (also see the $1+3$ forms \eqref{eq-1+3-bianchi-T-E}-\eqref{eq-1+3-bianchi-T-H} and \eqref{eq-rescale-kg-1+3-0}). Assume that $\Sigma$ is a compact hyperbolic $3$-manifold without boundary and $\tr_{g_0} k_0 = -3$, with initial time $t_0 > \max \{9 m^{-1}, 1\}$. Suppose that the initial data verifies the smallness condition: there is a constant $\varepsilon_0 >0$ such that for all $\varepsilon \leq \varepsilon_0$, 
\begin{equation}\label{initial-smallness}
t_0 \mathcal{E}_4(\phi, t_0) + t_0^2 \mathcal{E}_3(W, t_0) + t_0^2 \mathcal{E}_4(\hat k, t_0) \leq \varepsilon^2.
\end{equation}
Then there exists a unique global solution $(W, \phi)$ for $t \geq t_0$ with the property that for some universal constant $C$ and $0<\delta < \frac{1}{6},$
\begin{equation}\label{stronger-estimate-closing-bt}
t \mathcal{E}_4(\phi, t) + t^2 \mathcal{E}_3(W, t)   \lesssim \varepsilon^2 t^{C \varepsilon}, \quad  t^2 \mathcal{E}_4(\hat k, t) \lesssim \varepsilon^2 t^{\delta}.
\end{equation}
\end{theorem}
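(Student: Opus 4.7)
The plan is to combine the local existence theorem with the continuity (bootstrap) argument set up in Section \ref{sec-Boot-assum}, using all the improved energy estimates derived in Sections \ref{sec-energy-kg}--\ref{sec-energy-Bianchi} and Section \ref{sec-close-b-t}. First, I would invoke Theorem \ref{thm-local-existence} to produce a unique local smooth CMC-foliated development on some interval $[t_0, t_\ast)$. The smallness assumption \eqref{initial-smallness} together with the constraint equations (via the Gauss-Codazzi relations \eqref{Gauss-Riem-hat-k}--\eqref{Gauss-Ricci-hat-k} and the Weyl decomposition \eqref{def-Weyl}) and the elliptic equation \eqref{eq-hat-lapse} for the lapse imply that on the initial slice $\{t=t_0\}$ the bootstrap inequalities \eqref{BT-KG}--\eqref{BT-k-N-L-infty} hold with room to spare, once $M$ is fixed large enough depending on $I_4(\phi)$, $I_3(W)$, $I_4(\hat k)$, and $I_0(\phi)$ (c.f.~the choice of $M$ described after \eqref{eq-closing-b-t-KG-Weyl} and \eqref{eq-energy-hat-k-improved}).

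Next, let $[t_0, t']$ denote the maximal time interval on which \eqref{BT-KG}--\eqref{BT-k-N-L-infty} hold. On this interval all the preliminary estimates of Section \ref{sec-preliminary-estimate} are valid; in particular the Sobolev bounds \eqref{BT-KG-L-infty-4-6}--\eqref{BT-k-L-infty-4-6}, the refined lapse estimates of Lemma \ref{lem-N-ineq-source} and Corollary \ref{lem-N-ineq-source-improved-2}, and the improved $L^\infty$ estimate \eqref{improved-L-infty-kg-1} for the KG field. Using the hierarchy of energy estimates for the KG field (Lemmas \ref{lemma-improved-0-energy-KG} and \ref{lemma-improved-l-energy-KG}) and the Weyl energy estimates (Lemma \ref{prop-energy-estimate-Bianchi}), together with the elliptic/transport improvement of $\hat k$ in Lemmas \ref{lem-hat-k-estimate-original-0} and \ref{lem-hat-k-estimate-original} and of $\hat N$ in \eqref{BT-N}, I obtain on $[t_0,t']$ the strict improvements
\begin{equation*}
t\,\mathcal{E}_4(\phi,t) \leq \tfrac{1}{2}\varepsilon^2 M^2 t^{\delta}, \quad t^2\,\mathcal{E}_3(W,t)\leq \tfrac{1}{2}\varepsilon^2 M^2 t^{\delta}, \quad t^2\,\mathcal{E}_4(\hat k,t)\leq \tfrac{1}{2}\varepsilon^2 M^2 t^{\delta},
\end{equation*}
and $\|\hat N\|_{L^\infty}^2 \leq \tfrac{1}{2}\varepsilon^2 M^2 t^{-2+4\delta}$, provided $\varepsilon$ is small enough that $C_M\varepsilon\leq\delta$ and $C\varepsilon^{1/2}M^{1/2}\leq 1/4$. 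These strictly improve \eqref{BT-KG}--\eqref{BT-k-N-L-infty}.

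By the standard continuity/continuation argument (the strict improvement contradicts the maximality of $t'$ unless $t'=+\infty$), the bootstrap interval must be $[t_0,+\infty)$, and the solution is global in the expanding direction with the desired CMC foliation. The stronger decay \eqref{stronger-estimate-closing-bt} is then precisely the collection of improved estimates: \eqref{eq-closing-b-t-KG-Weyl} from Lemmas \ref{lemma-improved-l-energy-KG} and \ref{prop-energy-estimate-Bianchi} (with $C\varepsilon$ in place of $C_M\varepsilon$, since $\varepsilon M$ is bounded by the choice of constants), and \eqref{eq-energy-hat-k-improved} from Lemma \ref{lem-hat-k-estimate-original}. The main obstacle in this synthesis is not in any new computation but in ensuring that the choices of $M$ (absorbing the data-dependent constants $I_\ast$) and $\varepsilon$ (controlling the various $C_M\varepsilon$ powers in the Gr\"onwall exponents) can be made \emph{simultaneously} compatible across all the coupled estimates; one has to verify that the hierarchy in Lemma \ref{lemma-improved-l-energy-KG}, the Weyl estimate Lemma \ref{prop-energy-estimate-Bianchi}, and the $\hat k$-improvement Lemma \ref{lem-hat-k-estimate-original} can be closed in the order $\phi \to \hat N \to W \to \hat k$ without any of them feeding back a worse-than-linear power of $M$ into an earlier stage, which is exactly the point of the linearization procedure carried out through Sections \ref{sec-0-ee-kg}--\ref{sec-source-bianchi}.
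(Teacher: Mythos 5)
Your proposal is correct and follows essentially the same route as the paper: local existence plus the continuity argument of Section \ref{sec-Boot-assum}, with the bootstrap closed by the improved estimates of Lemmas \ref{lemma-improved-0-energy-KG}, \ref{lemma-improved-l-energy-KG}, \ref{prop-energy-estimate-Bianchi}, \ref{lem-hat-k-estimate-original-0}, \ref{lem-hat-k-estimate-original} and the lapse bound \eqref{BT-N}, so that \eqref{stronger-estimate-closing-bt} is exactly the collection \eqref{eq-closing-b-t-KG-Weyl}, \eqref{eq-energy-hat-k-improved} with $M$ (hence $C_M$) fixed in terms of the data constants $I_\ast$. The compatibility of the choices of $M$ and $\varepsilon$ that you flag is indeed handled in the paper exactly as you describe, via the ordering of the linearization/hierarchy in Sections \ref{sec-0-ee-kg}--\ref{sec-closing-bt-k}.
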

Finally, we can follow the argument presented in \cite{A-M-04} (Theorem 6.2) to prove the geodesic completeness.
       
 \begin{remark}\label{re-asympototic-state}
Let us the explore the asymptotic state of the rescaled metric $g$.
 As we can see from the Gauss equations \eqref{Gauss-Riem-hat-k}, the decay estimates for $\hat k, \D \phi, H$ yield that
 \begin{align*}
 & |R_{imjn} + \frac{1}{2}(g \odot g) _{imjn}| \lesssim \varepsilon t^{-4+\delta}.
\end{align*}
This shows that the sectional curvature $K$ of $g$ satisfies $K \rightarrow -1.$
\end{remark}
\begin{remark}\label{re-initial-data}
By the Gauss-Codazzi equation on the initial slice, the smallness condition for the data \eqref{intro-initial-data} given in Theorem \ref{main-thm-into} will enforce \eqref{initial-smallness}.
\end{remark}

\appendix

\section{Local in time development of CMC data}\label{sec-local-cmc}
In this section, we stick to the original spacetime $\breve g_{\mu \nu}$ and spatial metric $\tilde g_{ij}$, which are denoted by $\bar g_{\mu \nu}, g_{ij}$ (without tilde) as well. These notations (without tilde) applies to $\nabla, k_{ij}, R_{ij}, \bar R_{ij}, \cdots$ as well.

The EKG system \eqref{eq-Einstein-source}-\eqref{def-energy-Mom-kg} has the structure equations \eqref{eq-evolution-1}-\eqref{Codazzi-div-k} taking the following forms: The evolution equations
\begin{subequations}
\begin{equation}
\mathcal{L}_{\dt} g_{ij} = -2N k_{ij}, \quad \quad \quad \quad \quad \quad \quad \quad \quad \quad \quad \quad \quad \quad \quad \quad \quad \label{eq-evolution-1-1} 
\end{equation}
\begin{equation}
\lie_{\dt} k_{ij} = -\nabla_i \nabla_j N + N \left( R_{ij} - 2 k_{il} k_{j}^l + \tr k k_{ij} -  \bar{R}_{ij}(\phi)\right). \label{eq-evolution-2-1}
\end{equation}
\end{subequations}
\noindent The KG equation 
\begin{equation}\label{eq-kg-1}
\Box_{\bar g} \phi - m \phi =0.
\end{equation}
\noindent The equation for the lapse $N$
\begin{equation}\label{eq-lapse-1}
\Delta N + \dt \tr k - N \left( \bar R_{TT} (\phi) + |k|^2 \right)=0.
\end{equation}
\noindent The constraint equations
\begin{subequations}
\begin{equation}
\begin{split}
\tr k + 3t^{-1} &=0, \,\,\, \label{Constrain-CMC-gauge} 
\end{split}
\end{equation}
\begin{equation}
\begin{split}
\quad\quad R -|k|^2 + (\tr k)^2  &= 2\bar{R}_{TT}(\phi) + \bar{R}(\phi), \label{Constrain-Guass} 
\end{split}
\end{equation}
\begin{equation}
\begin{split}
\nabla^i k_{ij} - \nabla_j\tr k &= - \bar R_{Tj}(\phi). \label{Constrain-Codazzi}
\end{split}
\end{equation}
\end{subequations}
In \eqref{eq-evolution-1-1}-\eqref{Constrain-Codazzi}, the coupling $\bar R_{\alpha \beta}(\phi)$ is indeed the Ricci tensor $\bar R_{\alpha \beta}$ (associated to $\bar g$), which, via the EKG equation \eqref{eq-Einstein-source}-\eqref{def-energy-Mom-kg}, is related to the KG field by
\begin{equation}\label{eq-ricci-kg-phi}
\bar R_{\alpha \beta}(\phi) = D_\alpha \phi D_\beta \phi + \frac{m}{2}\phi^2 \bar g_{\alpha \beta}.
\end{equation}

As in \cite{Christodoulou-K-93} (P. 307), we introduce 
\begin{subequations}
\begin{equation}\label{conserved-gague-A}
A = \tr k +  3t^{-1}, \quad \quad \quad\quad \quad \quad \quad\quad \quad \quad \quad\quad \quad \quad \quad\quad \quad \quad \quad \quad 
\end{equation}
\begin{equation}\label{conserved-gague-B}
B = R -|k|^2 + (\tr k)^2  - 2\bar{R}_{TT}(\phi) - \bar{R}(\phi),  \quad \quad \quad\quad \quad  \quad  \quad \quad\,\,\,\,\,
\end{equation}
\begin{equation}\label{conserved-gague-C}
C_i = \nabla^j k_{ij} - \nabla_i \tr k + \bar R_{Ti}(\phi) + 1/2 \nabla_i A,  \quad \quad \quad\quad \quad\quad\quad \quad \quad \quad
\end{equation}
\begin{equation}\label{conserved-gague-D}
D_{ij}  = N^{-1} \lie_{\dt} k_{ij} + N^{-1} \nabla_i \nabla_j N - R_{ij} + 2 k_{ip} k_{j}^p - \tr k k_{ij} +  \bar{R}_{ij}(\phi).
\end{equation}
\end{subequations}
The definition of $A,B,C,D$ are almost identical (except the couplings) to the ones introduced in \cite{Christodoulou-K-93} and \cite{A-M-03-local}.

The EKG system \eqref{eq-evolution-1-1}-\eqref{eq-evolution-2-1},  \eqref{eq-kg-1}, \eqref{eq-lapse-1} and the constraint equations \eqref{Constrain-CMC-gauge}-\eqref{Constrain-Codazzi} imply $A=B=0, C_i=0, D_{ij} = 0$ and the following second-order hyperbolic equations for $k_{ij}$ and $\phi$:
\begin{equation}\label{eq-box-k-ij-kg-phi} 
\begin{split}
 - \left( N^{-1} \lie_{\dt} \right)^2 k _{ij} + \Delta k_{ij} &= F_{ij}, \\
 \Box_{\bar g} \phi -m \phi &=0, \\
\end{split}
\end{equation}
and
\begin{equation}\label{eq-svari-lapse} 
\begin{split}
-N^{-1} \lie_{\dt} g_{ij} &= 2 k_{ij}, \\
 \Delta N +\dt \tr k - N \left( \bar R_{TT}(\phi) + |k|^2 \right) &=0,
\end{split}
\end{equation}
and $N^\prime= \dtau N$,
\begin{equation}\label{eq-dt-N}
\begin{split}
& \Delta N^\prime - 3t^{-2}N^\prime
 =- 6t^{-3} \hat N   + N^\prime \left(\bar R_{TT}(\phi) + |\hat k|^2 \right) + N \dt\left(\bar R_{TT}(\phi) + |\hat k|^2 \right) \\
 & -t^{-1} |\nabla N |^2  -2 \hat k^{ij} \left( N\nabla_i\nabla_j N+ \nabla_i N \nabla_j N \right) +2t^{-1} N \Delta N + 2N \nabla^i N \bar R_{Ti} (\phi).
 \end{split}
\end{equation}
The inhomogeneous term $F_{ij}$ is 
\begin{equation}\label{def-F-ij-original} 
\begin{split}
 F_{ij} =&  N_{ij} - \nabla_i \bar R_{Tj}(\phi) -  \nabla_j  \bar R_{Ti} (\phi) \\
 & \quad + N^{-1} \lie_{\dt} \left( - \tr k k_{ij} + \bar R_{ij} (\phi) \right),
\end{split}
\end{equation}
where $N_{ij}$ and $H_{ij}$ are the same as that in \cite{Christodoulou-K-93} (Page 308):
\begin{equation}\label{def-N-L-ij} 
\begin{split}
N_{ij} =&L_{ij} - H_{ij},  \\
 N^2 L_{ij} = &\nabla_i \nabla_j N^\prime - N^{-1} N^\prime \nabla_i \nabla_j N - \lie_{\dt} \Gamma^p_{ij} \nabla_p N \\
 &+ 2 N \left( k_i{}^{\! p} \lie_{\dt} k_{jp} +k_j{}^{\! p} \lie_{\dt} k_{ip}  \right) + 4N^2 k^{pq} k_{ip} k_{jq},
\end{split}
\end{equation} 
and
\begin{equation}\label{def-H-I-ij} 
\begin{split}
N H_{ij} =&N I_{ij} +  \nabla^p N \left( 2 \nabla_p k_{ij} - \nabla_i k_{pj} - \nabla_j k _{p i} \right) \\
 &- \nabla_i N \nabla^p k_{pj} - \nabla_j N \nabla^p k_{pi} + \nabla_i N \nabla_j \tr k + \nabla_j N \nabla_i \tr k \\
 &- k_i{}^{\!p} \nabla_j \nabla_p N -   k_j{}^{\!p} \nabla_i \nabla_p N + \tr k \nabla_i \nabla_j N + k_{ij } \Delta N, \\
 I_{ij} =& -3 \left( k_i{}^{\!p} R_{jp} +  k_j{}^{\!p} R_{ip} \right) + 2 g_{ij} k^{pq}R_{pq} \\
 &+ 2 \tr k R_{ij} + \left(k_{ij} - g_{ij} \tr k \right) R.
\end{split}
\end{equation}

By a computation analogous to \cite{Christodoulou-K-93}, using the n+1-decomposition of the divergence of the energy-momentum tensor, $D^\mu T_{\mu \nu} = 0$, as presented in (\cite{Rendall-pde-gr}, P. 17), one derives the following system of evolution equation for $A, B ,C_i, D_{ij}$ which holds for  any given solution of the system \eqref{eq-box-k-ij-kg-phi}-\eqref{eq-svari-lapse}:
\begin{subequations}
\begin{equation}
\begin{split}
N^{-1} \dt A &= F:= \tr D + B, \quad \quad \quad\quad \quad \quad \quad\quad \quad \quad \quad \label{eq-time-derivative-A} 
\end{split}
\end{equation}
\begin{equation}
\begin{split}
N^{-1} \dt F  &=  \Delta A + 2N^{-1} \nabla^i N \nabla_i A - 4 N^{-1} \nabla^i N C_i, \quad   \label{eq-time-derivative-F-1} 
\end{split}
\end{equation}
\begin{equation}
\begin{split}
 \quad \,\, N^{-1} \dt C_i &= \left( \nabla^j D_{ji} - \frac{1}{2} \nabla_i \tr D\right)  + N^{-1} \nabla^j N D_{ij} \quad \quad \quad \\
& \quad \quad +  \tr k C_i - \frac{1}{2} \tr k \nabla_i A  -\frac{1}{2} N^{-1} \nabla_i N  F,  \label{eq-time-derivative-C} 
\end{split}
\end{equation}
\begin{equation}
\begin{split}
N^{-1} \dt D_{ij} &=   \nabla_i C_j + \nabla_j C_i. \quad \quad \quad\quad \quad \quad \quad  \quad \quad \quad\quad \quad  \label{eq-time-derivative-D}
\end{split}
\end{equation}
\end{subequations}
Generically, \eqref{eq-time-derivative-A}-\eqref{eq-time-derivative-D} always holds true and the proof is irrelevant to the matter field.

Therefore, as in \cite{Christodoulou-K-93}, one has
\begin{lemma}\label{lem-gauge-conserved}
Any solution of the coupled elliptic-hyperbolic system \eqref{eq-box-k-ij-kg-phi}-\eqref{eq-svari-lapse} whose initial data $g_{ij}, k_{ij}, \lie_{\dt} k_{ij}, \phi, T\phi$ verify the original constraint equations $A=0, B=0, C_i =0$ as well as $D_{ij}=0$ is a solution of the original systems  \eqref{eq-evolution-1-1}-\eqref{eq-evolution-2-1},  \eqref{eq-kg-1}, \eqref{eq-lapse-1}.
\end{lemma}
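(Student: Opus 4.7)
The plan is to treat the system \eqref{eq-time-derivative-A}--\eqref{eq-time-derivative-D} as a closed, linear, homogeneous evolution system for the quadruple $(A, F, C_i, D_{ij})$ and show that it admits only the trivial solution when started from trivial data; then the vanishing of $A, B, C_i, D_{ij}$ together with the reduced system \eqref{eq-box-k-ij-kg-phi}--\eqref{eq-svari-lapse} recovers the original EKG system \eqref{eq-evolution-1-1}--\eqref{Constrain-Codazzi}. This is the standard propagation-of-constraints argument, adapted to the CMC gauge with the KG source handled through $\bar R_{\alpha\beta}(\phi)$ as in \eqref{eq-ricci-kg-phi}.

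First, I would verify that the data $(A, F, C_i, D_{ij})|_{t=t_0}$ vanish identically. The hypothesis provides $A=B=C_i=D_{ij}=0$ at $t=t_0$, so in particular $F = \tr D + B = 0$ initially. (Here it is important that $B$ is expressible in terms of the initial $g_{ij}, k_{ij}, \phi, T\phi$ via the Gauss equation \eqref{Constrain-Guass}, so setting $B=0$ on the initial slice is genuinely a condition on the data and not on the unknowns.) This is the trivial but essential step: without the assumption $D_{ij}|_{t_0}=0$ one would only have an underdetermined system, which is why the data must be produced by specifying $g_{ij}, k_{ij}, \phi, T\phi$ and then computing $\lie_{\dt} k_{ij}$ via \eqref{conserved-gague-D} so as to force $D_{ij}=0$.

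Second, I would read \eqref{eq-time-derivative-A}--\eqref{eq-time-derivative-D} as a first-order system and check its hyperbolicity. Combining \eqref{eq-time-derivative-A} and \eqref{eq-time-derivative-F-1} one gets a wave equation for $A$ coupled to $C_i$; the pair $(C_i, D_{ij})$ is governed by a first-order Bianchi-type system \eqref{eq-time-derivative-C}--\eqref{eq-time-derivative-D} whose principal part is exactly the symmetric combination $\nabla^j D_{ji}-\tfrac{1}{2}\nabla_i\tr D$ against $\nabla_i C_j + \nabla_j C_i$. As in Chapter 10 of \cite{Christodoulou-K-93}, one can construct an energy of the schematic form
\begin{equation*}
\mathcal{E}(t) \;=\; \int_{\Sigma_t}\!\bigl( F^2 + |\nabla A|^2 + A^2 + |C|^2 + |D|^2\bigr)\,\di\mu_g,
\end{equation*}
differentiate in $t$, substitute \eqref{eq-time-derivative-A}--\eqref{eq-time-derivative-D}, integrate by parts (the boundary terms vanish because $\Sigma$ is compact without boundary), and absorb every quadratic cross-term into $\mathcal{E}(t)$ with coefficients depending only on $\|N\|_{C^1}$, $\|k\|_{C^0}$ and $\|\nabla N\|_{C^0}$, all of which are finite on the local existence interval of the reduced system.

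Third, Gr\"onwall applied to $\mathcal{E}'(t) \lesssim \mathcal{E}(t)$ with $\mathcal{E}(t_0)=0$ then forces $\mathcal{E}\equiv 0$, hence $A=F=C_i=D_{ij}=0$ throughout the development. Combined with $F=\tr D + B$, this yields $B\equiv 0$ as well. Substituting these vanishings back into \eqref{conserved-gague-A}--\eqref{conserved-gague-D} recovers the CMC condition \eqref{Constrain-CMC-gauge}, the Hamiltonian constraint \eqref{Constrain-Guass}, the momentum constraint \eqref{Constrain-Codazzi}, and the second variation equation \eqref{eq-evolution-2-1}; the first variation equation \eqref{eq-evolution-1-1} and the KG equation \eqref{eq-kg-1} are assumed directly in \eqref{eq-svari-lapse} and \eqref{eq-box-k-ij-kg-phi}, while \eqref{eq-lapse-1} is the second line of \eqref{eq-svari-lapse}. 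Thus the solution of the reduced system solves the full EKG system, completing the proof.

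The main obstacle I anticipate is verifying that \eqref{eq-time-derivative-A}--\eqref{eq-time-derivative-D} really do close on $(A, F, C_i, D_{ij})$ without picking up inhomogeneous terms from the matter sector. This is the place where one must use the twice-contracted Bianchi identity $D^\mu \mathcal{T}_{\mu\nu}(\phi) = 0$, which is itself a consequence of the KG equation \eqref{eq-kg-1} satisfied by any solution of the reduced system. Concretely, one has to check that the derivation of \eqref{eq-time-derivative-A}--\eqref{eq-time-derivative-D} from the unreduced Einstein equations in the CMC gauge, carried out in \cite{Christodoulou-K-93} for the vacuum case, goes through here because the only places where the Ricci tensor enters are through $\bar R_{\alpha\beta}(\phi)$, and the contracted Bianchi identity $D^\mu(\bar R_{\mu\nu} - \tfrac{1}{2}\bar R \bar g_{\mu\nu}) = 0$ is automatic for the rescaled Ricci obtained from $\mathcal{T}_{\alpha\beta}(\phi)$ once \eqref{eq-kg-1} holds. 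This is precisely the ``irrelevant to the matter field'' remark immediately preceding the lemma, and making it rigorous is the one computation that requires care.
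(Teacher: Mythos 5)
Your proposal is correct and follows essentially the same route as the paper: the paper likewise treats \eqref{eq-time-derivative-A}--\eqref{eq-time-derivative-D} as a closed, linear, homogeneous evolution system for $(A,F,C_i,D_{ij})$, derived via the $n+1$ decomposition of $D^\mu \mathcal{T}_{\mu\nu}(\phi)=0$ for solutions of the reduced system, and concludes as in \cite{Christodoulou-K-93} that the trivial initial data (including $D_{ij}=0$) propagate, so $A=B=C_i=D_{ij}\equiv 0$ and the original EKG system is recovered. The only difference is that you spell out the energy/Gr\"{o}nwall uniqueness step that the paper delegates to the Christodoulou--Klainerman reference.
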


An existence result for the reduced system can be done roughly in the same way as in the proof of the local existence for the reduced wavelike system \cite{Bruhat} (c.f. \cite{Christodoulou-K-93}). The only differences lie in that $g_{ij}$ is tied to $k_{ij}$ and $N$ through the integral equation in \eqref{eq-svari-lapse}.  
One can proceed precisely as in \cite{Christodoulou-K-93}, estimating $k_{ij}, \phi$ by energy estimates, $g_{ij}$ by the integral equation in \eqref{eq-svari-lapse}  and $N, N^\prime$ by elliptic estimates. We also remark that in the elliptic equations for $N-1, N^\prime$, both of the $\p_t \tr k$ and  $- 3t^{-2} N^\prime$ admit good sign. They will contribute to the estimates for $\|t^{-2} ( N-1) \|_{L^2(\Sigma_t)}$ and $\|t^{-2} N^\prime \|_{L^2(\Sigma_t)}$. This can be seen in Section \ref{sec-N}.  Besides, we are working within the setting of compact manifold without boundary, hence we do not have to care about the asymptotic decay at spatially infinity as in \cite{Christodoulou-K-93}, and the regularity we are considering is one order high than that in \cite{Christodoulou-K-93}. In general, the local existence theorem in our setting is easier than that in \cite{Christodoulou-K-93}. Finally, we make the rescaling and Theorem \ref{thm-local-existence} follows.

\section{ODE estimate}\label{sec-ode}
The following Gr\"{o}nwall inequality is used throughout this paper.
\begin{lemma}[Differential form]\label{lemma-Gronwall-diff}
Let $a(s), b(s), \psi(s) $ are non-negative functions, and
\begin{equation*}
\psi^\prime(s) \leq b(s) + a(s) \psi(s),
\end{equation*}
then
\begin{equation}\label{eq-Gronwall-diff}
|\psi(s)| \leq  \exp\left( \int_{s_0}^{s} a(\tau) \di \tau\right) \left(\psi(s_0) + \int_{s_0}^s  b(\tau) \di \tau \right).
\end{equation}
\end{lemma}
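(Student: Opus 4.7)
The plan is to reduce to the standard integrating-factor trick. Define $\mu(s) = \exp\bigl(-\int_{s_0}^{s} a(\tau)\,\di\tau\bigr)$. Since $a(\tau) \geq 0$, $\mu$ is a smooth, positive, nonincreasing function with $\mu(s_0)=1$. A direct computation shows $\mu'(s) = -a(s)\mu(s)$, so that
\begin{equation*}
\bigl(\mu(s)\psi(s)\bigr)' = \mu(s)\psi'(s) + \mu'(s)\psi(s) = \mu(s)\bigl(\psi'(s) - a(s)\psi(s)\bigr) \leq \mu(s) b(s),
\end{equation*}
by virtue of the differential inequality $\psi'(s) \leq a(s)\psi(s) + b(s)$ and the positivity of $\mu$.

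Next, I would integrate this over the interval $[s_0, s]$ and use $\mu(s_0)=1$ to obtain
\begin{equation*}
\mu(s)\psi(s) \leq \psi(s_0) + \int_{s_0}^{s} \mu(\tau) b(\tau)\,\di\tau.
\end{equation*}
Dividing by $\mu(s)>0$ yields
\begin{equation*}
\psi(s) \leq \exp\!\left(\int_{s_0}^{s} a(\tau)\,\di\tau\right) \psi(s_0) + \int_{s_0}^{s} \exp\!\left(\int_{\tau}^{s} a(r)\,\di r\right) b(\tau)\,\di\tau.
\end{equation*}

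Finally, to reach the stated form I would use the monotonicity provided by $a \geq 0$: for every $\tau \in [s_0,s]$,
\begin{equation*}
\exp\!\left(\int_{\tau}^{s} a(r)\,\di r\right) \leq \exp\!\left(\int_{s_0}^{s} a(r)\,\di r\right),
\end{equation*}
so the exponential factor can be pulled outside the $b$-integral, producing the claimed bound. The nonnegativity of $\psi$ then allows replacing $\psi(s)$ by $|\psi(s)|$ on the left. The entire argument is routine, and there is no real obstacle; the only place where care is needed is to verify that $a\geq 0$ is genuinely used twice (once to ensure $\mu$ is a decreasing positive weight, and once to bound the inner exponential by the outer one). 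If one allowed $a$ to change sign, the last step would fail and only the sharper form with $\exp(\int_\tau^s a\,\di r)$ inside the integral would hold.
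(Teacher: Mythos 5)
Your integrating-factor argument is correct: the paper states this Gr\"{o}nwall lemma in the appendix without proof, and your derivation (differentiate $\mu\psi$ with $\mu(s)=\exp(-\int_{s_0}^{s}a)$, integrate, then use $a\geq 0$ to pull the exponential out of the $b$-integral and $\psi\geq 0$ to insert the absolute value) is exactly the standard argument the paper implicitly relies on. The only cosmetic remark is that the monotonicity of $\mu$ is never actually needed, only its positivity; the sign of $a$ enters solely in the final step where $\exp\left(\int_{\tau}^{s} a\right)\leq\exp\left(\int_{s_0}^{s} a\right)$, as you correctly observe.
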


The technical ODE estimate (c.f. Lemma 3.5 in \cite{Ma-Lefoch-16}) specializes to the situation considered as follows,
\begin{lemma}\label{lemma-ode}
Let $$\frac{\partial^2}{\partial ^2 s} \psi + c^2 \psi = F(s),$$ with the initial data $$\psi\big|_{s= s_0}= \psi_0, \quad \partial_s \psi \big|_{s=s_0} = \psi_1.$$ There is the inequality
\begin{equation}\label{eq-lemma-ode}
|\psi(s)| + |\partial_s \psi(s)| \lesssim |\psi_0| + |\psi_1| + \int_{s_0}^{s} |F(\tau)| \di \tau.
\end{equation}
\end{lemma}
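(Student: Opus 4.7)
\textbf{Proof proposal for Lemma \ref{lemma-ode}.}
The plan is to combine the explicit Duhamel representation of solutions to constant-coefficient second-order ODEs with an elementary energy argument, the former being the cleaner route for this particular estimate. The key observation is that the homogeneous operator $\partial_s^2 + c^2$ has the two linearly independent fundamental solutions $\cos(c(s-s_0))$ and $\sin(c(s-s_0))$, with Wronskian equal to $c$.

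First I will write down the variation-of-parameters formula
\begin{equation*}
\psi(s) = \psi_0 \cos\bigl(c(s-s_0)\bigr) + \frac{\psi_1}{c}\sin\bigl(c(s-s_0)\bigr) + \frac{1}{c}\int_{s_0}^{s}\sin\bigl(c(s-\tau)\bigr) F(\tau)\,d\tau,
\end{equation*}
which is verified directly by substitution and by checking that it matches the prescribed initial data. Differentiating once yields
\begin{equation*}
\partial_s\psi(s) = -c\psi_0 \sin\bigl(c(s-s_0)\bigr) + \psi_1\cos\bigl(c(s-s_0)\bigr) + \int_{s_0}^{s}\cos\bigl(c(s-\tau)\bigr) F(\tau)\,d\tau.
\end{equation*}
Taking absolute values of both expressions and using the trivial bounds $|\sin|\le 1$, $|\cos|\le 1$, then adding, produces \eqref{eq-lemma-ode} with an implicit constant depending only on $c$ (which, in the application of interest, is the fixed Klein-Gordon mass $m$, so it is harmless).

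As an alternative I can give a one-line energy proof: set $E(s):=(\partial_s\psi)^2 + c^2\psi^2$ and compute $\frac{d}{ds}E = 2\,\partial_s\psi\,(\partial_s^2\psi + c^2\psi) = 2F\,\partial_s\psi$, hence $\bigl|\tfrac{d}{ds}E\bigr|\le 2|F|\sqrt{E}$, which integrates to $\sqrt{E(s)}\le \sqrt{E(s_0)} + \int_{s_0}^{s}|F(\tau)|\,d\tau$ and therefore controls $|\partial_s\psi|+c|\psi|$ in terms of the right-hand side of \eqref{eq-lemma-ode}. There is no real obstacle here: the only mild subtlety is keeping track of how the constant depends on $c$, but since the lemma is only applied with the fixed mass $c=m$ (via $\Phi=t^{3/2}\phi$ in \eqref{ode-kg}), this dependence is absorbed into the $\lesssim$ symbol.
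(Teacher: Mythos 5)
Your proposal is correct, and it is worth noting that the paper itself does not prove this lemma at all: it simply quotes Lemma 3.5 of \cite{Ma-Lefoch-16}, where a (more general, perturbed) version of this ODE estimate is established for the hyperboloidal Klein--Gordon analysis. Your argument supplies a complete, self-contained proof of the constant-coefficient specialization actually stated here, and both of your routes are sound: the variation-of-parameters formula is verified directly and, since $|\sin|,|\cos|\le 1$, immediately yields \eqref{eq-lemma-ode} with an implicit constant of the order $\max\{c,\,c^{-1},\,1\}$; the energy identity $\frac{d}{ds}\bigl((\partial_s\psi)^2+c^2\psi^2\bigr)=2F\,\partial_s\psi$ gives the same conclusion, with the only cosmetic caveats that one should justify the division by $\sqrt{E}$ at zeros of $E$ (e.g.\ by working with $\sqrt{E+\epsilon^2}$ and letting $\epsilon\to 0$) and that converting $c|\psi|$ back to $|\psi|$ again costs a factor $c^{-1}$. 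You correctly flag that the dependence of the implicit constant on $c$ is harmless, since in the paper's application (\eqref{ode-kg}, where $\Phi=t^{3/2}\phi$ and the coefficient is the fixed mass) $c$ is a fixed positive constant absorbed into $\lesssim$. What your approach buys is transparency and self-containedness for the statement as written; what the paper's citation buys is the stronger form of the lemma in \cite{Ma-Lefoch-16}, which tolerates perturbations of the coefficients and is the form actually needed in the Minkowski wave--Klein--Gordon setting, though not needed here.
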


\section{Basic geometric conventions and identities}\label{sec-app-id}
{\bf Curvature}. We clarify the conventions for curvature. The Riemann tensor of  $(\bar M, \bar g)$ is defined by
\begin{align*}
\bar R(X, Y) Z = D_{X} D_{Y} Z - D_{Y} D_{X} Z - D_{[X, Y]} Z
\end{align*}
for vector fields $X, Y, Z$. We define in a coordinate system
\begin{align*}
 \bar R_{\alpha \beta \mu \nu} = \bar g(e_\mu, \bar R(e_\alpha, e_\beta) e_\nu).
\end{align*}
The Riemann tensor satisfies the Bianchi identities
\begin{align*}
D_{[\sigma} \bar R_{\alpha \beta ]\mu \nu}= \frac{1}{3} \left( D_{\sigma} \bar R_{\alpha \beta \mu \nu} + D_{\alpha} \bar R_{\beta \sigma \mu \nu} + D_{\beta} \bar R_{\sigma \alpha \mu \nu} \right) =0.
\end{align*}

{\bf Volume element}. Let $\epsilon_{\mu \alpha \nu \beta}$ be the coefficient of the volume element of an $1+3$ dimensional Lorentz manifold $(\bar M, \bar g)$.  Then $\epsilon_{\mu \alpha \nu \beta}$ is a totally anti-symmetric tensor, and satisfies the identities (c.f. Page 136 in \cite{Christodoulou-K-93})
\begin{align*}
\epsilon^{\alpha_1 \alpha_2 \alpha_3 \alpha_4}\epsilon_{\alpha_1 \beta_2 \beta_3 \beta_4} &= -\det(\delta^{\alpha_i}_{\beta_j})_{i,j=2,3,4} \\
\epsilon^{\alpha_1 \alpha_2 \alpha_3 \alpha_4}\epsilon_{\alpha_1 \alpha_2 \beta_3 \beta_4} &= -2\det(\delta^{\alpha_i}_{\beta_j})_{i,j=3,4} \\
\epsilon^{\alpha_1 \alpha_2 \alpha_3 \alpha_4}\epsilon_{\alpha_1 \alpha_2 \alpha_3 \beta_4} &= -6\delta^{\alpha_4}_{\beta_4} \\
\epsilon^{\alpha_1 \alpha_2 \alpha_3 \alpha_4}\epsilon_{\alpha_1 \alpha_2 \alpha_3 \alpha_4} &= -24.
\end{align*}
In an O.N. frame adapted to a spacelike hypersurface $\Sigma$ in $(\bar M, \bar g)$,  the coefficient of the induced volume element is (c.f. Page 144 in \cite{Christodoulou-K-93})
\begin{align*}
\epsilon_{ijk}=\epsilon_{\Tbar ijk}.
\end{align*}

{\bf Identities}.
Let $(\Sigma, g_{ij})$ be the spatially $3$-dimensional Riemann manifold, and $\nabla$ is the covariant derivative with respect to $g_{ij}$. Let $\Gamma^a_{ij}$ be the connection coefficient associated to $\nabla$, the time derivative is a tensorfield:
\begin{equation}\label{dt-Gamma}
\begin{split}
\lie_{\dt} \Gamma^a_{ij}& =\frac{1}{2}g^{ab} \left( \nabla_i \lie_{\dt} g_{jb} + \nabla_j \lie_{\dt} g_{ib} -\nabla_b \lie_{\dt} g_{ij} \right).
\end{split}
\end{equation}
Then for any $(0, 2)$ tensor on $\Sigma$
\begin{equation}\label{commuting-lie-nabla}
\begin{split}
\lie_{\dt} \nabla_i A_{ab} =& \nabla_i \lie_{\dt} A_{ab} - \lie_{\dt} \Gamma^p_{ia} A_{pb} -  \lie_{\dt} \Gamma^p_{ib} A_{ap}.
\end{split}
\end{equation}

Define the following operations on symmetric $2$-tensors on $\Sigma,$
\begin{align}
A \cdot B &=A_{ij} B^{ij}, \label{def-dot} \\
(\dive  A)_i &= \nabla^j A_{ij}, \label{def-div} \\
\curl A_{ij} &= \frac{1}{2} \left( \epsilon_i{}^{\! pq}\nabla_q A_{pj} +\epsilon_j{}^{\! pq} \nabla_q A_{pi} \right). \label{def-curl} \\
(A \wedge B)_i & = \epsilon_i{}^{\! jp} A_j{}^{\! q} B_{qp}, \label{def-wedge} \\
(v \wedge A)_{ij} & = \epsilon_i{}^{\! pq} v_p A_{qj}  +  \epsilon_j{}^{\! pq} v_p A_{qi} , \label{def-wedge-1} \\
(A \times B)_{ij} & = \epsilon_i{}^{\! ab} \epsilon_j{}^{\! pq}  A_{ap} B_{bq} + \frac{1}{3} A \cdot B g_{ij} - \frac{1}{3} \tr A \tr B g_{ij}. \label{def-times} 
\end{align}
The operation $\wedge$ is skew symmetric, while $\times$ is symmetric.

We get some identities between $\dive$ and $\curl$. A computation shows that
\begin{equation}\label{wedge-curl-div}
\dive (A \wedge B) = - \curl A \cdot B + A \cdot \curl B.
\end{equation}
The $l^{\text{th}} (l \geq 1) $ order analogous of \eqref{wedge-curl-div} is given below (or Lemma \ref{lemma-div-curl}).
\begin{equation}\label{wedge-curl-div-general-app}
\begin{split}
& \nabla_{I_l} \left( \curl A \right)_{ij}  \nabla^{I_l} B^{ij} - \nabla_{I_l} \left( \curl B \right)_{ij} \nabla^{I_l} A^{ij} \\
=& \sum_{a \leq l}  \nabla_q    \left( \nabla_{I_a} A  * \nabla_{I_a} B  \right)^q \pm \hat{\mathcal{R}}_{I_{l-1}}(A) * \nabla_{I_l} B + \hat{\mathcal{R}}_{I_{l-1}}(B) * \nabla_{I_{l}} A .
\end{split}
\end{equation}
where $\hat{\mathcal{R}}_{I_l}(A)$ is defined as in \eqref{Def-R-Psi-ij-commute-nabla-laplacian}.

\begin{proof}[Proof for Lemma \ref{lemma-div-curl}]
As far as this paper is concerned, we can restrict our attention to the $l \leq 3$ case for simplicity. Denote
\begin{equation}\label{wedge-curl-div-general-app}
\begin{split}
C_{I_l} =& \nabla_{I_l} \left( \curl H \right)_{ij}  \nabla^{I_l} E^{ij} - \nabla_{I_l} \left( \curl E \right)_{ij} \nabla^{I_l} H^{ij}.
\end{split}
\end{equation}

We first note that, for any symmetric tensor $H_{ij}, E^{ij},$ $\nabla^{I_l} H_{ij}$ and $\nabla^{I_l} E^{ij}$ are still symmetric in $i,j$. Therefore,
\begin{align*}
C_{I_l} = &  \epsilon_i{}^{\! pq} \nabla_{I_l} \nabla_q H_{pj} \nabla^{I_l} E^{ij}  - \epsilon_i{}^{\! pq} \nabla_{I_l} \nabla_q E_{pj} \nabla^{I_l} H^{ij}.
\end{align*}
To prove the Lemma, it suffice to keep track of the principle part $R_{imjn}$ (without derivatives). Again, the principle part of $R_{imjn}$ is $-\left(g_{ij} g_{mn} -g_{in} g_{mj} \right).$ In what follows, we introduce the notation $\simeq$, which means equalling in the principle part. For example, $R_{imjn}\simeq -\left(g_{ij} g_{mn} -g_{in} g_{mj} \right).$

{\bf Case I}: $ \epsilon_i{}^{\! pq} \nabla_q \nabla_{I_l} H_{pj} \cdot \nabla^{I_l} E^{ij}  - \epsilon_i{}^{\! pq}  \nabla_q \nabla_{I_l} E_{pj} \cdot \nabla^{I_l} H^{ij}.$
\begin{align*}
&\epsilon_i{}^{\! pq}   \left(  \nabla_q \nabla_{i_1} \cdots \nabla_{i_l} H_{pj} \cdot\nabla^{i_1} \cdots \nabla^{i_l} E^{ij} \right)  \\
& \quad \quad -  \epsilon_i{}^{\! pq} \nabla_q \nabla_{i_1} \cdots \nabla_{i_l}  E_{pj} \cdot \nabla^{i_1} \cdots \nabla^{i_l} H^{ij}   \\
=&  \nabla_q \left( \epsilon_i{}^{\! pq}  \nabla_{i_1} \cdots \nabla_{i_l} H_{pj} \cdot\nabla^{i_1} \cdots \nabla^{i_l} E^{ij} \right)  \\
&\quad \quad + \epsilon^{piq} \nabla_q \nabla^{i_1} \cdots \nabla^{i_l} E_{ij} \cdot \nabla_{i_1} \cdots \nabla_{i_l} H_p^{j} \\
& \quad \quad -  \epsilon^{ipq} \nabla_q \nabla_{i_1} \cdots \nabla_{i_l}  E_{pj} \cdot \nabla^{i_1} \cdots \nabla^{i_l} H_i^{j}   \\
=&   \nabla_q \left( \epsilon_i{}^{\! pq}  \nabla_{i_1} \cdots \nabla_{i_l} H_{pj} \cdot\nabla^{i_1} \cdots \nabla^{i_l} E^{ij} \right),
\end{align*}
where we had exchange the indices $p$ and $i$ in the fourth line. 
Note that, on the right hand side of the first identity, $\epsilon^{ipq}$ is antisymmetric in $i$ and $p$, while $\nabla_{I_l} H_{pj} \nabla_q \nabla^{I_l} E_i^{j} +\nabla_q \nabla_{I_l}  E_{pj} \nabla^{I_l} H_i^{j}$ is symmetric in $i$ and $p$. The product vanishes.

{\bf Case II}: $ \epsilon_i{}^{\! pq} \nabla_{i_1} \nabla_q \nabla_{I_{l-1}} H_{pj} \cdot \nabla^{I_l} E^{ij}  - \epsilon_i{}^{\! pq} \nabla_{i_1} \nabla_q \nabla_{I_{l-1}} E_{pj} \cdot \nabla^{I_l} H^{ij}.$ In particular, this covers $C_{I_1}$.
 Combine with case I, case II reduces to $ \epsilon_i{}^{\! pq} [\nabla_{i_1}, \nabla_q] \nabla_{I_{l-1}} H_{pj} \cdot \nabla^{I_l} E^{ij}  - \epsilon_i{}^{\! pq} [\nabla_{i_1}, \nabla_q] \nabla_{I_{l-1}} E_{pj} \cdot \nabla^{I_l} H^{ij}$ as we can see below:
 \begin{align*}
=&  \epsilon_i{}^{\! pq}   \nabla_{i_1} \nabla_q \nabla_{I_{l-1}}  H_{pj} \cdot\nabla^{i_1} \cdots \nabla^{i_l} E^{ij} \\
& \quad \quad - \epsilon_i{}^{\! pq}  \nabla_{i_1} \nabla_q \nabla_{I_{l-1}}  E_{pj} \cdot \nabla^{i_1} \cdots \nabla^{i_l} H^{ij}  \\
=&\epsilon_i{}^{\! pq}  \nabla_q \nabla_{I_l}  H_{pj} \cdot\nabla^{I_l}  E^{ij}  -  \epsilon_i{}^{\! pq} \nabla_q \nabla_{I_l} E_{pj} \cdot \nabla^{I_l}  H^{ij}  \\
&+  \epsilon_i{}^{\! pq} [\nabla_{i_1}, \nabla_q] \nabla_{I_{l-1}} H_{pj} \cdot \nabla^{I_l} E^{ij}  - \epsilon_i{}^{\! pq} [\nabla_{i_1}, \nabla_q] \nabla_{I_{l-1}} E_{pj} \cdot \nabla^{I_l} H^{ij}.
\end{align*}
where the first line on the right hand side of the last equality is exactly case I. The last two commutators equal to
\begin{align*}
& \epsilon_i{}^{\! pq} [\nabla_{i_1}, \nabla_q] \nabla_{I_{l-1}} H_{pj} \cdot \nabla^{I_l} E^{ij}  - \epsilon_i{}^{\! pq} [\nabla_{i_1}, \nabla_q] \nabla_{I_{l-1}} E_{pj} \cdot \nabla^{I_l} H^{ij}\\
\simeq & \epsilon_i{}^{\! pq} \left( R_{i_1 q i_i}{}^{\! \lambda} \nabla_{i_2} \cdots \stackrel{i_i}{\nabla}_\lambda \cdots \nabla_{i_l} H_{pj} + R_{i_1 q p}{}^{\! \lambda} \nabla_{I_{l-1}} H_{\lambda j}  + R_{i_1 q j}{}^{\! \lambda} \nabla_{I_{l-1}} H_{p \lambda}  \right) \nabla^{I_l} E^{ij} \\
&- \epsilon_i{}^{\! pq} \left( R_{i_1 q i_i}{}^{\! \lambda} \nabla_{i_2} \cdots \stackrel{i_i}{\nabla}_\lambda \cdots \nabla_{i_l} E_{pj} + R_{i_1 q p}{}^{\! \lambda} \nabla_{I_{l-1}} E_{\lambda j}  + R_{i_1 q j}{}^{\! \lambda} \nabla_{I_{l-1}} E_{p \lambda}  \right) \nabla^{I_l} H^{ij}.
\end{align*}
Substitute the principle part of $R_{imjn}$, it equals to
\begin{align*}
=&-\epsilon_i{}^{\! pq} \left( g_{i_1 i_i} g_q^\lambda - g_{i_1}^\lambda g_{q i_i} \right) \nabla_{i_2} \cdots \stackrel{i_i}{\nabla}_\lambda \cdots \nabla_{i_l} H_{pj}   \nabla^{I_l} E^{ij} - \text{dual parts} \\
& - \epsilon_i{}^{\! pq} \left( g_{i_1 p} g_q^\lambda - g_{i_1}^\lambda g_{pq} \right) \nabla_{I_{l-1}} H_{\lambda j}  \nabla^{I_l} E^{ij} - \text{dual parts} \\
& - \epsilon_i{}^{\! pq} \left( g_{i_1 j }g_q^\lambda - g_{i_1}^\lambda g_{qj} \right) \nabla_{I_{l-1}} H_{p \lambda}  \nabla^{I_l} E^{ij} - \text{dual parts}.
\end{align*}
In views of the symmetry, we have that it equals to
\begin{equation}\label{eq-Case-2-principle}
\begin{split}
=& \sum_{i_i=i_2}^{i_l} -\epsilon_i{}^{\! p \lambda} g_{i_1 i_i} \nabla_{i_2} \cdots \stackrel{i_i}{\nabla}_\lambda \cdots \nabla_{i_l} H_{pj} \nabla^{I_l} E^{ij} \\
& + \epsilon_i{}^{\! p}{}_{\! i_i} \nabla_{i_2} \cdots \stackrel{i_i}{\nabla}_{i_1} \cdots \nabla_{i_l} H_{pj}   \nabla^{I_l} E^{ij}  \\
&+ \epsilon_i{}^{\! p \lambda} g_{i_1 i_i} \nabla_{i_2} \cdots \stackrel{i_i}{\nabla}_\lambda \cdots \nabla_{i_l} E_{pj}  \nabla^{I_l} H^{ij} \\
&  - \epsilon_i{}^{\! p}{}_{\! i_i} \nabla_{i_2} \cdots \stackrel{i_i}{\nabla}_{i_1} \cdots \nabla_{i_l} E_{pj}   \nabla^{I_l} H^{ij}  \\
&- \epsilon_{i i_1}{}^{\! \lambda} \nabla_{I_{l-1}} H_{\lambda j}  \nabla^{I_l} E^{ij} + \epsilon_{i i_1}{}^{\! \lambda}   \nabla_{I_{l-1}} E_{\lambda j} \nabla^{I_l} H^{ij},
\end{split}
\end{equation}
which are denoted by $S_i, i =1 \cdots 6.$
$S_1 + S_4$ equals to 
\begin{align*}
=&- \nabla_{i_i} \left( \epsilon_i{}^{\! p \lambda}   \nabla_{i_2} \cdots \stackrel{i_i}{\nabla}_\lambda \cdots \nabla_{i_l} H_{pj}  \nabla^{i_2} \cdots \nabla^{i_l} E^{ij}  \right)  \\
& + \epsilon_{ip}{}^{\! \lambda}    \nabla_{i_i}  \nabla_{i_2} \cdots \stackrel{i_i}{\nabla}_\lambda \cdots \nabla_{i_l} H^p_{j}  \nabla^{i_2} \cdots \nabla^{i_l} E^{ij}   \\
&  - \epsilon_{ip \lambda}  \nabla_{i_2} \cdots \stackrel{i_i}{\nabla}_{q} \cdots \nabla_{i_l} E^p_{j}  \nabla^q  \nabla^{i_2} \cdots \stackrel{i_i}{\nabla^{\lambda}} \cdots \nabla^{i_l} H^{ij}\\
=& div  + \epsilon_{ip}{}^{\! \lambda}    \nabla_{i_i}  \nabla_{i_2} \cdots \stackrel{i_i}{\nabla}_\lambda \cdots \nabla_{i_l} H^p_{j}  \nabla^{i_2} \cdots \nabla^{i_l} E^{ij}   \\
&  - \epsilon_{pi \lambda}  \nabla_{i_2} \cdots \stackrel{i_i}{\nabla}_{i_i} \cdots \nabla_{i_l} E^i_{j}  \nabla^{i_i}  \nabla^{i_2} \cdots \stackrel{i_i}{\nabla^{\lambda}} \cdots \nabla^{i_l} H^{pj}.
\end{align*}
By the reason of symmetry, the last two term cancelled. We have 
\begin{align*}
S_1 + S_4 =&- \nabla_{i_i} \left( \epsilon_i{}^{\! p \lambda}   \nabla_{i_2} \cdots \stackrel{i_i}{\nabla}_\lambda \cdots \nabla_{i_l} H_{pj}  \nabla^{i_2} \cdots \nabla^{i_l} E^{ij}  \right).
\end{align*}
Look at the $S_2 + S_3$, we change the index $i_i$ to be $\lambda$
\begin{align*}
=&\epsilon_i{}^{\! p \lambda} \nabla_{i_2} \cdots \stackrel{i_i}{\nabla}_{i_1} \cdots \nabla_{i_l} H_{pj}   \nabla^{i_1} \cdots \stackrel{i_i}{\nabla^\lambda} \cdots \nabla^{i_l} E^{ij}  \\
&+ \epsilon_i{}^{\! p \lambda} \nabla_{i_2} \cdots \stackrel{i_i}{\nabla}_\lambda \cdots \nabla_{i_l} E_{pj}  \nabla_{q}   \nabla^{i_2} \cdots \stackrel{i_i}{\nabla^q} \cdots \nabla^{i_l} H^{ij} \\
 &=\nabla^q \left( \epsilon_i{}^{\! p \lambda} \nabla_{i_2} \cdots \stackrel{i_i}{\nabla}_{q} \cdots \nabla_{i_l} H_{pj}  \nabla^{i_2} \cdots \stackrel{i_i}{\nabla^\lambda} \cdots \nabla_{i_l} E^{ij} \right) \\
 &-  \epsilon_i{}^{\! p \lambda} \nabla^q  \nabla_{i_2} \cdots \stackrel{i_i}{\nabla}_{q} \cdots \nabla_{i_l} H_{pj}  \nabla^{i_2} \cdots \stackrel{i_i}{\nabla^\lambda} \cdots \nabla_{i_l} E^{ij}  \\
 & + \epsilon_i{}^{\! p \lambda} \nabla_{i_2} \cdots \stackrel{i_i}{\nabla}_{\lambda} \cdots \nabla_{i_l} E_{pj}   \nabla_{q} \nabla^{i_2} \cdots \stackrel{i_i}{\nabla^q} \cdots \nabla^{i_l} H^{ij}.
\end{align*}
Again, by the symmetric reason, the last two terms cancelled. Thus,
\begin{align*}
S_2 + S_3=&\epsilon_i{}^{\! p \lambda} \nabla_{i_2} \cdots \stackrel{i_i}{\nabla}_{i_1} \cdots \nabla_{i_l} H_{pj}   \nabla^{i_1} \cdots \stackrel{i_i}{\nabla^\lambda} \cdots \nabla^{i_l} E^{ij}.
\end{align*}
Finally the last two terms $S_5+S_6$,
\begin{align*}
=&- \epsilon_{i q}{}^{\! \lambda} \nabla_{I_{l-1}} H_{\lambda j} \nabla^q \nabla^{I_{l-1}} E^{ij} + \epsilon_{i q}{}^{\! \lambda}  \nabla_{I_{l-1}} E_{\lambda j}   \nabla^q \nabla^{I_{l-1}} H^{ij},
\end{align*}
is exactly case I.
As a summary, we prove in case II:
\begin{align*}
&  \epsilon_i{}^{\! pq}   \nabla_{i_1} \nabla_q \nabla_{I_{l-1}}  H_{pj} \cdot\nabla^{I_l}  E^{ij} - \epsilon_i{}^{\! pq}  \nabla_{i_1} \nabla_q \nabla_{I_{l-1}}  E_{pj} \cdot \nabla^{I_l} H^{ij}  \\
\simeq &\nabla_i \left( \nabla_{I_l} H * \nabla_{I_l} E \right)^i+ \nabla_i \left( \nabla_{I_{l-1}} H * \nabla_{I_{l-1}} E \right)^i.
\end{align*}

{\bf Case III}: $ \epsilon_i{}^{\! pq} \nabla_{i_1} \nabla_{i_2} \nabla_q \nabla_{I_{l-2}} H_{pj} \cdot \nabla^{I_l} E^{ij}  - \epsilon_i{}^{\! pq} \nabla_{i_1} \nabla_{i_2} \nabla_q \nabla_{I_{l-2}} E_{pj} \cdot \nabla^{I_l} H^{ij}.$ Combined with case II, it reduces to consider $ \epsilon_i{}^{\! pq} \nabla_{i_0} [ \nabla_{i_1},  \nabla_q] \nabla_{I_{l-1}} H_{pj} \cdot \nabla^{i_0} \nabla^{I_l} E^{ij}  - \epsilon_i{}^{\! pq }\nabla_{i_0} [ \nabla_{i_1}, \nabla_q] \nabla_{I_{l-1}} E_{pj} \cdot \nabla^{i_0} \nabla^{I_l} H^{ij}.$ 
As calculated in case II \eqref{eq-Case-2-principle}, it has the principle part,
\begin{align*}
\simeq & - \epsilon_i{}^{\! p \lambda}  g_{i_1 i_i} \nabla_{i_0} \nabla_{i_2} \cdots \stackrel{i_i}{\nabla}_\lambda \cdots \nabla_{i_l} H_{pj}  \nabla^{i_0} \nabla^{I_l} E^{ij}\\
& + \epsilon_i{}^{\! p}{}_{\! i_i} \nabla_{i_0} \nabla_{i_2} \cdots \stackrel{i_i}{\nabla}_{i_1} \cdots \nabla_{i_l} H_{pj} \nabla^{i_0}\nabla^{I_l} E^{ij}  \\
& + \epsilon_i{}^{\! p \lambda} g_{i_1 i_i} \nabla_{i_0} \nabla_{i_2} \cdots \stackrel{i_i}{\nabla}_\lambda \cdots \nabla_{i_l} E_{pj} \nabla^{i_0} \nabla^{I_l} H^{ij} \\
& - \epsilon_i{}^{\! p}{}_{\! i_i} \nabla_{i_0} \nabla_{i_2} \cdots \stackrel{i_i}{\nabla}_{i_1} \cdots \nabla_{i_l} E_{pj} \nabla^{i_0}  \nabla^{I_l} H^{ij}  \\
& - \epsilon_{i i_1}{}^{\! \lambda} \nabla_{i_0} \nabla_{I_{l-1}} H_{\lambda j} \nabla^{i_0} \nabla^{I_l} E^{ij} + \epsilon_{i i_1}{}^{\! \lambda} \nabla_{i_0}  \nabla_{I_{l-1}} E_{\lambda j} \nabla^{i_0}  \nabla^{I_l} H^{ij} 
\end{align*}

We begin with the subcase {\bf Case III-1}: $l=2$, i.e.
$ \nabla_{I_2} \left( \curl H \right)_{ij} \cdot  \nabla^{I_2} E^{ij} - \nabla_{I_2} \left( \curl E \right)_{ij} \cdot  \nabla^{I_2} H^{ij}$.
 That is, 
\begin{align*}
=&   \epsilon_i{}^{\! p}{}_{\! \lambda} \nabla_{i_1} \nabla_{i_2} E_{pj} \nabla^{i_1} \nabla^{i_2} \nabla^\lambda H^{ij}  - \epsilon_i{}^{\! p}{}_{\! \lambda} \nabla_{i_1} \nabla_{i_2} H_{pj}  \nabla^{i_1} \nabla^{i_2} \nabla^\lambda E^{ij}\\
= & \epsilon_i{}^{\! p}{}_{\! \lambda} \nabla_{i_1} \nabla_{i_2} E_{pj} \nabla^{i_1} \nabla^\lambda \nabla^{i_2}  H^{ij}  - \epsilon_i{}^{\! p}{}_{\! \lambda} \nabla_{i_1} \nabla_{i_2} H_{pj}  \nabla^{i_1}\nabla^\lambda  \nabla^{i_2} E^{ij}\\
& + \epsilon_i{}^{\! p}{}_{\! \lambda} \nabla_{i_1} \nabla_{i_2} E_{pj} \nabla^{i_1} [ \nabla^{i_2}, \nabla^\lambda] H^{ij}  - \epsilon_i{}^{\! p}{}_{\! \lambda} \nabla_{i_1} \nabla_{i_2} H_{pj}  \nabla^{i_1} [\nabla^{i_2}, \nabla^\lambda] E^{ij}.
\end{align*}  
The line $ \epsilon_i{}^{\! p}{}_{\! \lambda} \nabla_{i_1} \nabla_{i_2} E_{pj} \nabla^{i_1} \nabla^\lambda \nabla^{i_2}  H^{ij}  - \epsilon_i{}^{\! p}{}_{\! \lambda} \nabla_{i_1} \nabla_{i_2} H_{pj}  \nabla^{i_1}\nabla^\lambda  \nabla^{i_2} E^{ij}$ is reduced to case II. As calculated in case II \eqref{eq-Case-2-principle}, the principle part of the second line is
\begin{align*}
\simeq &  \epsilon^{q p}{}_{\! \lambda} \nabla_{i_1} \nabla_{q} E_{pj} \nabla^{i_1}  H^{\lambda j}  - \epsilon^{q p}{}_{\! \lambda} \nabla_{i_1} \nabla_{q} H_{pj}  \nabla^{i_1} E^{\lambda j},
\end{align*} 
which reduces to case II with $l=1$. 

As we only need $l \leq 3$ in this paper, we will focus on $l=3$ (the conclusion will be true for $l=2$ as well): $ \epsilon_i{}^{\! pq} \nabla_{i_1} [\nabla_{i_2}, \nabla_q] \nabla_{i_{3}} H_{pj} \cdot \nabla^{I_3} E^{ij}  - \epsilon_i{}^{\! pq} \nabla_{i_1} [\nabla_{i_2}, \nabla_q] \nabla_{i_{3}} E_{pj} \cdot \nabla^{I_3} H^{ij},$ whose principle part is
 \begin{align*}
\simeq &- \epsilon_i{}^{\! p \lambda}  g_{i_2 i_3} \nabla_{i_1} \nabla_\lambda H_{pj}  \nabla^{I_3} E^{ij}  + \epsilon_i{}^{\! p}{}_{\! i_3} \nabla_{i_1} \nabla_{i_2} H_{pj}  \nabla^{I_3} E^{ij}  \\
& + \epsilon_i{}^{\! p \lambda} g_{i_2 i_3} \nabla_{i_1} \nabla_\lambda E_{pj} \nabla^{I_3} H^{ij} - \epsilon_i{}^{\! p}{}_{\! i_3} \nabla_{i_1} \nabla_{i_2} E_{pj} \nabla^{I_3} H^{ij}  \\
& - \epsilon_{i i_2}{}^{\! \lambda} \nabla_{i_1} \nabla_{i_{3}} H_{\lambda j} \nabla^{I_3} E^{ij} + \epsilon_{i i_2}{}^{\! \lambda} \nabla_{i_1}  \nabla_{i_{3}} E_{\lambda j}  \nabla^{I_3} H^{ij} 
\end{align*} 
which can be rewritten as
 \begin{align*}
\simeq &- \epsilon_i{}^{\! p \lambda} \nabla_{i_1} \nabla_\lambda H_{pj}  \nabla^{i_1} \nabla^q \nabla_q E^{ij} \\
& + \epsilon_i{}^{\! p}{}_{\! \lambda} \nabla_{i_1} \nabla_{i_2} H_{pj}  \nabla^{i_1} \nabla^{i_2} \nabla^\lambda E^{ij}  \\
&+ \epsilon_i{}^{\! p \lambda}  \nabla_{i_1} \nabla_\lambda E_{pj} \nabla^{i_1} \nabla^q \nabla_q H^{ij}  \\
& - \epsilon_i{}^{\! p}{}_{\! \lambda} \nabla_{i_1} \nabla_{i_2} E_{pj} \nabla^{i_1} \nabla^{i_2} \nabla^\lambda H^{ij}  \\
&- \epsilon_{i p}{}^{\! \lambda} \nabla_{i_1} \nabla_{i_{3}} H_{\lambda j} \nabla_{i_1} \nabla^p \nabla_{i_{3}} E^{ij} \\
& + \epsilon_{i p}{}^{\! \lambda} \nabla_{i_1}  \nabla_{i_{3}} E_{\lambda j} \nabla_{i_1} \nabla^p \nabla_{i_{3}} H^{ij}. 
\end{align*} 
We index the above six terms by $-A_1,  \cdots, - A_6.$ Note that $A_5+A_6$ reduces to case II,
while $A_1, A_3$ can be further manipulated as 
 \begin{align*}
A_1 =& \epsilon_i{}^{\! p \lambda} \nabla_{i_1} \nabla_\lambda H_{pj} \nabla^q  \nabla^{i_1} \nabla_q E^{ij} +  \epsilon_i{}^{\! p \lambda} \nabla_{i_1} \nabla_\lambda H_{pj} [ \nabla^{i_1}, \nabla^q] \nabla_q E^{ij} \\
 =& A_{11} + A_{12}; \\
A_3 = & - \epsilon_i{}^{\! p \lambda}  \nabla_{i_1} \nabla_\lambda E_{pj} \nabla^q \nabla^{i_1}  \nabla_q H^{ij}   - \epsilon_i{}^{\! p \lambda}  \nabla_{i_1} \nabla_\lambda E_{pj} [\nabla^{i_1}, \nabla^q] \nabla_q H^{ij}\\
= & A_{31} + A_{32}.
\end{align*} 
$A_{11}, A_{31}$ are further made into:
 \begin{align*}
A_{11} =& \nabla^q \left( \epsilon_i{}^{\! p \lambda} \nabla_{i_1} \nabla_\lambda H_{pj} \nabla^{i_1} \nabla_q E^{ij}  \right) - \epsilon_i{}^{\! p \lambda}  \nabla^q  \nabla_{i_1} \nabla_\lambda H_{pj} \nabla^{i_1} \nabla_q E^{ij}\\
=& div - \epsilon_i{}^{\! p \lambda}  \nabla_{i_1}  \nabla^q  \nabla_\lambda H_{pj} \nabla^{i_1} \nabla_q E^{ij} - \epsilon_i{}^{\! p \lambda}  [\nabla^q,  \nabla_{i_1}] \nabla_\lambda H_{pj} \nabla^{i_1} \nabla_q E^{ij}\\
A_{31} = & - \nabla^q \left( \epsilon_i{}^{\! p \lambda}  \nabla_{i_1} \nabla_\lambda E_{pj} \nabla^{i_1}  \nabla_q H^{ij} \right) + \epsilon_i{}^{\! p \lambda} \nabla^q   \nabla_{i_1} \nabla_\lambda E_{pj} \nabla^{i_1}  \nabla_q H^{ij}  \\
=& div  + \epsilon_i{}^{\! p \lambda}  \nabla_{i_1} \nabla^q  \nabla_\lambda E_{pj} \nabla^{i_1}  \nabla_q H^{ij}   + \epsilon_i{}^{\! p \lambda} [\nabla^q, \nabla_{i_1}] \nabla_\lambda E_{pj} \nabla^{i_1}  \nabla_q H^{ij}.
\end{align*}
Note that the second term in $A_{11}$ will be cancelled with $A_4$ and the second term in $A_{31}$ is cancelled with $A_2$, hence
 $A_1 + \cdots A_4$ is equals to
  \begin{align*}
&+  \epsilon_i{}^{\! p \lambda} \nabla_{i_1} \nabla_\lambda H_{pj} [ \nabla^{i_1}, \nabla^q] \nabla_q E^{ij} - \epsilon_i{}^{\! p \lambda}  [\nabla^q,  \nabla_{i_1}] \nabla_\lambda H_{pj} \nabla^{i_1} \nabla_q E^{ij} \\
& - \epsilon_i{}^{\! p \lambda}  \nabla_{i_1} \nabla_\lambda E_{pj} [\nabla^{i_1}, \nabla^q] \nabla_q H^{ij} + \epsilon_i{}^{\! p \lambda} [\nabla^q, \nabla_{i_1}] \nabla_\lambda E_{pj} \nabla^{i_1}  \nabla_q H^{ij},
\end{align*} 
whose principle part is, after substituting the principle part of curvature $R_{imjn}$,
\begin{align*}
&+  \epsilon_{ip}{}^{\! \lambda}  \nabla_\lambda H_{qj} \left( \nabla^{p} \nabla^q - \nabla^q \nabla^p \right) E^{ij} -  \epsilon_{ip}{}^{\! \lambda}  \nabla_\lambda E_{qj} \left( \nabla^{p} \nabla^q - \nabla^q \nabla^p \right) H^{ij} \\
& + \epsilon_i{}^{\! p \lambda} \nabla_\lambda H_{pq} \left( \nabla_{j} \nabla^q - \nabla^q \nabla_j \right) E^{ij} - \epsilon_i{}^{\! p \lambda} \nabla_\lambda E_{pq} \left( \nabla_{j} \nabla^q - \nabla^q \nabla_j \right) H^{ij} \\
&+ \epsilon^{i p \lambda} \nabla_\lambda H_{pj} \left(\nabla_q \nabla^i - \nabla^{i}\nabla_q \right) E^{qj} - \epsilon^{i p \lambda} \nabla_\lambda E_{pj} \left(\nabla_q \nabla^i - \nabla^{i}\nabla_q \right) H^{qj} \\
&+ \epsilon_i{}^{\! p \lambda} \nabla_\lambda H_{pj} \left(\nabla_q \nabla^j - \nabla^{j}\nabla_q \right) E^{iq} -  \epsilon_i{}^{\! p \lambda} \nabla_\lambda E_{pj} \left(\nabla_q \nabla^j - \nabla^{j}\nabla_q \right) H^{iq}.
\end{align*} 
Again, we calculate the principle part:
\begin{align*}
&+  \epsilon_{ip}{}^{\! \lambda}  \nabla_\lambda H_{ij}  E^{pj} -    \epsilon_{ip}{}^{\! \lambda}  \nabla_\lambda E_{ij}  H^{pj} \\
& + \epsilon_i{}^{\! p \lambda} \nabla_\lambda H_{qp} E^{iq} - \epsilon_i{}^{\! p \lambda} \nabla_\lambda E_{qp} H^{iq}  \\
& + \epsilon_j{}^{\! p \lambda} \nabla_\lambda H_{qp} E^{qj} - \epsilon_j{}^{\! p \lambda} \nabla_\lambda E_{qp} H^{qj}  \\
& +2 \epsilon_i{}^{\! p \lambda} \nabla_\lambda H_{qp} E^{iq} - 2\epsilon_i{}^{\! p \lambda} \nabla_\lambda E_{qp} H^{iq}  \\
& + \epsilon_q{}^{\! p \lambda} \nabla_\lambda H_{pj} E^{qj} - \epsilon_q{}^{\! p \lambda} \nabla_\lambda E_{pj} H^{qj}  \\
& +2 \epsilon_q{}^{\! p \lambda} \nabla_\lambda H_{pi} E^{iq} - 2\epsilon_q{}^{\! p \lambda} \nabla_\lambda E_{pi} H^{iq}  \\
& + \epsilon_j{}^{\! p \lambda} \nabla_\lambda H_{pq} E^{qj} - \epsilon_j{}^{\! p \lambda} \nabla_\lambda E_{pq} H^{qj}  \\
& +2 \epsilon_i{}^{\! p \lambda} \nabla_\lambda H_{pq} E^{iq} - 2\epsilon_i{}^{\! p \lambda} \nabla_\lambda E_{pq} H^{iq}. 
\end{align*} 
 Up to some divergence forms, they are
\begin{align*}
&-  \epsilon_{ip}{}^{\! \lambda}  H_{ij}  \nabla_\lambda E^{pj} -   \epsilon_{ip}{}^{\! \lambda}  \nabla_\lambda E_{ij}  H^{pj} (=0) \\
& - \epsilon_i{}^{\! p \lambda}  H_{qp}\nabla_\lambda E^{iq} - \epsilon_i{}^{\! p \lambda} \nabla_\lambda E_{qp} H^{iq} (=0)  \\
& - \epsilon_j{}^{\! p \lambda}  H_{qp}\nabla_\lambda E^{qj} - \epsilon_j{}^{\! p \lambda} \nabla_\lambda E_{qp} H^{qj} (=0) \\
& - 2 \epsilon_i{}^{\! p \lambda} H_{qp} \nabla_\lambda E^{iq} - 2\epsilon_i{}^{\! p \lambda} \nabla_\lambda E_{qp} H^{iq} (=0) \\
& - \epsilon_q{}^{\! p \lambda}  H_{pj} \nabla_\lambda E^{qj} - \epsilon_q{}^{\! p \lambda} \nabla_\lambda E_{pj} H^{qj} (=0) \\
& - 2 \epsilon_q{}^{\! p \lambda} H_{pi}  \nabla_\lambda E^{iq} - 2\epsilon_q{}^{\! p \lambda} \nabla_\lambda E_{pi} H^{iq} (=0) \\
& - \epsilon_j{}^{\! p \lambda}  H_{pq} \nabla_\lambda E^{qj} - \epsilon_j{}^{\! p \lambda} \nabla_\lambda E_{pq} H^{qj} (=0) \\
& - 2 \epsilon_i{}^{\! p \lambda} H_{pq} \nabla_\lambda  E^{iq} - 2\epsilon_i{}^{\! p \lambda} \nabla_\lambda E_{pq} H^{iq} (=0). 
\end{align*} 
 Hence, we proved in case III ($l \leq 3$):
\begin{align*}
& \epsilon_i{}^{\! pq} \nabla_{i_1} \nabla_{i_2} \nabla_q \nabla_{I_{l-2}} H_{pj} \cdot \nabla^{I_l} E^{ij}  - \epsilon_i{}^{\! pq} \nabla_{i_1} \nabla_{i_2} \nabla_q \nabla_{I_{l-2}} E_{pj} \cdot \nabla^{I_l} H^{ij} \\
\simeq &\sum_{a\leq l} \nabla_i \left( \nabla_{I_a} H * \nabla_{I_a} E \right)^i.
\end{align*}
  
Let us come to the final {\bf Case IV}: $ \epsilon_i{}^{\! pq} \nabla_{I_3} \nabla_q H_{pj} \cdot \nabla^{I_3} E^{ij}  - \epsilon_i{}^{\! pq} \nabla_{I_3} \nabla_q E_{pj} \cdot \nabla^{I_3} H^{ij}.$
Combined with the previous cases III, it reduces to consider
\begin{align*}
& \epsilon_i{}^{\! pq} \nabla_{I_2} [ \nabla_{i_3}, \nabla_q] H_{pj} \cdot \nabla^{I_3} E^{ij}  - \epsilon_i{}^{\! pq} \nabla_{I_2} [\nabla_{i_3}, \nabla_q] E_{pj} \cdot \nabla^{I_3} H^{ij} \\
\simeq&\epsilon_i{}^{\! p \lambda} \nabla_{I_2} H_{\lambda j} \cdot \nabla^{I_2} \nabla_p E^{ij}  - \epsilon_i{}^{\! p \lambda} \nabla_{I_2} E_{\lambda j} \cdot \nabla^{I_2} \nabla_p H^{ij}\\
 =&\epsilon_i{}^{\! p \lambda} \nabla_{I_2} H_{\lambda j} \cdot \nabla^{i_1} \nabla_p \nabla^{i_2}  E^{ij}  - \epsilon_i{}^{\! p \lambda} \nabla_{I_2} E_{\lambda j} \cdot \nabla^{i_1} \nabla_p \nabla^{i_2}  H^{ij}\\
 &+\epsilon_i{}^{\! p \lambda} \nabla_{I_2} H_{\lambda j} \cdot \nabla^{i_1} [\nabla^{i_2}, \nabla_p] E^{ij}  - \epsilon_i{}^{\! p \lambda} \nabla_{I_2} E_{\lambda j} \cdot\nabla^{i_1} [\nabla^{i_2}, \nabla_p] H^{ij}.
\end{align*}
 Note that the last line is already covered by case III.  The remaining part equals to
 \begin{align*}
 =&\epsilon_i{}^{\! p \lambda} \nabla_{I_2} H_{\lambda j} \cdot \nabla_p \nabla^{I_2}  E^{ij}  - \epsilon_i{}^{\! p \lambda} \nabla_{I_2} E_{\lambda j} \cdot \nabla_p \nabla^{I_2} H^{ij}\\
 &+ \epsilon_i{}^{\! p \lambda} \nabla_{I_2} H_{\lambda j} \cdot [\nabla^{i_1}, \nabla_p] \nabla^{i_2}  E^{ij}  - \epsilon_i{}^{\! p \lambda} \nabla_{I_2} E_{\lambda j} \cdot [\nabla^{i_1}, \nabla_p] \nabla^{i_2}  H^{ij}.
\end{align*}
Again, the last line above is covered by case II. Hence we are left with
   \begin{align*}
&\epsilon_i{}^{\! p \lambda} \nabla_{I_2} H_{\lambda j} \cdot \nabla_p \nabla^{I_2}  E^{ij}  - \epsilon_i{}^{\! p \lambda} \nabla_{I_2} E_{\lambda j} \cdot \nabla_p \nabla^{I_2} H^{ij},
\end{align*}
 which is covered by case I ($l=2$). Hence, we prove in case IV ($l = 3$):
\begin{align*}
& \epsilon_i{}^{\! pq} \nabla_{I_3} \nabla_q H_{pj} \cdot \nabla^{I_3} E^{ij}  - \epsilon_i{}^{\! pq} \nabla_{I_3} \nabla_q E_{pj} \cdot \nabla^{I_3} H^{ij} \\
\simeq &\sum_{a \leq 3} \nabla_i \left( \nabla_{I_a} H * \nabla_{I_a} E \right)^i.
\end{align*}

\end{proof}

\end{document}